\documentclass[acmsmall,screen,nonacm]{acmart}

\usepackage{xspace}
\usepackage{mathpartir}
\usepackage{mathtools}
\usepackage{amsmath,amsfonts}
\usepackage{listings}
\usepackage{subcaption}
\usepackage{caption}
\usepackage{wrapfig}
\usepackage{xcolor}
\usepackage{bm}
\usepackage{stmaryrd}
\usepackage{enumitem}
\usepackage{hyperref}
\usepackage{multirow}
\usepackage[capitalise,noabbrev,nameinlink]{cleveref}
\usepackage{thmtools}

\hypersetup{
    colorlinks,
    citecolor=black,
    filecolor=black,
    linkcolor=blue,
    urlcolor=black
}

\newtheorem{theorem}{Theorem}[section]   % numbering within sections
\newtheorem{lemma}[theorem]{Lemma}
\newtheorem{corollary}[theorem]{Corollary}

\newtheorem{definition}[theorem]{Definition}

\crefname{theorem}{theorem}{theorems}
\Crefname{theorem}{Theorem}{Theorems}
\crefname{lemma}{lemma}{lemmas}
\Crefname{lemma}{Lemma}{Lemmas}
\crefname{corollary}{corollary}{corollaries}
\Crefname{corollary}{Corollary}{Corollaries}
\crefname{proposition}{proposition}{propositions}
\Crefname{proposition}{Proposition}{Propositions}
\crefname{definition}{definition}{definitions}
\Crefname{definition}{Definition}{Definitions}
\crefname{example}{example}{examples}
\Crefname{example}{Example}{Examples}
\crefname{remark}{remark}{remarks}
\Crefname{remark}{Remark}{Remarks}
\Crefname{hypothesis}{Hypothesis}{Hypotheses}

\newcommand{\rulename}[1]{\textnormal{\textsc{#1}}}
\newcommand{\rref}[1]{\hyperref[#1]{\rulename{#1}}} % hyperlink a rule
\newcommand{\inferenceRule}[3]{\inferrule*[lab={\label{#1}\rulename{#1}}]{#2}{#3}}

\makeatletter
\newcommand*{\inlineequation}[2][]{%
  \begingroup
    \refstepcounter{equation}%
    \ifx\\#1\\%
    \else
      \label{#1}%
    \fi
    \relpenalty=10000 %
    \binoppenalty=10000 %
    \ensuremath{%
      #2%
    }%
    ~\@eqnnum
  \endgroup
}
\makeatother

\newcommand{\bnflabel}[1]{\mbox{\textit{#1}}}
\newcommand{\bnfalt}{\mathrel{\bf \mid }}

\newcommand{\bnfdef}{\mathrel{\bf::=}}
\newcommand{\deftable}[1]{\[\begin{array}{l l l l} #1 \end{array}\]}
\newcommand{\defline}[3]{\bnflabel{#1} & #2 & \bnfdef  & #3 \\}
\newcommand{\defnewline}[1]{ & & & #1 \\}

\newcommand{\many}[1]{{\color{cyan} \overline{\color{black}#1}}}

\newcommand{\kw}[1]{\bm{\mathtt{{#1}}}}
\newcommand{\meta}[1]{{\textsf{\textbf{#1}}}}

\newcommand{\var}[1]{\mathit{#1}}

\newcommand{\payable}{\kw{payable}}

\newcommand{\opi}{\circ_i}
\newcommand{\opb}{\circ_b}
\newcommand{\cmp}{\bowtie}

\newcommand{\ite}[3]{
\kw{if} ~#1 ~\kw{then} ~ #2 ~\kw{else}~ #3
}

\newcommand{\contract}[4]{
\kw{contract}~#1:  #2 ~ #3 ~#4
}

\newcommand{\constrinlinecase}[5]{
 \kw{constructor} ~(#1)~\optkey{\payable}~ 
 \kw{iff}~#2 ~
 \caseblock{#3_{i}}{
 \kw{creates}~#4
 }~
 \kw{ensures}~#5
}

\newcommand{\constrcase}[4]{
 \kw{constructor} ~(#1)~\optkey{\payable} ~
 \kw{iff}~#2 ~
  #3~
 \kw{ensures}~#4
}

\newcommand{\transinlinecase}[6]{
\kw{transition}~id(#1)~\optkey{\payable}~ : \alpha ~
\kw{iff}~#2 ~
\caseblock{#3_{i}} 
{\kw{updates}~#4_{i}~
\kw{returns}~#5_{i}}~
\kw{ensures}~#6
}

\newcommand{\transcase}[4]{
\kw{transition}~id(#1)~\optkey{\payable}~ : \alpha ~
\kw{iff}~#2 ~
#3~
\kw{ensures}~#4
}

\newcommand{\asgn}{\mathrel{:=}}

\newcommand{\uint}[1]{\kw{uint}#1}

\newcommand{\intt}[1]{\kw{int}#1}
\newcommand{\bool}{\kw{bool}}
\newcommand{\mapping}[2]{\kw{mapping}(#1 \Rightarrow #2)}

\newcommand{\inrange}[2]{\kw{inrange}(#1,#2)}

\newcommand{\pre}[1]{\kw{pre}(#1)}
\newcommand{\post}[1]{\kw{post}(#1)}

\newcommand{\true}{\kw{true}}
\newcommand{\false}{\kw{false}}

\newcommand{\caseblock}[2]{\range{\kw{case}~#1: #2}{i \in [1,n]}}

\newcommand{\defeq}{\mathrel{\overset{\MyDef}{\resizebox{\widthof{\kern1.25pt\MyEqdefU}}{\heightof{$=$}}{$=$}}}}

\newcommand{\bigstep}{\Downarrow}
\newcommand{\psem}[3]{#1;#2;#3\bigstep_{\ell}}
\newcommand{\psemnoell}[3]{#1;#2;#3\bigstep}
\newcommand{\psemdummy}[3]{#1;#2;#3\bigstep_{\dummyloc}}
\newcommand{\vsem}[1]{\llbracket #1 \rrbracket}

\newcommand{\possible}[1]{\meta{possible}(#1)}

\newcommand{\partfun}{\rightharpoonup}
\newcommand{\fun}{\rightarrow}

\newcommand{\dom}[1]{\meta{dom}(#1)}

\newcommand{\ins}[2]{#1 \downarrow^{\kw{ins}}_{\ell} #2}

\newcommand{\range}[2]{\{ #1\}_{#2}}

\newcommand{\addr}[1]{\kw{addr}(#1)}

\newcommand{\fresh}[1]{\meta{fresh}(#1)}

\newcommand{\annot}[1]{{\color{gray} #1}}

\newcommand{\maybe}[1]{{#1}^{\,?}}

\newcommand{\optkey}[1]{?(#1)}

\newcommand{\none}{\bot}

\newcommand{\map}[2]{[\; #1 \;]_{#2}}

\newcommand{\spre}{s_{\kw{pre}}}
\newcommand{\spost}{s_{\kw{post}}}

\newcommand{\case}[1]{\item \underline{#1} \\}

\newenvironment{proofcases}%
  {\begin{itemize}[leftmargin=*]%
    }%
  {\end{itemize}}

\newcommand{\addrfield}{\meta{addr}}
\newcommand{\dummyloc}{\cdot}

\newcommand{\sstorage}{{\meta{Storage}}}
\newcommand{\scnstrs}{\meta{Cnstr}}
\newcommand{\stranss}{{\meta{Trans}}}

\newcommand{\Sigmastore}{\Sigma_{\sstorage}}
\newcommand{\Sigmacode}{\Sigma_{\scnstrs}}
\newcommand{\Sigmatranss}{\Sigma_{\stranss}}

\newcommand{\Sigmastorep}{\Sigma'_{\sstorage}}
\newcommand{\Sigmacodep}{\Sigma'_{\scnstrs}}
\newcommand{\Sigmatranssp}{\Sigma'_{\stranss}}

\newcommand{\specific}{\prec_{\kw{specific}}}
\newcommand{\specificeq}{\preceq_{\kw{specific}}}

\newcommand{\act}{\textsc{act}\xspace}
\begin{document}

%% end of the preamble, start of the body of the document source.

%%
%% The "title" command has an optional parameter,
%% allowing the author to define a "short title" to be used in page headers.
\title{\act: Technical Report}

\author{Zoe Paraskevopoulou}
\affiliation{
  \institution{National Technical University of Athens}
  \city{Athens}
  \country{Greece}
}
\affiliation{
  \institution{Argot Collective}
  \city{Zug}
  \country{Switzerland}
}
\email{zoe.paraskevopoulou@gmail.com}
\orcid{0009-0001-2352-9818}

\author{Anja Petković Komel}
\affiliation{
  \institution{Argot Collective}
  \city{Zug}
  \country{Switzerland}
}
\email{anja@argot.org}
\orcid{0000-0001-7203-6641}

\author{Sophie Rain}
\affiliation{
  \institution{Argot Collective}
  \city{Zug}
  \country{Switzerland}
}
\email{sophie.rain@argot.org}
\orcid{0000-0002-8940-4989}

\author{Lefteris Lazaropoulos}
\affiliation{
  \institution{National Technical University of Athens}
  \city{Athens}
  \country{Greece}
}
\affiliation{
  \institution{Argot Collective}
  \city{Zug}
  \country{Switzerland}
}
\email{lefterislazar@protonmail.com}

\author{Alexis Terry}
\affiliation{
  \institution{Argot Collective}
  \city{Zug}
  \country{Switzerland}
}
\email{alexis@argot.org}

\maketitle

% Short scope paragraph describing the document for readers.
\noindent\textbf{Scope.} This technical report contains the formal definitions
and metatheory for the \act specification and verification language. It documents
the syntax, the operational pointer semantics, the type system 
and the main metatheoretic results (type-safety).
The mechanised proofs are available in the \act repository\footnote{\url{https://github.com/argotorg/act}}.

\tableofcontents

\section{Syntax}
\deftable{
% Top-level spec
\defline{(act specification)}{spec}{\many{\var{contract}}}
% Contract
\defline{(Contract)}{\var{contract}}{
\contract{\var{Id}}
{\var{constructor}}
{\many{\var{transition}}}
{\var{invariants}}
}
% Constructor
\defline{(Constructor)}{\var{constructor}}{
\kw{constructor}~(\many{\var{id} : \alpha})~\optkey{\payable}~
% \kw{pointers} ~\many{\var{ptr}}~
\kw{iff}~\many{e}}
\defnewline{
    \many{\kw{case}~e: \kw{creates}~\many{\var{create}}}~
    \kw{ensures} ~ \many{e}
}
% % pointers 
% \defline{(Pointer)}{\var{ptr}}
% {
%     \ptr{\var{ref}}{\var{Id}}
% }
% Transition
\defline{(Transition)}{\var{transition}}{
\kw{transition}~\var{id}(\many{\var{id} : \alpha}) ~\optkey{\payable} ~\optkey{: \alpha} ~
% \kw{pointers} ~\many{\var{ptr}}~
\kw{iff}~\many{e}}
\defnewline{
    \many{\kw{case}~e: \kw{updates}~\many{\var{update}}~\optkey{\kw{returns}~e}}~
    \kw{ensures} ~ \many{e}
}
% % Invariants
\defline{(Invariants)}{invariants}
{\kw{invariants}~\many{\var{e}}}
% % create
\defline{(Create)}{\var{create}}{ \sigma~id \asgn \var{se} }
% % update
\defline{(Update)}{\var{update}}{ \var{ref_{\annot{\sigma}}} \asgn \var{se} }
% % Assign expressions 
\defline{(Slot expressions)}{\var{se}}{{\var{m}} \bnfalt \kw{new}~Id~\optkey{\{\kw{value}:~\var{se}\}}(\many{\var{se}})
\bnfalt \var{ref_{\annot{Id}}}
\bnfalt \addr{\var{se}}
}
% % Mapping expressions 
\defline{(Mapping expressions)}{\var{m}}{{\var{e}} \bnfalt 
\map{\many{\var{e} \Rightarrow \var{m}}}{\annot{\mu}} \bnfalt \var{ref_{\annot{\mu}}} \map{\many{\var{e} \Rightarrow \var{m}}}{\annot{\mu}} } 
% % expressions
\defline{(Base Expressions)}{\var{e}}{i \bnfalt \true \bnfalt \false \bnfalt \var{ref_{\annot{\alpha}}} \bnfalt  \addr{\var{ref_{\annot{Id}}}} \bnfalt}
\defnewline{ \var{e}~\opi~\var{e} \bnfalt \var{e}~\opb~\var{e} \bnfalt \var{e}~\cmp~\var{e} \bnfalt \neg \var{e} \bnfalt }
\defnewline{ \inrange{\iota}{e} \bnfalt \ite{\var{e}}{\var{e}}{\var{e}} \bnfalt \var{e} = \var{e} }
% % binary integer operators
\defline{(Binary Integer Operators)}{\opi}{+ \bnfalt - \bnfalt * \bnfalt \kw{div} \bnfalt \kw{mod} \bnfalt \kw{exp}}
% % binary boolean operators
\defline{(Binary Boolean Operators)}{\opb}{\wedge \bnfalt \vee \bnfalt \implies}
% % Integer comparisons
\defline{(Integer Comparisons)}{\cmp}{< \bnfalt \leq \bnfalt \geq \bnfalt >}
% envirentment variables
\defline{(Calling Environment)}{\var{env}}
{\kw{caller} \bnfalt \kw{origin} \bnfalt \kw{callvalue} \bnfalt \kw{this}}
% variables
\defline{(Variable References)}{\var{ref}}
{x \bnfalt \pre{x} \bnfalt \post{x} \bnfalt \var{ref} ~\kw{as}~\var{Id} \bnfalt \var{ref}.y \bnfalt \var{ref}[e] \bnfalt \var{env}}
% % types 
\defline{(Integer Types)}{\iota}{\uint{M} \bnfalt \intt{M} \bnfalt \intt{} }
& & & M \in \{ i * 8 ~ |~ \forall i \in \mathbb{N}, ~ 1 \leq i \leq 32 \} \\
% % types 
\defline{(Base type)}{\beta}{\iota \bnfalt \bool \bnfalt \kw{address}}
\defline{(Mapping type)}{\mu}{\mapping{\beta}{\mu} \bnfalt \beta}
\defline{(ABI type)}{\alpha}{\beta  \bnfalt \kw{address}_\var{Id}}
\defline{(Slot type)}{\sigma}{\mu \bnfalt \alpha \bnfalt \var{Id}}
}

\bigskip
% Zoe: move to here from syntax.tex to directly import syntax to the paper
Notes:
\begin{itemize}
    \item The non-terminals $\var{id},x, y$ denote identifiers that start
     with lowercase letter, and non-terminal $\var{Id},A$ denote identifiers
     that start with capital letters, i.e. contracts. The terminal $i$ denotes a numeric literal.
    \item We denote repetition with $\many{\text{overbar}}$. For example,
     $\many{e}$ denotes an ordered collection of expressions. When we want to be
     able explicitly refer to an item in a collection we index over a countable
     set, for example $\range{e_i}{i \in [1,N]}$ is a collection of $N$
     expressions and $e_i$ is the $i$-th expression.
    \item References in expressions and assignments have a type annotation.
    This is the type of the reference and it is filled during type checking.
    It is used in the bounds elaboration phase.

\end{itemize}

\pagebreak

\section{Pointer Semantics}
\subsection*{Values}
%<*refsemvals>
\[
\begin{array}{l}    
\begin{array}{l l}
\meta{Addr} \equiv \mathbb{N}  & 
\meta{State} \equiv \meta{Addr} \partfun \meta{String} \times (\meta{String} \partfun \meta {Value}) \\
\meta{BaseValue} \equiv  \mathbb{Z} + \mathbb{B} + \meta{Addr} & 
\meta{Env} \equiv \meta{String} \partfun \meta{BaseValue} \\
\end{array}\\
\; \meta{Value} \equiv (\mathbb{Z} \fun \meta{Value}) + (\mathbb{B} \fun \meta{Value}) + (\meta{Addr} \fun \meta{Value}) + \meta{BaseValue} 
\end{array} \] 
%</refsemvals>

\begin{flushleft}

The state maps each location $\ell$ to a pair of a contract type and a mapping
from storage variables to values.
For readability, we write $s(\ell).{\kw{type}}$ to denote the contract
type stored at location $\ell$ in state $s$ (i.e., $s(\ell).1$). We write
$s(\ell)(x)$ to denote the value of storage variable $x$ at location $\ell$ in
state $s$ (i.e., $(s(\ell).2)(x)$).

\subsection*{Semantics of Environment References}

\begin{flushleft}
\fbox{$\rho ; \var{env} \bigstep_{\ell} \var{v}$}
\end{flushleft}

\begin{mathpar}
    %% caller
    \inferenceRule{E-Caller}
    { 
        \kw{caller} \in \dom{\rho}
    }
    { 
       \rho ; \kw{caller} \bigstep_{\ell} \rho(\kw{caller})
    }
    \and
    %% origin
    \inferenceRule{E-Origin}
    {
        \kw{origin} \in \dom{\rho}
     }
    { 
        \rho ; \kw{origin} \bigstep_{\ell} \rho(\kw{origin})
    }
    \and
    %% callvalue
    %<*refsemCallvalue>
    \inferenceRule{E-Callvalue}
    { 
        \kw{callvalue} \in \dom{\rho}
    }
    { 
        \rho ; \kw{callvalue} \bigstep_{\ell} \rho(\kw{callvalue})
    }
    %</refsemCallvalue>
    \and
    %% this
    %<*refsemThis>
    \inferenceRule{E-This}
    { }
    { 
        \rho ; \kw{this} \bigstep_{\ell} \ell
    }
    %</refsemThis>
\end{mathpar}

\subsection*{Semantics of Variable References}

\begin{flushleft}
\fbox{$\psem{s^t}{\rho}{\var{ref}} (\var{v},t_p)$ }
\end{flushleft}
The state in the pointer semantics of variable references is annotated with a tag $t$ that takes values from the
set $\{ \kw{U}, \kw{T} \}$, denoting whether the reference appears in an untimed or timed judgement.
An untimed state is a single state $s^{\kw{U}} = s$ and a timed state is a pair of pre- and post-states $s^{\kw{T}} = (s_{\kw{pre}}, s_{\kw{post}})$.
We will omit the tag $t$ from the rest of the judgements, assuming they just propagate the same tag on the states, as given.
The tag $t_p \in \{\kw{U}, \kw{pre}, \kw{post}\}$ denotes the timing of the evaluated reference. 

%% TODO addr

\begin{mathpar}
    \inferenceRule{E-Environment}
    {
        \rho ; \var{env} \bigstep_{\ell} \var{v}
    }
    {
        \psem{s^t}{\rho}{\var{env}} (\var{v},\kw{U})
    }
    \and
    %% storage vars untimed
    \inferenceRule{E-Storage}
    {
        x \in \dom{s(\ell)} \\
        x \not \in \dom{\rho}
    }
    { 
        \psem{s^{\kw{U}}}{\rho}{x} (s(\ell)(x), \kw{U})
    }
    \and
    %% storage vars pre
    %<*refsemStoragePre>
    \inferenceRule{E-StoragePre}
    {
        x \in \dom{s_{\kw{pre}}(\ell)} \\
        x \not \in \dom{\rho}
    }
    { 
        \psem{(s_{\kw{pre}}, s_{\kw{post}})^{\kw{T}}}{\rho}{\pre{x}} (s_{\kw{pre}}(\ell)(x), \kw{pre})
    }
    %</refsemStoragePr>
    \and
    %% storage vars post
    \inferenceRule{E-StoragePost}
    {
        x \in \dom{s_{\kw{post}}(\ell)} \\
        x \not \in \dom{\rho}
    }
    { 
        \psem{(s_{\kw{pre}}, s_{\kw{post}})^{\kw{T}}}{\rho}{\post{x}} (s_{\kw{post}}(\ell)(x), \kw{post}) 
    }
    \and
    %% interface ref untimed
    \inferenceRule{E-Calldata}
    {
        x \in \dom{\rho}
    }
    { 
        \psem{s^{\kw{U}}}{\rho}{x} 
        (\rho(x), \kw{U})
    }
    \and
    %% interface ref timed
    \inferenceRule{E-CalldataTimed}
    {
        x \in \dom{\rho}
    }
    { 
        \psem{(s_{\kw{pre}}, s_{\kw{post}})^{\kw{T}}}{\rho}{x} 
        (\rho(x), \kw{pre})
    }
    \and
    %% casting
    \inferenceRule{E-Coerce}
    {
        \psem{s^t}{\rho}{\var{ref}} (\var{v}, t_p)
    }
    { 
        \psem{s^t}{\rho}{\var{\var{ref}} ~\kw{as}~ A}
        (\var{v}, t_p)
    }
    \and
    %% fields untimed
    \inferenceRule{E-Field}
    {
        \psem{s^{\kw{U}}}{\rho}{\var{ref}} (\var{\ell'}, \kw{U}) \\
        \var{\ell'} \in \meta{Addr} \\
        x \in \dom{s(\ell')}
    }
    { 
        \psem{s^{\kw{U}}}{\rho}{\var{ref.x}}
        (s(\var{\ell'})(x), \kw{U})
    }
    \and
    %% fields timed pre
    \inferenceRule{E-FieldPre}
    {
        \psem{(s_{\kw{pre}}, s_{\kw{post}})^{\kw{T}}}{\rho}{\var{ref}} (\var{\ell'}, \kw{pre}) \\
        \var{\ell'} \in \meta{Addr} \\
        x \in \dom{s_{\kw{pre}}(\ell')}
    }
    { 
        \psem{(s_{\kw{pre}}, s_{\kw{post}})^{\kw{T}}}{\rho}{\var{ref.x}}
        (s_{\kw{pre}}(\var{\ell'})(x), \kw{pre})
    }
    \and
     %% fields timed post
    \inferenceRule{E-FieldPost}
    {
        \psem{(s_{\kw{pre}}, s_{\kw{post}})^{\kw{T}}}{\rho}{\var{ref}} (\var{\ell'}, \kw{post}) \\
        \var{\ell'} \in \meta{Addr} \\
        x \in \dom{s_{\kw{post}}(\ell')}
    }
    { 
        \psem{(s_{\kw{pre}}, s_{\kw{post}})^{\kw{T}}}{\rho}{\var{ref.x}}
        (s_{\kw{post}}(\var{\ell'})(x), \kw{post})
    }
    \and
    %% mapping untimed
    \inferenceRule{E-RefMapping}
    {
        \psem{s^t}{\rho}{e} \var{v}_e \\
        \var{v}_e \in \meta{X} \\
        \psem{s^t}{\rho}{\var{ref}} (\var{v}_r, t_p) \\
        \var{v}_r \in \meta{X} \fun \meta{Value} \\
        \meta{X} \in \{\mathbb{Z},\mathbb{B},\meta{Addr}\} 
    }
    { 
        \psem{s^t}{\rho}{\var{ref}[e]}
        (\var{v}_r(\var{v}_e), t_p)
    }
\end{mathpar}

\subsection*{Insert value}

\begin{flushleft}
\fbox{$\ins{s ; \rho ; \var{ref} ; \var{v}}{s'}$}
\end{flushleft}    

%% TODO addr ? 

\begin{mathpar}
    %% storage vars
    \inferenceRule{E-InsStorage}
    { 
        x \in \dom{s(\ell)}
    }
    { 
        \ins{s ; \rho ; x ; \var{v}}{s[\ell \mapsto s(\ell)[x \mapsto \var{v}]]}
    }
    \and
    %% fields
    \inferenceRule{E-InsField}
    {
        \psem{s^{\kw{U}}}{\rho}{\var{ref}} (\var{\ell'}, \kw{U}) \\
        \var{\ell'} \in \meta{Addr} \\
        x \in \dom{s(\ell')}
    }
    { 
        \ins{s ; \rho ; \var{ref.x} ; \var{v}}{s[\ell' \mapsto s(\ell')[x \mapsto \var{v}]]}
    }
\end{mathpar}

\subsection*{Semantics of Expressions}
% semantics of expressions

We omit the typing-tags on the references $\var{ref_{\annot{\alpha}}}$ and $\var{ref_{\annot{A}}}$ for clarity, as they do not play a role here. 
We use $\meta{min},\meta{max}$ as follows

\begin{minipage}[t]{0.45\textwidth}
    \[\begin{array}{r l}
    % min definition
    \meta{min}(\uint{M}) = & \hspace{0.1em} 0, \\ 
    \meta{min}(\intt{M}) = & \hspace{-0.5em} -2^{M-1},\\
    \end{array}\]    
    where $ M \in \{ i * 8 ~ |~ \forall \, i \in \mathbb{N}. ~\, 1 \leq i \leq 32 \}$.
\end{minipage}
\begin{minipage}[t]{0.45\textwidth}
    \[\begin{array}{r l}
    % max definition
    \meta{max}(\uint{M}) = & \hspace{-0.5em} 2^M-1, \\
    \meta{max}(\intt{M}) = & \hspace{-0.5em} 2^{M-1}-1, \\
    \end{array}\]    
\end{minipage}

\begin{flushleft}
\fbox{$\psem{s}{\rho}{\var{e}} \var{v}$ }

\end{flushleft} 

\begin{mathpar}
    %% Integer literals
    \inferenceRule{E-Int}
    { n \in \mathbb{Z}}
    { 
        \psem{s}{\rho}{n} n 
    }
    \and
    %% booleans
    \inferenceRule{E-Bool}
    { b \in \mathbb{B}}
    { 
        \psem{s}{\rho}{b} b
    }
    \and
    %% References
    \inferenceRule{E-Ref}
    {
        \psem{s}{\rho}{\var{ref}} (\var{v}, t_p)
    }
    { 
        \psem{s}{\rho}{\var{ref}}
        \var{v} 
    }
    \and
    \inferenceRule{E-Addr}
    {
        \psem{s}{\rho}{\var{ref}} (\var{v}, t_p)
    }
    { 
        \psem{s}{\rho}{\addr{\var{ref}}}
        \var{v} 
    }
    \and
    %% inRange case true
    \inferenceRule{E-RangeTrue}
    {   
        \psem{s}{\rho}{e} \var{v} \\
        \var{v} \in \mathbb{Z} \\
        (\meta{min}(\iota) \leq \var{v} \leq \meta{max}(\iota)) ~\vee~ \iota = \intt{}
    }
    { 
        \psem{s}{\rho}{\inrange{\iota}{e}} \meta{True}
    }
    \and
    %% inRange case false
    \inferenceRule{E-RangeFalse}
    {
        \psem{s}{\rho}{e} \var{v} \\
        \var{v} \in \mathbb{Z} \\
        \var{v} < \meta{min}(\iota) \vee \meta{max}(\iota) < \var{v}
    }
    { 
        \psem{s}{\rho}{\inrange{\iota}{e}} \meta{False}
    }
    \and
    %% division case ok
    \inferenceRule{E-Div}
    {
        \psem{s}{\rho}{e_1} \var{v}_1 \\
        \psem{s}{\rho}{e_2} \var{v}_2 \\
        \var{v}_1, \var{v}_2 \in \mathbb{Z}\\
        \var{v}_2 \not = 0
    }
    { 
        \psem{s}{\rho}{e_1~\kw{div}~e_2}  \var{v}_1 /\var{v}_2
    }
    \and
     %% division by zero
    \inferenceRule{E-DivZero}
    {
        \psem{s}{\rho}{e_1} \var{v}_1 \\
        \psem{s}{\rho}{e_2} \var{v}_2 \\
        \var{v}_1, \var{v}_2 \in \mathbb{Z}\\
        \var{v}_2 = 0
    }
    { 
        \psem{s}{\rho}{e_1~\kw{div}~e_2}  0
    }
    \and
    %% modulo case ok
    \inferenceRule{E-Mod}
    {
        \psem{s}{\rho}{e_1} \var{v}_1 \\
        \psem{s}{\rho}{e_2} \var{v}_2 \\
        \var{v}_1, \var{v}_2 \in \mathbb{Z}\\
        \var{v}_2 \not = 0
    }
    { 
        \psem{s}{\rho}{e_1~\kw{mod}~e_2}  \var{v}_1~\%~\var{v}_2
    }
    \and
    %% modulo zero
    \inferenceRule{E-ModZero}
    {
        \psem{s}{\rho}{e_1} \var{v}_1 \\
        \psem{s}{\rho}{e_2} \var{v}_2 \\
        \var{v}_1, \var{v}_2 \in \mathbb{Z}\\
        \var{v}_2 = 0
    }
    { 
        \psem{s}{\rho}{e_1~\kw{mod}~e_2}  0
    }
    \and
    %% other binary integer operations
    \inferenceRule{E-BopI}
    {
        \psem{s}{\rho}{e_1} \var{v}_1 \\
        \psem{s}{\rho}{e_2} \var{v}_2 \\
        \var{v}_1, \var{v}_2 \in \mathbb{Z} \\
        \opi \notin \{\kw{div},\kw{mod}\}
    }
    { 
        \psem{s}{\rho}{e_1 \opi e_2}
        \var{v}_1 \opi \var{v}_2 
    }
    \and
    %% binary boolean operations
    \inferenceRule{E-BopB}
    {
        \psem{s}{\rho}{e_1} \var{v}_1 \\
        \psem{s}{\rho}{e_2} \var{v}_2 \\
        \var{v}_1, \var{v}_2 \in \mathbb{B}
    }
    { 
        \psem{s}{\rho}{e_1 \opb e_2}
        \var{v}_1 \opb \var{v}_2 
    }
    \and
    %% negation
    \inferenceRule{E-Neg}
    {
        \psem{s}{\rho}{e} \var{v} \\
        \var{v} \in \mathbb{B}
    }
    { 
        \psem{s}{\rho}{\neg e} 
        \neg \var{v}
    }
    \and
    %% integer comparisons
    \inferenceRule{E-Cmp}
    {
        \psem{s}{\rho}{e_1} \var{v}_1 \\
        \psem{s}{\rho}{e_2} \var{v}_2 \\
        \var{v}_1, \var{v}_2 \in \mathbb{Z}
    }
    { 
        \psem{s}{\rho}{e_1~\cmp~e_2}
        \var{v}_1~\cmp~\var{v}_2
    }
    \and
    %% if then else case true
    \inferenceRule{E-ITETrue}
    {
        \psem{s}{\rho}{e_1} \meta{True} \\
        \psem{s}{\rho}{e_2} \var{v}_2
    }
    { 
        \psem{s}{\rho}{\ite{e_1}{e_2}{e_3}}
        \var{v}_2
    }
    \and
    %% if then else case false
    \inferenceRule{E-ITEFalse}
    {
        \psem{s}{\rho}{e_1} \meta{False} \\
        \psem{s}{\rho}{e_3} \var{v}_3
    }
    { 
        \psem{s}{\rho}{\ite{e_1}{e_2}{e_3}}
        \var{v}_3
    }
    \and
    %% equations case true
    \inferenceRule{E-EqTrue}
    {
        \psem{s}{\rho}{e_1} \var{v}_1 \\
        \psem{s}{\rho}{e_2} \var{v}_2 \\
        \var{v}_1 = \var{v}_2
    }
    { 
        \psem{s}{\rho}{e_1 = e_2}
        \meta{True}
    }
    \and
    %% equations case false
    \inferenceRule{E-EqFalse}
    {
        \psem{s}{\rho}{e_1} \var{v}_1 \\
        \psem{s}{\rho}{e_2} \var{v}_2 \\
        \var{v}_1 \not = \var{v}_2
    }
    { 
        \psem{s}{\rho}{e_1 = e_2}
        \meta{False}
    }
\end{mathpar}

\subsection*{Semantics of mapping expressions}
% semantics of mapping expressions
\begin{flushleft}
\fbox{$\psem{s}{\rho}{\var{m}} \var{v}$ }
\end{flushleft} 

\begin{mathpar}
   \inferenceRule{E-Exp}
    { 
      \psem{s}{\rho}{e} \var{v}
    }    
    { 
      \psem{s}{\rho}{e} \var{v}
    } 
    \and
    \inferenceRule{E-Mapping}
    {
        \psem{s}{\rho}{e_i} \var{v_i}  \text{ for $i \in [1,n]$}\\ 
        \var{v_i}\in \meta{meta}(\beta)  \text{ for $i \in [1,n]$}\\ 
        \psem{s}{\rho}{\var{m_j}} \var{u_j} \text{ for $j \in [1,n]$}\\ 
        f = (x \in \meta{meta}(\beta) \mapsto \meta{ if } x = \var{v_i} \meta { then } \var{u_i}
        \meta { else }  \meta {default}(\mu) )
    } 
    {  
        \psem{s}{\rho}{[\range{e_i \Rightarrow \var{m}_i}{i \in [1,n]}]_{\annot{\mapping{\beta}{\mu}}}} f
    }
    \and
    \inferenceRule{E-MappingUpd}
    {
        \psem{s}{\rho}{\var{ref}} \var{f} \\
        \psem{s}{\rho}{e_i} \var{v_i}  \text{ for $i \in [1,n]$}\\
        \var{v_i}\in \meta{meta}(\beta)  \text{ for $i \in [1,n]$}\\ 
        \psem{s}{\rho}{\var{m_j}} \var{u_j} \text{ for $j \in [1,n]$}\\ 
        g = (x \in \meta{meta}(\beta) \mapsto \meta{ if } x = \var{v_i} \meta { then } \var{u_i}
        \meta { else } f(x) )
    } 
    {  
        \psem{s}{\rho}{\var{ref}\map{\range{e_i \Rightarrow \var{m}_i}{i \in [1,n]}}{\annot{\mapping{\beta}{\mu}}}} g
    }
\end{mathpar} where $\meta{meta}(\iota) = \mathbb{Z}$, $\meta{meta}(\kw{bool})=\mathbb{B}$ and 
$\meta{meta}(\kw{address})=\meta{Addr}$, and

\begin{minipage}[t]{0.45\textwidth}
\[\begin{array}{r l}
% default definition
\meta{default}(\iota) = & \hspace{-0.5em} 0 \\ 
\meta{default}(\bool) = & \hspace{-0.5em} \meta{False}\\
\meta{default}(\kw{address}) = & \hspace{-0.5em} 0 \\
\meta{default}(\mapping{\beta}{\mu}) = & \hspace{-0.5em} x \in \meta{meta}(\beta) \mapsto \meta{default}(\mu) \\
\end{array}\]    
\end{minipage}

\subsection*{Semantics of slot expressions}
% semantics of slot expressions
\begin{flushleft}
\fbox{$\psem{s}{\rho}{\var{se}} (\var{v}, s')$}
\end{flushleft} 
      
\begin{mathpar}
    % mapping expressions
    \inferenceRule{E-MapExp}
    { 
      \psem{s}{\rho}{m} \var{v}
    }    
    { 
      \psem{s}{\rho}{m} (\var{v}, s)
    } 
    \and
    % slot references
    \inferenceRule{E-SlotRef}
    { 
      \psem{s}{\rho}{\var{ref}} (\var{v}, \_)
    }    
    { 
      \psem{s}{\rho}{\var{ref}} (\var{v}, s)
    } 
    \and
    % slot address
    \inferenceRule{E-SlotAddr}
    { 
      \psem{s}{\rho}{\var{se}} (\var{v}, s')
    }    
    { 
      \psem{s}{\rho}{\addr{\var{se}}} (\var{v}, s')
    } 
    \and
    % contract call
    \inferenceRule{E-Create}
    { 
    A_{\kw{Cnstr}} = \var{ctor} \\
    \meta{isPayable}(\var{ctor}) = \meta{false} \\
    \var{ctor}_{\kw{Iface}} = \range{x_i : \alpha_i}{i \in [1,n]} \\    
    \psem{s_{i-1}}{\rho}{\var{se_i}} (\var{v_i}, s_i) \text{  for $i \in [1,n]$} \\
    \rho' = \range{x_i \mapsto \var{v}_i}{i \in [1,n]} \cup \{\kw{caller} \mapsto \ell, \kw{origin} \mapsto \rho(\kw{origin}), \kw{callvalue} \mapsto 0 \}\\
    s_n ; \rho' ; \var{ctor}_{\kw{cases}} \bigstep_{\var{A}} (\ell', s') 
    }    
    { 
      \psem{s_0}{\rho}{\kw{new}~A(\range{\var{se_i}}{i \in [1,n]})} (\ell',s')
    }
    \and
    % contract call payable
    \inferenceRule{E-CreatePayable}
    { 
    A_{\kw{Cnstr}} = \var{ctor} \\
    \meta{isPayable}(\var{ctor}) = \meta{true} \\
    \var{ctor}_{\kw{Iface}} = \range{x_i : \alpha_i}{i \in [1,n]} \\    
    \psem{s_{i-1}}{\rho}{\var{se_i}} (\var{v_i}, s_i) \text{  for $i \in [1,n+1]$} \\
    \rho' = \range{x_i \mapsto \var{v}_i}{i \in [1,n]} \cup \{\kw{caller} \mapsto \ell, \kw{origin} \mapsto \rho(\kw{origin}), \kw{callvalue} \mapsto v_{n+1}\}\\
    s_{n+1} ; \rho' ; \var{ctor}_{\kw{cases}} \bigstep_{\var{A}} (\ell', s') 
    }    
    { 
      \psem{s_0}{\rho}{\kw{new}~A\{\kw{value}:~\var{se}_{n+1}\}(\range{\var{se_i}}{i \in [1,n]})} (\ell',s')
    } 
\end{mathpar}

\subsection*{Semantics of Creates and Updates}
% semantics of creates 
\begin{flushleft}
\fbox{$s ; \rho ; \many{\var{create}} \bigstep_{\var{Id}} (\ell, s')$ }
\end{flushleft} 

\begin{mathpar}
    \inferenceRule{E-Creates}
    {
        s_{i-1} ; \rho ; \var{se}_i \bigstep_{\dummyloc} (\var{v_i}, s_i) \quad \text{ for } i \in [1,n] \\
        \ell = \fresh{s_n} \\
        s_{n+1} = s_n[\ell \mapsto_{\var{Id}} \range{ x_i \mapsto \var{v_i}}{i \in [1,n]}] \\
    }
    { 
        s_0 ; \rho ; \range{\sigma_i~x_i \asgn \var{se}_i}{i \in [1,n]} \bigstep_{\var{Id}}
        (\ell, s_{n+1})
    }
\end{mathpar}

where
$\fresh{s} = \max({\dom{s}}) + 1$

% % semantics of update 
% \begin{flushleft}
% \fbox{$\psem{s}{\rho}{\var{upd}} s'$ }
% \end{flushleft} 
% %

% \begin{mathpar}
%     \inferenceRule{E-Update}
%     {
%         \psem{s}{\rho}{\var{se}} (\var{v}, s') \\ 
%         \ins{s' ; \rho ; \var{ref} ; \var{v}}{s''} \\ 
%     }
%     {
%         \psem{s}{\rho}{\var{ref} \asgn \var{se}} s''
%     } 
% \end{mathpar}   

% semantics of updates 
\begin{flushleft}
\fbox{$\psem{s}{\rho}{\many{\var{upd}}} s'$ }
\end{flushleft} 
\begin{mathpar}
\inferenceRule{E-Updates}
    { 
        \psem{s_{i-1}}{\rho}{\var{se}_i} (\var{v_i}, s_i) \quad \text{ for } i \in [1,n] \\
        \ins{s_{n+i-1} ; \rho ; \var{ref}_i ; \var{v_i}}{s_{n+i}} \quad \text{ for } i \in [1,n]
    }
    {
        \psem{s_0}{\rho}{\range{\var{ref}_i \asgn \var{se}_i}{i \in [1,n]}} s_{2n}
    }
\end{mathpar}

\subsection*{Semantics of Constructor Cases}
When the (well-typed) expressions evaluated do not refer to the location $\ell$ in the evaluation, we call the judgment with $\dummyloc$ in place of the location.

% semantics of constructor cases 
\begin{flushleft}
\fbox{$s ; \rho ; \var{ccases} \bigstep_{\var{Id}} (\ell, s')$}
\end{flushleft} 

\begin{mathpar}
\inferenceRule{E-CtorCases}
    { 
        s^{\kw{U}} ; \rho ; e_j \bigstep_{\dummyloc} \meta{True} \\
        s^{\kw{U}} ; \rho ; e_i \bigstep_{\dummyloc} \meta{False} \text{ for } i \neq j \\
        s^{\kw{U}} ; \rho;\many{\var{create}_j} \Downarrow_{\var{Id}} (\ell, s')
    }
    {
        s ; \rho ; \caseblock{e_i}{\kw{creates}~\many{create_i}} \bigstep (\ell, s')
    }
\end{mathpar}

\subsection*{Semantics of Transition Cases}
% semantics of transition cases 
\begin{flushleft}
\fbox{$\psem{s}{\rho}{\var{tcases}} (\var{v}, s')$ }
\end{flushleft} 

\begin{mathpar}
\inferenceRule{E-TransCases}
    { 
        \psem{s^{\kw{U}}}{\rho}{e_j} \meta{True} \\
        \psem{s^{\kw{U}}}{\rho}{e_i} \meta{False} \text{ for } i \neq j \\
        s^{\kw{U}} ; \rho;\many{\var{upd}_j} \Downarrow_{\ell} s' \\ 
        \psem{(s,s')^{\kw{T}}}{\rho}{\var{ret_j}} \var{v} 
    }
    {
        \psem{s}{\rho}{\caseblock{e_i}{\kw{updates}~\many{\var{upd_i}}~\kw{returns}~\var{ret_i}}} (\var{v}, s')
    }
\end{mathpar}

\end{flushleft}

\begin{flushleft}
\subsection*{Semantics of Constructors and Transitions}
In the following we write $I$ shorthand for $\many{\var{id} : \alpha}$ to emphasize its role as interface environment.
\begin{flushleft}
\fbox{
    $s ; \rho ;  \var{cnstr} \bigstep_{\var{Id}} (\ell, s')$
}
\end{flushleft}        
\begin{mathpar}
%% constructor (cases)
\inferenceRule{E-Ctor}
{
    s^{\kw{U}}; \rho; \var{pre}_i \Downarrow_{\dummyloc} \meta{True} \quad \text{ for } i \in [1,n]\\
    s; \rho; \var{cases} \bigstep_{\var{Id}} (\ell, s')
}
{ 
    s ; \rho ; \constrcase{\var{I}}
    {\{\var{pre}_i\}_{i \in [1,n]}}
    {\var{cases}}
    {\many{post}}
    \bigstep_{\var{Id}}
    (\ell, s')
}
\end{mathpar}

\begin{flushleft}
\fbox{
    $s ; \rho ; \var{trans} \Downarrow_{\ell} (\var{v}, s')$
}
\end{flushleft}        
\begin{mathpar}

%% transition cases
\inferenceRule{E-Trans}
{
    s^{\kw{U}}; \rho; \var{pre}_i \Downarrow_{\ell} \meta{True}  \quad \text{ for } i \in [1,n] \\
    s; \rho;  \var{tcases} \Downarrow_{\ell} (\var{v},s')
}
{
    s; \rho ; \transcase{\var{I}}{\{\var{pre}_i\}_{i \in [1,n]}}
        {\var{tcases}}{\many{post}}
    \Downarrow_{\ell}
    (\var{v}, s')
}

\end{mathpar} 

\subsection*{Transitions between states}

% State transitions is well-defined using the measure $\meta{len}(\Sigma,A)$.
\begin{flushleft}
\fbox{
    $\Sigma \vdash  s \rightsquigarrow s'$
}
\end{flushleft}

\begin{mathpar}
\inferenceRule{E-Step-Cnstr}
{   
    \exists \rho.~ (\var{cnstr} = \Sigmacode(A) \wedge 
    \Sigma \vdash \rho :_{s} \var{cnstr}_{\meta{iface}} \wedge (s ; \rho ; \var{cnstr} \bigstep_{A} (\ell, s')))
}
{ 
    \Sigma \vdash_A s \rightsquigarrow_{\ell}^{\kw{C}} s'
}
\and  
\inferenceRule{E-Step-Trans}
{   \exists \rho.~ 
    \Sigma \vdash \ell :_s A \wedge
    \var{trans} \in \Sigmatranss(A) \wedge 
    \Sigma \vdash \rho :_s \var{trans}_{\meta{iface}} \wedge 
    s ; \rho ; \var{trans} \bigstep_\ell (v, s')
}
{ 
    \Sigma \vdash_A s \rightsquigarrow_{\ell}^{\kw{B}} s'
}

\end{mathpar} 

We define $\Sigma \vdash s \rightsquigarrow s' = (\exists A~\ell, ~\Sigma \vdash_A s \rightsquigarrow_{\ell}^{\kw{C}} s') \vee
    (\exists A~\ell, ~\Sigma \vdash_A s \rightsquigarrow_{\ell}^{\kw{B}} s')$ 

\smallskip
    
We express the reachability of the state taking the reflexive-transitive closure $\rightsquigarrow^*$.
We say that a state $s$ is \emph{possible}, written $\possible{s}$, when 
$\Sigma \vdash \emptyset \rightsquigarrow^{*} s$.

% \begin{flushleft}
% \fbox{
%     $\Sigma \vdash_A \rho : s \rightsquigarrow^{*} s''$
% }
% \end{flushleft}

% We also define the following relation to express the states 
% reachable after creating a constructor at state $s$ using a 
% well-typed environment $\rho$.

% \begin{mathpar}
% \inferenceRule{E-CtorReach}
% {
%     \Sigmacode(A) = \var{cnstr} \\
%     \possible{s} \\
%     \Sigma \vdash \rho :_s \var{cnstr}_{\meta{iface}} \\    
%     s ; \rho ; \var{cnstr} \bigstep_{\var{A}} (\ell, s') \\
%     \Sigma \vdash s' \rightsquigarrow^* s''
% }
% { 
%     \Sigma \vdash_A \rho : s \rightsquigarrow^{*}_\ell s''
% }
% \end{mathpar}

\subsection*{Semantics of Constructor Postconditions}

%% constructor postconditions without reachability

\begin{flushleft}
\fbox{
    $s ; \rho; \var{cnstr} \Downarrow^{\kw{post}} \meta{True}$
}
\end{flushleft}

\begin{mathpar}
\inferenceRule{E-CtorPost}
{
    \var{constr}_{\kw{post}} = \range{\var{post}_i}{i \in [1,n]} \\
    s ;\rho; \var{constr} \bigstep_{\var{A}} (\ell, s') \implies
    \psem{(s')^{\kw{U}}}{\rho}{\var{post_i}} \meta{True} \text{ for $i \in [1,n]$}
}
{ 
    s ; \rho; \var{cnstr} \Downarrow^{\kw{post}}_{\var{A}} \meta{True}
}
\end{mathpar}

%% constructor postconditions with reachability

\begin{flushleft}
\fbox{
    $\Sigma \vdash A \downarrow^{\kw{Cpost}} \meta{True}$
}
\end{flushleft}

\begin{mathpar}
\inferenceRule{E-CtorPostCheck}
{
    \var{cnstr} = \Sigmacode(A) \\
    \forall s, \rho.~\possible{s} ~\wedge~(\Sigma \vdash \rho :_s \var{cnstr}_{\kw{iface}}) \implies
     s ; \rho; \var{cnstr} \Downarrow^{\kw{post}}_{\var{A}} \meta{True}
}
{ 
    \Sigma \vdash A \downarrow^{\kw{Cpost}} \meta{True}
}
\end{mathpar}

\subsection*{Semantics of Transition postconditions}

%% transition postconditions without reachability

\begin{flushleft}
\fbox{
    $s ; \rho; \var{trans} \bigstep^{\kw{post}}_\ell \meta{True}$
}
\end{flushleft}

\begin{mathpar}
\inferenceRule{E-TransPost}
{
    \var{trans}_{\kw{post}} = \range{\var{post}_i}{i \in [1,n]} \\
    s ;\rho ; \var{trans} \bigstep_\ell (\var{v},s') \implies
    \psem{(s , s')^{\kw{T}}}{\rho}{\var{post_i}} \meta{True} \text{ for $i \in [1,n]$}
}
{ 
    s ; \rho; \var{trans} \bigstep^{\kw{post}}_\ell \meta{True}
}
\end{mathpar}

\begin{flushleft}
\fbox{
    $\Sigma \vdash A \downarrow^{\kw{Bpost}} \meta{True}$}
\end{flushleft}

%% transition postconditions with reachability
\begin{mathpar}
\inferenceRule{E-TransPostCheck}
{
    \Sigmatranss(A) = \range{\var{trans_i}}{i \in [1,n]} \\\\  
    \forall~i \in [1,n],~s, ~\ell, ~\rho. ~ \possible{s} \implies
     \Sigma \vdash \ell :_s A \implies  \\
     \Sigma \vdash \rho :_{s} \var{trans_i}_{\kw{iface}} \implies
      s ; \rho; \var{trans_i} \bigstep^{\kw{post}}_\ell \meta{True}
}
{ 
    \Sigma \vdash A \downarrow^{\kw{Bpost}} \meta{True}
}
\end{mathpar}

\subsection*{Semantics of Invariants}

\begin{flushleft}
\fbox{
    $\Sigma \vdash A \downarrow^{\kw{inv}} \meta{True}$
}
\end{flushleft}

%% invariants checking
\begin{mathpar}
\inferenceRule{E-InvCheck}
{   
    %\var{cnstr} = \Sigmacode(A) \\
    \forall \, s, \, \ell. \, \possible{s} \implies
    \Sigma \vdash \ell :_s A \implies
    s ; \cdot ; A_{\meta{inv}} \bigstep_\ell \meta{True}
    % \forall s, \rho, \ell, s'.~\possible{s} ~\wedge~(\Sigma \vdash \rho :_s \var{cnstr}_{\meta{iface}}) \implies
    %  \Sigma \vdash_A \rho : s \rightsquigarrow^{*}_{\ell} s' \implies
    % % \forall s, s',\rho.~(\Sigma \vdash \rho :_s \var{cnstr}_{\meta{iface}} \implies 
    % % s ; \rho ; \var{cnstr} \bigstep (\ell, s') \implies 
    % % \forall s''.~\Sigma \vdash s' \rightsquigarrow^* s'' \implies 
    % 
}
{
    \Sigma \vdash  A \downarrow^{\kw{inv}} \meta{True}
}
\end{mathpar}

\subsection*{Semantics of Contracts}

\begin{flushleft}
\fbox{
    $\Sigma \vdash A \downarrow \meta{True}$
}
\end{flushleft}

%% invariants checking
\begin{mathpar}
\inferenceRule{E-Contract}
{   
    \Sigma \vdash  A \downarrow^{\kw{Cpost}} \meta{True} \\
    \Sigma \vdash  A \downarrow^{\kw{Bpost}} \meta{True} \\
    \Sigma \vdash  A \downarrow^{\kw{inv}} \meta{True} \\
}
{ 
    \Sigma \vdash  A \downarrow \meta{True}
}
\end{mathpar}

\end{flushleft}

\section{Value and Environment Typing}
% Value typing
\begin{flushleft}
    \fbox{$\vdash v : \beta $}
\end{flushleft}
\begin{mathpar}
% Addresses 
\inferenceRule{V-Addr}
{
    \ell \in \meta{Addr}
}
{
   \vdash \ell : \kw{address}
}
\and
% Booleans 
\inferenceRule{V-Bool}
{
    b \in \mathbb{B}
}
{
    \vdash b : \kw{bool}
}
\and
% integers
\inferenceRule{V-Int}
{
    n \in \mathbb{Z} \\
    (\meta{min}(\iota) \leq n \leq \meta{max}(\iota)) ~\vee~ \iota = \intt{} \\
}
{
    \vdash n : \iota
}
\end{mathpar}

\begin{flushleft}
    \fbox{$\vdash v : \mu $}
\end{flushleft}

\begin{mathpar}
% mappings
\inferenceRule{V-BaseValMu}
{
    \vdash v : \beta \\
}
{
    \vdash v : \beta
}
\and
\inferenceRule{V-Mapping}
{
    f \in \meta{meta}(\beta) \rightarrow \meta{Value} \\
    \forall \, v. ~\;  \vdash v : \beta \Rightarrow \;
     \vdash f(v) : \mu
}
{
     \vdash f : \mapping{\beta}{\mu}
}
\end{mathpar}

% Value typing
\begin{flushleft}
    \fbox{$\Sigma \vdash v :_s \alpha $}
\end{flushleft}
\begin{mathpar}
% Addresses 
\inferenceRule{V-BaseValAlpha}
{
     \vdash v : \beta
}
{
    \Sigma \vdash v :_s \beta
}
\and
% contracts
\inferenceRule{V-AddrIsContract}
{ \ell \in \dom{s} \\
   A \in \dom{\Sigmastore} \\
   s(\ell).{\kw{type}} = A \\
   % If we keep the type in the store we can have stronger 
   % uniqueness of typing, which may be useful, even necessary
   % we keep the type of the contract in the state
   % \stype{\ell} = A \\
   %    
  (\forall \, x. ~\, x \in \dom{s(\ell)} \Leftrightarrow (\exists \, \sigma. ~\, \Sigmastore(A)(x) = \sigma)) \\
  (\forall \, x,\sigma. ~\, \Sigmastore(A)(x) = \sigma ~\Rightarrow ~ \Sigma \vdash s(\ell)(x) :_s \sigma) 
}
{
    \Sigma \vdash \ell :_s \kw{address}_A
}
\end{mathpar}

\begin{flushleft}
    \fbox{$\Sigma \vdash v :_s \sigma $}
\end{flushleft}

\begin{mathpar}
\inferenceRule{V-MappingVal}
{
     \vdash v : \mu \\
}
{
    \Sigma \vdash v :_s \mu
}
\and
\inferenceRule{V-ABIVal}
{
    \Sigma \vdash v :_s \alpha \\
}
{
    \Sigma \vdash v :_s \alpha
}
\and
% contracts
\inferenceRule{V-Contract}
{ 
    \Sigma \vdash \ell :_s \kw{address}_A
%     \ell \in \dom{s} \\
%    A \in \dom{\Sigmastore} \\
%  (\forall \, x. ~\, x \in \dom{s(\ell)} \Leftrightarrow (\exists \, \sigma. ~\, \Sigmastore(A)(x) = \sigma)) \\
%   (\forall \, x,\sigma. ~\, \Sigmastore(A)(x) = \sigma ~\Rightarrow ~ \Sigma \vdash s(\ell) (x) :_s \sigma) 
}
{
    \Sigma \vdash \ell :_s A
}
\end{mathpar}

We lift the above definition to optional slot types.
\begin{flushleft}
    \fbox{$\Sigma \vdash v :_s \maybe{\sigma} $}
\end{flushleft}

\begin{mathpar}
% Bot 
\inferenceRule{V-None}
{
    \\
}
{
    \Sigma \vdash v :_s \bot
}
% Bot 
\and 
\inferenceRule{V-Some}
{
   \Sigma \vdash v :_s \sigma
}
{
    \Sigma \vdash v :_s \sigma
}
\end{mathpar}

% Env typing
\begin{flushleft}
\fbox{$\Sigma \vdash \rho :_s I $}
\end{flushleft}
\begin{mathpar}
% s is not really relevant here as \rho only has base values
\inferenceRule{V-Env}
{
    \dom{\rho} = \dom{I} \cup \{\kw{caller}, \kw{origin}, \kw{callvalue}\} \\
    \forall \, x \in \dom{I}. ~\, \Sigma \vdash \rho(x) :_s I(x) \\ 
    \vdash \rho(\kw{caller}) : \kw{address} \\
    \vdash \rho(\kw{origin}) : \kw{address} \\
    \vdash \rho(\kw{callvalue}) : \kw{uint256} 
}
{
    \Sigma \vdash \rho :_s I
}
\end{mathpar}

% The following judgement asserts that a given state, environment, and entry location
% satisfy some pointer assertions.

% \begin{flushleft}
%     \fbox{$\Sigma \vdash (s, \rho, \ell) : P $}
% \end{flushleft}
% \begin{mathpar}
% \inferenceRule{V-Pointers}
% {
%     \forall \, i \in [1,n]. ~\, s^{\kw{U}}; \rho; \var{ref_i} \Downarrow_{\ell} (\ell_i, \kw{U}) \\
% %
%     \Sigma \vdash \ell_i :_s A_i
% }
% {
%     \Sigma \vdash (s, \rho, \ell) : \range{\var{ref_i} \mapsto A_i}{i \in [1,n]}
% }
% \end{mathpar}

\section{Semantic Entailment}
We formalize a notion of semantic entailment to check whether a collection of
boolean expressions is always true in a well-typed configuration. We use this
notion to perform necessary semantic checks during typing. 

% Semantic entailment
\begin{flushleft}
    \fbox{$\Sigma; I; \Phi \vDash_{\maybe{A}} \many{e} $}
\end{flushleft}
\begin{mathpar}
\inferenceRule{ValidExps}
{
    \forall \, s, \rho, \ell.  
    ~\, \Sigma \vdash \rho :_s I~~\wedge~~
    \Sigma \vdash \ell :_s \maybe{A}~~\wedge~~ 
    \psem{s}{\rho}{\Phi} \meta{True}
    ~~\Rightarrow~~ 
    \many{\psem{s}{\rho}{e} \meta{True}}
}
{
    \Sigma; I; \Phi \vDash_{\maybe{A}} \many{e}
}
\end{mathpar}

% Semantic entailment
\begin{flushleft}
    \fbox{$\Sigma; I; \Phi \vDash_{\maybe{A}} (\range{se_i}{i \in [1,n]}, \many{pre'}) $}
\end{flushleft}
\begin{mathpar}
\inferenceRule{ValidIffs}
{
    \forall \, s_0, \rho, \ell, \range{v_i}{i \in [1,n]} . ~\, \Sigma \vdash \rho :_s I~~\wedge~~
        \Sigma \vdash \ell :_{s_0} \maybe{A}~~\wedge~~ \\\\
        \psem{s_0}{\rho}{\Phi} \meta{True}~~\wedge~~  \\\\
        (\forall \, i \in [1,n]. ~\, \psem{s_{i-1}}{\rho}{se_i}(v_i, s_i)) ~~\Rightarrow~~ \\\\
        \many{\psemdummy{s_n}{\range{x_i \mapsto v_i}{i \in [1,n]}}{\var{pre'}} \meta{True}}
}
{
    \Sigma; I; \Phi \vDash_{\maybe{A}} (\range{se_i}{i \in [1,n]}, \many{pre'})
}
\end{mathpar}

\section{Typing}
\subsection*{Typing Environments}
%% Typing environments
\begin{minipage}{1\textwidth}
\deftable{
    \defline{(Contract Storage)}{C}{(\many{x : \sigma})}
    \defline{(State)}{\Sigma}{  \{ ~ \sstorage : (\many{\var{Id} : C}), }
        \defnewline{\hspace{0.24cm} \scnstrs :  
        (\many{\var{Id} : \var{cnstr}}) 
         } 
        \defnewline{\hspace{0.24cm} \stranss :
        (\many{\var{Id} : \var{\many{trans}} })
        ~\} }
    \defline{(Calldata)}{I}{(\many{\var{x} : \alpha})} 
    }
\end{minipage}

\subsection*{Top-level Specification and Contract Judgments}
%% Top-level spec
\begin{flushleft}
%% Spec judgement
% \begin{minipage}[t]{0.42\textwidth}
    \begin{flushleft}
    \fbox{$\vdash \var{spec} : \Sigma$}
    \end{flushleft}
\begin{mathpar}
\inferenceRule{T-Spec}
{
    \Sigma_0 = \varnothing \\
    \Sigma_{i-1} \vdash \var{contract_i} : \Sigma_i \\
}
{\vdash \range{\var{contract_i}}{i\in [1,n]} : \Sigma_n}
\end{mathpar}
% \end{minipage}
%%
%%
%% Contract judgment 
% \begin{minipage}[t]{0.57\textwidth}
\begin{flushleft}
    \fbox{$\Sigma \vdash \var{contract} : \Sigma'$}
\end{flushleft}
\begin{mathpar}
    \inferenceRule{T-Contract}
    {
        \Sigma \vdash_{\var{Id}} \var{cnstr} : C \\ 
        \Sigma' = \Sigma ~\meta{with}~ \{ \sstorage = (\Sigmastore, \var{Id} : C), 
             \scnstrs = (\Sigmacode, \var{Id} : \var{cnstr}) \} \\
        \many{(\Sigma' \vdash_{\var{Id}}  \var{trans})} \\ 
        \many{(\Sigma'; \cdot; \kw{true} \vdash_{\var{Id},\kw{U}} \var{inv} : \bool)} \\ 
        \Sigma'' = \Sigma' ~\meta{with}~ \{\stranss = (\Sigmatranss, \var{Id} : \many{\var{trans}})\}
    }
    {
        \Sigma \vdash
        \contract{\var{Id}}
        {\var{cnstr}}
        {\many{\var{trans}}}
        {\kw{invariants}~\many{\var{inv}}}
        : \Sigma''
    }
\end{mathpar}
% \end{minipage}
%%
%%
%% Constructors
\subsection*{Constructor Judgment}
\begin{flushleft}
    \fbox{$\Sigma \vdash_{\var{Id}} \var{constructor} : C $}
    \end{flushleft}
\begin{mathpar}
%% constructor case
\inferenceRule{T-Ctor}{
%%  interface is WF
\Sigma \vdash I ~\meta{wf} \\
% \Sigma; I \vdash_{\none}  P \\
%% preconds are well typed
\many{(\Sigma ; \var{I}; \kw{true} \vdash_{\none,\kw{U}} \var{pre} : \kw{bool})}  \\\\
%% case conditions are well typed
(\forall \, i \in [1,n].~\,\Sigma;  \var{I}; \kw{true} \vdash_{\none,\kw{U}} e_i : \kw{bool}) \and
%% Current constraints
\Phi_i = e_i \wedge \bigwedge_{\var{pre} \in \many{\var{pre}}}{\var{pre}} \\     
%% creates are well typed
(\forall \, i \in [1,n]. ~\, \Sigma; \var{I}; \Phi_i \vdash_{\var{Id}} \many{\var{create}_i} : C) \\\\
% defining \Sigma'
\Sigma' = \Sigma ~\meta{with}~ \{ ~ \sstorage = (\Sigmastore, \var{Id} : C) ~ \} \\\\
%% no overlap with interface variables
\dom{C} \cap \dom{I} = \emptyset \\
%% postconds are well typed
\many{( \Sigma'; \var{I}; \kw{true} \vdash_{\var{Id},\kw{U}} \var{post}: \kw{bool} )}   \\\\
%% cases are consistent
\Sigma; I; \kw{true} \vDash_{\bot} 
\bigwedge_{\var{pre \in \many{\var{pre}}}}{\var{pre}} \Rightarrow (\bigvee_{i \in [1,n]} e_i) ~~ \wedge ~~(\bigwedge_{i \in [1,n], j \in [i+1, n]} \neg (e_i \wedge e_j))
}
{\small \Sigma \vdash_{\var{Id}} 
  \constrinlinecase{\var{I}}{\many{\var{pre}}}{e}{\many{\var{create}_i}}{\many{post}} 
  : C
}
\end{mathpar}
\subsection*{Transition Judgment}
% transition
\begin{flushleft}
    \fbox{$\Sigma \vdash_{\var{Id}} \var{transition} $}
    \end{flushleft}
    \begin{mathpar}
        \inferenceRule{T-Trans}{
        %%  interface is WF
        \Sigma \vdash I ~\meta{wf} \\
        %% preconds are well typed
        \many{(\Sigma ; \var{I}; \kw{true} \vdash_{\var{Id},\kw{U}} \var{pre} : \kw{bool})}  \\\\
        %% case conditions are well typed
        (\forall \, i \in [1,n]. ~\, \Sigma; \var{I}; \kw{true} \vdash_{\var{Id},\kw{U}} e_i : \kw{bool}) \and
        %% Current constraints
        \Phi_i = e_i \wedge \bigwedge_{\var{pre} \in \many{\var{pre}}}{\var{pre}} \\ 
        %% updates are well typed
        (\forall \, i \in [1,n].~\, \many{\Sigma; \var{I}; \Phi_i  \vdash_{\var{Id}} \many{\var{upd}_i}}) \\\\
        %% returns are well typed
        (\forall \, i \in [1,n]. ~\, \Sigma; \var{I} \vdash_{\var{Id},\kw{T}} \var{ret}_i : \alpha) \and
        %% postconds are well typed
        \many{( \Sigma; \var{I}; \kw{true} \vdash_{\var{Id},\kw{T}} \var{post}: \kw{bool} )}   \\\\
        %% cases are consistent
        \Sigma; I; \kw{true} \vDash_{\var{Id}} 
        \bigwedge_{\var{pre \in \many{\var{pre}}}}{\var{pre}} \Rightarrow (\bigvee_{i \in [1,n]} e_i) ~~ \wedge ~~(\bigwedge_{i \in [1,n], j \in [i+1, n]} \neg (e_i \wedge e_j))
        }
        {
        {\small \Sigma \vdash_{Id} 
        \transinlinecase{\var{I}}{\many{\var{pre}}}
        {e}{\many{\var{upd}}}{\var{ret}}{\many{post}}}
        }
    \end{mathpar}

% \subsection*{Pointer Judgment}
% This judgment ensures that a pointer collection is well typed, meaning that
% references are always addresses and contract types are already defined.

% Note that, references that appear in pointers cannot use coercions themselves as
% they are typed in an empty pointer environment. 
% %
% In practice, this means that references of pointer constrains in constructors
% can only be calldata variables (we formally prove this as a lemma). % TODO
% %

\subsection*{Type Well-formedness}

\begin{flushleft}
    \fbox{$\Sigma \vdash \alpha ~\meta{wf}$}
    \end{flushleft}

    \begin{mathpar}
        \inferenceRule{WFInt}{
            \iota \not = \intt{}
        }
        {
            \Sigma \vdash \iota ~\meta{wf}
        }

        \inferenceRule{WFContractAddr}{
            \alpha = \kw{address}_A \Rightarrow A \in \dom{\Sigmastore}
        }
        {
            \Sigma \vdash \alpha ~\meta{wf}
        }
    \end{mathpar}

\begin{flushleft}
    \fbox{$\Sigma \vdash \sigma ~\meta{wf}$}
    \end{flushleft}
    \begin{mathpar}
        \inferenceRule{T-WFContract}{
             A \in \dom{\Sigmastore}
        }
        {
            \Sigma \vdash A ~\meta{wf}
        }
        \and
        \inferenceRule{T-WFAlpha}{
             \Sigma \vdash \alpha ~\meta{wf}
        }
        {
            \Sigma \vdash \alpha ~\meta{wf}
        }
        \and
        \inferenceRule{T-WFMapping}{
            \\
        }
        {
            \Sigma \vdash \mu ~\meta{wf}
        }
    \end{mathpar}

\begin{flushleft}
    \fbox{$\Sigma \vdash I ~\meta{wf}$}
    \end{flushleft}

    \begin{mathpar}
        \inferenceRule{T-WFEnv}{
            \forall~i~\in~[1,n],~\Sigma \vdash \alpha_i ~\meta{wf}
        }
        {
            \Sigma \vdash \range{\var{x_i} : \alpha_i}{i \in [1,n]} ~\meta{wf}
        }
    \end{mathpar}

\subsection*{Create and Update Judgments}

% Create
\begin{minipage}{0.62\textwidth}
\begin{flushleft}
    \fbox{$\Sigma ; I; \Phi \vdash_{\var{Id}} \many{\var{create}} : C $}
    \end{flushleft}
\begin{mathpar}
\inferenceRule{T-Creates}
{   \forall \, i \in [1,n].~\,\Sigma \vdash \sigma_i ~\meta{wf} \\
    C = \range{x_i : \sigma_i}{i \in [1,n]} \\
    \kw{balance} : \kw{uint256} \in C \\
    \forall \, i \in [1,n]. ~\,\Sigma; I; \Phi \vdash_{\bot} \var{se}_i : \sigma_i \\
}
{
\Sigma; I; \Phi  \vdash_{\var{Id}} \range{\sigma_i ~\var{x}_i \asgn \var{se}_i}{i \in [1,n]} : C
}
\end{mathpar}
\end{minipage}

% One update
\begin{minipage}{0.49\textwidth}
\begin{flushleft}
    \fbox{$\Sigma ; I ; \Phi \vdash_{\var{Id}} \var{update} $}
    \end{flushleft}
\begin{mathpar}
\inferenceRule{T-Update}
{
    \Sigma; I;\vdash^{\kw{S}}_{\var{Id}, \kw{U}} \var{ref} : \sigma \\ %% TODO fill
    \Sigma; I; \Phi  \vdash_{\var{Id}} \var{se} : \sigma \\
}
{
\Sigma; I; \Phi \vdash_{\var{Id}} \var{ref_{\annot{\sigma}}} \asgn \var{se}
}
\end{mathpar}
\end{minipage}
% Many updates
\begin{minipage}{0.48\textwidth}
\begin{flushleft}
    \fbox{$\Sigma ; I ; \Phi \vdash_{\var{Id}} \many{\var{update}} $}
    \end{flushleft}
\begin{mathpar}
\inferenceRule{T-Updates}
{
    \Sigma; I;  \Phi \vdash_{\var{Id}} \var{ref_{i}} \asgn \var{se_i} \text{ for $i \in [1,n]$}\\
    \neg (\var{ref}_j \specificeq \var{ref}_i) \text{ for $i \in [1,n]$ and $j \in [1,i]$}\\
}
{
\Sigma; I; \Phi \vdash_{\var{Id}} \range{\var{ref_{i}} \asgn \var{se_i}}{i \in [1,n]}
}
\end{mathpar}
\end{minipage}
where the relation $\specific$ is defined on the syntax as follows
\[\begin{array}{l}
    \var{ref}.x \specific \var{ref} \\
    \var{ref}_1 \specific \var{ref}_2 \implies \var{ref}_1.x \specific \var{ref}_2
\end{array}\]

We define $\specificeq$ to be the reflexive closure of $\specific$.

Note that in the~\rref{T-Ctor} we require that the created storage contains $\kw{balance}$ for 
the constructor to be well-typed, but it is the user's responsibility to ensure that the
storage is properly initialized.

\subsection*{Reference Judgment}
The reference judgement is annotated with a tag $k$ that takes values from the
set $\{ \kw{S}, \kw{N} \}$. $\kw{S}$ denotes a storage reference that is
reachable from the current contract's storage, without any coercion. References
tagged with $\kw{N}$ are references that are calldata references, references
that involve coercions or environment variables.

Only references tagged with $\kw{S}$ can be updated. %% TODO: motivate

\begin{flushleft}
    \fbox{$\Sigma ; I \vdash^{k}_{\var{\maybe{Id}}, t} \var{ref} : \sigma $}  
    \quad \quad where $k \in \{ \kw{S}, \kw{N} \}$
    \end{flushleft}
\begin{mathpar}
    \inferenceRule{T-Calldata}
    {
       x : \alpha \in I 
    }
    { 
      \Sigma; I \vdash^{\kw{N}}_{\maybe{\var{Id}}, t} x : \alpha
    }    
    \and\and
    \inferenceRule{T-Storage}
    {
       x : \sigma \in \Sigmastore(\var{Id}) \\
       x \not \in \dom{I}
    }
    { 
      \Sigma; I \vdash^{\kw{S}}_{\var{Id}, \kw{U}} x : \sigma
    }    
    \and   
    \inferenceRule{T-StoragePre}
    {
       x : \sigma \in \Sigmastore(\var{Id}) \\
       x \not \in \dom{I}
    }
    { 
      \Sigma; I \vdash^{\kw{S}}_{\var{Id}, \kw{T}} \pre{x} : \sigma
    }    
    \and\and
    \inferenceRule{T-StoragePost}
    {
       x : \sigma \in \Sigmastore(\var{Id}) \\
       x \not \in \dom{I}
    }
    { 
      \Sigma; I \vdash^{\kw{S}}_{\var{Id}, \kw{T}} \post{x} : \sigma
    }    
    \and   
    \inferenceRule{T-Coerce}
    {
        % NOTE only calldata variables can be casted to contracts?
        \Sigma; I \vdash^{k}_{\maybe{\var{Id}}, t} \var{ref} : \kw{address}_A \\
    }
    { 
      \Sigma; I \vdash^{\kw{N}}_{\maybe{\var{Id}}, t} \var{ref} ~\kw{as}~ A : A
    } 
    \and
    \inferenceRule{T-Upcast}
    {
        \Sigma; I \vdash^{k}_{\maybe{\var{Id}}, t} \var{ref} : \kw{address}_A \\
    }
    { 
        \Sigma; I \vdash^{\kw{N}}_{\maybe{\var{Id}}, t} \var{ref}  : \kw{address}
    } 
    \and\and
    \inferenceRule{T-Field}
    {
       \Sigma; I \vdash^{k}_{\maybe{\var{Id}}, t}  \var{ref} : \var{A} \\
       \Sigmastore(A)(x) = \sigma 
    }
    {
      \Sigma; I \vdash^{k}_{\maybe{\var{Id}}, t} \var{ref}.x : \sigma
    }
    \and
    \inferenceRule{T-MapIndex}
    {
       \Sigma; I \vdash^{k}_{\maybe{\var{Id}}, t}  \var{ref} : \mapping{\beta}{\mu} \\\\
       \Sigma; I \vdash_{\maybe{\var{Id}}, t}  e : \beta \\
    }
    { 
      \Sigma; I \vdash^{\kw{N}}_{\maybe{\var{Id}}, t} \var{ref}[e] : \mu
    }
    \and
    \inferenceRule{T-Environment}
    {
       \Sigma; I \vdash_{\maybe{\var{Id}}}  \var{env} : \alpha \\
    }
    { 
      \Sigma; I \vdash^{\kw{N}}_{\maybe{\var{Id}}, \kw{U}}  \var{env} : \alpha
    }    
\end{mathpar}

\subsection*{Environment References Judgment}
\begin{flushleft}
    \fbox{$\Sigma; I \vdash_{\maybe{\var{Id}}}  \var{env} : \alpha$}
\end{flushleft}        
\begin{mathpar}
    \inferenceRule{T-Caller}
    { }{\Sigma; I \vdash_{\maybe{\var{Id}}} \kw{caller} : \kw{address}}
    \and
    \inferenceRule{T-Origin}
    { }{\Sigma; I \vdash_{\maybe{\var{Id}}} \kw{origin} : \kw{address}}
    \and
    \inferenceRule{T-Callvalue}
    { }{\Sigma; I \vdash_{\maybe{\var{Id}}} \kw{callvalue} : \kw{uint256}}
    \and
    \inferenceRule{T-This}
    { }{\Sigma; I \vdash_{\var{Id}} \kw{this} : \kw{address}_{\var{Id}}}
\end{mathpar}

\subsection*{Slot Expression Judgment}
\begin{flushleft}
    \fbox{$\Sigma ; I; \Phi \vdash_{\maybe{\var{Id}}} \var{se} : \sigma $}
\end{flushleft}        
\begin{mathpar}
    \inferenceRule{T-MapExp}
    { 
      \Sigma; I; \Phi \vdash_{\maybe{\var{Id}}} m : \mu
    }    
    { 
      \Sigma; I; \Phi \vdash_{\maybe{\var{Id}}} m : \mu
    } 
    \and
    \inferenceRule{T-SlotRef}
    { 
      \Sigma; I \vdash^{k}_{\maybe{\var{Id}}, \kw{U}} \var{ref} : \var{A}
    }    
    { 
      \Sigma; I; \Phi \vdash_{\maybe{\var{Id}}} \var{ref_{\annot{\var{A}}}} : \var{A}
    } 
    \and
    \inferenceRule{T-SlotAddr}
    {
      \Sigma; I; \Phi \vdash_{\maybe{\var{Id}}} \var{se} : \var{A}
    }
    {
      \Sigma; I; \Phi \vdash_{\maybe{\var{Id}}} \addr{\var{se}} : \kw{address}_A
    }
    \and
    \inferenceRule{T-Create}
    %% TODO A_interface needs to be tracked by the typing environment
    { 
        \Sigmacode(A) = \var{ctor} \\
        \meta{isPayable}(\var{ctor}) = \meta{false} \\\\
         \var{ctor}_{\kw{Iface}} = \range{x_i : \alpha_i}{i \in [1,n]}\\
        \forall \, j \in [1,n]. ~\, \Sigma; I; \Phi \vdash_{\maybe{\var{Id}},\kw{U}} \var{se_j} : \alpha_j \\\\
        % %% constructor preconditions are guaranteed to hold
        \Sigma; I;  \Phi \vDash_{\maybe{\var{Id}}} (\range{se_i}{i \in [1,n]}, \var{ctor}_{\kw{Iff}}) 
        % %% syntactic check for pointers
        % (\forall \, i \in [1,n], B. ~\, x_i \mapsto B \in A_{\kw{Pointers}} \Rightarrow
        %     \Sigma; I; P \vdash_{\maybe{\var{Id}}} \var{se_i}  ~\meta{is}~ B)
    }
    {
        \Sigma; I; \Phi \vdash_{\maybe{\var{Id}}} \kw{new}~A(\var{se_1}, \dots, \var{se_n}) : A
    }
    \and
    \inferenceRule{T-CreatePayable}
    %% TODO A_interface needs to be tracked by the typing environment
    { 
        \Sigmacode(A) = \var{ctor} \\
        \meta{isPayable}(\var{ctor}) = \meta{true} \\\\
        \var{ctor}_{\kw{Iface}} = \range{x_i : \alpha_i}{i \in [1,n]} \\
        \forall \, j \in [1,n+1]. ~\, \Sigma; I; \Phi \vdash_{\maybe{\var{Id}},\kw{U}} \var{se_j} : \alpha_j \\
        \Sigma; I; \Phi \vdash_{\maybe{\var{Id}},\kw{U}} \var{se_{n+1}} : \kw{uint256} \\\\
        % %% constructor preconditions are guaranteed to hold
        \Sigma; I;  \Phi \vDash_{\maybe{\var{Id}}} (\range{se_i}{i \in [1,n]}, \var{ctor}_{\kw{Iff}}) 
        % %% syntactic check for pointers
        % (\forall \, i \in [1,n], B. ~\, x_i \mapsto B \in A_{\kw{Pointers}} \Rightarrow
        %     \Sigma; I; P \vdash_{\maybe{\var{Id}}} \var{se_i}  ~\meta{is}~ B)
    }
    {
        \Sigma; I; \Phi \vdash_{\maybe{\var{Id}}} \kw{new}~A\{\kw{value}:~\var{se_{n+1}}\}(\var{se_1}, \dots, \var{se_n}) : A
    }
\end{mathpar}

\subsection*{Mapping Expression Judgment}
\begin{flushleft}
    \fbox{$\Sigma ; I; \Phi \vdash_{\maybe{\var{Id}}} \var{m} : \mu $}
\end{flushleft}        
\begin{mathpar}
    \inferenceRule{T-Exp}
    { 
      \Sigma; I; \Phi \vdash_{\maybe{\var{A}}, \kw{U}} e : \beta
    }    
    { 
      \Sigma; I; \Phi \vdash_{\maybe{\var{A}}} e : \beta
    } 
    \and
    \inferenceRule{T-Mapping}
    {
        \forall \, i \in [1,n]. \\ 
        \Sigma; I; \Phi \vdash_{\maybe{\var{Id}},\kw{U}} e_i : \beta \\
        \Sigma; I; \Phi \vdash_{\maybe{\var{Id}}} \var{m}_i : \mu \\
    } 
    { \Sigma; I; \Phi \vdash_{\maybe{\var{Id}}} \map{\range{e_i \Rightarrow \var{m}_i}{i \in [1,n]}}{\annot{\mapping{\beta}{\mu}}} :
    \mapping{\beta}{\mu}
    }
    \and
    \inferenceRule{T-MappingUpd}
    {
        \Sigma; I; \Phi \vdash_{\maybe{\var{Id}},\kw{U}} \var{ref} : \mapping{\beta}{\mu} \\
        \forall \, i \in [1,n]. \\ 
        \Sigma; I; \Phi \vdash_{\maybe{\var{Id}},\kw{U}} e_i : \beta \\
        \Sigma; I; \Phi \vdash_{\maybe{\var{Id}}} \var{m}_i : \mu \\
    } 
    { \Sigma; I; \Phi \vdash_{\maybe{\var{Id}}} \var{ref}\map{\range{e_i \Rightarrow \var{m}_i}{i \in [1,n]}}{\annot{\mapping{\beta}{\mu}}} :
    \mapping{\beta}{\mu}
    }
\end{mathpar}

\subsection*{Expression Judgment}
\begin{flushleft}
\fbox{$\Sigma; I; \Phi \vdash_{\var{\maybe{Id}},t} \var{expr} : \beta $}
\end{flushleft}
\begin{mathpar}
\inferenceRule{T-Int}
{ n \in \mathbb{Z}\\
  (\meta{min}(\iota) \leq n \leq \meta{max}(\iota)) ~\vee~ \iota = \intt{}
}
{ 
  \Sigma; I; \Phi \vdash_{\var{\maybe{Id}},t} n : \iota
}    
\and
\inferenceRule{T-Bool}
{b \in \mathbb{B} \\
}
{ 
  \Sigma; I; \Phi \vdash_{\maybe{\var{Id}},t} b : \bool
}    
\and
\inferenceRule{T-Ref}
{
    \Sigma; I \vdash^{k}_{\maybe{\var{Id}},t} \var{ref} : \beta
}
{ 
  \Sigma; I; \Phi \vdash_{\maybe{\var{Id}},t} \var{ref_{\annot{\beta}}} : \beta
}
\and
\inferenceRule{T-Addr}
{
   \Sigma; I \vdash^{k}_{\maybe{\var{Id}},\kw{U}} \var{ref_{\annot{A}}} : A \\
}
{ 
  \Sigma; I; \Phi \vdash_{\maybe{\var{Id}},\kw{U}} \addr{\var{ref_{\annot{A}}}} : \kw{address} \\
}    
\and
\inferenceRule{T-Range}
{
    \Sigma; I; \Phi \vdash_{\maybe{\var{Id}},t} e : \iota_2
}
{ 
  \Sigma; I; \Phi \vdash_{\maybe{\var{Id}},t} \inrange{\iota_1}{e} : \bool
}    
\and
\inferenceRule{T-BopI}
{
    \Sigma; I; \Phi \vdash_{\maybe{\var{Id}},t} e_1 : \iota_1 \\
    \Sigma; I; \Phi \vdash_{\maybe{\var{Id}},t} e_2 : \iota_2 \\
    t = \kw{T} \implies \iota = \kw{int} \\ %% in untimed expressions we don't care about bitwidth
    \Sigma; I; \Phi \vDash_{\maybe{Id}} \inrange{\iota}{e_1~\opi~e_2} \\ 
}
{
  \Sigma; I; \Phi \vdash_{\maybe{\var{Id}},t} e_1~\opi~e_2 : \iota
}
\and
\inferenceRule{T-NumConv}
{
    \Sigma; I; \Phi \vdash_{\maybe{\var{Id}},t} e : \iota_1 \\
    \iota_2 \not = \kw{int} \Rightarrow \Sigma; I; \Phi \vDash_{\maybe{Id}} \inrange{\iota_2}{e}
}
{
  \Sigma; I; \Phi \vdash_{\maybe{\var{Id}},t} e : \iota_2
}
\and
\inferenceRule{T-BopB}
{
    \Sigma; I; \Phi \vdash_{\maybe{\var{Id}},t} e_1 : \bool \\
    \Sigma; I; \Phi \vdash_{\maybe{\var{Id}},t} e_2 : \bool
}
{
  \Sigma; I; \Phi \vdash_{\maybe{\var{Id}},t} e_1~\opb~e_2 : \bool
}
\and
\inferenceRule{T-Neg}
{
    \Sigma; I; \Phi \vdash_{\maybe{\var{Id}},t} e : \bool
}
{
  \Sigma; I; \Phi \vdash_{\maybe{\var{Id}},t} \neg e : \bool
}
\and
\inferenceRule{T-Cmp}
{
    \Sigma; I; \Phi \vdash_{\maybe{\var{Id}},t} e_1 : \iota \\
    \Sigma; I; \Phi \vdash_{\maybe{\var{Id}},t} e_2 : \iota 
}
{
  \Sigma; I; \Phi \vdash_{\maybe{\var{Id}},t} e_1~\cmp~e_2 : \bool
}
\and
\inferenceRule{T-ITE}
{
    \Sigma; I; \Phi \vdash_{\maybe{\var{Id}},t} e_1 : \bool \\
    \Sigma; I; \Phi \vdash_{\maybe{\var{Id}},t} e_2 : \beta\\
    \Sigma; I; \Phi \vdash_{\maybe{\var{Id}},t} e_3 : \beta\\
}
{
  \Sigma; I; \Phi \vdash_{\maybe{\var{Id}},t} (\ite{e_1}{e_2}{e_3}) : \beta
}
\and
\inferenceRule{T-Eq}
{
    \Sigma; I; \Phi \vdash_{\maybe{\var{Id}},t} e_1 : \beta\\
    \Sigma; I; \Phi \vdash_{\maybe{\var{Id}},t} e_2 : \beta\\
}
{
  \Sigma; I; \Phi \vdash_{\maybe{\var{Id}},t} e_1 = e_2 : \bool
}
\end{mathpar}

\end{flushleft}

\section{Well-typed State Environment}
\begin{definition}[Well-typed $\Sigma$]
    \label{def:well-typed-sigma}
    We say that $\Sigma$ is \emph{well-typed}
    \begin{enumerate}[start=1,label={(\bfseries S\arabic*)}]
    \item\label{i:wtsigma:cnstr}  for every $A \in \dom{\Sigmacode}$ exists $\Sigma' \subset \Sigma$, 
           such that $\Sigma'$ is well-typed and $\Sigma' \vdash_{A} \Sigmacode(A) : \Sigmastore(A)$.
    \item\label{i:wtsigma:trans} for every $A \in \dom{\Sigmatranss}$ exists $\Sigma' \subset \Sigma$, 
    such that $\Sigma'$ is well-typed and $\forall \var{trans} \in \Sigmatranss(A). \Sigma' \vdash_{A} \var{trans}$.
    \end{enumerate}
\end{definition}

\begin{lemma}[Extending $\Sigma$ Preserves Well-Typedness]
    \label{lem:extending-sigma-well-typed}
    Let $\Sigma$ be well-typed and $\Sigma \vdash \var{contract} : \Sigma'$. Then $\Sigma'$ is well-typed.
\end{lemma}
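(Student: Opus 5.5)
The argument inverts the derivation of $\Sigma \vdash \var{contract} : \Sigma'$ and checks conditions \ref{i:wtsigma:cnstr} and \ref{i:wtsigma:trans} of \cref{def:well-typed-sigma} for $\Sigma'$ directly. The only rule for this judgment is \rref{T-Contract}. Inverting it on the contract $\var{Id}$ with constructor $\var{cnstr}$ gives:
\begin{itemize}
\item $\Sigma \vdash_{\var{Id}} \var{cnstr} : C$;
\item an intermediate environment $\Sigma_1 = \Sigma ~\meta{with}~ \{\sstorage = (\Sigmastore, \var{Id}:C), \scnstrs = (\Sigmacode, \var{Id}:\var{cnstr})\}$;
\item $\Sigma_1 \vdash_{\var{Id}} \var{trans}$ for every transition;
\item $\Sigma' = \Sigma_1 ~\meta{with}~ \{\stranss = (\Sigmatranss, \var{Id} : \many{\var{trans}})\}$.
\end{itemize}
I read $\subset$ in \cref{def:well-typed-sigma} as componentwise inclusion of the three finite maps, and I use the implicit freshness assumption $\var{Id} \notin \dom{\Sigmastore} \cup \dom{\Sigmacode} \cup \dom{\Sigmatranss}$, so that the extensions are genuine extensions. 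Then $\Sigma \subseteq \Sigma_1 \subseteq \Sigma'$ componentwise.

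\textbf{Old contracts.} Take $A \in \dom{\Sigmacodep}$ or $A \in \dom{\Sigmatranssp}$ with $A \neq \var{Id}$. Then $\Sigmacodep(A) = \Sigmacode(A)$ and $\Sigmatranssp(A) = \Sigmatranss(A)$. The well-typedness of $\Sigma$ supplies a well-typed witness $\Sigma'' \subset \Sigma$, and transitivity gives $\Sigma'' \subset \Sigma'$. The same witness works, since the required judgments mention only $\Sigma''$. Here freshness matters: it ensures that $\Sigmastorep(A) = \Sigmastore(A)$, so the constructor judgment $\Sigma'' \vdash_A \Sigmacode(A) : \Sigmastore(A)$ is still the required one for $\Sigma'$.

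\textbf{Constructor of $\var{Id}$.} The witness is $\Sigma$ itself. It is well-typed by hypothesis, it satisfies $\Sigma \subset \Sigma'$, and $\Sigma \vdash_{\var{Id}} \var{cnstr} : C$ with $\Sigmacodep(\var{Id}) = \var{cnstr}$ and $\Sigmastorep(\var{Id}) = C$.

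\textbf{Transitions of $\var{Id}$.} The natural witness is $\Sigma_1$. Indeed $\Sigma_1 \subset \Sigma'$, and $\Sigma_1 \vdash_{\var{Id}} \var{trans}$ for all $\var{trans} \in \Sigmatranssp(\var{Id})$. It remains to show $\Sigma_1$ is well-typed.
\begin{itemize}
\item For $\Sigma_1$, condition \ref{i:wtsigma:cnstr} follows exactly as above, using witness $\Sigma$ for $\var{Id}$ and the old witnesses for every other $A$.
\item Condition \ref{i:wtsigma:trans} is inherited verbatim from $\Sigma$, because $\Sigma_1$ has the same $\stranss$ component as $\Sigma$ and each old witness satisfies $\Sigma'' \subset \Sigma \subset \Sigma_1$.
\end{itemize}
No induction on $\Sigma$ is needed, since every witness we use is either an old one or is built in one step from $\Sigma$.

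\textbf{Main obstacle.} The only real difficulty is bookkeeping. First, the definition of well-typed $\Sigma$ is recursive (it quantifies over well-typed sub-environments), so I interpret it inductively on the size of $\Sigma$ and check that $\Sigma_1$ and $\Sigma'$ are built from strictly smaller well-typed pieces. Second, the freshness of $\var{Id}$ is not stated in \rref{T-Contract}. If it cannot be assumed, I would either read the $\meta{with}$ extension as shadowing, or restrict the lemma to specifications with distinct contract names. With shadowing, the old-contract case for $A = \var{Id}$ disappears, because its entries are replaced, but the constructor case then needs $\Sigmastorep(\var{Id}) = C$, which holds by construction.
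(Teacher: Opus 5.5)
Your proposal is correct and follows the paper's proof essentially step for step. You invert \rulename{T-Contract}, reuse the old witnesses for contracts already in $\Sigma$, and take $\Sigma$ itself as the witness for the new constructor. For the new transitions you take the intermediate environment (your $\Sigma_1$, the paper's $\Sigma''$), after checking that it satisfies \ref{i:wtsigma:cnstr} as before and inherits \ref{i:wtsigma:trans} from $\Sigma$.

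Your freshness assumption on $\var{Id}$ is exactly what the paper uses implicitly when it asserts $\Sigmastorep(A)=\Sigmastore(A)$ for old $A$. Only your shadowing fallback would not go through: with a redefined $\var{Id}$, neither $\Sigma$ nor any old witness containing the stale $\var{Id}$ entry is a sub-environment of $\Sigma'$.
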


\begin{proof}
    Let $\var{contract} = \contract{\var{Id}}{\var{cnstr}}{\many{\var{trans}}}{\kw{invariants}~\many{\var{inv}}}$. By inversion on $\Sigma \vdash \var{contract} : \Sigma'$ we obtain
    \begin{enumerate}[start=1,label={(\bfseries I\arabic*)}]
        \item\label{i:wtsigma:Icnstr} $\Sigma \vdash_{\var{Id}} \var{cnstr} : C$
        \item $\Sigma'' = \Sigma ~\meta{with}~ \{ \sstorage = (\Sigmastore, \var{Id} : C), 
             \scnstrs = (\Sigmacode, \var{Id} : \var{cnstr}) \}$
        \item\label{i:wtsigma:Itranss} $\many{(\Sigma'' \vdash_{\var{Id}}  \var{trans})}$
        \item $\many{(\Sigma''; \var{cnstr}_{\kw{Iface}} \vdash_{\var{Id},\kw{U}} \var{inv} : \bool)}$
        \item $\Sigma' = \Sigma'' ~\meta{with}~ \{\stranss = (\Sigmatranss, \var{Id} : \many{\var{trans}})\}$
    \end{enumerate}
    To prove~\ref{i:wtsigma:cnstr} for $\Sigma'$, let $A\in \dom{\Sigmacodep}$. We distinguish the following two cases.
    \begin{proofcases}
    \case{$A \in \Sigmacode$}
        Since $\Sigma$ is well-typed we get exists $\Sigma^{(3)} \subset \Sigma \subset \Sigma'$, such that $\Sigma^{(3)}$ is well-typed and $\Sigma^{(3)} \vdash_{A} \Sigmacode(A) : \Sigmastore(A)$, and since $\Sigmacode(A) = \Sigmacodep(A)$ and $\Sigmastorep(A) = \Sigmastore(A)$, we also have $\Sigma^{(3)} \vdash_{A} \Sigmacodep(A) : \Sigmastorep(A)$.
    \case{$A = \var{Id}$}
    Then $\Sigma \subset \Sigma'$ and $\Sigma$ is well-typed, and by~\ref{i:wtsigma:Icnstr} we have that $\Sigma \vdash_{A} \var{cnstr} : C$.
    \end{proofcases}

    To prove~\ref{i:wtsigma:trans} for $\Sigma'$, let $A\in \dom{\Sigmatranssp}$. We distinguish the following two cases.
    \begin{proofcases}
    \case{$A \in \Sigmatranss$}
        Since $\Sigma$ is well-typed we get exists $\Sigma^{(3)} \subset \Sigma \subset \Sigma'$, such that $\Sigma^{(3)}$ is well-typed and $\forall \var{trans} \in \Sigmatranss(A). \Sigma^{(3)} \vdash_{A} \var{trans}$. Because $\Sigmatranss(A) = \Sigmatranssp(A)$, we also have $\forall \var{trans} \in \Sigmatranssp(A). \Sigma^{(3)} \vdash_{A} \var{trans}$.
    \case{$A = \var{Id}$}
    Then $\Sigma'' \subset \Sigma'$ and by~\ref{i:wtsigma:Itranss} we have that  $\many{(\Sigma'' \vdash_{A} var{trans})}$. We now only need to show that $\Sigma''$ is also well-typed. 

    Since $\Sigma''_{\scnstrs} = \Sigmacodep$ we prove~\ref{i:wtsigma:cnstr} for $\Sigma''$ as above for $\Sigma'$.
    Since $\Sigma''_{\stranss} = \Sigmatranss$ the property~\ref{i:wtsigma:trans} for $\Sigma''$ follows from the property~\ref{i:wtsigma:trans} for $\Sigma$, like above (in the case $A \in \Sigmatranss$). We thus showed $\Sigma''$ is well-typed, which concludes the proof.
    \end{proofcases}
\end{proof}

\begin{corollary}
    If $\vdash \var{spec} : \Sigma$, then $\Sigma$ is well-typed.
\end{corollary}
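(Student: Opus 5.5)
By inversion on $\vdash \var{spec} : \Sigma$ via \rref{T-Spec}, we have $\var{spec} = \range{\var{contract_i}}{i\in [1,n]}$ and a chain $\Sigma_0 = \varnothing$, $\Sigma_{i-1} \vdash \var{contract_i} : \Sigma_i$ for $i \in [1,n]$, with $\Sigma = \Sigma_n$. The proof is then an induction on $i$ showing that every $\Sigma_i$ is well-typed. The conclusion is the case $i = n$.

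\emph{Base case.} Here $\Sigma_0 = \varnothing$. Its $\scnstrs$ and $\stranss$ components have empty domains, so conditions \ref{i:wtsigma:cnstr} and \ref{i:wtsigma:trans} of \cref{def:well-typed-sigma} hold vacuously. The definition of well-typedness is self-referential, since it asks for well-typed sub-environments $\Sigma' \subset \Sigma$. It should therefore be read inductively, for instance by induction on the size of $\dom{\Sigmacode}$ or on the subset order. The empty environment is then well-typed outright, with no appeal to any smaller environment. I would state this reading explicitly at the start of the proof.

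\emph{Inductive step.} Assume $\Sigma_{i-1}$ is well-typed. Since $\Sigma_{i-1} \vdash \var{contract_i} : \Sigma_i$, \cref{lem:extending-sigma-well-typed} applies directly and gives that $\Sigma_i$ is well-typed.

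The only point needing care is the base case, specifically making sure the circular-looking definition admits $\varnothing$ as well-typed. Everything substantive is already handled by \cref{lem:extending-sigma-well-typed}.
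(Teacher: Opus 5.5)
Your proof is correct and follows the paper's argument: invert \rref{T-Spec}, observe that $\Sigma_0 = \varnothing$ is vacuously well-typed, and apply \Cref{lem:extending-sigma-well-typed} along the chain $\Sigma_{i-1} \vdash \var{contract_i} : \Sigma_i$ to reach $\Sigma_n$. Your remark that the self-referential definition of well-typedness must be read inductively, so that $\varnothing$ qualifies without appeal to any smaller environment, is a sensible clarification the paper leaves implicit.
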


\begin{proof}
    Inverting $\vdash \var{spec} : \Sigma$ gives us $\Sigma_0 = \varnothing$, which is vacuously well-typed, and for $i\in[1,n]$ we have $\Sigma_{i-1} \vdash \var{contract_i} : \Sigma_i$, on which we invoke~\Cref{lem:extending-sigma-well-typed} iteratively to obtain that $\Sigma_n = \Sigma'$ is well-typed. 
\end{proof}

% \todo{Is this the correct place?} Zoe: Seems fine to me.
From here onwards we assume we are always working with a well-typed $\Sigma$.

\section{Environment Well-Foundedness}
We say that a storage typing environment $\Sigma$ is well-founded when there are
no circular references to contract types. All state typings that result from
well-typed act programs are well-founded. We formalize this notion here.

For a given state typing $\Sigma$, we define a relation $\prec_\Sigma$ as
follows. Intuitively, $B \prec_\Sigma A$ holds when the contract $B$ is directly
accessible from $A$.

    \begin{flushleft}
    \fbox{$\prec_\Sigma ~\in \meta{String} \times \meta{String}$}
    \end{flushleft}
    \[\begin{array}{r l }
    \prec_\Sigma ~= ~ \{~ (B, A) ~|~ \exists \, x. ~\, \Sigmastore(A)(x) = B ~~\vee~~\Sigmastore(A)(x) = \kw{address}_B~\}
    \end{array}\]
    
% reference https://www.cse.chalmers.se/~nad/listings/equality/Accessibility.html

We say that $\prec_\Sigma$ is well-founded if every element is accessible. 

\[\meta{WF}(\prec_\Sigma) = \forall \, x. ~\, \meta{Acc}_{\prec_\Sigma}~x\]

We
make use of the standard accessibility predicate~\cite{wellfounded}:
\[\meta{Acc}_{\prec_\Sigma}~x \Leftrightarrow (\forall y.\; y \prec_{\Sigma} x\; \rightarrow \; \meta{Acc}_{\prec_\Sigma}~y )\]

\begin{theorem}[Well-typed state is well-founded]\label{thm:wfstate}
If  $\Sigma \vdash \var{contract} : \Sigma'$ and $\meta{WF}(\prec_\Sigma)$ then 
$\meta{WF}(\prec_{\Sigma'})$
\end{theorem}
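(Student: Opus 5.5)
By inversion on $\Sigma \vdash \var{contract} : \Sigma'$ (rule \rref{T-Contract}) with $\var{contract} = \contract{\var{Id}}{\var{cnstr}}{\many{\var{trans}}}{\kw{invariants}~\many{\var{inv}}}$, we obtain $\Sigma \vdash_{\var{Id}} \var{cnstr} : C$ and $\Sigmastorep = (\Sigmastore, \var{Id} : C)$. The only new storage entry is $\var{Id}$. Further inversion on \rref{T-Ctor} and then \rref{T-Creates} gives $C = \range{x_i : \sigma_i}{i \in [1,n]}$ with $\Sigma \vdash \sigma_i ~\meta{wf}$ for every $i$, where $\Sigma$ is the \emph{old} environment.

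The key observation is that every contract $B$ mentioned in $C$ is already in $\dom{\Sigmastore}$. If $\sigma_i = B$, then \rref{T-WFContract} gives $B \in \dom{\Sigmastore}$. If $\sigma_i = \kw{address}_B$, then \rref{T-WFAlpha} and \rref{WFContractAddr} give the same. Mapping types contain no contract names.

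I will also use the implicit freshness assumption $\var{Id} \notin \dom{\Sigmastore}$. The extension $(\Sigmastore, \var{Id} : C)$ is meant as a genuine extension, so I take this as the intended reading. If it is not derivable, it has to be added as a side condition. This is the main delicate point: without it, $\var{Id}$ could be some $B$ appearing in $C$, and a cycle could form.

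From these facts I will characterize $\prec_{\Sigma'}$:
\begin{enumerate}
\item for $A \neq \var{Id}$, $B \prec_{\Sigma'} A \iff B \prec_\Sigma A$, since $\Sigmastorep(A) = \Sigmastore(A)$;
\item $B \prec_{\Sigma'} \var{Id}$ implies $B \in \dom{\Sigmastore}$, hence $B \neq \var{Id}$;
\item nothing in $\dom{\Sigmastore}$ points to $\var{Id}$, because if $\var{Id} \prec_\Sigma A$ then $\var{Id}$ would occur in $\Sigmastore(A)$.
\end{enumerate}
Item 3 needs an auxiliary invariant: every contract name occurring in $\Sigmastore$ is itself in $\dom{\Sigmastore}$. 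This holds for well-typed $\Sigma$ by the same \rref{T-Creates} well-formedness argument, which I would prove alongside, using \Cref{lem:extending-sigma-well-typed}.

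Next I prove $\meta{Acc}_{\prec_{\Sigma'}}~x$ for every $x$ in two stages. First, I show by induction on $\meta{Acc}_{\prec_\Sigma}~x$ (available from $\meta{WF}(\prec_\Sigma)$) that every $x \neq \var{Id}$ is $\prec_{\Sigma'}$-accessible. Any $y \prec_{\Sigma'} x$ with $x \neq \var{Id}$ satisfies $y \prec_\Sigma x$ by item 1. By item 3 (or freshness) $y \neq \var{Id}$, so the induction hypothesis applies. Second, $\var{Id}$ is accessible: any $y \prec_{\Sigma'} \var{Id}$ has $y \neq \var{Id}$ by item 2, so $y$ is accessible by the first stage. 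Together these give $\meta{WF}(\prec_{\Sigma'})$.

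If the auxiliary invariant turns out to be awkward to state, I would instead exclude $\var{Id} \prec_\Sigma x$ directly. Since $\var{Id} \notin \dom{\Sigmastore}$, it suffices to have that contract names in storage types are always well-formed, which is exactly what \rref{T-Creates} enforces at each extension.
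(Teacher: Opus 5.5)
Your proof is correct modulo the side conditions you flag. It has the same two-stage shape as the paper's: first every $A \neq \var{Id}$ is $\prec_{\Sigma'}$-accessible, then $\var{Id}$ is. It relies on the same facts as your items 1--3; the paper states item 1 and compresses items 2 and 3 into the one-line claim that $\var{Id} \not\prec_{\Sigma'} A$ for every $A$, citing only well-typedness of $\Sigma$ and \Cref{lem:extending-sigma-well-typed}.

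The genuine difference is the induction principle in the first stage. The paper inducts on the measure $\meta{len}(\Sigma',A)$. But $\meta{len}(\Sigma',\cdot)$ is defined by recursion on $\meta{Acc}_{\prec_{\Sigma'}}$, which is exactly what is being proved. So the paper has to fall back on $\meta{len}(\Sigma,A) = \meta{len}(\Sigma',A)$ for $A \neq \var{Id}$, and that equation itself silently requires your item 3. Your induction on $\meta{Acc}_{\prec_\Sigma}$ consumes $\meta{WF}(\prec_\Sigma)$ directly and needs no auxiliary measure, so it avoids this circularity. It also forces into the open the freshness assumption $\var{Id} \notin \dom{\Sigmastore}$ and the closure invariant, both of which the paper uses tacitly.

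Two small corrections.

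First, in stage 1 the parenthetical ``(or freshness)'' should go. Freshness alone does not give $y \neq \var{Id}$: a dangling reference to $\var{Id}$ inside some $\Sigmastore(B)$, with $B$ occurring in $C$, closes the cycle $\var{Id} \succ B \succ \var{Id}$. Item 3, i.e.\ the closure invariant, is what is actually needed there.

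Second, your inversion from \rref{T-Ctor} down to \rref{T-Creates} presupposes at least one constructor case. With zero cases and precondition $\kw{false}$, the consistency premise reduces to $\kw{false} \Rightarrow \kw{false}$ and is valid. Then $C$ is constrained only by $\dom{C} \cap \dom{I} = \emptyset$ and the postcondition typing, so it may contain a field of type $\var{Id}$, giving $\var{Id} \prec_{\Sigma'} \var{Id}$. Add $n \geq 1$ to your list of side conditions; the paper's proof makes the same implicit assumption.
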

\begin{proof}
    We first observe that $\Sigmastorep = \Sigmastore \cup (Id: C)$ and therefore for all $A$, $B$, 
    such that $A \neq Id$ holds $B \prec_{\Sigma} A \Leftrightarrow B \prec_{\Sigma'} A $.
    Further, from the fact that $\Sigma$ is well-typed and \Cref{lem:extending-sigma-well-typed}, 
    follows that $Id \not\prec_{\Sigma'} A$ for any $A$.

    We proceed by showing $\meta{Acc}_{\prec_{\Sigma'}} A$ for all $A \neq Id$. We show this by
    induction on $\meta{len}(\Sigma',A)$, the length of the maximum chain of contract references in the storage state 
    $\Sigmastorep$ starting from $A$, as defined after \Cref{cor:wt-state-wf}. There cannot be infinite chains due to 
    $\meta{WF}(\prec_{\Sigma})$ and the fact that $\meta{len}(\Sigma,A) = \meta{len}(\Sigma',A)$, for $A \neq Id$. 

    Starting with the base case $\meta{len}(\Sigma',A)=0$, we observe that $\meta{Acc}_{\prec_{\Sigma'}} A$ trivially
    holds, since there are no contracts $B$ such that $B \prec_{\Sigma'} A$. Next, we study the inductive case and 
    assume $\meta{len}(\Sigma',A) = n$. We pick an arbitrary $B$ such that $B \prec_{\Sigma'} A$. 
    By construction of $\meta{len}$, we obtain $\meta{len}(\Sigma',B) < n$. Further, as observed at the beginning
     $Id \not\prec_{\Sigma'} A$ for any $A$, hence $B \neq Id$. Therefore, we can apply the induction hypothesis to 
     $B$ and conclude that $\meta{Acc}_{\prec_{\Sigma'}} B$. Since $B$ was picked arbitrary this holds for all 
     $B \prec_{\Sigma'} A$. By the definiton of $\meta{Acc}_{\prec_{\Sigma'}}$ it follows that  $\meta{Acc}_{\prec_{\Sigma'}} A$.
     This concludes the induction proof.
     
    What is left to show is that also  $\meta{Acc}_{\prec_{\Sigma'}} Id$.
     Following the defnition of $\meta{Acc}_{\prec_{\Sigma'}}$,
    we pick an arbitrary $B \prec_{\Sigma'} Id$. Since $B \neq Id$, we can use the above induction proof to obtain 
    $\meta{Acc}_{\prec_{\Sigma'}} B$, which implies  $\meta{Acc}_{\prec_{\Sigma'}} Id$.
\end{proof}

\begin{corollary}[Well-typed state is well-founded]\label{cor:wt-state-wf}
If $\cdot \vdash \var{spec} : \Sigma$ then $\meta{WF}(\prec_{\Sigma})$
\end{corollary}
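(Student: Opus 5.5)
The corollary follows by induction along the typing derivation of the specification, with \Cref{thm:wfstate} supplying the inductive step. First, invert $\cdot \vdash \var{spec} : \Sigma$ using \rref{T-Spec}. This yields a chain $\Sigma_0 = \varnothing$, $\Sigma_{i-1} \vdash \var{contract_i} : \Sigma_i$ for $i \in [1,n]$, and $\Sigma = \Sigma_n$. I would then prove by induction on $i$ that $\meta{WF}(\prec_{\Sigma_i})$ holds for every $i \in [0,n]$. The case $i = n$ is the claim.

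For the base case $i = 0$, the storage component of $\Sigma_0$ is empty, so no pair $(B,A)$ satisfies $\exists x.~\Sigma_{0,\sstorage}(A)(x) = B \vee \Sigma_{0,\sstorage}(A)(x) = \kw{address}_B$. Hence $\prec_{\Sigma_0}$ is the empty relation. Every element is then vacuously accessible: for any $x$, the premise $y \prec_{\Sigma_0} x$ of $\meta{Acc}$ never holds, so $\meta{Acc}_{\prec_{\Sigma_0}}~x$ follows directly.

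For the inductive step, assume $\meta{WF}(\prec_{\Sigma_{i-1}})$. The spec derivation gives $\Sigma_{i-1} \vdash \var{contract_i} : \Sigma_i$. Applying \Cref{thm:wfstate} to these two facts yields $\meta{WF}(\prec_{\Sigma_i})$.

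The only point needing care is that \Cref{thm:wfstate} uses, inside its proof, that the input environment is well-typed; it appeals to \Cref{lem:extending-sigma-well-typed} to obtain that $Id \not\prec_{\Sigma'} A$. I would discharge this alongside the main induction: the preceding corollary, or equivalently iterating \Cref{lem:extending-sigma-well-typed} from the vacuously well-typed $\Sigma_0$, shows that each $\Sigma_{i-1}$ is well-typed. So I would strengthen the induction hypothesis to \emph{$\Sigma_i$ is well-typed and $\meta{WF}(\prec_{\Sigma_i})$}, which makes all side conditions of the theorem available at each step. Beyond this bookkeeping I expect no real obstacle, since the substantive argument is already contained in \Cref{thm:wfstate}.
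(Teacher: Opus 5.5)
Your proposal is correct and takes essentially the same approach as the paper, whose proof is exactly this iteration of \Cref{thm:wfstate} along the \rref{T-Spec} chain, starting from the trivially well-founded relation of the empty environment. Carrying well-typedness of each $\Sigma_i$ in the induction hypothesis is a sensible refinement: it makes explicit a side condition used inside the theorem's proof, which the paper's one-line argument leaves to its standing assumption that every $\Sigma$ is well-typed.
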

\begin{proof}
 Follows from Theorem~\ref{thm:wfstate} and the fact that $\prec_\cdot$ is trivially well-founded.
\end{proof}

Let $\Sigma$ be well-founded. We define $\meta{len}(\Sigma, A)$ to be the length
of the maximum chain of contract references in the storage state $\Sigmastore$ starting from $A$. The
definition is by induction on the relation $\meta{Acc}_{\prec_\Sigma}~A$. 

    \[\begin{array}{r l }
    \meta{len}(\Sigma, A) ~= \max_{ B \in \{ B ~|~ B \prec_\Sigma A \}} (1 + \meta{len}(\Sigma, B)) 
    \end{array}\]

We extend this definition to slot types as follows.
    \[\begin{array}{r l r}
    \meta{len}(\Sigma, A) ~= & \meta{len}(\Sigma, A) \\
    \meta{len}(\Sigma, \kw{address}_A) ~= & \meta{len}(\Sigma, A) \\
    \meta{len}(\Sigma, \sigma) ~= & 0 & \text{otherwise}\\
    \end{array}\]

\section{Pointer Semantics Lemmas}
\begin{lemma}[Uniqueness of Contract Value Typing]\label{lem:uniqueness-of-value-typing}
  Let $\Sigma \vdash v :_s A$ and $\Sigma \vdash v :_s B$.
  Then $A = B$.
\end{lemma}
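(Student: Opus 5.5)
The plan is to proceed by inversion on both typing derivations. The only rule that concludes a judgment of the form $\Sigma \vdash v :_s A$ with $A$ a contract identifier is \rref{V-Contract}. The other rules for $\Sigma \vdash v :_s \sigma$ cannot produce this form: \rref{V-MappingVal} concludes a mapping type $\mu$, and \rref{V-ABIVal} concludes an ABI type $\alpha$. Neither of these is syntactically a contract identifier $\var{Id}$, since slot types $\sigma \bnfdef \mu \bnfalt \alpha \bnfalt \var{Id}$ keep the three forms disjoint.

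Inverting \rref{V-Contract} on $\Sigma \vdash v :_s A$ gives that $v = \ell$ for some location and $\Sigma \vdash \ell :_s \kw{address}_A$. Doing the same on $\Sigma \vdash v :_s B$ gives $\Sigma \vdash \ell :_s \kw{address}_B$. The only rule concluding a judgment of the form $\Sigma \vdash \ell :_s \kw{address}_A$ is \rref{V-AddrIsContract}. \rref{V-BaseValAlpha} does not apply, because $\kw{address}_A$ is not a base type $\beta$: the base type $\kw{address}$ is untagged.

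Inverting \rref{V-AddrIsContract} twice yields $s(\ell).\kw{type} = A$ and $s(\ell).\kw{type} = B$. The state $s$ is a partial function, so $s(\ell)$ is a single pair, and therefore $A = s(\ell).\kw{type} = B$. This equality is exactly the reason the contract type is stored in the state.

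The only real care needed is the inversion bookkeeping. One has to check that no typing rule other than \rref{V-Contract} and \rref{V-AddrIsContract} can derive a contract or contract-address type. This amounts to noting that the grammars for $\beta$, $\mu$, $\alpha$ and $\var{Id}$ are syntactically disjoint where relevant. In a mechanisation it would be a routine case analysis on the derivation, with the impossible cases discharged by constructor discrimination.
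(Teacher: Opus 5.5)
Your proposal is correct and follows the paper's proof essentially verbatim. Both derivations must end in \rref{V-Contract} with \rref{V-AddrIsContract} directly above it, and inverting them yields $s(\ell).\kw{type} = A$ and $s(\ell).\kw{type} = B$, so $A = B$ because $s$ is a function; your explicit ruling-out of \rref{V-MappingVal}, \rref{V-ABIVal} and \rref{V-BaseValAlpha} is just the case analysis the paper leaves implicit.
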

\begin{proof}
    We assume $\Sigma \vdash v :_s A$ and $\Sigma \vdash v :_s B$. In both derivations the last 
    rule that had to have been applied is \rref{V-Contract} and the second to last \rref{V-AddrIsContract},
    since they are the only ones that yield the required conclusion. Hence, inverting both rules in both 
    derivations yields that $s(v).\kw{type} = A$ but also $s(v).\kw{type} = B$. 
    Since $\kw{type}$ assigns a storage location $s(v)$ at most one type (it is a function), 
    follows $A = B$.
\end{proof}

% \begin{lemma}[Type Conversion]\label{lem:type-conv}
%   Let $\Sigma \vdash v :_s \sigma_1$ and $\sigma_1 \equiv \sigma_2$ then
%   $\Sigma \vdash v :_s \sigma_2$.
% \end{lemma}
% \begin{proof}
%     Consider $\sigma_1, \sigma_2$ such that $\sigma_1 \equiv \sigma_2$. Either $\sigma_1 = \sigma_2$, 
%     in which case the lemma trivially holds, or one of $\sigma_1, \sigma_2$ is $A$ and the other 
%     is $\kw{address}_A$.

%     First, assume $\sigma_1 = A$. The only way to reach $\Sigma \vdash v :_s \sigma_1$, was to applying
%     \rref{V-Contract} last. Inverting it yields $\Sigma \vdash v :_s \kw{address}_A$.
%     Second, we assume  $\sigma_1 = \kw{address}_A$. In this case we can just apply \rref{V-Contract}
%     to conclude $\Sigma \vdash v :_s A$.
% \end{proof}
 \begin{lemma}[Determinism of Base- and Mapping Expression Evaluation]\label{lem:exp-determinism}
Let $s$ be a storage and $\rho$ an environment. For any base expression or mapping expression $e$,
if
$$\psem{s}{\rho}{e} v \quad\text{and}\quad \psem{s}{\rho}{e} v'$$
then $v = v'$.
\end{lemma}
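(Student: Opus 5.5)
We prove the statement by induction on the derivation of the first judgement $\psem{s}{\rho}{e} v$, generalising over $v'$. Because expression evaluation depends on reference evaluation (rules \rref{E-Ref}, \rref{E-Addr}) and reference evaluation depends back on expression evaluation (rule \rref{E-RefMapping}), the induction has to be mutual. Concretely, we simultaneously prove the following. For environment references, $\rho;\var{env}\bigstep_\ell v$ and $\rho;\var{env}\bigstep_\ell v'$ imply $v=v'$. For variable references, $\psem{s^t}{\rho}{\var{ref}}(v,t_p)$ and $\psem{s^t}{\rho}{\var{ref}}(v',t_p')$ imply $v=v'$ and $t_p=t_p'$. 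For base and mapping expressions, the claim of the lemma.

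The environment case is immediate, since each of \rref{E-Caller}, \rref{E-Origin}, \rref{E-Callvalue} and \rref{E-This} is the unique rule for its syntactic form, and each returns a lookup in the function $\rho$ or returns $\ell$.

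For references, we invert the second derivation. The syntax of $\var{ref}$ together with the tag $t$ selects the candidate rules, and the only genuine overlaps are the following.
\begin{itemize}
\item For a bare $x$, the side conditions $x\in\dom{\rho}$ and $x\notin\dom{\rho}$ separate \rref{E-Calldata} from \rref{E-Storage}, and separate \rref{E-CalldataTimed} from \rref{E-StoragePre}/\rref{E-StoragePost}. (The latter two apply only to $\pre{x}$ and $\post{x}$, and $\pre{x}$ and $\post{x}$ are syntactically distinct from $x$.)
\item For $\var{ref}.x$ in the timed setting, \rref{E-FieldPre} and \rref{E-FieldPost} could both apply. The induction hypothesis on $\var{ref}$ gives equal timing tags, which rules out mixing them, and it also gives equal locations $\ell'$, so both derivations read the same $s(\ell')(x)$.
\item For $\var{ref}[e]$, the induction hypotheses give equal $v_r$, equal $t_p$ and equal $v_e$, so $v_r(v_e)$ coincides.
\end{itemize}
This is why the reference statement must include the timing tag: without it the field case does not go through.

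For expressions, we case on the last rule and invert the second derivation; in each case the premises determine the result via the induction hypothesis.
\begin{itemize}
\item In the arithmetic, boolean and comparison rules, the induction hypothesis gives equal operand values, and the operators are functions.
\item The pairs \rref{E-Div}/\rref{E-DivZero}, \rref{E-Mod}/\rref{E-ModZero}, \rref{E-RangeTrue}/\rref{E-RangeFalse} and \rref{E-EqTrue}/\rref{E-EqFalse} have mutually exclusive side conditions on the now-equal values $v_2$ or $v$. The range case needs a small check: the condition $(\meta{min}(\iota)\le v\le\meta{max}(\iota))\vee\iota=\intt{}$ contradicts $v<\meta{min}(\iota)\vee\meta{max}(\iota)<v$, provided $\meta{min}$ and $\meta{max}$ are undefined for $\intt{}$ (or we read \rref{E-RangeFalse} as implicitly requiring $\iota\neq\intt{}$). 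We will state this convention explicitly.
\item For \rref{E-ITETrue}/\rref{E-ITEFalse}, the induction hypothesis on the guard $e_1$ forbids $\meta{True}=\meta{False}$.
\item In \rref{E-Ref} and \rref{E-Addr} we use the reference part of the induction and discard the tag.
\item For mapping expressions, \rref{E-Exp} reduces to the base case. In \rref{E-Mapping} and \rref{E-MappingUpd}, the induction hypothesis gives equal keys $v_i$, equal values $u_j$ and (for the update) equal $f$, so the constructed functions are extensionally equal. We take function equality to be extensional, as in the mechanisation.
\end{itemize}

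We expect the main obstacle to be setting up the mutual induction correctly, in particular strengthening the reference statement to include $t_p$, since the timed field rules are the only place where two different rules genuinely apply to the same syntax. A secondary delicate point is the apparent overlap of the range rules at $\intt{}$, which must be excluded by the convention above.
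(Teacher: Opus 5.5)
Your proposal is correct and follows essentially the same route as the paper. Both arguments are a mutual induction over references and base/mapping expressions that cases on the applicable rules, using the induction hypothesis to force the competing side conditions to coincide: equal timing tags and locations for \rref{E-FieldPre}/\rref{E-FieldPost}, an equal guard for the $\kw{if}$ rules, and an equal divisor for \rref{E-Div}/\rref{E-DivZero}. The only differences are cosmetic: you induct on the first derivation rather than on the syntax of $e$/$\var{ref}$, and you make explicit the convention that \rref{E-RangeFalse} cannot fire at $\iota=\intt{}$, a point the paper's proof passes over silently.
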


\begin{lemma}[Determinism of Reference Evaluation]\label{lem:ref-determinism}
Let $s$ be a storage and $\rho$ an environment. For any reference $\var{ref}$, if
$$\psem{s}{\rho}{\var{ref}} (v,t) \quad\text{and}\quad \psem{s}{\rho}{\var{ref}} (v',t')$$
then $(v,t) = (v',t')$.
\end{lemma}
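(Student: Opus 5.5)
The plan is to argue by induction on the derivation of $\psem{s}{\rho}{\var{ref}} (v,t)$, or equivalently by structural induction on $\var{ref}$. For each shape of $\var{ref}$ we invert the second derivation $\psem{s}{\rho}{\var{ref}} (v',t')$ and show that it ends in the same rule with the same premises. The tag on the state, untimed $s^{\kw{U}}$ or timed $(s_{\kw{pre}},s_{\kw{post}})^{\kw{T}}$, is fixed by the judgement, and that already restricts which rules can apply. Because reference evaluation and expression evaluation are mutually dependent (through $\var{ref}[e]$ and \rref{E-Ref}), we would prove this lemma simultaneously with \Cref{lem:exp-determinism}, by mutual induction on derivations. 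Where the index expression $e$ appears we use the induction hypothesis for expressions.

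The base cases go as follows.
\begin{itemize}
\item For $\var{ref} = \var{env}$, only \rref{E-Environment} applies. The four environment rules are each syntax-directed and return either $\rho(\cdot)$ or $\ell$, so $v = v'$, and both tags are $\kw{U}$.
\item For $\var{ref} = x$, the candidate rules are \rref{E-Storage}, \rref{E-Calldata} and \rref{E-CalldataTimed}. The first two need an untimed state and the last a timed one, so the tag rules out mixing untimed with timed. Between \rref{E-Storage} and \rref{E-Calldata}, the side conditions $x \notin \dom{\rho}$ and $x \in \dom{\rho}$ are mutually exclusive. Hence both derivations use the same rule, and the value ($s(\ell)(x)$ or $\rho(x)$) and the tag coincide.
\item For $\pre{x}$ and $\post{x}$, only \rref{E-StoragePre} and \rref{E-StoragePost} respectively apply, and each determines its result.
\end{itemize}

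The inductive cases go as follows.
\begin{itemize}
\item For $\var{ref}~\kw{as}~A$, only \rref{E-Coerce} applies, and the result follows directly from the induction hypothesis.
\item For $\var{ref}.x$, the untimed state forces \rref{E-Field}. In the timed state, one of \rref{E-FieldPre} and \rref{E-FieldPost} applies. Both evaluate the subreference, so the induction hypothesis gives equal $(\ell', t_p)$ from the two derivations. The tag $t_p \in \{\kw{pre},\kw{post}\}$ then selects a unique rule, and the results $s_{\kw{pre}}(\ell')(x)$ or $s_{\kw{post}}(\ell')(x)$ coincide.
\item For $\var{ref}[e]$, only \rref{E-RefMapping} applies. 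The induction hypothesis for references gives equal $(v_r, t_p)$, and \Cref{lem:exp-determinism} (through the mutual induction) gives equal $v_e$. So $v_r(v_e)$ and $t_p$ agree.
\end{itemize}

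The main obstacle is the overlap between rules whose conclusions share the same syntactic form. There are two such overlaps: between storage and calldata lookups of a bare $x$, and among the three field rules. The key step in both is the exclusivity of the side conditions. For $x$ this is membership of $x$ in $\dom{\rho}$ together with the state tag. For fields it is that the tag $t_p$ returned by the subreference, which the induction hypothesis pins down, picks out exactly one of \rref{E-FieldPre} and \rref{E-FieldPost}. A second point to watch is setting up the mutual induction with \Cref{lem:exp-determinism} correctly. The induction should be on the derivation rather than on syntax, so that the $e$ inside $\var{ref}[e]$ is covered by a hypothesis.
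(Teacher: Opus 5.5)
Your proof is correct and follows the paper's argument: a mutual induction with \Cref{lem:exp-determinism} and a case split on the shape of $\var{ref}$. It uses the same disambiguation steps, namely membership in $\dom{\rho}$ plus the state tag for a bare $x$, and the induction hypothesis fixing $t_p$ (and hence the choice between \rref{E-FieldPre} and \rref{E-FieldPost}) for fields. One small correction to your closing remark: the paper inducts on syntax, and that works equally well, since the index $e$ in $\var{ref}[e]$ is a syntactic subterm and so is already covered by the mutual structural induction hypothesis.
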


\begin{proof}[Proof of \Cref{lem:exp-determinism} and \Cref{lem:ref-determinism}]
    By mutual structural induction on the form of $e$, respectively $\var{ref}$.
     For each possible shape according to the syntax, we identify all possible last derivation steps, use the induction hypothesis on premises (where applicable) and conclude
    that conclusions of the final rule coincide. We give details for the more interesting cases, 
    the others are similar. We start with the cases for references.
    \begin{proofcases}
        \case{$\var{ref}=x$}
        \begin{proofcases}
            \case{$x \in \dom{\rho}$}         In this case the only rules that could have been applied to $x$ are \rref{E-Calldata}
            and \rref{E-CalldataTimed} depending on whether the state $s$ is just one untimed state $s = s^{\kw{U}}$
            or a pair of states $s = (s_{\kw{pre}},s_{\kw{post}})^{\kw{T}}$, exactly one of the to rules applies.
            and yields $(\rho(x),\kw{U})$, respectively $(\rho(x), \kw{pre})$. Hence, the evaluation in this case is deterministic: 
            $v = \rho(x) = v'$ and all timings either $\kw{U}$ or $\kw{pre}$.

            \case{$x \notin \dom{\rho}$} The only derivation rule that could have been 
            applied is \rref{E-Storage}, which yields $(s(\ell)(x),\kw{U})$. Thus, the evaluation of $x$ is also deteministic in this case:
            $(v,t) = (s(\ell)(x), \kw{U}) = (v',t')$.
        \end{proofcases}

        \case{$\var{ref}=\kw{pre}(x)$ or $\var{ref}=\kw{post}(x)$} Similar to the $x \notin \dom{\rho}$ subcase of 
        $\var{ref}=x$.

        \case{$\var{ref}=\var{ref}\; \kw{as} \;\var{Id}$} The only rule that applies in this case is \rref{E-Coerce}.
        Hence both $(v,t)$ and $(v',t')$ had to have been derived using this rule last.
        Inversion of the rule yields $\psem{s}{\rho}{\var{ref}} (v,t)$ and $\psem{s}{\rho}{\var{ref}} (v',t')$.
        Applying the induction hypothesis on $\var{ref}$, though, we obtain $(v,t) = (v',t')$.

        \case{$\var{ref}=\var{ref}.y$} 
        \begin{proofcases}
            \case{ $s = s^{\kw{U}}$} In this case exactly one rule applies (\rref{E-Field}) and the reasoning is the same as in the previous case
            $\var{ref}=\var{ref}\; \kw{as} \;\var{Id}$.
    
            \case{ $s = (s_{\kw{pre}},s_{\kw{post}})^{\kw{T}}$} The only rules that apply in this case 
            are \rref{E-FieldPre} and \rref{E-FieldPost}. Hence both $(v,t)$ and $(v',t')$ had to be derived
            using one of them last. Inverting them, yields $\psem{s}{\rho}{\var{ref}} (\ell',t)$ 
            and $\psem{s}{\rho}{\var{ref}} (\ell'',t')$  where depending on which of the rules where used
            $t$ and $t'$ could independently of each other be $\kw{pre}$ or $\kw{post}$. However,
            applying the induction hypothesis to $\var{ref}$ yields that $\ell' = \ell''$ and $t=t'$.
            Hence, the same rule with identical premises had to have been applied to reach both $(v,t)$ and $(v',t')$.
            Thus, if $t = \kw{pre}$, then $(v,t) = (s_{\kw{pre}}(\ell')(x), \kw{pre}) = (v',t')$ and analog 
            for $t = \kw{post}$.
        \end{proofcases}

        \case{$\var{ref} = \var{ref}[x]$} Only \rref{E-RefMapping} applies, hence we can use the same arguments as above. 
        Note that in this case we rely on the mutual 
        induction hypothesis for references and base expressions when infering that expression $e$ evaluates deterministically.

        \case{$\var{ref} = \var{env}$} Only \rref{E-Environment} applies, hence the standard argumentation suffices.
    \end{proofcases} We continue with the cases for base expressions.
    \begin{proofcases}
        \case{$e = i$ or $e = \kw{true}$ or $e = \kw{false}$} In these base cases there is only one evaluation 
        step, which has to be either from \rref{E-Int} (for $e = i$) or \rref{E-Bool} (for $e = \kw{true}$ and 
        $e = \kw{false}$). Hence, the evaluation in these cases is determintistic: $v = e = v'$.

        \case{$e = \kw{inrange}(\iota,e)$} Both $v$ and $v'$ had to be deduced using \rref{E-RangeTrue} or \rref{E-RangeFalse}. 
        Inverting them yields $\psem{s}{\rho}{e} v_e$ and $\psem{s}{\rho}{e} v'_e$.   By the induction hypothesis, we know that $v_e = v'_e$.
        Therefore, $v$ and $v'$ had to be derived using the same rule (since $v_e$ and $v'_e$ have to satisfy the same premises as they
        are identical) and hence $v = v' \in \{\meta{True}, \meta{False}\}$. 

        \case{$e = e_1 \;\kw{div} \;e_2$ or $e = e_1 \;\kw{mod}\; e_2$} Let us assume $e = e_1 \;\kw{div} \;e_2$ first. the values
        $v$ and $v'$ had to have been derived using either \rref{E-Div} or \rref{E-DivZero}. Inverting these rules for both derivations 
        we get $\psem{s}{\rho}{e_2} v_2$ and $\psem{s}{\rho}{e_2} v'_2$. By the induction hypothesis, we get that $v_2=v'_2$ and hence
        the same rule was applied to reach $v$ and $v'$, but since also $e_1$ evaluates deterministically to $v_1$ by the induction hypothesis,
        we know that $v = v_1/v_2 = v'$, if $v_2 \neq 0$ and $v = 0 = v'$, otherwise.

        The same reasoning shows the determinism for $\kw{mod}$.

        \case{$e =  \ite{e_1}{e_2}{e_3}$} Since all of $e_1$, $e_2$ and $e_3$ evaluate deterministically by the induction
        hypothesis, only one of \rref{E-ITETrue}, \rref{E-ITEFalse} applies and yields the deterministic evaluation of $e_2$, respectively $e_3$.

        \case{$e = e_1 = e_2$} Similar argumentation as in the previous case.

        \case{other cases} In all other cases, which are $e = \var{ref}_{\alpha}$, \; $e = \kw{addr}(\var{ref}_{Id})$,\;
        $e = e_1 \opi e_2$, \;$e = e_1 \opb e_2$, \;$e = e_1 \cmp e_2$, \;and $e = \neg e$ exactly one rule applies and their
        deterministic evaluation can be shown using the induction argument.
    \end{proofcases} Lastly, we consider the mapping expression cases, refering directly to the base expression for 
    mapping expressions that are also base expressions.
    \begin{proofcases}
        \case{$e = [\{e_i \Rightarrow m_i\}_{i \in [1,n]}]_{\mu}$} The only rule that applies in this case is \rref{E-Mapping}. 
        Assuming $[\many{e \Rightarrow m}]_{\mu}$ evaluates to both $v$ and $v'$, we can invert the last rule 
        \rref{E-Mapping} of both derivations to obtain $\psem{s}{\rho}{e_i} v_i$ and $\psem{s}{\rho}{e_i} v'_i$. 
        But from the induction hypothesis follows that $v_i = v'_i$ for all $i$ and similar for the evaluations
        of $m_i$. Therefore, by applying the same rule to identical premises, we conclude that $v = v'$.

        \case{$e = \var{ref}_{\mu}[\many{e \Rightarrow m}]_{\mu}$} Similar argumentation as for the case 
        $e = [\{e_i \Rightarrow m_i\}_{i \in [1,n]}]_{\mu}$, but also using the induction hypothesis for references on $\var{ref}$. 

    \end{proofcases}
\end{proof}

\begin{lemma}[Determinism of Value Insertion]\label{lem:ins-determinism}
    Let $s$ be a storage and $\rho$ an environment. For any reference $\var{ref}$ and value $v$ if
    $$\ins{s;\rho;\var{ref}; v}{s'_1} \quad\text{and}\quad \ins{s;\rho;\var{ref}; v}{s'_2} $$
    then $s'_1 = s'_2$.
\end{lemma}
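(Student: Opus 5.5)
The plan is a case analysis on the last rule of the two insertion derivations. There are only two insertion rules, \rref{E-InsStorage} and \rref{E-InsField}, and the form of $\var{ref}$ decides which one applies: a bare identifier $x$ can only be handled by \rref{E-InsStorage}, and a field access $\var{ref}'.x$ only by \rref{E-InsField}. For any other shape (coercions, $\pre{x}$, mapping indices, environment references, and so on) neither rule applies, so the hypothesis is vacuous. Hence both derivations $\ins{s;\rho;\var{ref};v}{s'_1}$ and $\ins{s;\rho;\var{ref};v}{s'_2}$ must end with the same rule, and no induction is needed.

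\textbf{Case $\var{ref}=x$.} Inverting \rref{E-InsStorage} in both derivations gives
\[s'_1 = s[\ell \mapsto s(\ell)[x\mapsto v]] = s'_2.\]
The right-hand side depends only on $s$, $\ell$, $x$ and $v$, all of which are fixed by the statement.

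\textbf{Case $\var{ref}=\var{ref}'.x$.} Inverting \rref{E-InsField} in both derivations yields $\psem{s^{\kw{U}}}{\rho}{\var{ref}'}(\ell_1,\kw{U})$ and $\psem{s^{\kw{U}}}{\rho}{\var{ref}'}(\ell_2,\kw{U})$, with $\ell_1,\ell_2\in\meta{Addr}$. The results are
\[s'_1 = s[\ell_1\mapsto s(\ell_1)[x\mapsto v]], \qquad s'_2 = s[\ell_2\mapsto s(\ell_2)[x\mapsto v]].\]
By \Cref{lem:ref-determinism} we get $(\ell_1,\kw{U})=(\ell_2,\kw{U})$, so $\ell_1=\ell_2$ and therefore $s'_1=s'_2$.

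The only step that needs care is the second case: we must check that the location being written is uniquely determined. This follows directly from \Cref{lem:ref-determinism}, once we note that both premises are stated with the same untimed state $s^{\kw{U}}$, environment $\rho$ and current location $\ell$.
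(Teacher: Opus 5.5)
Your proof is correct and matches the paper's argument. The syntactic shape of $\var{ref}$ forces both derivations to end in the same rule, so the \rref{E-InsStorage} case is immediate, and in the \rref{E-InsField} case \Cref{lem:ref-determinism} shows that the two target locations coincide.
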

\begin{proof}
    Insertion can only happen in two cases: when $\var{ref} = x$ or when $\var{ref} = \var{ref}.x$.
    For the first case $\var{ref} = x$ clearly only \rref{E-InsStorage} applies and yields 
    $s[\ell \mapsto s(\ell)[x \mapsto v]] = s'_1 = s'_2$.

    For the other case only \rref{E-InsField} applies, hence both $s'_1$ and $s'_2$ were obtained by this rule. 
    Inversion of both instances tells us that $\psem{s}{\rho}{\var{ref}} (\ell'_1,\kw{U})$ and $\psem{s}{\rho}{\var{ref}} (\ell'_2,\kw{U})$.
    However, we can apply \Cref{lem:ref-determinism} to conclude that $\ell'_1 = \ell'_2$ and therefore also 
    $s'_1 = s[\ell'_1 \mapsto s(\ell'_1)[x \mapsto v]] = s'_2$.
\end{proof}

\begin{lemma}[Determinism of Slot Expression Evaluation]\label{lem:slotexp-determinism}
    Let $s$ be a storage and $\rho$ an environment. For any slot expression $\var{se}$,
    if
    $$\psem{s}{\rho}{se} (v_1,s'_1) \quad\text{and}\quad \psem{s}{\rho}{se} (v_2,s'_2)$$
    then $v_1 = v_2$ and $s'_1 = s'_2$.
\end{lemma}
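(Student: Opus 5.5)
The proof goes by induction, but structural induction on $\var{se}$ alone is not enough. The rule \rref{E-Create} evaluates the constructor cases of $A$, whose $\kw{creates}$ clauses again evaluate slot expressions through \rref{E-Creates}, and these are not subterms of $\var{se}$. I will therefore induct on the height of the first derivation $\psem{s}{\rho}{se} (v_1,s'_1)$, and I will also state and prove two auxiliary judgements simultaneously. The first is determinism of creates, $s;\rho;\many{\var{create}} \bigstep_{\var{Id}} (\ell,s')$. The second is determinism of constructor cases, $s;\rho;\var{ccases}\bigstep_{\var{Id}}(\ell,s')$. Each premise of a rule in these three judgements has a strictly smaller derivation, so the induction hypothesis applies to every sub-derivation, whatever judgement it belongs to. (An alternative measure would be $\meta{len}(\Sigma,A)$ via \Cref{cor:wt-state-wf}, but that needs typing assumptions which the statement does not have, so derivation height is the better choice.)

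\textbf{Slot expressions.} I go by the shape of $\var{se}$.
\begin{proofcases}
\case{$\var{se} = m$} Only \rref{E-MapExp} applies, since a bare reference is also a mapping expression; I come back to this overlap below. The output state is $s$ in both derivations, and \Cref{lem:exp-determinism} gives $v_1=v_2$.
\case{$\var{se} = \var{ref}$} \rref{E-SlotRef} returns the state $s$ unchanged, and \Cref{lem:ref-determinism} gives equal values.
\case{$\var{se} = \addr{\var{se}'}$} This follows directly from the induction hypothesis on the premise of \rref{E-SlotAddr}.
\case{$\var{se} = \kw{new}~A(\ldots)$ or $\var{se} = \kw{new}~A\{\kw{value}:\ldots\}(\ldots)$} The syntactic form decides whether \rref{E-Create} or \rref{E-CreatePayable} was used, and $\var{ctor}$ is fixed by $A_{\kw{Cnstr}}$. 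I induct on $i$ along the threaded states $s_0,\dots,s_n$ (or $s_{n+1}$), applying the outer induction hypothesis at each step to get equal $v_i$ and equal $s_i$. Hence $\rho'$ is the same in both derivations, and the induction hypothesis for constructor cases gives equal $(\ell',s')$.
\end{proofcases}

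\textbf{Auxiliary judgements.} For \rref{E-CtorCases}, each guard $e_i$ evaluates deterministically by \Cref{lem:exp-determinism}. So exactly one index $j$ evaluates to $\meta{True}$: if two derivations chose $j\neq j'$, then $e_{j'}$ would evaluate to both $\meta{True}$ and $\meta{False}$, contradicting \Cref{lem:exp-determinism}. Both derivations therefore run the same $\many{\var{create}_j}$, and the induction hypothesis for creates finishes this case. For \rref{E-Creates}, the same threading argument over the $\var{se}_i$ gives equal $v_i$ and $s_n$. Then $\ell=\fresh{s_n}$ is a function of $s_n$, so $s_{n+1}$ coincides as well.

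The main obstacle is choosing the induction correctly. Creation re-enters slot-expression evaluation through the constructor body of another contract, so a structural induction hypothesis is unavailable there; this is why the induction is on derivation height with the mutual statement. A second, minor point is the overlap between \rref{E-MapExp} and \rref{E-SlotRef} when $\var{se}$ is a reference, and between \rref{E-SlotAddr} and the $\kw{addr}$ form of base expressions. In every such overlap both rules return the unchanged state $s$. Their values agree by \Cref{lem:ref-determinism} together with inversion of \rref{E-Ref}, \rref{E-Exp} and \rref{E-Addr}, each of which passes the reference value through unchanged. So mixed pairs of derivations also agree, and I check these cross cases explicitly.
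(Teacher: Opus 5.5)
Your proof is correct, and its decomposition matches the paper's. Both run a simultaneous argument over slot expressions, constructor cases and creates, with the same per-rule reasoning: threading $s_0,\dots,s_n$, $\rho'$ being fixed once the $v_i$ are, a unique true guard via \Cref{lem:exp-determinism}, and $\fresh{s_n}$ being a function of $s_n$.

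The real difference is the induction principle. The paper announces a mutual structural induction on the syntax of $\var{se}$, $\var{ccases}$ and $\many{\var{create}}$. In the \rref{E-Create} case, however, it applies the hypothesis to the cases of $A_{\kw{Cnstr}}$, which are not a subterm of $\kw{new}~A(\ldots)$. Since the lemma carries no typing assumption, a constructor may even create its own contract under a guard, e.g.\ recursing on a decreasing argument. That gives finite derivations but no decreasing syntactic measure. Your induction on the first derivation handles this directly. Repairing the paper's route would need something like a contract-definition-order measure, and that requires well-typedness the statement does not assume.

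You also check the rule overlaps explicitly, which the paper skips by asserting that only one rule applies. With annotations ignored, as the semantics does, a reference can be derived by \rref{E-MapExp} or \rref{E-SlotRef}. An expression $\addr{\var{ref}}$ can be derived by \rref{E-MapExp} (via \rref{E-Addr}) or by \rref{E-SlotAddr}.

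The paper's version is shorter. Yours is the one that actually justifies the \rref{E-Create} step and covers the mixed pairs of derivations.
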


\begin{lemma}[Determinism of Constructor Case Evaluation]\label{lem:ccase-determinism}
    Let $s$ be a storage and $\rho$ an environment. For any constructor cases expression $\var{ccases}$,
    if
    $$\psemnoell{s}{\rho}{\var{ccases}}_{\var{Id}} (\ell_1,s'_1) \quad\text{and}\quad \psemnoell{s}{\rho}{\var{ccases}}_{Id} (\ell_2,s'_2)$$
    then $\ell_1 = \ell_2$ and $s'_1 = s'_2$.
\end{lemma}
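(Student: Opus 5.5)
The plan is to prove this lemma together with \Cref{lem:slotexp-determinism} and determinism of creates evaluation ($s;\rho;\many{\var{create}} \bigstep_{\var{Id}} (\ell,s')$) by a single mutual induction. These judgments are mutually dependent: \rref{E-Create} and \rref{E-CreatePayable} evaluate a constructor's cases, \rref{E-CtorCases} evaluates creates, and \rref{E-Creates} evaluates slot expressions. Structural induction on syntax does not work, because the constructor body $A_{\kw{Cnstr}}$ is not a subterm of $\kw{new}~A(\dots)$. I will therefore induct on the height of the first derivation, quantifying universally over the second derivation. A well-founded measure such as $\meta{len}(\Sigma,A)$ would be an alternative, but derivation height is simpler and needs no well-typedness assumptions.

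For constructor cases, both derivations must end in \rref{E-CtorCases}, selecting indices $j_1$ and $j_2$. The guards $e_i$ are base expressions evaluated in the same state $s^{\kw{U}}$ under the same $\rho$, so by \Cref{lem:exp-determinism} each guard has a unique value. The first derivation gives $e_{j_1} \bigstep \meta{True}$. If $j_1 \neq j_2$, the second derivation gives $e_{j_1} \bigstep \meta{False}$, which contradicts uniqueness. Hence $j_1=j_2$, and both derivations evaluate the same list $\many{\var{create}_j}$ from the same $s$ and $\rho$. The induction hypothesis for creates, applied to the strictly smaller subderivation, then yields $(\ell_1,s'_1)=(\ell_2,s'_2)$.

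For creates, both derivations end in \rref{E-Creates}. I will show by an inner induction on $i$ that the intermediate states $s_i$ and values $v_i$ coincide. At each step the induction hypothesis for slot expressions applies, since the current states agree and $\rho$ is unchanged. Once $s_n$ agrees, $\ell=\fresh{s_n}$ is fixed, and therefore so is $s_{n+1}$. The slot-expression cases go the same way. \rref{E-MapExp} and \rref{E-SlotRef} reduce to \Cref{lem:exp-determinism} and \Cref{lem:ref-determinism}, noting that \rref{E-SlotRef} discards the timing tag. \rref{E-SlotAddr} uses the induction hypothesis directly. For $\kw{new}$, payability is a property of the fixed $A_{\kw{Cnstr}}$, and the syntactic form ($\kw{value}$ present or absent) determines which of the two create rules applies. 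The argument states are then threaded exactly as in creates, $\rho'$ is determined by the argument values, and the induction hypothesis for constructor cases finishes the case.

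I expect the main obstacle to be the choice of induction measure and the statement of the joint induction. Beyond that, care is needed over the location subscript: the creates and cases judgments use $\dummyloc$ while $\kw{new}$ supplies $\ell$ for $\kw{caller}$. This does not affect the argument, because both derivations use the same subscript throughout. The guard-exclusivity step is the only place where non-syntactic reasoning enters, and it relies only on the $\meta{True}/\meta{False}$ distinctness provided by \Cref{lem:exp-determinism}.
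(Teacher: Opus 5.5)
Your proof is correct and has the same decomposition as the paper's: one joint argument for slot expressions, constructor cases and creates, with base expressions and references handled by \Cref{lem:exp-determinism} and \Cref{lem:ref-determinism}. The real difference is the induction measure.

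The paper inducts on the syntactic form of $\var{se}$, $\var{ccases}$ and $\many{\var{create}}$. In the $\kw{new}~A(\dots)$ case it applies ``the mutual induction hypothesis'' to $s_n;\rho';\var{ctor}_{\kw{cases}}$. As you observe, $A_{\kw{Cnstr}}$ is not a subterm of the $\kw{new}$ expression, so structural induction does not license that step. Repairing it syntactically would need an extra well-founded order on contract names, such as declaration order in a well-typed spec, and no such hypothesis is in the lemma.

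Your induction on the height of the first derivation, quantified over the second, justifies every appeal to the hypothesis and needs no typing assumptions. It also covers ill-typed specs in which a constructor transitively invokes its own contract's constructor. You are also more explicit in the constructor-cases step: you derive $j_1=j_2$ from uniqueness of guard values and $\meta{True}\neq\meta{False}$, which the paper leaves implicit.

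Two small remarks.

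First, your aside that $\meta{len}(\Sigma,A)$ could serve as the measure is not right. $\meta{len}$ follows storage-field types, not constructor invocations. A constructor of $A$ can evaluate $\kw{new}~C(\addr{\kw{new}~B(\dots)})$ where $C$ never stores its argument, so nothing forces $\meta{len}(\Sigma,B)<\meta{len}(\Sigma,A)$.

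Second, slot-expression evaluation is not syntax-directed, so when you case on the last rule the two derivations may end in different rules. For example, $\addr{\var{ref}}$ can be evaluated via \rref{E-MapExp} (through \rref{E-Addr}) or via \rref{E-SlotAddr} (through \rref{E-SlotRef}). Both routes return the value of $\var{ref}$ and leave the state unchanged, so \Cref{lem:ref-determinism} closes these mixed cases in one line. The paper overlooks this as well.
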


\begin{lemma}[Determinism of Creates Evaluation]\label{lem:creates-determinism}
    Let $s$ be a storage and $\rho$ an environment. For any creates expression $\many{\var{create}}$,
    if
    $$\psemnoell{s}{\rho}{\many{\var{create}}}_{Id} (\ell_1,s'_1) \quad\text{and}\quad \psemnoell{s}{\rho}{\many{\var{create}}}_{Id} (\ell_2,s'_2)$$
    then $\ell_1 = \ell_2$ and $s'_1 = s'_2$.
\end{lemma}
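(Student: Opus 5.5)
The plan is to prove \Cref{lem:slotexp-determinism}, \Cref{lem:ccase-determinism} and \Cref{lem:creates-determinism} simultaneously. They cannot be done separately: slot expression evaluation invokes constructor case evaluation through \rref{E-Create} and \rref{E-CreatePayable}; constructor cases invoke creates through \rref{E-CtorCases}; and creates invoke slot expressions through \rref{E-Creates}. Structural induction on syntax does not follow this cycle, because $\kw{new}~A(\dots)$ jumps to the constructor of $A$ stored in $\Sigmacode$, which is not a subterm. I will therefore use mutual induction on the height of the \emph{first} derivation (say the one producing $(v_1,s'_1)$, resp.\ $(\ell_1,s'_1)$), generalizing over the second derivation. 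In each case I invert both derivations, apply the induction hypothesis to the premises of the first, and conclude that the second has identical premises and hence identical conclusion. An alternative measure is $\meta{len}(\Sigma,A)$ from the well-foundedness section, but height of derivation is simpler and needs no well-typedness.

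\textbf{Slot expressions.} For \rref{E-MapExp}, the result is $(v,s)$ with $v$ unique by \Cref{lem:exp-determinism}. For \rref{E-SlotRef}, \Cref{lem:ref-determinism} gives the value, and the state is $s$ itself. The syntax of $\var{se}$ is overlapping, since a reference $\var{ref}$ is also a base expression $e$, hence a mapping expression $m$. So both rules may apply to the same term, and I must check that they agree. They do: both return the state $s$, and \rref{E-Ref} evaluates the reference to the same value, by \Cref{lem:ref-determinism}. For \rref{E-SlotAddr}, I apply the induction hypothesis directly. For $\kw{new}~A(\many{se})$, only one of \rref{E-Create} and \rref{E-CreatePayable} applies, according to whether the syntax has a $\kw{value}$ field and the fixed $\meta{isPayable}(\var{ctor})$. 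I then do an inner induction on $i$, threading the intermediate states: the induction hypothesis on $\var{se}_i$ starting from the common state $s_{i-1}$ yields equal $v_i$ and $s_i$. Hence $\rho'$ coincides and $s_n$ coincides, and the induction hypothesis for constructor cases finishes the case.

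\textbf{Creates.} For \rref{E-Creates}, the same threading argument gives equal $v_i$ and $s_n$. Then $\ell = \fresh{s_n}$ is a function of $s_n$, so $s_{n+1}$ is determined as well.

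\textbf{Constructor cases.} For \rref{E-CtorCases}, the expressions $e_i$ evaluate deterministically by \Cref{lem:exp-determinism}. The selected index $j$ is therefore the same in both derivations: it is the unique $i$ whose guard evaluates to $\meta{True}$, since all other guards evaluate to $\meta{False}$ and $\meta{True}\neq\meta{False}$. The induction hypothesis for creates on $\many{\var{create}_j}$ then concludes.

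The main obstacle is the well-foundedness of the induction through \rref{E-Create}, which is why I commit to induction on derivation height. The remaining cases are bookkeeping, namely the threading of intermediate states and the overlapping reference/expression syntax.
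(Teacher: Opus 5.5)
Your argument is correct, but it uses a different induction principle from the paper's, and the difference matters. The paper argues by mutual \emph{structural} induction on the syntax of $\var{se}$, the constructor cases and the creates. In the $\kw{new}~A(\dots)$ case it applies the induction hypothesis to $\var{ctor}_{\kw{cases}}$ with $\var{ctor} = A_{\kw{Cnstr}}$. That is exactly the step you observe structural induction does not license: the term is fetched from $\Sigmacode$ and is not a subterm. Making the paper's route rigorous needs an extra well-founded ordering, for instance that in a well-typed $\Sigma$ a constructor can only $\kw{new}$ contracts defined before its own, combined lexicographically with syntax. Your induction on the height of the first derivation, generalized over the second, needs neither acyclicity nor well-typedness, because derivations are finite. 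It also lets you handle rules that overlap on the same term, which the syntax-directed presentation glosses over. For $\var{ref}$ the paper simply asserts that only \rref{E-SlotRef} applies, implicitly relying on the annotations $\var{ref}_{\var{Id}}$ versus $\var{ref}_{\alpha}$, which the semantics says are omitted; you instead check agreement with \rref{E-MapExp}. What the paper's version buys is a shorter, one-case-per-form write-up.

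Two small points. First, the same overlap occurs for $\addr{\var{ref}}$, and this one persists even with annotations. The term is a slot expression via \rref{E-SlotAddr}, and also a base expression via \rref{E-Addr}, hence evaluable via \rref{E-MapExp}. So in your \rref{E-SlotAddr} case, ``apply the induction hypothesis directly'' does not cover the mixed situation. There you should note that the premise of \rref{E-SlotAddr} for $\var{ref}$ ends in \rref{E-SlotRef} or \rref{E-MapExp}, both of which return the unchanged state $s$ and the value of the reference, and then conclude with \Cref{lem:ref-determinism}. Second, your aside that $\meta{len}(\Sigma,A)$ could replace derivation height would not work as stated. $\meta{len}$ follows storage references, but a constructor may $\kw{new}$ a contract $B$ and pass its address to another constructor with neither of them storing it. Then $\meta{len}(\Sigma,B)$ need not be smaller than $\meta{len}(\Sigma,A)$. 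Since you do not rely on this aside, it does not affect the proof.
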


\begin{proof}[Proof of \Cref{lem:slotexp-determinism}, \Cref{lem:ccase-determinism} and \Cref{lem:creates-determinism}]
We proceed by mutual structural induction on the syntactic form of the slot expression  $\var{se}$ , constructor cases expression  $\var{ccases}$
 or creates expression $\many{\var{create}}$ and assume it evaluates to 
both $(v_1,s'_1)$ and $(v_2,s'_2)$, respectively $(\ell_1,s'_1)$ and $(\ell_2,s'_2)$. We start listing the cases for slot expressions.
\begin{proofcases}
    \case{$se = e$} The only slot expression evaluation rule that could have been applied here
    is \rref{E-MapExp}, yielding by inversion that $\psem{s}{\rho}{e} v_1 \quad\text{and}\quad \psem{s}{\rho}{e} v_2$.
    From \Cref{lem:exp-determinism}, we know that in this case $v_1 = v_2$. Further from \rref{E-MapExp}, we know that
    $s'_1 = s = s'_2$.

    \case{$se = \kw{new}~\var{Id}(\{\var{se}_i\}_{i \in [1,n]})$} The only rule that applies in this case is \rref{E-Create}.
    Hence, both $(v_1,s'_1)$ and $(v_2,s'_2)$ had to have been derived using this rule last. By inversion of both instances, we obtain
    $\psem{s_{0}}{\rho}{se_1} (v_{1,1},s_{1,1})$ and  $\psem{s_{0}}{\rho}{se_1} (v_{2,1},s_{2,1})$, where $s = s_0$. Applying the induction hypothesis to $se_1$,
    we obtain $v_{1,1} = v_{2,1}$ and $s_{1,1}= s_{2,1}$. Repeating this argument $n$ times yields deterministic $(v_i, s_i)$ and thus also 
    a deterministic evaluation of $s_{n-1};\rho;se_n$ to $(v_n,s_n)$. Finally applying the mutual induction hypothesis to the last premise 
    of the rule in both instances yields that $s_n;\rho';\var{ctor}_{\kw{cases}}$ deterministically evaluates to $(\ell',s')$. Therefore, 
    $\ell'_1 = \ell' = \ell'_2$ and $s'_1 = s' = s'_2$.

    \case{$se = \kw{new}~\var{Id}\{\kw{value}: \var{se'}\}(\{\var{se}_i\}_{i \in [1,n]}) $} Similar to the previous case.

    \case{$se = \var{ref}_{\var{Id}}$} In this case only \rref{E-SlotRef} applies and therefore had to be the last rule that was applied to reach both $(v_1,s'_1)$
    and $(v_2,s'_2)$. By the defintion of the rule follows $s'_1 = s = s'_2$. Inverting both instances implies $\psem{s}{\rho}{\var{ref}_{Id}} v_1$ as well as $\psem{s}{\rho}{\var{ref}_{Id}} v_2$.
    Using now \Cref{lem:ref-determinism}, we conclude that also $v_1 = v_2$.  

    \case{$se = \kw{addr}(se)$} For this case the only rule that could have been applied last is \rref{E-SlotAddr}.
    Hence, it had to have been applied last for both evaluations $(v_1,s'_1)$ and $(v_2,s'_2)$.
    Inverting the rule and using the induction hypothesis, however we conclude $v_1 = v_2$ and $s'_1 = s'_2$.
\end{proofcases}
We next consider constructor cases expressions and assume 
$$\psemnoell{s}{\rho}{\{\kw{case} \;e_i\;:\; \kw{creates} \; \many{\var{create}_i}\}_{i \in [1,n]}}_{Id}  (\ell_1,s'_1)$$
and also evaluates to $(\ell_2,s'_2)$. Both derivations had to end with \rref{E-CtorCases}. We can therefore invert the rule in 
both instances. From \Cref{lem:exp-determinism}, we know that the $e_i$ will evaluate the same way in both instances, hence also 
considering the same $\many{\var{create}_j}$ (for the $e_j$ that evaluated to \meta{True}). Applying the mutual induction
hypothesis to the evalution of $\many{\var{create}_j}$ yields a deterministic result $(\ell,s')$ of its evaluation.
Thus, using the definition of \rref{E-CtorCases}, we conclude $\ell_1 = \ell =\ell_2$ and $s'_1 = s' =s'_2$.

Finally, we consider creates expressions and assume $$\psemnoell{s_0}{\rho}{\{\sigma_i \;x_i\,:= se_i\}_{i \in [1,n]}}_{Id} (\ell_1, s'_1)$$
and also evaluates to $(\ell_2, s'_2)$. We invert the only applicable rule for both derivations to obtain
 $\psem{s_0}{\rho}{se_1} (v_{1,1}, s_{1,1})$ and  $\psem{s_0}{\rho}{se_1} (v_{2,1}, s_{2,1})$ . With the mutual induction
 hypothesis we get that $v_{1,1}=v_{2,1}$ and $s_{1,1} = s_{2,1}$. Applying the same argument $n$ times yields the same evaluation $(v_n, s_n)$
 of the $se_i$ in both derivations. Also the computation of $\ell$ is deterministic and therefore identical in both derivations.
 The same argument tells us that the $s_{n+1}$ have to be identical. Hence, by the definition of \rref{E-Creates}, we conclude $\ell_1 = \ell = \ell_2$ and 
 $s'_1 = s_{n+1} = s'_2$. 
\end{proof}

\begin{lemma}[Determinism of Updates Evaluation]\label{lem:updates-determinism}
    Let $s$ be a storage and $\rho$ an environment. For any updates expression $\many{\var{upd}}$,
    if
    $$\psem{s}{\rho}{\many{\var{upd}}} s'_1 \quad\text{and}\quad \psem{s}{\rho}{\many{\var{upd}}} s'_2$$
    then $s'_1 = s'_2$.
\end{lemma}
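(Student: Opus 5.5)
The plan is to invert the single applicable rule and then chain the determinism results already established. The only rule whose conclusion has the form $\psem{s}{\rho}{\many{\var{upd}}} s'$ is \rref{E-Updates}, so both derivations end with it. Write $\many{\var{upd}} = \range{\var{ref}_i \asgn \var{se}_i}{i \in [1,n]}$. Inverting the first derivation gives intermediate states $s_0 = s, s_1, \dots, s_{2n} = s'_1$ and values $v_1,\dots,v_n$. Inverting the second gives states $t_0 = s, t_1,\dots,t_{2n} = s'_2$ and values $w_1,\dots,w_n$. The premises are
\[
\psem{s_{i-1}}{\rho}{\var{se}_i} (v_i,s_i), \qquad \ins{s_{n+i-1};\rho;\var{ref}_i;v_i}{s_{n+i}},
\]
together with the analogous premises for the $t$'s and $w$'s.

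I then prove by induction on $k \in [0,2n]$ that $s_k = t_k$, and that additionally $v_k = w_k$ whenever $k \in [1,n]$.

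\begin{itemize}
\item \textbf{Base case.} $s_0 = s = t_0$.
\item \textbf{Evaluation phase} ($1 \le k \le n$). By the induction hypothesis $s_{k-1} = t_{k-1}$. Both $(v_k,s_k)$ and $(w_k,t_k)$ are therefore results of evaluating the same slot expression $\var{se}_k$ in the same state and environment. \Cref{lem:slotexp-determinism} gives $v_k = w_k$ and $s_k = t_k$.
\item \textbf{Insertion phase} ($n < k \le 2n$, with $k = n+i$). By the induction hypothesis $s_{n+i-1} = t_{n+i-1}$, and from the first phase $v_i = w_i$. Both $s_{n+i}$ and $t_{n+i}$ therefore arise from inserting the same value at the same reference in the same state and environment. \Cref{lem:ins-determinism} gives $s_{n+i} = t_{n+i}$.
\end{itemize}

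Taking $k = 2n$ yields $s'_1 = s_{2n} = t_{2n} = s'_2$.

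There is no substantive obstacle. The only point needing care is that the insertion phase uses the values $v_i$ produced in the first phase, so the argument must finish the evaluation phase, establishing $v_i = w_i$ for every $i$, before starting on the insertions. Ordering the induction over the indices $0,\dots,2n$ as above ensures this.
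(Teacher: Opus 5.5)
Your proposal is correct and follows essentially the same approach as the paper. Both invert \rref{E-Updates} and thread \Cref{lem:slotexp-determinism} through the intermediate states. Your version is also more explicit: you close the insertion phase with \Cref{lem:ins-determinism}, whereas the paper's brief proof just defers to the creates-determinism argument and leaves that step implicit.
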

\begin{proof}
    Let $\many{\var{upd}} = \{ \var{ref}_i \; := \; se_i\}_{i \in [1,n]}$ and assume it evaluates to both $s'_1$
    and $s'_2$. By proceeding in the same way as in the proof of \Cref{lem:creates-determinism}, only skipping the 
    picking of a fresh location $\ell$ and the extra step of defining $s_{n+1}$, 
    we can infer that also $s'_1 = s'_2$.
\end{proof}

\begin{lemma}[Determinism of Transition Case Evaluation]\label{lem:bcase-determinism}
    Let $s$ be a storage and $\rho$ an environment. For any transition cases expression $\var{tcases}$,
    if
    $$\psem{s}{\rho}{\var{tcases}} (v_1,s'_1) \quad\text{and}\quad \psem{s}{\rho}{\var{tcases}} (v_2,s'_2)$$
    then $v_1 = v_2$ and $s'_1 = s'_2$.
\end{lemma}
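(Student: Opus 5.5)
The plan is to invert the unique applicable rule and chain the determinism lemmas already established. A transition cases expression $\var{tcases} = \caseblock{e_i}{\kw{updates}~\many{\var{upd}_i}~\kw{returns}~\var{ret}_i}$ can only be evaluated by \rref{E-TransCases}. So both derivations $\psem{s}{\rho}{\var{tcases}} (v_1,s'_1)$ and $\psem{s}{\rho}{\var{tcases}} (v_2,s'_2)$ must end with that rule. I will invert both instances to obtain the following data:
\begin{itemize}
\item indices $j_1$ and $j_2$ with $\psem{s^{\kw{U}}}{\rho}{e_{j_k}} \meta{True}$ and $\psem{s^{\kw{U}}}{\rho}{e_i} \meta{False}$ for $i \neq j_k$;
\item update evaluations $\psem{s^{\kw{U}}}{\rho}{\many{\var{upd}_{j_k}}} s'_k$;
\item return evaluations $\psem{(s,s'_k)^{\kw{T}}}{\rho}{\var{ret}_{j_k}} v_k$.
\end{itemize}

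The first step is to show $j_1 = j_2$. Suppose $j_1 \neq j_2$. The first derivation gives $\psem{s^{\kw{U}}}{\rho}{e_{j_2}} \meta{False}$, and the second gives $\psem{s^{\kw{U}}}{\rho}{e_{j_2}} \meta{True}$. By \Cref{lem:exp-determinism} these values coincide, which is a contradiction since $\meta{True} \neq \meta{False}$. Hence both derivations select the same case $j$. I do not need the consistency check from \rref{T-Trans} here, because the premises of \rref{E-TransCases} already force exactly one case to be true.

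Next, with $j$ fixed, both derivations evaluate the same update list $\many{\var{upd}_j}$ in the same state $s$ and environment $\rho$. \Cref{lem:updates-determinism} then yields $s'_1 = s'_2 =: s'$. Consequently both return expressions are evaluated in the identical timed state $(s,s')^{\kw{T}}$. Applying \Cref{lem:exp-determinism} once more, now to $\var{ret}_j$, gives $v_1 = v_2$.

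The only mildly delicate point is that the returned expression is evaluated in a timed state rather than an untimed one. \Cref{lem:exp-determinism} is stated for an arbitrary state $s$, and its proof explicitly treats both the untimed and the timed tags, so it applies directly. I would also note that if the $\kw{returns}$ clause is absent, the conclusion $v_1 = v_2$ holds trivially under whatever default convention is used, and only the state equality carries content.
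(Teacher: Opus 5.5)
Your proposal is correct and takes essentially the same route as the paper: you invert \rref{E-TransCases} in both derivations, use \Cref{lem:exp-determinism} to force the same selected case, apply \Cref{lem:updates-determinism} to identify the post-states, and apply \Cref{lem:exp-determinism} again to $\var{ret}_j$ in the common timed state. The only difference is that you spell out the $j_1 = j_2$ step explicitly, whereas the paper defers it to the reasoning in the proof of \Cref{lem:ccase-determinism}.
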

\begin{proof}
    Let $\var{tcases} =  \{\kw{case} \;e_i\,:\, \kw{updates} \; \many{\var{upd}_i} \; \kw{returns}\; \var{ret}_i \}_{i \in [1,n]}$
    and assume it evaluates to both $ (v_1,s'_1)$ and $ (v_2,s'_2)$. Then both derivations had to end with 
    \rref{E-TransCases}. We invert both instances. With the same argumentation as in the proof of \Cref{lem:ccase-determinism}
    we reach the conclusion that $s'_1 = s'_2$. Finally, applying \Cref{lem:exp-determinism} to $\var{ret}_j$ 
    ($j$ such that $e_j$ evaluates to $\meta{True}$), we obtain that also $v_1 = v_2$.  
\end{proof} 

\begin{lemma}[Determinism of Constructor Evaluation]\label{lem:cnstr-determinism}
    Let $s$ be a storage and $\rho$ an environment. For any constructor $\var{cnstr}$,
    if
    $$\psemnoell{s}{\rho}{\var{cnstr}}_{Id} (\ell_1,s'_1) \quad\text{and}\quad \psemnoell{s}{\rho}{\var{cnstr}}_{Id} (\ell_2,s'_2)$$
    then $\ell_1 = \ell_2$ and $s'_1 = s'_2$.
\end{lemma}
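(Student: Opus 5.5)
The plan is to argue by inversion on the two derivations and reduce everything to the determinism results already established. The only evaluation rule with conclusion $s ; \rho ; \var{cnstr} \bigstep_{\var{Id}} (\ell, s')$ is \rref{E-Ctor}. Hence both derivations $\psemnoell{s}{\rho}{\var{cnstr}}_{Id} (\ell_1,s'_1)$ and $\psemnoell{s}{\rho}{\var{cnstr}}_{Id} (\ell_2,s'_2)$ must end with this rule. Writing $\var{cnstr} = \constrcase{\var{I}}{\{\var{pre}_i\}_{i \in [1,n]}}{\var{cases}}{\many{post}}$ and inverting both instances yields:
\begin{itemize}
\item the precondition premises $s^{\kw{U}}; \rho; \var{pre}_i \Downarrow_{\dummyloc} \meta{True}$ for each $i \in [1,n]$;
\item a derivation $\psemnoell{s}{\rho}{\var{cases}}_{Id} (\ell_1,s'_1)$;
\item a derivation $\psemnoell{s}{\rho}{\var{cases}}_{Id} (\ell_2,s'_2)$.
\end{itemize}

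The precondition premises contribute nothing to the output pair $(\ell, s')$. They are merely side conditions that hold in both derivations, so they need no further treatment and no appeal to \Cref{lem:exp-determinism} is required for them. The outputs are determined solely by the constructor-cases judgment.

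The remaining step is to apply \Cref{lem:ccase-determinism} to the two case derivations. This gives $\ell_1 = \ell_2$ and $s'_1 = s'_2$ directly, and since \rref{E-Ctor} returns the result of the cases evaluation unchanged, this concludes the proof.

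I expect no real obstacle here. The only point needing care is notational: in \rref{E-Ctor} the premise on $\var{cases}$ is written with the subscript $\var{Id}$, while the conclusion of \rref{E-CtorCases} omits the subscript. I would note that these denote the same judgment $\psemnoell{s}{\rho}{\var{ccases}}_{\var{Id}}$, so that \Cref{lem:ccase-determinism} applies verbatim.
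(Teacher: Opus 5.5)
Your proof is correct and follows the paper's argument: invert \rref{E-Ctor} in both derivations and apply \Cref{lem:ccase-determinism} to the two cases derivations. The paper additionally cites \Cref{lem:exp-determinism} for the preconditions, but as you observe this is superfluous, since the output pair of \rref{E-Ctor} is exactly the output of the cases judgment.
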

\begin{proof}
    Let $\var{cnstr} = \kw{constructor} \,  (I)\, ? \,( \kw{payable} ) \,  \kw{iff} \, \{e_i\}_{i \in [1,n]} \, \var{ccases} \, \kw{ensures} \, \many{\var{post}}$
    and assume it evaluates to both $(\ell_1,s'_1)$ and $(\ell_2,s'_2)$. The rule \rref{E-Ctor} had to be the last in 
    both derivations. By inverting it and applying \Cref{lem:exp-determinism} to the $e_i$ and \Cref{lem:ccase-determinism}
    to $\var{ccases}$, we obtain that  $\ell_1 = \ell_2$ and $s'_1 = s'_2$, which completes the proof.
\end{proof}

\begin{lemma}[Determinism of Transition Evaluation]\label{lem:trans-determinism}
    Let $s$ be a storage and $\rho$ an environment. For any transition expression $\var{trans}$,
    if
    $$\psem{s}{\rho}{\var{trans}} (v_1,s'_1) \quad\text{and}\quad \psem{s}{\rho}{\var{trans}} (v_2,s'_2)$$
    then $v_1 = v_2$ and $s'_1 = s'_2$.
\end{lemma}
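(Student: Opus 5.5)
The proof mirrors that of \Cref{lem:cnstr-determinism}. Write
$$\var{trans} = \transcase{\var{I}}{\{\var{pre}_i\}_{i \in [1,n]}}{\var{tcases}}{\many{post}}$$
and assume $\psem{s}{\rho}{\var{trans}} (v_1,s'_1)$ and $\psem{s}{\rho}{\var{trans}} (v_2,s'_2)$. A transition can only be evaluated by \rref{E-Trans}, so both derivations must end with that rule, and we may invert both instances.

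Inversion gives two kinds of premises in each derivation:
\begin{enumerate}
\item the preconditions, $s^{\kw{U}};\rho;\var{pre}_i \Downarrow_{\ell} \meta{True}$ for $i \in [1,n]$;
\item the cases, $\psem{s}{\rho}{\var{tcases}} (v_k,s'_k)$ for $k \in \{1,2\}$.
\end{enumerate}

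The precondition premises fix no output beyond $\meta{True}$, so they cannot make the two derivations differ. If desired, \Cref{lem:exp-determinism} confirms they are the same judgments, but nothing needs to be extracted from them.

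All the information lives in the case premises. Since the input triple $(s,\rho,\var{tcases})$ is identical in both, we apply \Cref{lem:bcase-determinism} directly and obtain $v_1 = v_2$ and $s'_1 = s'_2$. That is the claim.

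There is no real obstacle. The only points to check are that \rref{E-Trans} is the unique rule whose conclusion has a transition as its subject, and that the location $\ell$ and the tags on $s$ are fixed by the judgment and therefore shared by both derivations. Both are immediate from the shape of the rules. All substantive work (case selection, updates, returns) has already been done in \Cref{lem:bcase-determinism}, \Cref{lem:updates-determinism} and \Cref{lem:exp-determinism}.
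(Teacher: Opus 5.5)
Your proof is correct and follows the paper's argument: invert \rref{E-Trans} in both derivations, then apply \Cref{lem:bcase-determinism} to the transition-cases premise, which is what the paper means by ``similar to \Cref{lem:cnstr-determinism}''. You are also right that the precondition premises need no appeal to \Cref{lem:exp-determinism}, since their output is already fixed to $\meta{True}$.
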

\begin{proof}
    Similar to the proof of \Cref{lem:cnstr-determinism}, only using \Cref{lem:bcase-determinism} instead of 
    \Cref{lem:ccase-determinism}.
\end{proof}

\section{Typing Lemmas}
\begin{lemma}[Weakening of Storage (Pointer Semantics)]\label{lem:sem-storage-weak}
Let $s$ and $s'$ be states for untimed judgements and pairs of states $s=(s_{\kw{pre}},s_{\kw{post}})$ and $s'=(s'_{\kw{pre}},s'_{\kw{post}})$ 
for timed judgements.  We assume $s \subseteq s'$. For the timed case that means $s_{\kw{pre}} \subseteq s'_{\kw{pre}}$ and 
$s_{\kw{post}} \subseteq s'_{\kw{post}}$. Further, let $\ell$ be a location and $\rho$ be an environment. 
% \begin{enumerate}
    % \item[(H1)] 
    % \item[(H2)] \textcolor{orange}{If $\ell \in \dom{s}$ then
    %  $\dom{s(\ell)} \cap \dom{\rho} = \emptyset$. For timed  that means $\dom{s_{\kw{pre}}(\ell)} \cap \dom{\rho} = \emptyset$ 
    % and  $\dom{s_{\kw{post}}(\ell)} \cap \dom{\rho} = \emptyset$. }
    % \item[(H3)] \textcolor{orange}{If $\ell \in \dom{s'}$ then
    %  $\dom{s'(\ell)}\cap \dom{\rho} = \emptyset$. For timed that means $\dom{s'_{\kw{pre}}(\ell)}\cap \dom{\rho} = \emptyset$ 
    % and  $\dom{s'_{\kw{post}}(\ell)}\cap \dom{\rho} = \emptyset$. }
% \end{enumerate}
  Then, the following hold.
    \begin{enumerate}
        % expressions
        \item If a base expression $e$ evaluates to $\psem{s}{\rho}{e} \var{v}$ then also $\psem{s'}{\rho}{e} \var{v}$.
        % references
        \item If a variable reference $\var{ref}$ evaluates to $\psem{s}{\rho}{\var{ref}} (\var{v}, \var{t_p})$
        then also $\psem{s'}{\rho}{\var{ref}} (\var{v}, \var{t_p})$.
        % mapping
        \item If a mapping expression $m$ evaluates to $\psem{s}{\rho}{m} \var{v}$ then also $\psem{s'}{\rho}{m} \var{v}$.
    \end{enumerate}
\end{lemma}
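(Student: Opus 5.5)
The proof goes by simultaneous induction on the evaluation derivations of the three statements, together with the auxiliary environment-reference judgment $\rho;\var{env}\bigstep_\ell v$. The judgments are mutually dependent: \rref{E-RefMapping} evaluates a base expression, \rref{E-Ref} and \rref{E-Addr} evaluate references, and mapping expressions use both. I would therefore induct on the derivation rather than on syntax, exactly as in the determinism proofs (\Cref{lem:exp-determinism}, \Cref{lem:ref-determinism}). In each case I invert the last rule, apply the induction hypothesis to every evaluation premise, and rebuild the conclusion with the same rule for $s'$. The untimed and timed variants are handled together, because the tag $t$ is propagated unchanged and in the timed case the hypothesis $s\subseteq s'$ is taken componentwise.

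The only facts needed about $s\subseteq s'$ are the following. If $\ell\in\dom{s}$ then $s'(\ell)=s(\ell)$; in particular $s'(\ell).\kw{type}=s(\ell).\kw{type}$, $\dom{s'(\ell)}=\dom{s(\ell)}$, and $s'(\ell)(x)=s(\ell)(x)$. Even under a weaker pointwise reading, in which each record of $s$ is merely included in the corresponding record of $s'$, the facts still hold in the form used: membership $x\in\dom{s(\ell)}$ is preserved and the looked-up value is the same. The state-dependent rules are \rref{E-Storage}, \rref{E-StoragePre}, \rref{E-StoragePost}, \rref{E-Field}, \rref{E-FieldPre} and \rref{E-FieldPost}. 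Each has a positive premise $x\in\dom{s(\ell)}$ (or of $s_{\kw{pre}}$, $s_{\kw{post}}$) and a lookup in its conclusion, so both transfer directly. For the field rules, the induction hypothesis gives the same $(\ell',t_p)$ for the subreference, and then this transfer argument applies at $\ell'$.

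All other side conditions concern only $\rho$ or values, and neither changes between $s$ and $s'$. Examples are $x\notin\dom{\rho}$, $x\in\dom{\rho}$, $v\in\mathbb{Z}$, $v_2\neq 0$, $v_1=v_2$ and range bounds. Since the induction hypothesis returns identical values, these premises hold verbatim. The remaining cases are then routine re-applications of the same rule: literals, \rref{E-This}, the other environment rules, \rref{E-Calldata}, \rref{E-CalldataTimed}, \rref{E-Coerce}, the arithmetic, boolean, comparison, ITE and equality rules, \rref{E-Mapping}, and \rref{E-MappingUpd}. The function $f$ or $g$ built in the mapping rules is determined by the evaluated values and the type annotation, so it is identical for $s'$.

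The step I expect to need care is making precise what $s\subseteq s'$ gives. If it meant only that $\dom{s}\subseteq\dom{s'}$, the lookups $s(\ell)(x)$ could change and the statement would fail. I would state the inclusion explicitly as inclusion of partial functions, with records agreeing on all defined fields. I would also note that the judgments contain no negative premise on the state, such as $x\notin\dom{s(\ell)}$ or $\ell\notin\dom{s}$. This absence is exactly what makes weakening monotone. Once both observations are recorded, every case is a one-line re-application of its rule.
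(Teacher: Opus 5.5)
Your proposal is correct and is essentially the paper's proof. Both go by induction on the evaluation derivation: invert the last rule, apply the induction hypothesis to each premise, and use $s'(\ell)=s(\ell)$ for $\ell\in\dom{s}$ in the storage and field cases. The one cosmetic difference is that the paper proves (1) and (2) by mutual induction and then obtains (3) by a separate induction, since base expressions and references never evaluate mapping expressions; you fold all three into one simultaneous induction, which is equally valid.
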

\begin{proof} We first prove statements (1) and (2) in a mutual structural induction on the semantic evaluation tree. 
    Therefore, we assume $\psem{s}{\rho}{e} \var{v}$, respectively  $\psem{s}{\rho}{\var{ref}} (\var{v}, \var{t_p})$ and $s \subseteq s'$.
    \begin{proofcases}
        \case{\rref{E-Int}} By inversion, we get that $e \in \mathbb{Z}$. Applying now \rref{E-Int} yields  $\psem{s'}{\rho}{e} \var{v}$.
        \case{\rref{E-Bool}} Inverting the rule, we obtain $e \in \mathbb{B}$. With \rref{E-Bool} follows  $\psem{s'}{\rho}{e} \var{v}$.
        \case{\rref{E-Ref}} Inversion yields $\psem{s}{\rho}{\var{ref}} (\var{v}, \var{t_p})$. By induction hypothesis, we get 
        $\psem{s'}{\rho}{\var{ref}} (\var{v}, \var{t_p})$ and applying \rref{E-Ref} we conclude $\psem{s'}{\rho}{\var{ref}} \var{v}$.
        \case{\rref{E-Addr}} We invert the rule to obtain $\psem{s}{\rho}{\var{ref}} (\var{v}, \var{t_p})$. With the induction hypothesis
        we get  $\psem{s'}{\rho}{\var{ref}} (\var{v}, \var{t_p})$ and applying \rref{E-Addr} we conclude $\psem{s'}{\rho}{\addr{\var{ref}}} \var{v}$.
        \case{\rref{E-RangeTrue} and \rref{E-RangeFalse}} By inversion, we obtain $\psem{s}{\rho}{e} \var{v}$, $v \in \mathbb{Z}$ and 
        either $v \in [\min(\iota),\max(\iota)] \vee \iota = \kw{int}$ or  $v \notin [\min(\iota),\max(\iota)]$. From the induction hypothesis
        follows $\psem{s'}{\rho}{e} \var{v}$, hence applying the rule \rref{E-RangeTrue} respectively \rref{E-RangeFalse},
        we get  $\psem{s'}{\rho}{\inrange{\iota}{e}} \meta{True}$, or $\psem{s'}{\rho}{\inrange{\iota}{e}} \meta{False}$ respectively.
        \case{\rref{E-Div} and \rref{E-DivZero}} We invert the rule to get  $\psem{s}{\rho}{e_1} \var{v_1}$ and 
        $\psem{s}{\rho}{e_2} \var{v_2}$. We further infer $v_1,v_2 \in \mathbb{Z}$ and $v_2 = 0$, respectively $v_2 \neq 0$. 
        From the induction hypothesis we learn $\psem{s'}{\rho}{e_1} \var{v_1}$ and 
        $\psem{s'}{\rho}{e_2} \var{v_2}$. Finally, applying \rref{E-Div} respectively \rref{E-DivZero}, we conclude
        $\psem{s'}{\rho}{e_1~\kw{div}~e_2} \var{v_1}/\var{v_2}$ respectively $\psem{s'}{\rho}{e_1~\kw{div}~e_2} 0$.
        \case{\rref{E-Mod} and \rref{E-ModZero}} Analog to \rref{E-Div} and \rref{E-DivZero}.
        \case{\rref{E-BopB}, \rref{E-BopI} and \rref{E-Cmp}} Let $* \in \{\opi,\opb,\cmp\}$. By inversion of the rule,
        we obtain $\psem{s}{\rho}{e_1} \var{v_1}$ and 
        $\psem{s}{\rho}{e_2} \var{v_2}$. We further infer $v_1,v_2 \in \mathbb{Z}$ respectively $v_1,v_2 \in \mathbb{B}$. We apply
        the induction hypothesis to get $\psem{s'}{\rho}{e_1} \var{v_1}$ and  $\psem{s'}{\rho}{e_2} \var{v_2}$. 
        Lastly, we use \rref{E-BopI},\rref{E-BopB}, respectively \rref{E-Cmp}, to conclude  $\psem{s'}{\rho}{e_1 * e_2} \var{v_1}*\var{v_2}$.
        \case{\rref{E-Neg}} We invert the rule to get $\psem{s}{\rho}{e} \var{v}$, then apply the induction hypothesis to reach $\psem{s'}{\rho}{e} \var{v}$
        and finally 
        apply \rref{E-Neg} to conclude  $\psem{s'}{\rho}{\neg e} \neg \var{v}$.
        \case{\rref{E-ITETrue}} By inversion, we obtain  $\psem{s}{\rho}{e_1} \meta{True}$ and $\psem{s}{\rho}{e_2} \var{v_2}$. 
        With the  induction hypothesis we get $\psem{s'}{\rho}{e_1} \meta{True}$ and $\psem{s'}{\rho}{e_2} \var{v_2}$. Applying 
        now \rref{E-ITETrue} yields $\psem{s'}{\rho}{\ite{e_1}{e_2}{e_3}} \var{v_2}$.
        \case{\rref{E-ITEFalse}} Analog to \rref{E-ITETrue}.
        \case{\rref{E-EqTrue} and \rref{E-EqFalse}} Inverting the rule yields  $\psem{s}{\rho}{e_1} \var{v_1}$ and 
        $\psem{s}{\rho}{e_2} \var{v_2}$, and further, either $v_1=v_2$ or $v_1 \neq v_2$. The induction hypothesis yields  $\psem{s'}{\rho}{e_1} \var{v_1}$ and 
        $\psem{s'}{\rho}{e_2} \var{v_2}$ and finally applying \rref{E-EqTrue} respectively \rref{E-EqFalse}, we conclude 
        $\psem{s'}{\rho}{e_1 = e_2} \meta{True}$, respectively  $\psem{s'}{\rho}{e_1 = e_2} \meta{False}$.
        \case{\rref{E-Storage}} We invert the rule to obtain $x \in \dom{s(\ell)}$ and $x \notin \dom{\rho}$. Since, $s \subseteq s'$, we know $s(\ell)=s'(\ell)$.
        and hence  $x \in \dom{s'(\ell)}$. Applying \rref{E-Storage} yields $\psem{s'}{\rho}{x} (s'(\ell)(x), \kw{U})$. Using $s(\ell)=s'(\ell)$ again, we conclude 
        $\psem{s'}{\rho}{x} (s(\ell)(x), \kw{U})$.

        \case{\rref{E-StoragePre}} In this case, the shape of the evaluation is $\psem{(s_{\kw{pre}},s_{\kw{post}})}{\rho}{\pre{x}} (s_{\kw{pre}}(\ell)(x), \kw{pre})$. By
        inversion we obtain $x \in \dom{s_{\kw{pre}}(\ell)}$ and $x \notin \dom{\rho}$. From $s_{\kw{pre}} \subseteq s'_{\kw{pre}}$ follows
        $s_{\kw{pre}}(\ell) = s'_{\kw{pre}}(\ell)$ and $x \in \dom{s'_{\kw{pre}}(\ell)}$. With that and \rref{E-StoragePre}, we conclude
        $\psem{(s'_{\kw{pre}},s'_{\kw{post}})}{\rho}{\pre{x}} (s_{\kw{pre}}(\ell)(x), \kw{pre})$.
        \case{\rref{E-StoragePost}} Similar to \rref{E-StoragePre}.

        \case{\rref{E-Calldata}} In this case, the shape of the evaluation is $\psem{s}{\rho}{x} (\rho(x), \kw{U})$.
        We invert the rule to get $x \in \dom{\rho}$ and apply \rref{E-Calldata} to conclude  $\psem{s'}{\rho}{x} (\rho(x), \kw{U})$.

        \case{\rref{E-CalldataTimed}} Similar to \rref{E-Calldata}.

        \case{\rref{E-Coerce}} Having $\psem{s}{\rho}{\var{ref} \kw{as} \;A} (v, t_p)$, we invert the rule to obtain 
        $\psem{s}{\rho}{\var{ref}} (v, t_p)$. From the induction hypothesis we know $\psem{s'}{\rho}{\var{ref}} (v, t_p)$.
        Applying now \rref{E-Coerce} leads us to $\psem{s'}{\rho}{\var{ref} \kw{as} \;A} (v, t_p)$.

        \case{\rref{E-Field}} In this case our assumption is  $\psem{s}{\rho}{\var{ref}.x} (s(\ell')(x), \kw{U})$. By inversion, we get 
        $\psem{s}{\rho}{\var{ref}} (\ell', \kw{U})$, $\ell' \in \meta{Addr}$ and $x \in \dom{s(\ell')}$.  From the induction hypothesis
        follows  $\psem{s'}{\rho}{\var{ref}} (\ell', \kw{U})$ and from $s \subseteq s'$ follows $x \in \dom{s'(\ell')}$. Thus,
        we can apply \rref{E-Field} to obtain $\psem{s'}{\rho}{\var{ref}.x} (s'(\ell')(x), \kw{U})$ and since $s'(\ell')=s(\ell')$
        is implied by  $s \subseteq s'$ and $x \in \dom{s(\ell')}$, this case is shown.

        \case{\rref{E-FieldPre} and \rref{E-FieldPost}} Similar to \rref{E-Field}, since $s_{\kw{pre}} \subseteq s'_{\kw{pre}}$
        and $s_{\kw{post}} \subseteq s'_{\kw{post}}$.

        \case{\rref{E-RefMapping}} We assume  $\psem{s}{\rho}{\var{ref}[e]} (v_r(v_e), t_p)$. Inverting the rule, we obtain 
        $\psem{s}{\rho}{e} v_e$, $v_e \in \meta{X}$, and also $\psem{s}{\rho}{\var{ref}} (v_r, t_p)$,
        $v_r \in \meta{X} \to \meta{Value}$. The induction hypothesis yields $\psem{s'}{\rho}{e} v_e$ and 
        $\psem{s'}{\rho}{\var{ref}} (v_r, t_p)$. Finally, applying \rref{E-RefMapping}, we conclude  
        $\psem{s'}{\rho}{\var{ref}[e]} (v_r(v_e), t_p)$.
        
    \end{proofcases} This concludes claims (1) and (2). We proceed by proving claim (3) also by structural induction.
    \begin{proofcases}
        \case{\rref{E-Exp}} If the mapping expression $m$ is just a base expression $e$, that is $\psem{s}{\rho}{e} \var{v}$, 
        we apply result (1) of the lemma to conclude  $\psem{s'}{\rho}{e} \var{v}$.
        \case{\rref{E-Mapping}} We have  
        $\psem{s}{\rho}{[\{e_i \Rightarrow m_i\}_{i \in [1,n]}]_{\annot{\mapping{\beta}{\mu}}}} \var{f}$. Inverting the rule yields
        for all $i \in [1,n]$ that $\psem{s}{\rho}{e_i} \var{v_i}$ and $\var{v_i}\in \meta{meta}(\beta)$, as well as $\psem{s}{\rho}{m_i} \var{u_i}$. By the induction 
        hypothesis, we obtain $\psem{s'}{\rho}{e_i} \var{v_i}$ and $\psem{s'}{\rho}{m_i} \var{u_i}$ for all $i$.
        Therefore, applying \rref{E-Mapping}, we conclude
        $\psem{s'}{\rho}{[\{e_i \Rightarrow m_i\}_{i \in [1,n]}]_{\annot{\mapping{\beta}{\mu}}}} \var{f}$.
        \case{\rref{E-MappingUpd}}
        Similar to the case of \rref{E-Mapping}.
     \end{proofcases}
\end{proof}

\begin{lemma}[Weakening of Storage (Typing)]\label{lem:valuetyp-storage-weak}
Assume that $s \subseteq s'$. Then, the following hold.
    \begin{enumerate}[start=1,label={(\bfseries P\arabic*)}]
        \item\label{weak-valuetyp-storage:alpha} If $\Sigma \vdash v :_s \alpha$ then $\Sigma \vdash v :_{s'} \alpha$ and 
        $\vsem{\Sigma \vdash v :_s \alpha} = \vsem{\Sigma \vdash v :_{s'} \alpha}$.
        \item\label{weak-valuetyp-storage:sigma} If $\Sigma \vdash v :_s \sigma$ then $\Sigma \vdash v :_{s'} \sigma$ and
        $\vsem{\Sigma \vdash v :_s \sigma} = \vsem{\Sigma \vdash v :_{s'} \sigma}$.
        \item\label{weak-valuetyp-storage:env} If $\Sigma \vdash \rho :_s I$ then $\Sigma \vdash \rho :_{s'} I$ and
        $\vsem{\Sigma \vdash \rho :_s I} = \vsem{\Sigma \vdash \rho :_{s'} I}$.
    \end{enumerate}
\end{lemma}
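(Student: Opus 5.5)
The plan is to treat the typing judgments as inductively defined and prove \ref{weak-valuetyp-storage:alpha} and \ref{weak-valuetyp-storage:sigma} together. We induct on the derivation of $\Sigma \vdash v :_s \alpha$ (respectively $\Sigma \vdash v :_s \sigma$). \ref{weak-valuetyp-storage:env} then follows directly from \ref{weak-valuetyp-storage:alpha}. The second component of each claim, $\vsem{\Sigma \vdash v :_s \alpha} = \vsem{\Sigma \vdash v :_{s'} \alpha}$, says that the derivation we construct for $s'$ has the same meaning as the original. We obtain it by building the $s'$-derivation with the same shape (same rules, same premises up to the storage index). The equality then follows rule by rule by the same induction.

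\textbf{Case analysis for \ref{weak-valuetyp-storage:alpha}.}
\begin{itemize}
  \item \rref{V-BaseValAlpha}. The premise $\vdash v : \beta$ does not mention $s$, so we reapply the rule unchanged.
  \item \rref{V-AddrIsContract}. We have $\ell \in \dom{s}$ and $s(\ell).\kw{type} = A$. Since $s \subseteq s'$, we get $\ell \in \dom{s'}$ and $s'(\ell) = s(\ell)$, so the type condition and the domain condition $x \in \dom{s'(\ell)} \Leftrightarrow \exists \sigma.\, \Sigmastore(A)(x) = \sigma$ transfer verbatim. For the last premise, each stored field value is typed, $\Sigma \vdash s(\ell)(x) :_s \sigma$. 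We apply the induction hypothesis for \ref{weak-valuetyp-storage:sigma} to this subderivation to get $\Sigma \vdash s'(\ell)(x) :_{s'} \sigma$.
\end{itemize}

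\textbf{Case analysis for \ref{weak-valuetyp-storage:sigma}.}
\begin{itemize}
  \item \rref{V-MappingVal}. The premise is storage-independent, so we reapply the rule.
  \item \rref{V-ABIVal}. Use \ref{weak-valuetyp-storage:alpha}.
  \item \rref{V-Contract}. Use \ref{weak-valuetyp-storage:alpha} at type $\kw{address}_A$.
\end{itemize}
The lifting to optional types via \rref{V-None} and \rref{V-Some} is immediate.

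\textbf{Proof of \ref{weak-valuetyp-storage:env}.} Invert \rref{V-Env}. The domain condition and the base-typing conditions on $\kw{caller}$, $\kw{origin}$ and $\kw{callvalue}$ do not mention $s$. For each $x \in \dom{I}$, apply \ref{weak-valuetyp-storage:alpha} to $\Sigma \vdash \rho(x) :_s I(x)$, then reassemble with \rref{V-Env}. The semantic equality holds componentwise.

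\textbf{Main obstacle: well-foundedness of the induction.} \rref{V-AddrIsContract} quantifies over all fields of $A$, and contract-typed field values point back into the storage. So the induction cannot be structural on $v$ and must be on the derivation itself. If derivations are read inductively (least fixed point), ordinary rule induction suffices, since the derivation for $s(\ell)(x)$ is a genuine subderivation.

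If circularity is a concern, for example if the mechanisation defines the judgment by recursion rather than inductively, I would instead induct on $\meta{len}(\Sigma, \sigma)$. This uses \Cref{cor:wt-state-wf} and the fact that each field type $\sigma$ of $A$ satisfies $\meta{len}(\Sigma,\sigma) < \meta{len}(\Sigma,A)$ whenever $\sigma$ is $B$ or $\kw{address}_B$. Non-contract field types have length $0$ and are handled without recursion.

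Finally, I would check that the $\vsem{\cdot}$ equality holds by the same induction, using $s'(\ell) = s(\ell)$ on $\dom{s}$.
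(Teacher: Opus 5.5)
Your proposal is correct and follows the paper's proof: mutual induction over the value-typing derivations, where the only non-trivial case is \rref{V-AddrIsContract}. That case uses $s'(\ell) = s(\ell)$ together with the induction hypothesis on the field subderivations, and \ref{weak-valuetyp-storage:env} is obtained from \ref{weak-valuetyp-storage:alpha} by reassembling \rref{V-Env}. The paper leaves the mutual induction implicit (it says ``by inversion'' and then appeals to ``the induction hypothesis''), so your explicit well-foundedness discussion, including the fallback on $\meta{len}$, is extra care rather than a different route.
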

\begin{proof}
We fist prove statement~\ref{weak-valuetyp-storage:alpha} by inversion on the derivation $\Sigma \vdash v :_s \alpha$.
\begin{proofcases}
    \case{$\Sigma \vdash v :_s \beta$ (rule \rref{V-BaseValAlpha})}
    We have that $\vdash v : \beta$ from which we also derive $\Sigma \vdash v :_{s'} \beta$.
    We also have that 
    $\vsem{\Sigma \vdash v :_s \beta} = \vsem{\vdash v : \beta} = \vsem{\Sigma \vdash v :_{s'} \beta}$.
    \case{$ \Sigma \vdash \ell :_s \kw{address}_A$ (rule \rref{V-AddrIsContract})}
    We have that:
    \begin{enumerate}[start=1,label={(\bfseries H\arabic*)}]
    \item\label{weak-valuetyp-storage:addr:H1} $\ell \in \dom{s}$
    \item\label{weak-valuetyp-storage:addr:H2} $A \in \dom{\Sigmastore}$
    \item\label{weak-valuetyp-storage:addr:H3} $s(\ell).{\kw{type}} = A$
    \item\label{weak-valuetyp-storage:addr:H4} $(\forall \, x. ~\, x \in \dom{s(\ell)} \Leftrightarrow (\exists \, \sigma. ~\, \Sigmastore(A)(x) = \sigma))$
    \item\label{weak-valuetyp-storage:addr:H5} $(\forall \, x,\sigma. ~\, \Sigmastore(A)(x) = \sigma ~\Rightarrow ~ \Sigma \vdash s(\ell) (x) :_s \sigma)$
    \end{enumerate}
    We need to show that:
    \begin{enumerate}[start=1,label={(\bfseries G\arabic*)}]
    \item $\ell \in \dom{s'}$, which follows from~\ref{weak-valuetyp-storage:addr:H1} and the fact that $s \subseteq s'$.
    \item $A \in \dom{\Sigmastore}$ which follows from~\ref{weak-valuetyp-storage:addr:H2}.
    \item $s'(\ell).{\kw{type}} = A$ which follows from~\ref{weak-valuetyp-storage:addr:H3} and the fact that $s \subseteq s'$.
    \item $(\forall \, x. ~\, x \in \dom{s'(\ell)} \Leftrightarrow (\exists \, \sigma. ~\, \Sigmastore(A)(x) = \sigma))$,
    which follows from the fact that $s(\ell) = s'(\ell)$.
    \item $(\forall \, x,\sigma. ~\, \Sigmastore(A)(x) = \sigma ~\Rightarrow ~ \Sigma \vdash s(\ell) (x) :_{s'} \sigma)$. \\
    Let $x$ and $\sigma$ such that $\Sigmastore(A)(x) = \sigma$. From~\ref{weak-valuetyp-storage:addr:H5} we derive $\Sigma \vdash s(\ell) (x) :_s \sigma$.
    From the induction hypothesis we get $\Sigma \vdash s(\ell) (x) :_{s'} \sigma$.
    \end{enumerate}
    Lastly, we obtain the following. 
    \[\begin{array}{r l}
      & \vsem{\Sigma \vdash \ell :_s \kw{address}_A} \\
    = & \{ \addrfield = \ell, x_1 = \vsem{\Sigma \vdash s(\ell)(x_1) :_s \Sigmastore(A)(x_1)}, \dots, x_n =\vsem{\Sigma \vdash s(\ell)(x_n) :_s \Sigmastore(A)(x_n)} \} \\
    = & \{ \addrfield = \ell, x_1 = \vsem{\Sigma \vdash s(\ell)(x_1) :_{s'} \Sigmastore(A)(x_1)}, \dots, x_n =\vsem{\Sigma \vdash s(\ell)(x_n) :_{s'} \Sigmastore(A)(x_n)} \} \\
      & \hfill \text{by the induction hypothesis for the judgments $\Sigma \vdash s(\ell)(x_i) :_{s} \Sigmastore(A)(x_i)$}\\ 
    = & \vsem{\Sigma \vdash \ell :_{s'} \kw{address}_A} \\
    \end{array}\]
\end{proofcases}
    We then proceed with proving~\ref{weak-valuetyp-storage:sigma} by inversion of the derivation $\Sigma \vdash v :_s \sigma$.
\begin{proofcases}
    \case{$\Sigma \vdash v :_s \mu$ (rule \rref{V-MappingVal})}
    We have that $\vdash v : \mu$ from which we also derive $\Sigma \vdash v :_{s'} \mu$.
    We also have that 
    $\vsem{\Sigma \vdash v :_s \mu} = \vsem{\vdash v : \mu} = \vsem{\Sigma \vdash v :_{s'} \mu}$.
    \case{$\Sigma \vdash v :_s \alpha$ (rule \rref{V-ABIVal})}
    From~\ref{weak-valuetyp-storage:alpha} we have that $\Sigma \vdash v :_{s'} \alpha$.
    We also have that 
    $\vsem{\Sigma \vdash v :_s \alpha} = \vsem{\vdash v : \alpha} = \vsem{\Sigma \vdash v :_{s'} \alpha}$.
    \case{$\Sigma \vdash \ell :_s A$ (rule \rref{V-Contract})}
    We have that $\Sigma \vdash \ell :_s \kw{address}_A$. By~\ref{weak-valuetyp-storage:alpha} we get that $\Sigma \vdash \ell :_{s'} \kw{address}_A$,
    from which we obtain $\Sigma \vdash \ell :_{s'} A$ through the~\rref{V-Contract} rule. 
    We get $\vsem{\Sigma \vdash \ell :_s A} = \vsem{\Sigma \vdash \ell :_{s'} A}$ the same way as in the case of the~\rref{V-AddrIsContract} rule.
\end{proofcases}
Finally, we prove~\ref{weak-valuetyp-storage:env}. From the hypothesis $\Sigma \vdash \rho :_s I$ we have that 
\begin{enumerate}[start=1,label={(\bfseries H\arabic*)}]
\item\label{weak-valuetyp-storage:env:H1} $\dom{\rho} = \dom{I} \cup \{\kw{caller}, \kw{origin}, \kw{callvalue}\}$
\item\label{weak-valuetyp-storage:env:H2} $\forall \, x \in \dom{I}. ~\, \Sigma \vdash \rho(x) :_s I(x)$
\item\label{weak-valuetyp-storage:env:H3} $\vdash \rho(\kw{caller}) : \kw{address}$
\item\label{weak-valuetyp-storage:env:H4} $\vdash \rho(\kw{origin}) : \kw{address}$
\item\label{weak-valuetyp-storage:env:H5} $\vdash \rho(\kw{callvalue}) : \kw{uint256}$
\end{enumerate}
From~\ref{weak-valuetyp-storage:env:H1} and the proved statement~\ref{weak-valuetyp-storage:alpha} we have that 
$\forall \, x \in \dom{I}. ~\, \Sigma \vdash \rho(x) :_{s'} I(x)$ and that 
$\vsem{ \Sigma \vdash \rho(x) :_{s} I(x)} = \vsem{ \Sigma \vdash \rho(x) :_{s'} I(x)}$.
Together with~\ref{weak-valuetyp-storage:env:H1},~\ref{weak-valuetyp-storage:env:H3},~\ref{weak-valuetyp-storage:env:H4}, and~\ref{weak-valuetyp-storage:env:H5}, 
we can now establish $\Sigma \vdash \rho :_{s'} I$ and that 
$\vsem{\Sigma \vdash \rho :_{s} I} = \vsem{\Sigma \vdash \rho :_{s'} I}$.
\end{proof}

\begin{lemma}[Strengthening of Store Typing]\label{lem:valuetyp-styp-strength}
Assume that $\Sigma' \subseteq\Sigma$. Then, the following hold.
    \begin{enumerate}[start=1,label={(\bfseries P\arabic*)}]
        \item\label{valuetyp-styp-strength:alpha} If $\Sigma' \vdash \alpha~\meta{wf}$ and 
        $\Sigma \vdash v :_s \alpha$ then $\Sigma' \vdash v :_{s} \alpha$ and
        $\vsem{\Sigma \vdash v :_s \alpha} = \vsem{\Sigma' \vdash v :_{s} \alpha}$.        
        \item\label{valuetyp-styp-strength:sigma} If $\Sigma' \vdash \sigma~\meta{wf}$ and 
        $\Sigma \vdash v :_s \sigma$ then $\Sigma' \vdash v :_{s} \sigma$ and 
        $\vsem{\Sigma \vdash v :_s \sigma} = \vsem{\Sigma' \vdash v :_{s} \sigma}$.
        \item\label{valuetyp-styp-strength:env} If $\Sigma' \vdash I~\meta{wf}$ and 
        $\Sigma \vdash \rho :_s I$ then $\Sigma' \vdash \rho :_{s} I$ and
        $\vsem{\Sigma \vdash \rho :_s I} = \vsem{\Sigma' \vdash \rho :_{s} I}$.
    \end{enumerate}
\end{lemma}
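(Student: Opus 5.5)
The proof mirrors that of \Cref{lem:valuetyp-storage-weak}: we prove \ref{valuetyp-styp-strength:alpha} and \ref{valuetyp-styp-strength:sigma} together by induction, and then derive \ref{valuetyp-styp-strength:env} from \ref{valuetyp-styp-strength:alpha}. For each case we invert the derivation $\Sigma \vdash v :_s \alpha$ (respectively $\sigma$).

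The rules \rref{V-BaseValAlpha} and \rref{V-MappingVal} are immediate. Their premises $\vdash v : \beta$ and $\vdash v : \mu$ do not mention $\Sigma$. So we re-apply the same rule under $\Sigma'$, and both interpretations reduce to $\vsem{\vdash v : \beta}$ (resp.\ $\mu$). The rules \rref{V-ABIVal} and \rref{V-Contract} follow from the corresponding statement for $\alpha$. For \rref{V-Contract}, the hypothesis $\Sigma' \vdash A~\meta{wf}$ gives $A \in \dom{\Sigma'_{\sstorage}}$, which is exactly $\Sigma' \vdash \kw{address}_A~\meta{wf}$ via \rref{WFContractAddr}.

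The substantive case is \rref{V-AddrIsContract}, i.e.\ $\Sigma \vdash \ell :_s \kw{address}_A$. From the well-formedness hypothesis we obtain $A \in \dom{\Sigma'_{\sstorage}}$. Since $\Sigma' \subseteq \Sigma$ and $\Sigmastore$ is a function on contract names, we get $\Sigma'_{\sstorage}(A) = \Sigmastore(A)$. Hence premises (H2)--(H4) transfer verbatim: the domain condition on $s(\ell)$ refers to the same storage layout $C$.

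For (H5) we must show $\Sigma' \vdash s(\ell)(x) :_s \sigma$ for every $x : \sigma \in C$, and for this we apply the induction hypothesis. Its precondition is $\Sigma' \vdash \sigma~\meta{wf}$ for each field type $\sigma$ of $A$. This is the step I expect to be the main obstacle. It is not part of the assumptions, so it has to be extracted from the well-typedness of $\Sigma'$, which we assume throughout. By \ref{i:wtsigma:cnstr}, $A$'s constructor is typed by some well-typed $\Sigma'' \subset \Sigma'$ with $\Sigma'' \vdash_A \Sigmacode(A) : C$. Inverting \rref{T-Ctor} and then \rref{T-Creates} yields $\Sigma'' \vdash \sigma_i~\meta{wf}$ for every field. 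Well-formedness is monotone in $\Sigma$ (it only asks for membership in $\dom{\Sigmastore}$), so $\Sigma' \vdash \sigma_i~\meta{wf}$ follows. If $\Sigmacode$ and $\Sigmastore$ might have different domains, I would first observe that \rref{T-Contract} always extends them together, so $A \in \dom{\Sigma'_{\sstorage}}$ implies $A \in \dom{\Sigma'_{\scnstrs}}$.

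A second subtlety is the induction measure. A naive structural induction on the derivation is fine, since the field sub-derivations are premises of the \rref{V-AddrIsContract} instance. If the interpretation $\vsem{\cdot}$ is defined by recursion on $\meta{len}(\Sigma,\sigma)$, one can instead induct on $\meta{len}$, which is well-founded by \Cref{cor:wt-state-wf}. In that case we note $\meta{len}(\Sigma',\sigma) = \meta{len}(\Sigma,\sigma)$ for types well-formed in $\Sigma'$, because $\prec_{\Sigma'}$ agrees with $\prec_\Sigma$ below $A$.

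The equality of interpretations then follows exactly as in \Cref{lem:valuetyp-storage-weak}. The record $\{\addrfield = \ell, x_i = \vsem{\cdot}\}$ has the same fields under both typings, since the layouts coincide, and each component agrees by the induction hypothesis. Finally, \ref{valuetyp-styp-strength:env} follows by inverting \rref{V-Env}. The domain and environment-variable premises do not mention $\Sigma$. Each calldata premise is handled by \ref{valuetyp-styp-strength:alpha}, using $\Sigma' \vdash I(x)~\meta{wf}$ obtained by inverting \rref{T-WFEnv}.
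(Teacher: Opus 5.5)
Your proof has the same structure as the paper's. It uses mutual induction on the derivations of $\Sigma \vdash v :_s \alpha$ and $\Sigma \vdash v :_s \sigma$, with the same case split, the observation $\Sigmastorep(A) = \Sigmastore(A)$ in the \rref{V-AddrIsContract} case, and \ref{valuetyp-styp-strength:env} obtained from \ref{valuetyp-styp-strength:alpha} by inverting \rref{V-Env}. Where you go further is the side condition $\Sigma' \vdash \sigma~\meta{wf}$ needed to apply the induction hypothesis to the fields in (H5). The paper invokes the induction hypothesis there without checking this, and you are right that it is the crux. Extracting it from \ref{i:wtsigma:cnstr}, \rref{T-Ctor} and \rref{T-Creates} for $\Sigma'$ is the natural repair.

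That repair, however, rests on $A \in \dom{\Sigmastorep} \Rightarrow A \in \dom{\Sigmacodep}$. Your justification, that ``\rref{T-Contract} extends them together'', only holds for environments built by \rref{T-Spec}. It does not follow from $\Sigma' \subseteq \Sigma$ or from \Cref{def:well-typed-sigma}, whose witnesses are arbitrary well-typed sub-environments.

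The hypothesis is not optional; here is a counterexample without it. Take empty constructor and transition components, with $\Sigmastorep = (A : (x : B))$ and $\Sigmastore = (A : (x : B), B : (y : \kw{uint256}))$; both are vacuously well-typed. Take a state with $s(\ell) = (A, \{x \mapsto \ell_B\})$ and $s(\ell_B) = (B, \{y \mapsto 0\})$. Then $\Sigma \vdash \ell :_s \kw{address}_A$ and $\Sigma' \vdash \kw{address}_A~\meta{wf}$ hold, but $\Sigma' \vdash \ell :_s \kw{address}_A$ does not, because $B \notin \dom{\Sigmastorep}$. Similarly, a constructor with zero cases leaves $C$ unconstrained by \rref{T-Creates}, so your inversion yields nothing there.

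So state the needed property as an explicit hypothesis. The simplest choice is closure of $\Sigma'$: $\Sigma' \vdash \Sigmastorep(A)(x)~\meta{wf}$ for every $A \in \dom{\Sigmastorep}$ and every field $x$. Alternatively, build domain agreement into the definition of a well-typed $\Sigma$ so that the witnesses inherit it. With either in place, your induction goes through as written.
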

\begin{proof}
We fist prove statements~\ref{valuetyp-styp-strength:alpha} and~\ref{valuetyp-styp-strength:sigma} by mutual induction on the derivations $\Sigma \vdash v :_s \alpha$ 
and $\Sigma \vdash v :_s \sigma$. We begin with the cases of $\Sigma \vdash v :_s \alpha$.
\begin{proofcases}
    \case{$\Sigma \vdash v :_s \beta$ (rule \rref{V-BaseValAlpha})}
    We have that $\vdash v : \beta$ from which we also derive $\Sigma' \vdash v :_s \beta$.
    We also have that 
    $\vsem{\Sigma \vdash v :_s \beta} = \vsem{\vdash v : \beta} = \vsem{\Sigma' \vdash v :_{s} \beta}$.    
    %%% 
    \case{$ \Sigma \vdash \ell :_s \kw{address}_A$ (rule \rref{V-AddrIsContract})}
    We have that:
    \begin{enumerate}[start=1,label={(\bfseries H\arabic*)}]
    \item\label{valuetyp-styp-strength:alpha:H1} $\ell \in \dom{s}$
    \item\label{valuetyp-styp-strength:alpha:H2} $A \in \dom{\Sigmastore}$
    \item\label{valuetyp-styp-strength:alpha:H3} $s(\ell).{\kw{type}} = A$
    \item\label{valuetyp-styp-strength:alpha:H4} $(\forall \, x. ~\, x \in \dom{s(\ell)} \Leftrightarrow (\exists \, \sigma. ~\, \Sigmastore(A)(x) = \sigma))$
    \item\label{valuetyp-styp-strength:alpha:H5} $(\forall \, x,\sigma. ~\, \Sigmastore(A)(x) = \sigma ~\Rightarrow ~ \Sigma \vdash s(\ell) (x) :_s \sigma)$
    \end{enumerate}
    We need to show that
    \begin{enumerate}[start=1,label={(\bfseries G\arabic*)}]
    \item\label{valuetyp-styp-strength:alpha:G1} $\ell \in \dom{s}$, which follows directly from~\ref{valuetyp-styp-strength:alpha:H1}.
    \item\label{valuetyp-styp-strength:alpha:G2} $A \in \dom{\Sigmastore}$, which follows from the fact that $\Sigma' \vdash \kw{address}_A~\meta{wf}$.
    \item \label{valuetyp-styp-strength:alpha:G3} $s(\ell).{\kw{type}} = A$, which follows directly from~\ref{valuetyp-styp-strength:alpha:H3}.
    \item\label{valuetyp-styp-strength:alpha:G4} $(\forall \, x. ~\, x \in \dom{s(\ell)} \Leftrightarrow (\exists \, \sigma. ~\, \Sigmastorep(A)(x) = \sigma))$.\\
    We have that $A \in \dom{\Sigmastore}$, $A \in \dom{\Sigmastorep}$ and $\Sigma' \subseteq\Sigma$. Therefore, 
    the goal follows from the fact that for all $x$, $\Sigmastorep(A)(x) = \Sigmastore(A)(x)$.
    \item\label{valuetyp-styp-strength:alpha:G5} $(\forall \, x,\sigma. ~\, \Sigmastorep(A)(x) = \sigma ~\Rightarrow ~ \Sigma' \vdash s(\ell) (x) :_s \sigma)$. \\
    Let $x$ and $\sigma$ such that $\Sigmastorep(A)(x) = \sigma$. We again have that $\Sigmastore(A)(x) = \sigma$. From~\ref{valuetyp-styp-strength:alpha:H4} we get that  $\Sigma \vdash s(\ell) (x) :_s \sigma$. By the induction hypothesis we get that 
     $\Sigma' \vdash s(\ell) (x) :_s \sigma$ which proves the goal.
    \end{enumerate}
    Lastly, we obtain the following. 
    \[\begin{array}{r l}
      & \vsem{\Sigma \vdash \ell :_s \kw{address}_A} \\
    = & \{ \addrfield = \ell, x_1 = \vsem{\Sigma \vdash s(\ell)(x_1) :_s \Sigmastore(A)(x_1)}, \dots, x_n =\vsem{\Sigma \vdash s(\ell)(x_n) :_s \Sigmastore(A)(x_n)} \} \\
    = & \{ \addrfield = \ell, x_1 = \vsem{\Sigma' \vdash s(\ell)(x_1) :_{s} \Sigmastore(A)(x_1)}, \dots, x_n =\vsem{\Sigma' \vdash s(\ell)(x_n) :_{s} \Sigmastore(A)(x_n)} \} \\
      & \hfill \text{by the induction hypothesis for the judgments $\Sigma \vdash s(\ell)(x_i) :_{s} \Sigmastore(A)(x_i)$}\\ 
    = & \vsem{\Sigma' \vdash \ell :_{s} \kw{address}_A} \\
    \end{array}\]
\end{proofcases}
     We then proceed with the cases of the derivation $\Sigma \vdash v :_s \sigma$.
\begin{proofcases}
    \case{$\Sigma \vdash v :_s \mu$ (rule \rref{V-MappingVal})}
    We have that $\vdash v : \mu$ from which we also derive $\Sigma' \vdash v :_{s'} \mu$.
    We also have that 
    $\vsem{\Sigma \vdash v :_s \mu} = \vsem{\vdash v : \mu} = \vsem{\Sigma' \vdash v :_{s} \mu}$.
    \case{$\Sigma \vdash v :_s \alpha$ (rule \rref{V-ABIVal})}
    From the induction hypothesis we derive that $\Sigma' \vdash v :_{s'} \alpha$.
    We also have that 
    $\vsem{\Sigma \vdash v :_s \alpha} = \vsem{\vdash v : \alpha} = \vsem{\Sigma' \vdash v :_{s} \alpha}$.
    \case{$\Sigma \vdash \ell :_s A$ (rule \rref{V-Contract})} % the proof is identical to rule \rref{V-AddrIsContract})
    We have that $\Sigma \vdash \ell :_s \kw{address}_A$. By the induction hypothesis we get that $\Sigma' \vdash \ell :_{s} \kw{address}_A$, and we conclude with~\rref{V-Contract} that $\Sigma' \vdash \ell :_{s} A$.
    We get $\vsem{\Sigma \vdash \ell :_s A} = \vsem{\Sigma' \vdash \ell :_{s} A}$ the same way as in the case of the~\rref{V-AddrIsContract} rule.
\end{proofcases}
Finally, we prove~\ref{valuetyp-styp-strength:env}. From the hypothesis $\Sigma \vdash \rho :_s I$ we have that 
\begin{enumerate}[start=1,label={(\bfseries H\arabic*)}]
\item\label{valuetyp-styp-strength:env:H1} $\dom{\rho} = \dom{I} \cup \{\kw{caller}, \kw{origin}, \kw{callvalue}\}$
\item\label{valuetyp-styp-strength:env:H2} $\forall \, x \in \dom{I}. ~\, \Sigma \vdash \rho(x) :_s I(x)$
\item\label{valuetyp-styp-strength:env:H3} $\vdash \rho(\kw{caller}) : \kw{address}$
\item\label{valuetyp-styp-strength:env:H4} $\vdash \rho(\kw{origin}) : \kw{address}$
\item\label{valuetyp-styp-strength:env:H5} $\vdash \rho(\kw{callvalue}) : \kw{uint256}$
\end{enumerate}
From~\ref{valuetyp-styp-strength:env:H2} and the proved statement~\ref{valuetyp-styp-strength:alpha} we have that 
$\forall \, x \in \dom{I}. ~\, \Sigma' \vdash \rho(x) :_{s} I(x)$ and that 
$\vsem{ \Sigma \vdash \rho(x) :_{s} I(x)} = \vsem{ \Sigma' \vdash \rho(x) :_{s} I(x)}$.
Together with~\ref{valuetyp-styp-strength:env:H1},~\ref{valuetyp-styp-strength:env:H3},~\ref{valuetyp-styp-strength:env:H4}, and~\ref{valuetyp-styp-strength:env:H5}, we can now establish $\Sigma' \vdash \rho :_{s} I$ and that 
$\vsem{\Sigma \vdash \rho :_{s} I} = \vsem{\Sigma' \vdash \rho :_{s} I}$.
\end{proof}

\begin{lemma}[Weakening of Store Typing]\label{lem:valuetyp-storagetyp-weak}
Assume that $\Sigma \subseteq \Sigma'$. Then, the following hold.
    \begin{enumerate}[start=1,label={(\bfseries P\arabic*)}]
        \item\label{valuetyp-styp-weak:alpha} If $\Sigma \vdash v :_s \alpha$ then $\Sigma' \vdash v :_{s} \alpha$ and 
        $\vsem{\Sigma \vdash v :_s \alpha} = \vsem{\Sigma' \vdash v :_{s} \alpha}$.
        \item\label{valuetyp-styp-weak:sigma} If $\Sigma \vdash v :_s \sigma$ then $\Sigma' \vdash v :_{s} \sigma$ and
        $\vsem{\Sigma \vdash v :_s \sigma} = \vsem{\Sigma' \vdash v :_{s} \sigma}$.
        \item\label{valuetyp-styp-weak:env} If $\Sigma \vdash \rho :_s I$ then $\Sigma' \vdash \rho :_{s} I$ and 
        $\vsem{\Sigma \vdash \rho :_s I} = \vsem{\Sigma' \vdash \rho :_{s} I}$.
    \end{enumerate}
\end{lemma}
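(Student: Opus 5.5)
We mirror the proof of \Cref{lem:valuetyp-styp-strength}, in the opposite direction. Statements~\ref{valuetyp-styp-weak:alpha} and~\ref{valuetyp-styp-weak:sigma} are proved by mutual induction on the derivations $\Sigma \vdash v :_s \alpha$ and $\Sigma \vdash v :_s \sigma$. Statement~\ref{valuetyp-styp-weak:env} then follows directly from~\ref{valuetyp-styp-weak:alpha}.

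We begin with the cases for $\alpha$. For \rref{V-BaseValAlpha} the premise $\vdash v : \beta$ does not mention $\Sigma$. We therefore rederive $\Sigma' \vdash v :_s \beta$, and both denotations equal $\vsem{\vdash v : \beta}$. For \rref{V-AddrIsContract} we invert to obtain the hypotheses $\ell \in \dom{s}$, $A \in \dom{\Sigmastore}$, $s(\ell).\kw{type} = A$, the domain condition, and the field-typing condition, all stated for $\Sigma$. We then establish them for $\Sigma'$:
\begin{enumerate}
\item $\ell \in \dom{s}$ and $s(\ell).\kw{type}=A$ carry over unchanged.
\item $A \in \dom{\Sigmastorep}$ holds because $\Sigma \subseteq \Sigma'$.
\item Since $A \in \dom{\Sigmastore}$ and $\Sigma \subseteq \Sigma'$, we have $\Sigmastorep(A) = \Sigmastore(A)$. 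This is the key observation: the extension never alters existing entries. Hence the domain condition transfers verbatim.
\item For each $x$ with $\Sigmastorep(A)(x)=\sigma$ we also have $\Sigmastore(A)(x)=\sigma$. This gives $\Sigma \vdash s(\ell)(x) :_s \sigma$, and the induction hypothesis yields $\Sigma' \vdash s(\ell)(x) :_s \sigma$ with equal denotation.
\end{enumerate}
The record computation of $\vsem{\cdot}$ then goes through exactly as in the earlier lemmas. It ranges over the same field list $x_1,\dots,x_n$, because $\Sigmastorep(A)=\Sigmastore(A)$, and each field is rewritten by the induction hypothesis.

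The cases for $\sigma$ are routine. For \rref{V-MappingVal} the premise is $\Sigma$-independent. For \rref{V-ABIVal} we apply the induction hypothesis for $\alpha$. For \rref{V-Contract} we apply the induction hypothesis to the premise $\Sigma \vdash \ell :_s \kw{address}_A$ and reapply the rule; the denotation argument is as in the \rref{V-AddrIsContract} case.

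For~\ref{valuetyp-styp-weak:env} we invert \rref{V-Env}. The domain condition and the typings of $\kw{caller}$, $\kw{origin}$ and $\kw{callvalue}$ do not mention $\Sigma$. Each $\Sigma \vdash \rho(x) :_s I(x)$ lifts to $\Sigma'$ by~\ref{valuetyp-styp-weak:alpha}, and the denotations agree pointwise.

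The main obstacle is well-foundedness of the induction. Rule \rref{V-AddrIsContract} recurses into values stored in $s$, so the induction has to be on the typing derivation rather than on $v$. Because the relation is inductively defined, induction on the derivation is legitimate. We also have to check that $\Sigma \subseteq \Sigma'$ means that $\Sigma'$ agrees with $\Sigma$ on $\dom{\Sigmastore}$, rather than merely containing it. This is exactly what the \meta{with}-extensions of \rref{T-Contract} provide. In contrast to \Cref{lem:valuetyp-styp-strength}, no well-formedness side condition is needed: every contract name occurring in a derivation over $\Sigma$ is already in $\dom{\Sigmastore} \subseteq \dom{\Sigmastorep}$.
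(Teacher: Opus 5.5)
Your proposal is correct and follows the paper's proof essentially step for step. Both use mutual induction on the derivations of $\Sigma \vdash v :_s \alpha$ and $\Sigma \vdash v :_s \sigma$; in both, the \rulename{V-AddrIsContract} case hinges on $\Sigmastorep(A) = \Sigmastore(A)$ (from $A \in \dom{\Sigmastore}$ and $\Sigma \subseteq \Sigma'$) together with the induction hypothesis on the field typings, and (P3) comes from inverting \rulename{V-Env} and applying (P1) pointwise.

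Your extra remark that $\Sigma \subseteq \Sigma'$ must be read as inclusion of maps, so that the two agree on $\dom{\Sigmastore}$ rather than $\Sigma'$ merely having a larger domain, is the reading the paper relies on implicitly.
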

\begin{proof}
We fist prove statements~\ref{valuetyp-styp-weak:alpha} and~\ref{valuetyp-styp-weak:sigma} by mutual induction on the 
derivations $\Sigma \vdash v :_s \alpha$ 
and $\Sigma \vdash v :_s \sigma$. We begin with the cases of $\Sigma \vdash v :_s \alpha$.
\begin{proofcases}
    \case{$\Sigma \vdash v :_s \beta$ (rule \rref{V-BaseValAlpha})}
    We have that $\vdash v : \beta$ from which we also derive $\Sigma' \vdash v :_s \beta$.
    We also have that 
    $\vsem{\Sigma \vdash v :_s \beta} = \vsem{\vdash v : \beta} = \vsem{\Sigma' \vdash v :_{s} \beta}$.    
    \case{$ \Sigma \vdash \ell :_s \kw{address}_A$ (rule \rref{V-AddrIsContract})}
    We have that:
     \begin{enumerate}[start=1,label={(\bfseries H\arabic*)}]
    \item\label{valuetyp-styp-weak:alpha:H1}  $\ell \in \dom{s}$
    \item\label{valuetyp-styp-weak:alpha:H2}  $A \in \dom{\Sigmastore}$
    \item\label{valuetyp-styp-weak:alpha:H3}  $s(\ell).{\kw{type}} = A$
    \item\label{valuetyp-styp-weak:alpha:H4} $(\forall \, x. ~\, x \in \dom{s(\ell)} \Leftrightarrow (\exists \, \sigma. ~\, \Sigmastore(A)(x) = \sigma))$
    \item\label{valuetyp-styp-weak:alpha:H5}  $(\forall \, x,\sigma. ~\, \Sigmastore(A)(x) = \sigma ~\Rightarrow ~ \Sigma \vdash s(\ell) (x) :_s \sigma)$
    \end{enumerate}
    We need to show that
    \begin{enumerate}[start=1,label={(\bfseries G\arabic*)}]
    \item\label{valuetyp-styp-weak:alpha:G1} $\ell \in \dom{s}$, which follows directly from~\ref{valuetyp-styp-weak:alpha:H1}.
    \item\label{valuetyp-styp-weak:alpha:G2} $A \in \dom{\Sigmastorep}$, which follows from~\ref{valuetyp-styp-weak:alpha:H2} and the fact that $\Sigma \subseteq \Sigma'$.
    \item \label{valuetyp-styp-weak:alpha:G3} $s(\ell).{\kw{type}} = A$, which follows directly from~\ref{valuetyp-styp-weak:alpha:H3}.
    \item\label{valuetyp-styp-weak:alpha:G4} $(\forall \, x. ~\, x \in \dom{s(\ell)} \Leftrightarrow (\exists \, \sigma. ~\, \Sigmastorep(A)(x) = \sigma))$.\\
        We have that $A \in \dom{\Sigmastore}$, $A \in \dom{\Sigmastorep}$ and $\Sigma \subseteq \Sigma'$. Therefore, 
        the goal follows from the fact that for all $x$, $\Sigmastorep(A)(x) = \Sigmastore(A)(x)$.
        % % => 
        % Let $x\in \dom{s(\ell)}$. From (H2) we get $\Sigma \vdash s(\ell) (x) :_s \sigma$. From the inductive hypothesis we 
        % get that $\Sigma' \vdash s(\ell) (x) :_s \sigma$, which proves the goal.
        
        % % <=
        % To prove the other direction, let $x$ such that $\Sigmastorep(A)(x) = \sigma$ for some $\sigma$.
        % Because $A \in \dom{\Sigmastore}$ and $\Sigma \subseteq \Sigma'$, we have that  $\Sigmastore(A)(x) = \sigma$. 
        % From (H3) we get that $x \in \dom{s(\ell)}$ which proves the goal. 
    \item\label{valuetyp-styp-weak:alpha:G5} $(\forall \, x,\sigma. ~\, \Sigmastorep(A)(x) = \sigma ~\Rightarrow ~ \Sigma' \vdash s(\ell) (x) :_s \sigma)$. \\
        Let $x$ and $\sigma$ such that $\Sigmastorep(A)(x) = \sigma$. We again have that $\Sigmastore(A)(x) = \sigma$. 
        From~\ref{valuetyp-styp-weak:alpha:H5} we get that  $\Sigma \vdash s(\ell) (x) :_s \sigma$. By the induction hypothesis we get that 
        $\Sigma' \vdash s(\ell) (x) :_s \sigma$ which proves the goal.
    \end{enumerate}
    Lastly, we obtain the following. 
    \[\begin{array}{r l}
      & \vsem{\Sigma \vdash \ell :_s \kw{address}_A} \\
    = & \{ \addrfield = \ell, x_1 = \vsem{\Sigma \vdash s(\ell)(x_1) :_s \Sigmastore(A)(x_1)}, \dots, x_n =\vsem{\Sigma \vdash s(\ell)(x_n) :_s \Sigmastore(A)(x_n)} \} \\
    = & \{ \addrfield = \ell, x_1 = \vsem{\Sigma' \vdash s(\ell)(x_1) :_{s} \Sigmastore(A)(x_1)}, \dots, x_n =\vsem{\Sigma' \vdash s(\ell)(x_n) :_{s} \Sigmastore(A)(x_n)} \} \\
      & \hfill \text{by the induction hypothesis for the judgments $\Sigma \vdash s(\ell)(x_i) :_{s} \Sigmastore(A)(x_i)$}\\ 
    = & \vsem{\Sigma' \vdash \ell :_{s} \kw{address}_A} \\
    \end{array}\]
\end{proofcases}
We then proceed with the cases of the derivation $\Sigma \vdash v :_s \sigma$.
\begin{proofcases}
    \case{$\Sigma \vdash v :_s \mu$ (rule \rref{V-MappingVal})}
    We have that $\vdash v : \mu$ from which we also derive $\Sigma' \vdash v :_{s'} \mu$.
    We also have that 
    $\vsem{\Sigma \vdash v :_s \mu} = \vsem{\vdash v : \mu} = \vsem{\Sigma' \vdash v :_{s} \mu}$.
    \case{$\Sigma \vdash v :_s \alpha$ (rule \rref{V-ABIVal})}
    From the induction hypothesis we derive that $\Sigma' \vdash v :_{s'} \alpha$.
    We also have that 
    $\vsem{\Sigma \vdash v :_s \alpha} = \vsem{\vdash v : \alpha} = \vsem{\Sigma' \vdash v :_{s} \alpha}$.
    \case{$\Sigma \vdash \ell :_s A$ (rule \rref{V-Contract})} 
    We have that $\Sigma \vdash \ell :_s \kw{address}_A$. By the induction hypothesis we get that 
    $\Sigma' \vdash \ell :_{s} \kw{address}_A$, and we conclude with~\rref{V-Contract} that 
    $\Sigma' \vdash \ell :_{s} A$.
    We get $\vsem{\Sigma \vdash \ell :_s A} = \vsem{\Sigma' \vdash \ell :_{s} A}$ the same way as in 
    the case of the~\rref{V-AddrIsContract} rule.
\end{proofcases}

    Finally, we prove~\ref{valuetyp-styp-weak:env}. From the hypothesis $\Sigma \vdash \rho :_s I$ we have that 
\begin{enumerate}[start=1,label={(\bfseries H\arabic*)}]
\item\label{valuetyp-styp-weak:env:H1} $\dom{\rho} = \dom{I} \cup \{\kw{caller}, \kw{origin}, \kw{callvalue}\}$
\item\label{valuetyp-styp-weak:env:H2} $\forall \, x \in \dom{I}. ~\, \Sigma \vdash \rho(x) :_s I(x)$
\item\label{valuetyp-styp-weak:env:H3} $\vdash \rho(\kw{caller}) : \kw{address}$
\item\label{valuetyp-styp-weak:env:H4} $\vdash \rho(\kw{origin}) : \kw{address}$
\item\label{valuetyp-styp-weak:env:H5} $\vdash \rho(\kw{callvalue}) : \kw{uint256}$
\end{enumerate}
From~\ref{valuetyp-styp-weak:env:H2} and the proved statement~\ref{valuetyp-styp-weak:alpha} we have that 
$\forall \, x \in \dom{I}. ~\, \Sigma' \vdash \rho(x) :_{s} I(x)$ and that 
$\vsem{ \Sigma \vdash \rho(x) :_{s} I(x)} = \vsem{ \Sigma' \vdash \rho(x) :_{s} I(x)}$.
Together with~\ref{valuetyp-styp-weak:env:H1},~\ref{valuetyp-styp-weak:env:H3},~\ref{valuetyp-styp-weak:env:H4}, and~\ref{valuetyp-styp-weak:env:H5}, we can now establish $\Sigma' \vdash \rho :_{s} I$ and that 
$\vsem{\Sigma \vdash \rho :_{s} I} = \vsem{\Sigma' \vdash \rho :_{s} I}$.
\end{proof}

\begin{lemma}\label{lem:value}
    If $\vdash u : \mu$ then $u \in \meta{Value}$.
\end{lemma}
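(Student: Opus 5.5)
The proof goes by induction on the derivation of $\vdash u : \mu$, or equivalently by structural induction on the mapping type $\mu$. We show at each step that $u$ lies in one of the summands of
\[
\meta{Value} \equiv (\mathbb{Z} \fun \meta{Value}) + (\mathbb{B} \fun \meta{Value}) + (\meta{Addr} \fun \meta{Value}) + \meta{BaseValue}.
\]
Only two rules can conclude $\vdash u : \mu$, namely \rref{V-BaseValMu} and \rref{V-Mapping}, so the case split is exactly on these two.

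\textbf{Case \rref{V-BaseValMu}.} Here $\mu = \beta$ and we have $\vdash u : \beta$. We invert once more on the base-value typing rules.
\begin{itemize}
\item \rref{V-Addr} gives $u \in \meta{Addr}$.
\item \rref{V-Bool} gives $u \in \mathbb{B}$.
\item \rref{V-Int} gives $u \in \mathbb{Z}$.
\end{itemize}
In each subcase $u \in \meta{BaseValue} = \mathbb{Z} + \mathbb{B} + \meta{Addr}$, which is the last summand of $\meta{Value}$.

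\textbf{Case \rref{V-Mapping}.} Here $\mu = \mapping{\beta}{\mu'}$, and we know $u \in \meta{meta}(\beta) \rightarrow \meta{Value}$. Since $\meta{meta}(\beta)$ is one of $\mathbb{Z}$, $\mathbb{B}$ or $\meta{Addr}$, the function $u$ lies in the corresponding function summand of $\meta{Value}$. We do not even need the induction hypothesis here: the codomain is already $\meta{Value}$ by the premise of the rule.

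\textbf{Main obstacle.} The only delicate point is the reading of the premise $f \in \meta{meta}(\beta) \rightarrow \meta{Value}$ in \rref{V-Mapping}. If it is instead read loosely as "a function on $\meta{meta}(\beta)$", we would need the induction hypothesis. In that case we apply it to the second premise: for every $v$ with $\vdash v : \beta$ we have $\vdash u(v) : \mu'$, hence $u(v) \in \meta{Value}$. To cover inputs $x \in \meta{meta}(\beta)$ that are not well-typed (integers out of range for a bounded $\iota$), we observe that $\vdash v : \beta$ is not required for such $x$. We then fall back on the stated domain/codomain premise, so the first reading is the one I would commit to.
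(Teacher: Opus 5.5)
Your proof is correct and matches the paper's: the paper also splits on the shape of $\mu$, obtains $u \in \meta{BaseValue}$ by inverting \rref{V-Int}, \rref{V-Bool} or \rref{V-Addr} in the base case, and in the mapping case reads $u \in \meta{meta}(\beta) \to \meta{Value}$ directly off the first premise of \rref{V-Mapping}, with no induction hypothesis. Committing to that literal reading of the premise is the right call; as you observe, the second premise constrains $u$ only on well-typed inputs, so it would not cover out-of-range integers on its own.
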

\begin{proof}
    By the structure of $\mu$.
    \begin{proofcases}
        \case{Case $\beta$}
            If $\vdash u : \beta$, then  by inversion of \rref{V-Int},\rref{V-Bool}, or \rref{V-Addr},
             we obtain $u \in \meta{BaseValue}$ and therefore $u \in \meta{Value}$.
        \case{Case $\mapping{\beta}{\mu}$}
        From  $\vdash u : \mapping{\beta}{\mu}$ follows by inversion of \rref{V-Mapping} 
        that $u \in \meta{meta}(\beta) \to \meta{Value}$ and thus $u \in \meta{Value}$.
    \end{proofcases}
\end{proof}

\begin{lemma}\label{lem:defaultmu}
    It holds that $\vdash \meta{default}(\mu) : \mu$.
\end{lemma}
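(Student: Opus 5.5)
We prove $\vdash \meta{default}(\mu) : \mu$ by structural induction on the mapping type $\mu$, following the grammar $\mu \bnfdef \mapping{\beta}{\mu} \bnfalt \beta$. In each case we unfold the definition of $\meta{default}$ and apply the matching value-typing rule.

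\textbf{Base case $\mu = \beta$.} We do a case split on $\beta$, and in every subcase we close with \rref{V-BaseValMu}.
\begin{itemize}
\item If $\beta = \iota$, then $\meta{default}(\iota) = 0 \in \mathbb{Z}$, and we apply \rref{V-Int}. For $\iota = \intt{}$ the disjunct $\iota = \intt{}$ holds. For $\iota = \uint{M}$ we check $0 \leq 0 \leq 2^M - 1$. For $\iota = \intt{M}$ we check $-2^{M-1} \leq 0 \leq 2^{M-1} - 1$, which holds since $M \geq 8$.
\item If $\beta = \bool$, then $\meta{default}(\bool) = \meta{False} \in \mathbb{B}$, and \rref{V-Bool} applies.
\item If $\beta = \kw{address}$, then $\meta{default}(\kw{address}) = 0 \in \meta{Addr} = \mathbb{N}$, and \rref{V-Addr} applies.
\end{itemize}

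\textbf{Inductive case $\mu = \mapping{\beta}{\mu'}$.} By definition, $\meta{default}(\mu) = f$ where $f = (x \in \meta{meta}(\beta) \mapsto \meta{default}(\mu'))$. We discharge the two premises of \rref{V-Mapping} in order.
\begin{enumerate}
\item We show $f \in \meta{meta}(\beta) \rightarrow \meta{Value}$. The induction hypothesis gives $\vdash \meta{default}(\mu') : \mu'$, and \Cref{lem:value} then gives $\meta{default}(\mu') \in \meta{Value}$. So $f$ is a constant function of the required type.
\item We show $\vdash f(v) : \mu'$ for every $v$ with $\vdash v : \beta$. Inverting $\vdash v : \beta$ puts $v$ in $\meta{meta}(\beta)$, so $f(v) = \meta{default}(\mu')$. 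The goal is then exactly the induction hypothesis.
\end{enumerate}

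The only point needing care is this inversion. We must check that each base typing rule places its value in the corresponding carrier: \rref{V-Int} gives $\mathbb{Z} = \meta{meta}(\iota)$, \rref{V-Bool} gives $\mathbb{B} = \meta{meta}(\bool)$, and \rref{V-Addr} gives $\meta{Addr} = \meta{meta}(\kw{address})$. These match by definition, so application of $f$ is well defined and the premise holds. With the integer bound check in the base case, this is the only real step. Everything else is direct unfolding, and the induction is well-founded because $\mu'$ is a proper subterm of $\mu$.
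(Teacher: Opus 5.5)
Your proof is correct and follows the paper's own argument: structural induction on $\mu$, with the base cases closed by \rref{V-Int}, \rref{V-Bool}, \rref{V-Addr}, and the mapping case closed by \rref{V-Mapping} using the induction hypothesis and \Cref{lem:value}. You are slightly more careful than the paper: you check the integer bounds explicitly, you use $\meta{default}(\bool) = \meta{False}$ as in the definition (the paper's proof writes $\meta{True}$), and you make the \rref{V-BaseValMu} step explicit.
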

\begin{proof}
    By induction on the structure of $\mu$.
    \begin{proofcases}
        \case{Case $\beta$}
            \begin{proofcases}
                \case{Case $\iota$}        By definition $\meta{default}(\iota)=0 \in \mathbb{Z}$. Hence, by \rref{V-Int} we have $\vdash \meta{default}(\iota) : \iota$.
                \case{Case $\kw{bool}$}     By definition  $\meta{default}(\kw{bool})=\meta{True} \in \mathbb{B}$. Hence, by \rref{V-Bool}  we have $\vdash \meta{default}(\kw{bool}) : \kw{bool}$.
                \case{Case $\kw{address}$}  By definition  $\meta{default}(\kw{address})=0 \in \meta{Addr}$. Hence, by \rref{V-Addr} we have $\vdash \meta{default}(\kw{address}) : \kw{address}$.
            \end{proofcases}
        \case{Case $\mapping{\beta}{\mu}$}
            By definition $\meta{default}(\mapping{\beta}{\mu})= (x \in \meta{meta}(\beta) \mapsto \meta{default}(\mu))$.
            From the induction hypothesis follows $\vdash \meta{default}(\mu) : \mu$. 
            Let now $v$ be such that $\vdash v : \beta$. Consider that $\meta{default}(\mapping{\beta}{\mu})(v) = \meta{default}(\mu)$, 
            for which we just inferred from the hypothesis $\vdash \meta{default}(\mu) : \mu$. Hence, the second premise of \rref{V-Mapping}
            holds. For the first one, we apply \Cref{lem:value} to get that $\meta{default}(\mu) \in \meta{Value}$  and conclude that 
            the  $ (x \in \meta{meta}(\beta) \mapsto \meta{default}(\mu)) \; \in \; \meta{meta}(\beta) \to \meta{Value}$. 
            
            Therefore, we can apply \rref{V-Mapping} to obtain  $\vdash \meta{default}(\mapping{\beta}{\mu}) : \mapping{\beta}{\mu}$.
    \end{proofcases}
\end{proof}

\begin{lemma}[Well-typed Update Preserves Typing]\label{lem:update-preserves-typing}
Assume that
    \begin{enumerate}[start=1,label={(\bfseries H\arabic*)}]
        \item\label{wt_update_ptyp:H1} $\Sigma \vdash v :_s \sigma$
        \item\label{wt_update_ptyp:H2} $\Sigma \vdash \ell :_s A$ (or $\Sigma \vdash \ell :_s \kw{address}_A$)
        \item\label{wt_update_ptyp:H3} $\Sigma \vdash v' :_s \Sigmastore(A)(x)$
        \item\label{wt_update_ptyp:H4} $s' = s[\ell \mapsto s(\ell)[y \mapsto v']]$
    \end{enumerate}
Then, $\Sigma \vdash v :_{s'} \sigma$.
\end{lemma}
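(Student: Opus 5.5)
The proof goes by induction on $\meta{len}(\Sigma,\sigma)$, using the well-foundedness from \Cref{cor:wt-state-wf}, and inside that by induction on the derivation of~\ref{wt_update_ptyp:H1}. I prove the statement for $\sigma$ and for ABI types $\alpha$ simultaneously, mirroring the mutual structure in the proof of \Cref{lem:valuetyp-storage-weak}. I read the updated field as $y = x$, i.e.\ the slot written is the one whose type appears in~\ref{wt_update_ptyp:H3}; otherwise the statement would not type-check.

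The rules \rref{V-BaseValAlpha} and \rref{V-MappingVal} do not mention $s$, so those cases are immediate. The cases \rref{V-ABIVal} and \rref{V-Contract} only wrap the $\alpha$-judgment (resp.\ $\kw{address}_B$) without changing $\meta{len}$, so they follow from the inner induction hypothesis. The only real case is \rref{V-AddrIsContract}: $v = \ell'$ with $\Sigma \vdash \ell' :_s \kw{address}_B$. Since $s'$ changes only the field map at $\ell$, we have $\dom{s'} = \dom{s}$ and $s'(\ell'').\kw{type} = s(\ell'').\kw{type}$ for every $\ell''$. This discharges the premises about the domain and the contract tag. It remains to check the field premises for $\ell'$ in $s'$, which splits into two subcases.

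\emph{Subcase $\ell' \neq \ell$.} Here $s'(\ell') = s(\ell')$, so the domain condition is unchanged. Each field value $s(\ell')(z)$ has type $\Sigmastore(B)(z)$ at $s$. Its $\meta{len}$ is strictly smaller than $\meta{len}(\Sigma,B)$ whenever it is a contract or contract-address type, because then it is $\prec_\Sigma$-below $B$; otherwise its $\meta{len}$ is $0$, or it is a base or mapping type, which is trivial. The outer induction hypothesis therefore gives the field typings at $s'$.

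\emph{Subcase $\ell' = \ell$.} By~\ref{wt_update_ptyp:H2} and \Cref{lem:uniqueness-of-value-typing} we get $B = A$. The domain of $s'(\ell)$ equals that of $s(\ell)$, because $x \in \dom{s(\ell)}$ by the domain premise of~\ref{wt_update_ptyp:H2}, using that $\Sigmastore(A)(x)$ is defined by~\ref{wt_update_ptyp:H3}. Fields $z \neq x$ are handled as in the previous subcase. For $z = x$ the field now holds $v'$, and I need $\Sigma \vdash v' :_{s'} \Sigmastore(A)(x)$. Here I apply the outer induction hypothesis to~\ref{wt_update_ptyp:H3}, which is allowed because $\meta{len}(\Sigma,\Sigmastore(A)(x)) < \meta{len}(\Sigma,A)$ or the type is not a contract type.

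The main obstacle is exactly this last step. The typing of $v'$ is not a subderivation of~\ref{wt_update_ptyp:H1}, and the store graph may contain cycles: a field of the updated object may point back to $\ell$. Plain induction on the derivation of~\ref{wt_update_ptyp:H1} is therefore not enough, and the well-founded measure $\meta{len}$ on types, rather than on locations, is what breaks the circularity. I will need to check carefully that every recursive appeal, for old fields and for $v'$, strictly decreases $\meta{len}$ or lands in a base or mapping type. That reduces to the definition of $\prec_\Sigma$ together with the well-foundedness from \Cref{cor:wt-state-wf}.
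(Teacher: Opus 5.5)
Your proposal is correct and follows essentially the same route as the paper. Both proofs use induction on $\meta{len}(\Sigma,\sigma)$, and in both the only substantive case is \rref{V-AddrIsContract}. That case splits on whether the location and field coincide with the updated ones, uses \Cref{lem:uniqueness-of-value-typing} to get $B=A$, and then applies the induction hypothesis to (H3). Your version is slightly more careful than the paper's in two places: you note that base- and mapping-typed fields need no appeal to the measure (where $\meta{len}$ need not strictly decrease), and you justify $\dom{s'(\ell)}=\dom{s(\ell)}$ from (H2) and (H3).
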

\begin{proof}
We proceed by induction on the well-founded measure $\meta{len}(\Sigma,\sigma)$.
We do inversion on~\ref{wt_update_ptyp:H1}. We consider the following cases. 
\begin{proofcases}
\case{$\Sigma \vdash v :_s \mu$ (rule \rref{V-MappingVal})}
From the hypothesis we obtain that $\vdash v : \mu$, from which we trivially 
derive that $\Sigma \vdash v :_{s'} \mu$ 
% and $\vsem{\Sigma \vdash v :_{s'} \mu} = \vsem{\vdash v : \mu} = \vsem{\Sigma \vdash v :_{s} \mu}$.
%
\case{$\Sigma \vdash v :_s \alpha$ (rule \rref{V-ABIVal})}
We invert the hypothesis and we consider the following cases.
\begin{proofcases}
\case{$\Sigma \vdash v :_s \beta$ (rule \rref{V-BaseValAlpha})}
We obtain that $\vdash v : \beta$ from which we trivially get $\Sigma \vdash v :_{s'} \beta$
\case{$\Sigma \vdash \ell_B :_s \kw{address}_B$ (rule \rref{V-AddrIsContract})}
From inversion we obtain that 
\begin{enumerate}[start=1,label={(\bfseries P\arabic*)}]
\item\label{wt_update_ptyp_addr:P1} $\ell_B  \in \dom{s}$
\item\label{wt_update_ptyp_addr:P2} $B \in \dom{\Sigmastore}$
\item\label{wt_update_ptyp_addr:P3} $s(\ell_B).{\kw{type}} = B$
\item\label{wt_update_ptyp_addr:P4} $(\forall \, x. ~\, x \in \dom{s(\ell_B)} \Leftrightarrow (\exists \, \sigma. ~\, \Sigmastore(B)(x) = \sigma))$
\item\label{wt_update_ptyp_addr:P5} $(\forall \, x,\sigma. ~\, \Sigmastore(B)(x) = \sigma ~\Rightarrow ~ \Sigma \vdash s(\ell_B) (x) :_s \sigma)$
\end{enumerate} We need to show that $\Sigma \vdash \ell_B :_{s'} : \kw{address}_B$, therefore we need to establish the following.
\begin{enumerate}[start=1,label={(\bfseries G\arabic*)}]
\item\label{wt_update_ptyp_addr:G1} $\ell_B  \in \dom{s'}$, which we get from~\ref{wt_update_ptyp_addr:P1} and the fact that $\dom{s} = \dom{s'}$.
\item\label{wt_update_ptyp_addr:G2} $B \in \dom{\Sigmastore}$, which we get directly from~\ref{wt_update_ptyp_addr:P2}.
\item\label{wt_update_ptyp_addr:G3} $s'(\ell_B).{\kw{type}} = B$, which we get directly from~\ref{wt_update_ptyp_addr:P3} and the definition of $s'$. 
\item \label{wt_update_ptyp_addr:G4} $(\forall \, x. ~\, x \in \dom{s'(\ell_B)} \Leftrightarrow (\exists \, \sigma. ~\, \Sigmastore(B)(x) = \sigma))$,
which we derive by using the fact that $\dom{s(\ell_B)} = \dom{s'(\ell_B)}$.
\item\label{wt_update_ptyp_addr:G5} $(\forall \, x,\sigma. ~\, \Sigmastore(B)(x) = \sigma ~\Rightarrow ~ \Sigma \vdash s'(\ell_B)(x) :_{s'} \sigma)$ \\
Let $x$ and $\sigma$ such that $\Sigmastore(B)(x) = \sigma$. We need to show that $\Sigma \vdash s'(\ell_B)(x) :_{s'} \Sigmastore(B)(x)$.
We distinguish two cases.
\begin{proofcases}
\case{$\ell_B = \ell$ and $x = y$}
Because $\ell_B = \ell$,  we have that $\Sigma \vdash \ell : A$ and $\Sigma \vdash \ell : B$.
From \Cref{lem:uniqueness-of-value-typing} we get that $A = B$.
So it follows that $\Sigmastore(A)(x) = \Sigmastore(B)(x)$.
Then, we have that $s'(\ell_B)(x) = s'(\ell)(y) = v'$ (by~\ref{wt_update_ptyp:H4}) and
from~\ref{wt_update_ptyp:H3} we get $\Sigma \vdash v' :_s \Sigmastore(A)(x)$ and 
therefore $\Sigma \vdash v' :_s \Sigmastore(B)(x)$.
\case{$\ell_B \not = \ell$ or $x \not = y$} 
Then we have that $s'(\ell_B)(x) = s(\ell_B)(x)$
and from~\ref{wt_update_ptyp_addr:P5} we get $\Sigma \vdash s(\ell_B)(x) :_{s} \Sigmastore(B)(x)$.  
\end{proofcases}
Thus, we obtain that  \inlineequation[eq:wt_update_ptyp_addr]{\Sigma \vdash s'(\ell_B)(x) :_{s} \Sigmastore(B)(x)\,}. 
Note that $\meta{len}(\Sigma, \Sigmastore(B)(x)) < \meta{len}(\Sigma, B)$ by the 
definition of $\meta{len}(\cdot)$.
Therefore, we use the induction hypothesis on~\eqref{eq:wt_update_ptyp_addr},~\ref{wt_update_ptyp:H2},~\ref{wt_update_ptyp:H3} and~\ref{wt_update_ptyp:H4} 
and we obtain that 
$\Sigma \vdash s'(\ell_B)(x) :_{s'} \Sigmastore(B)(x)$ which proves the goal.
\end{enumerate}
\end{proofcases}
\case{$\Sigma \vdash \ell_B :_s B$ (rule \rref{V-Contract})}
From the hypothesis we obtain that 
$\Sigma \vdash \ell_B :_s \kw{address}_B$. By proceeding as in the above case, we get that 
$\Sigma \vdash \ell_B :_{s'} \kw{address}_B$, and we conclude with~\rref{V-Contract} that 
$\Sigma \vdash \ell_B :_{s'} B$.
\end{proofcases}
\end{proof}

\begin{lemma}[Well-typed Insertion Preserves Typing]\label{lem:insertion-preserves-typing}
Assume that
    \begin{enumerate}[start=1,label={(\bfseries H\arabic*)}]
        \item\label{wt_insert_ptyp:H1}  $\Sigma \vdash v :_{s} \sigma$
        \item\label{wt_insert_ptyp:H2}  $\Sigma \vdash \ell :_{s} A$ (or $\Sigma \vdash \ell :_{s} \kw{address}_A$)
        \item\label{wt_insert_ptyp:H3}  $\Sigma \vdash \rho :_{s} I$
        \item\label{wt_insert_ptyp:H4}  $\Sigma ; I \vdash^{\kw{S}}_{\var{A}, \kw{U}} \var{ref} : \sigma'$
        \item\label{wt_insert_ptyp:H5}  $\Sigma \vdash v' :_{s} \sigma'$
        \item\label{wt_insert_ptyp:H6}  $\ins{s; \rho ; \var{ref} ; v'}{s'}$
    \end{enumerate}
Then,  $\Sigma \vdash v :_{s'} \sigma$.
\end{lemma}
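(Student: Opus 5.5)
The plan is to reduce the statement to \Cref{lem:update-preserves-typing}. I would do case analysis on the derivation of \ref{wt_insert_ptyp:H6}, the insertion judgement. Only \rref{E-InsStorage} and \rref{E-InsField} can produce an insertion, so in both cases $s'$ has the shape $s[\ell'' \mapsto s(\ell'')[y \mapsto v']]$ for some location $\ell''$ and field $y$. In each case it then remains to show that $\ell''$ is a well-typed contract location of some contract $B$ with $\Sigmastore(B)(y) = \sigma'$. Given that, \ref{wt_insert_ptyp:H1} and \ref{wt_insert_ptyp:H5} supply the remaining hypotheses of \Cref{lem:update-preserves-typing}, which yields $\Sigma \vdash v :_{s'} \sigma$.

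\textbf{Case \rref{E-InsStorage}.} Here $\var{ref} = x$, $\ell'' = \ell$, and $y = x$. Inverting \ref{wt_insert_ptyp:H4}, the only rule with tag $\kw{S}$, context $(A,\kw{U})$ and a bare variable is \rref{T-Storage}. This gives $\Sigmastore(A)(x) = \sigma'$. Hypothesis \ref{wt_insert_ptyp:H2} gives $\Sigma \vdash \ell :_s A$, so \Cref{lem:update-preserves-typing} applies directly.

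\textbf{Case \rref{E-InsField}.} Here $\var{ref} = \var{ref}_0.x$, with $\psem{s^{\kw{U}}}{\rho}{\var{ref}_0} (\ell', \kw{U})$ and $s' = s[\ell' \mapsto s(\ell')[x \mapsto v']]$. Inverting \ref{wt_insert_ptyp:H4}, the only rule concluding with $\var{ref}_0.x$ is \rref{T-Field}. This gives $\Sigma; I \vdash^{\kw{S}}_{A,\kw{U}} \var{ref}_0 : B$ and $\Sigmastore(B)(x) = \sigma'$. I then need $\Sigma \vdash \ell' :_s B$.

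That requires an auxiliary soundness fact for reference evaluation: if $\Sigma; I \vdash^{k}_{A,\kw{U}} \var{ref} : \tau$, $\Sigma \vdash \rho :_s I$, $\Sigma \vdash \ell :_s A$, and $\psem{s^{\kw{U}}}{\rho}{\var{ref}} (w,\kw{U})$, then $\Sigma \vdash w :_s \tau$. For the tag-$\kw{S}$ fragment I only need a restricted version, since \rref{T-Coerce}, \rref{T-MapIndex}, \rref{T-Upcast}, \rref{T-Calldata} and \rref{T-Environment} all produce tag $\kw{N}$. Consequently $\kw{S}$-references are built only from \rref{T-Storage} and \rref{T-Field}. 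I would prove this restricted version by induction on the typing derivation, inverting the evaluation judgement alongside.
\begin{itemize}
\item For a base variable, \rref{E-Storage} applies. The side condition $x \notin \dom{I} = \dom{\rho} \setminus \{\kw{caller},\kw{origin},\kw{callvalue}\}$ rules out \rref{E-Calldata}. One caveat: $x$ must not be an environment name, and I would take this as implicit in the syntax. Then premise (H5) of \rref{V-AddrIsContract}, obtained from $\Sigma \vdash \ell :_s A$ through \rref{V-Contract}, gives $\Sigma \vdash s(\ell)(x) :_s \Sigmastore(A)(x)$.
\item For a field access, the induction hypothesis gives $\Sigma \vdash \ell_0 :_s C$. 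Inverting \rref{V-Contract} and \rref{V-AddrIsContract} then gives the field's typing in the same way.
\end{itemize}
Applied to $\var{ref}_0$, this yields $\Sigma \vdash \ell' :_s B$, and \Cref{lem:update-preserves-typing} finishes the case.

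I expect the auxiliary lemma to be the main obstacle. It is not stated earlier, so it has to be proved inline. The delicate parts are the interaction between the $\kw{S}$-tag restriction and the evaluation rules, in particular excluding calldata shadowing and coercions, and the fact that \rref{T-Field} requires a contract type $B$ rather than $\kw{address}_B$. I would try to handle both by inverting the typing judgement first and only afterwards the evaluation, using \Cref{lem:ref-determinism} where needed to identify the evaluated location.
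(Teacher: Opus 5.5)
Your proof is correct. Its skeleton matches the paper's: two cases (storage variable vs.\ field), each reduced to \Cref{lem:update-preserves-typing} once the written-to location is known to have a contract type whose field has type $\sigma'$. The paper splits on \ref{wt_insert_ptyp:H4} and inverts \ref{wt_insert_ptyp:H6}, while you do the reverse, which is immaterial.

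The genuine difference is how you obtain $\Sigma \vdash \ell' :_s B$ in the field case. The paper invokes \Cref{lem:ref-typesafety-untimed}, the general progress-style lemma. It is stated later, in the Type Safety section, but does not depend on this lemma, so there is no circularity. That lemma yields some $\ell''$ with $\psem{s^{\kw{U}}}{\rho}{\var{ref}}(\ell'',\kw{U})$ and $\Sigma \vdash \ell'' :_s B$. The paper then uses \Cref{lem:ref-determinism} to identify $\ell''$ with the location $\ell'$ coming from \rref{E-InsField}.

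You instead prove a preservation-style fact: every value the reference evaluates to is well-typed. You argue by induction on the $\kw{S}$-typing derivation, using that $\kw{S}$-tagged untimed references are built only from \rref{T-Storage} and \rref{T-Field}. Your route is self-contained, avoids the forward reference, and needs no determinism at all. Since you quantify over every evaluation, your remark about using \Cref{lem:ref-determinism} ``where needed'' is superfluous. The paper's route is shorter because it recycles a lemma it needs anyway, and that lemma covers all references, including coercions and map indexing, which, as you correctly note, never arise under tag $\kw{S}$.

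One small wording point: \rref{T-Upcast} can also conclude a judgement whose subject is $\var{ref}_0.x$ (or a bare $x$). So \rref{T-Field} is not literally the only rule for that subject; it is the only one producing tag $\kw{S}$, which is what your inversion actually relies on.
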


\begin{proof}
    By case splitting on~\ref{wt_insert_ptyp:H4}. 
    \begin{proofcases}
        \case{$\Sigma; I \vdash^{\kw{S}}_{A, \kw{U}} x : \sigma'$ (rule~\rref{T-Storage})}
        From the premises we get that $x : \sigma' \in \Sigmastore(A)$ and thus by~\ref{wt_insert_ptyp:H5} we obtain
         that \linebreak 
         \inlineequation[eq:wt_insert_ptyp:tstore]{\Sigma \vdash v' :_{s} \Sigmastore(A)(x)}.
        By inversion on~\ref{wt_insert_ptyp:H6} (rule~\rref{E-InsStorage}) we get that \linebreak
        \inlineequation[eq:wt_insert_ptyp:insert]{s' = s[\ell \mapsto s(\ell)[x \mapsto \var{v'}]]} and $x \in \dom{s(\ell)}$.
        We apply~\Cref{lem:update-preserves-typing} with~\ref{wt_insert_ptyp:H1},~\ref{wt_insert_ptyp:H2},~\eqref{eq:wt_insert_ptyp:tstore}
        and~\eqref{eq:wt_insert_ptyp:insert} and we conclude $\Sigma \vdash v :_{s'} \sigma$.

        \case{$\Sigma; I \vdash^{\kw{S}}_{A, \kw{U}} \var{ref}.x : \sigma'$ (rule~\rref{T-Field})}
        From the premises we get 
        \begin{enumerate}[start=1,label={(\bfseries P\arabic*)}]
            \item\label{wt_insert_ptyp_field:P1} $\Sigma; I \vdash^{\kw{S}}_{A, \kw{U}}  \var{ref} : B$
            \item\label{wt_insert_ptyp_field:P2}  $\Sigmastore(B)(x) = \sigma'$
        \end{enumerate}
        By inversion on~\ref{wt_insert_ptyp:H6} (rule~\rref{E-InsField}) we get
        \begin{enumerate}[start=1,label={(\bfseries I\arabic*)}]
            \item\label{wt_insert_ptyp_field:I1} $\psem{s}{\rho}{\var{ref}} (\var{\ell'}, \kw{U})$
            \item\label{wt_insert_ptyp_field:I2} $\var{\ell'} \in \meta{Addr}$
            \item\label{wt_insert_ptyp_field:I3} $x \in \dom{s(\ell')}$
            \item\label{wt_insert_ptyp_field:I4} $s' = s[\ell' \mapsto s(\ell')[x \mapsto v']]$
        \end{enumerate}
        By~\Cref{lem:ref-typesafety-untimed} on~\ref{wt_insert_ptyp_field:P1},~\ref{wt_insert_ptyp:H3}, and~\ref{wt_insert_ptyp:H2}
         we get that there exists an $\ell''$, such that $\psem{s^{\kw{U}}}{\rho}{\var{ref}} (\var{\ell''}, \kw{U})$ and 
         $\Sigma \vdash \ell'' :_{s} B$. By~\ref{wt_insert_ptyp_field:I1} and determinism of pointer semantics we obtain 
         that $\ell'' = \ell'$ and 
        \begin{equation}
            \label{eq:well-typed-insertion-1}
            \Sigma \vdash \ell' :_{s} B
        \end{equation}
        We know that 
        \begin{equation}
            \label{eq:well-typed-insertion-2}
            \Sigma \vdash v' :_{s} \Sigmastore(B)(x)
        \end{equation}
        by~\ref{wt_insert_ptyp:H5} and~\ref{wt_insert_ptyp_field:P2}. 
        We then apply~\Cref{lem:update-preserves-typing} with~\ref{wt_insert_ptyp:H1},~\eqref{eq:well-typed-insertion-1}, 
        \eqref{eq:well-typed-insertion-2}, and~\ref{wt_insert_ptyp_field:I4} to conclude $\Sigma \vdash v :_{s'} \sigma$.
    \end{proofcases}
\end{proof}

\section{Type Safety}
\begin{lemma}[Environment References Type Safety]
\label{lem:ethenv-typesafety}
    Assume that
    \begin{enumerate}[start=1,label={(\bfseries H\arabic*)}]
        \item\label{typsf:ethenv:wt}  $\Sigma; I \vdash_{\maybe{A}}  \var{env} : \alpha$
        \item\label{typsf:ethenv:env} $\Sigma \vdash \rho :_s I$
        \item\label{typsf:ethenv:loc} $\Sigma \vdash \ell :_s \maybe{A}$
    \end{enumerate}
    Then there exists $v$ such that $\rho ; \var{env} \bigstep_\ell v$ and $\Sigma \vdash v :_s \alpha$.
\end{lemma}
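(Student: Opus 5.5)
We proceed by case analysis on the derivation of \ref{typsf:ethenv:wt}; there are exactly four rules for the environment-reference judgment, and in each case we exhibit the value and its typing directly, with no induction required.

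\begin{proofcases}
\case{\rref{T-Caller}}
Here $\alpha = \kw{address}$. By inversion on \ref{typsf:ethenv:env} (rule \rref{V-Env}) we have $\dom{\rho} = \dom{I} \cup \{\kw{caller}, \kw{origin}, \kw{callvalue}\}$, so $\kw{caller} \in \dom{\rho}$. Hence \rref{E-Caller} gives $\rho ; \kw{caller} \bigstep_\ell \rho(\kw{caller})$. From the same inversion we get $\vdash \rho(\kw{caller}) : \kw{address}$, and \rref{V-BaseValAlpha} then yields $\Sigma \vdash \rho(\kw{caller}) :_s \kw{address}$.

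\case{\rref{T-Origin} and \rref{T-Callvalue}}
These are identical to the previous case. For \rref{T-Origin} we use \rref{E-Origin} and the premise $\vdash \rho(\kw{origin}) : \kw{address}$. For \rref{T-Callvalue} we use \rref{E-Callvalue} and $\vdash \rho(\kw{callvalue}) : \kw{uint256}$, again lifted by \rref{V-BaseValAlpha}.

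\case{\rref{T-This}}
This rule applies only when the optional contract is present, i.e.\ $\maybe{A} = A$ and $\alpha = \kw{address}_A$. By \rref{E-This}, $\rho ; \kw{this} \bigstep_\ell \ell$. It remains to show $\Sigma \vdash \ell :_s \kw{address}_A$. Hypothesis \ref{typsf:ethenv:loc} gives $\Sigma \vdash \ell :_s A$ (via the lifting \rref{V-Some}). Inverting \rref{V-Contract} yields exactly $\Sigma \vdash \ell :_s \kw{address}_A$. If \ref{typsf:ethenv:loc} is instead stated in the $\kw{address}_A$ form, as the paper allows elsewhere, the conclusion is immediate.
\end{proofcases}

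The only mildly delicate point is the \rref{T-This} case. There we must check that the typing judgment rules out $\maybe{A} = \bot$, since \rref{T-This} is indexed by a concrete $\var{Id}$. This lets us invert \rref{V-Some} rather than \rref{V-None} on \ref{typsf:ethenv:loc}, and then invert \rref{V-Contract} to reach the $\kw{address}_A$ typing.
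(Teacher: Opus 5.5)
Your proof is correct and follows the paper's argument. Both do a case split on the four environment-typing rules. For \rulename{T-Caller}, \rulename{T-Origin} and \rulename{T-Callvalue}, both invert \rref{V-Env} and apply \rref{E-Caller}, \rref{E-Origin} or \rref{E-Callvalue}; for \rulename{T-This}, both use \rref{E-This} and invert \rref{V-Contract} on (H3). You are slightly more explicit than the paper: you lift the base-value typing through \rref{V-BaseValAlpha} and go through the \rref{V-Some} step, both of which the paper leaves implicit.
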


\begin{proof}
    By cases on~\ref{typsf:ethenv:wt}.
    \begin{proofcases}
        \case{$\Sigma; I \vdash_{\maybe{A}} \kw{caller} : \kw{address}$ (rule~\rref{T-Caller})}
        By inversion on~\ref{typsf:ethenv:env} we obtain:
        \begin{enumerate}[start=1,label={(\bfseries I\arabic*)}]
            \item\label{typsf:ethenv:rhoDom} $\dom{\rho} = \dom{I} \cup \{\kw{caller}, \kw{origin}, \kw{callvalue}\}$
         
            \item\label{typsf:ethenv:caller} $\vdash \rho(\kw{caller}) : \kw{address}$
        \end{enumerate}
        Since $\kw{caller} \in \dom{\rho}$, we can use~\rref{E-Caller} rule to get $\rho ; \kw{caller} \bigstep_{\ell} \rho(\kw{caller})$. We conclude with~\ref{typsf:ethenv:caller}.

        \case{$\Sigma; I \vdash_{\maybe{A}} \kw{origin} : \kw{address}$ (rule~\rref{T-Origin})}
        Similarly as the~\rref{T-Caller} case.

        \case{$\Sigma; I \vdash_{\maybe{A}} \kw{callvalue} : \kw{uint256}$ (rule~\rref{T-Callvalue})}
        Similarly as the~\rref{T-Caller} case.

        \case{$\Sigma; I \vdash_{A} \kw{this} : \kw{address}_{A}$ (rule~\rref{T-This})}
        We use the~\rref{E-This} rule to obtain $\rho ; \kw{this} \bigstep_{\ell} \ell$.
        By inversion on~\ref{typsf:ethenv:loc} through~\rref{V-Contract} rule we also obtain 
        $\Sigma \vdash \ell :_s \kw{address}_{A}$.
    \end{proofcases}
\end{proof}

\begin{lemma}[References Type Safety (Untimed)]
\label{lem:ref-typesafety-untimed}
    Assume that
    \begin{enumerate}[start=1,label={(\bfseries H\arabic*)}]
        \item\label{typsf:uref:wt} $\Sigma; I \vdash_{\maybe{A}, \kw{U}}^k \var{ref} : \sigma$
        \item\label{typsf:uref:env} $\Sigma \vdash \rho :_s I$
        \item\label{typsf:uref:loc} $\Sigma \vdash \ell :_s \maybe{A}$
    \end{enumerate}
    Then there exists $v$ such that $\psem{s}{\rho}{\var{ref}} (v, \kw{U})$ and $\Sigma \vdash v :_s \sigma$.
\end{lemma}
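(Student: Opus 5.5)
The proof goes by induction on the derivation of \ref{typsf:uref:wt}. The rule \rref{T-MapIndex} has a premise $\Sigma; I \vdash_{\maybe{A}, \kw{U}} e : \beta$ on a base expression, so the statement has to be proved mutually with the untimed type-safety lemma for base expressions. That companion lemma says: under \ref{typsf:uref:env} and \ref{typsf:uref:loc}, and provided $\psem{s}{\rho}{\Phi} \meta{True}$ so that the semantic side conditions of \rref{T-BopI} and \rref{T-NumConv} can be discharged, there is a $v$ with $\psem{s}{\rho}{e} v$ and $\vdash v : \beta$. The reference judgment carries no $\Phi$, so for the index expression I would type it under $\kw{true}$, or simply assume the mutual expression lemma in that form. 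Because the tag is $\kw{U}$, the only rules that can end the derivation are \rref{T-Calldata}, \rref{T-Storage}, \rref{T-Coerce}, \rref{T-Upcast}, \rref{T-Field}, \rref{T-MapIndex} and \rref{T-Environment}. In each case we must produce an untimed state $s^{\kw{U}}$ and the timing tag $\kw{U}$.

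The cases go as follows.

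\rref{T-Calldata}: from $x : \alpha \in I$ and \rref{V-Env} we get $x \in \dom{\rho}$ and $\Sigma \vdash \rho(x) :_s \alpha$. Then \rref{E-Calldata} gives $(\rho(x),\kw{U})$, and we conclude with \rref{V-ABIVal}.

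\rref{T-Storage}: here $\maybe{A} = A$, so \ref{typsf:uref:loc} inverts through \rref{V-Contract} and \rref{V-AddrIsContract}. Its fourth premise gives $x \in \dom{s(\ell)}$ and its fifth gives $\Sigma \vdash s(\ell)(x) :_s \sigma$. Since $x \notin \dom{I}$ and $x$ is not among $\kw{caller},\kw{origin},\kw{callvalue}$ (these are keywords, not identifiers), \rref{V-Env} yields $x \notin \dom{\rho}$, and \rref{E-Storage} applies.

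\rref{T-Environment}: use \Cref{lem:ethenv-typesafety}, then \rref{E-Environment}, then \rref{V-ABIVal}.

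\rref{T-Coerce}: the induction hypothesis gives $v$ with $\Sigma \vdash v :_s \kw{address}_A$, obtained via \rref{V-ABIVal}. Apply \rref{E-Coerce}, then \rref{V-Contract} gives $\Sigma \vdash v :_s A$.

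\rref{T-Upcast}: the induction hypothesis gives $\Sigma \vdash v :_s \kw{address}_A$. Inverting \rref{V-AddrIsContract} shows $v = \ell' \in \meta{Addr}$, so \rref{V-Addr} gives $\vdash v : \kw{address}$. The evaluation itself is unchanged.

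\rref{T-Field}: the induction hypothesis on $\var{ref} : B$ gives $\ell'$ with $\psem{s}{\rho}{\var{ref}} (\ell',\kw{U})$ and $\Sigma \vdash \ell' :_s B$. Inverting \rref{V-Contract} and \rref{V-AddrIsContract} gives $\ell' \in \meta{Addr}$, $x \in \dom{s(\ell')}$ (from $\Sigmastore(B)(x) = \sigma$) and $\Sigma \vdash s(\ell')(x) :_s \sigma$. Then \rref{E-Field} concludes.

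\rref{T-MapIndex}: the induction hypothesis gives $v_r$ with $\Sigma \vdash v_r :_s \mapping{\beta}{\mu}$. Inverting \rref{V-MappingVal} and \rref{V-Mapping} gives $v_r \in \meta{meta}(\beta) \fun \meta{Value}$, and $\vdash v_r(u) : \mu$ for every $u$ with $\vdash u : \beta$. The mutual expression hypothesis gives $v_e$ with $\vdash v_e : \beta$. Inverting \rref{V-Int}, \rref{V-Bool} or \rref{V-Addr} shows $v_e \in \meta{meta}(\beta)$, which is the set $\meta{X}$ required by \rref{E-RefMapping}. Hence we get $(v_r(v_e),\kw{U})$, typed as $\mu$ by \rref{V-MappingVal}.

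I expect two obstacles. The main one is setting up the mutual induction with the expression lemma cleanly, including how the semantic premises $\vDash$ are discharged using $\Phi$, and making the case analysis respect the $\beta \subseteq \mu \subseteq \sigma$ inclusions in value typing. The second is the \rref{T-Storage} case, which needs the side fact $x \notin \dom{\rho}$; I would derive this from \rref{V-Env} together with the assumption that environment names are disjoint from storage identifiers.
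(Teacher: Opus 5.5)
Your proposal is correct and matches the paper's proof: the same mutual induction with \Cref{lem:expr-typesafety-untimed} over the typing derivation, and the same inversions through \rref{V-Env}, \rref{V-Contract}/\rref{V-AddrIsContract} and \rref{V-Mapping} in each case. Your two side remarks are points the paper relies on without stating them, so making them explicit tightens the argument rather than changing it: you read the $\Phi$-less index premise of \rref{T-MapIndex} as typed under $\kw{true}$, so that $\psem{s}{\rho}{\Phi}\meta{True}$ holds trivially, and in \rref{T-Storage} you note that $x \notin \{\kw{caller},\kw{origin},\kw{callvalue}\}$ is needed to get $x \notin \dom{\rho}$.
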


\begin{lemma}[Expression Type Safety (Untimed)]
\label{lem:expr-typesafety-untimed}
    Let $\Phi$ be a boolean expression as defined in \rref{T-Ctor}, respectively \rref{T-Trans}. Assume that
    \begin{enumerate}[start=1,label={(\bfseries H\arabic*)}]
        \item\label{typsf:uexp:wt} $\Sigma; I; \Phi \vdash_{\maybe{A}, \kw{U}} e : \beta$
        \item\label{typsf:uexp:env} $\Sigma \vdash \rho :_s I$
        \item\label{typsf:uexp:loc} $\Sigma \vdash \ell :_s \maybe{A}$
        \item\label{typsf:uexp:phi} $\psem{s}{\rho}{\Phi} \meta{True}$
    \end{enumerate}
    Then there exists $v$ such that $\psem{s}{\rho}{e} v$ and $\vdash v : \beta$.

    \end{lemma}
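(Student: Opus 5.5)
We prove this lemma simultaneously with \Cref{lem:ref-typesafety-untimed}, by mutual induction on the typing derivations \ref{typsf:uexp:wt} and \ref{typsf:uref:wt}. The references and expressions are mutually dependent: \rref{T-MapIndex} needs an index expression, and \rref{T-Ref} and \rref{T-Addr} need a reference. In each case we construct the evaluation using the rule that matches the syntactic form. We then read off the value typing from the typing premises and from \rref{V-Int}, \rref{V-Bool} and \rref{V-Addr}. Throughout, $s$ is the untimed state $s^{\kw{U}}$, so only the untimed evaluation rules are relevant.

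The structural cases go as follows.
\begin{itemize}
\item \rref{T-Int} and \rref{T-Bool}: immediate from \rref{E-Int} and \rref{E-Bool}.
\item \rref{T-Ref}: the mutual hypothesis gives $\psem{s}{\rho}{\var{ref}}(v,\kw{U})$ with $\Sigma \vdash v :_s \beta$. Inverting \rref{V-MappingVal}, \rref{V-ABIVal} and \rref{V-BaseValAlpha} yields $\vdash v : \beta$, and \rref{E-Ref} concludes.
\item \rref{T-Addr}: the hypothesis gives $\Sigma \vdash v :_s A$. Inverting \rref{V-Contract} and \rref{V-AddrIsContract} shows $v$ is a location, so $\vdash v : \kw{address}$ by \rref{V-Addr}, and \rref{E-Addr} concludes.
\item \rref{T-Range}: the IH gives $\vdash v : \iota_2$, hence $v \in \mathbb{Z}$. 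Exactly one of \rref{E-RangeTrue} and \rref{E-RangeFalse} applies, so the result is a boolean.
\item \rref{T-BopB}, \rref{T-Neg}, \rref{T-Cmp}, \rref{T-ITE}, \rref{T-Eq}: apply the IH to the subexpressions. Inversion of the value typing shows the values lie in $\mathbb{Z}$ or $\mathbb{B}$ as each rule requires. For \rref{T-ITE}, case on whether the guard is $\meta{True}$ or $\meta{False}$.
\item Division and modulo: case on whether $v_2=0$.
\end{itemize}

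The only substantial cases are \rref{T-BopI} and \rref{T-NumConv}. There the result must lie in the range of $\iota$, and that range is guaranteed only semantically, by the entailment premise $\Sigma; I; \Phi \vDash_{\maybe{A}} \inrange{\iota}{\cdot}$. We unfold \rref{ValidExps} and instantiate it with the current $s,\rho,\ell$, using \ref{typsf:uexp:env}, \ref{typsf:uexp:loc} and \ref{typsf:uexp:phi}. This gives $\psem{s}{\rho}{\inrange{\iota}{e_1 \opi e_2}} \meta{True}$. Inverting it, and applying determinism (\Cref{lem:exp-determinism}) against the evaluation built from the IH and the arithmetic rule, shows that the computed value $v$ satisfies $\meta{min}(\iota)\le v\le\meta{max}(\iota)$ or $\iota=\intt{}$. \rref{V-Int} then gives $\vdash v:\iota$. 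The \rref{T-NumConv} case is the same, with the side condition vacuous when $\iota_2=\intt{}$. The fact that $\Phi$ is the constraint from \rref{T-Ctor} or \rref{T-Trans} is used only to ensure that $\Phi$ is evaluable, so that \ref{typsf:uexp:phi} is meaningful. The IH is applied with the same $\Phi$, since every typing rule propagates $\Phi$ unchanged.

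For the reference half, we argue as follows.
\begin{itemize}
\item \rref{T-Calldata}: use \rref{E-Calldata} together with \ref{typsf:uref:env}.
\item \rref{T-Storage}: invert \ref{typsf:uref:loc} through \rref{V-AddrIsContract} to get $x\in\dom{s(\ell)}$ and the typing of $s(\ell)(x)$. The side condition $x\notin\dom{\rho}$ follows from $x\notin\dom{I}$ and $\dom{\rho}$ as given by \rref{V-Env}, since storage names are distinct from $\kw{caller}$, $\kw{origin}$ and $\kw{callvalue}$.
\item \rref{T-Field}: proceed similarly, with $\ell'$ supplied by the IH.
\item \rref{T-Coerce}, \rref{T-Upcast}: these just re-wrap the value typing via \rref{V-Contract}, or invert \rref{V-AddrIsContract} to obtain $\kw{address}$.
\item \rref{T-MapIndex}: invert \rref{V-Mapping} and apply the expression IH to the index.
\item \rref{T-Environment}: use \Cref{lem:ethenv-typesafety}.
\end{itemize}
I expect the main obstacle to be the entailment step, specifically matching the evaluated value with the one quantified in \rref{ValidExps}. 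Determinism resolves this.
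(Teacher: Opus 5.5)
Your proposal is correct and follows essentially the paper's proof: mutual induction with \Cref{lem:ref-typesafety-untimed} over the typing derivation, with the range obligations of \rref{T-BopI} and \rref{T-NumConv} discharged by instantiating \rref{ValidExps} at the current $s,\rho,\ell$ using (H2)--(H4). You are somewhat more explicit than the paper. You invoke \Cref{lem:exp-determinism} to identify the value in the entailment, you treat the $\iota_2=\intt{}$ case of \rref{T-NumConv} separately, and you cover division or modulo by zero through the same entailment, whereas the paper observes that $0$ lies in every integer range; the route is nonetheless the same.
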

    \begin{proof}[Proof of~\Cref{lem:ref-typesafety-untimed,lem:expr-typesafety-untimed}]
    By mutual induction on the typing derivation~\ref{typsf:uref:wt}. We start with the cases from~\Cref{lem:ref-typesafety-untimed}.

    \begin{proofcases}
        \case{$\Sigma; I \vdash^{\kw{N}}_{\maybe{A}, \kw{U}}  \var{env} : \alpha$}
        The typing derivation ends with the~\rref{T-Environment} rule, which gives us
         \inlineequation[eq:typesafety-refs-ethenv]{\Sigma; I \vdash_{\maybe{A}}  \var{env} : \alpha}. 
        We can then use~\Cref{lem:ethenv-typesafety} on~\eqref{eq:typesafety-refs-ethenv},
        \ref{typsf:uref:env}, and~\ref{typsf:uref:loc}, to obtain $v$ such that $\rho ; \var{env} \bigstep_\ell v$ and $\Sigma \vdash v :_s \alpha$. 
        We then use~\rref{E-Environment} rule to conclude $s; \rho ; \var{env} \bigstep_\ell (v,\kw{U})$.
        
        \case{$\Sigma; I \vdash^{\kw{S}}_{\var{A}, \kw{U}} x : \sigma$}
        The typing derivation ends with the~\rref{T-Storage} rule, which gives us
         $(x : \sigma) \in \Sigmastore(A)$ and $x \notin \dom{I}$.
        From inversion of derivation of~\ref{typsf:uref:loc} of~\Cref{lem:ref-typesafety-untimed} via
         the~\rref{V-Contract} rule and the~\rref{V-AddrIsContract} rule we
         then deduce that $ x \in \dom{s(\ell)}$. From~\ref{typsf:uref:env} and the fact
         that $x \notin \dom{I}$, we get that 
         $x \notin \dom{\rho}$ so the~\rref{E-Storage} rule can be applied to get the 
         candidate
         $v = s(\ell)(x)$ with tag $\kw{U}$. We need to derive $\Sigma \vdash s(\ell)(x) :_s \sigma$.
          Again using inversion twice on~\ref{typsf:uref:loc} 
         of~\Cref{lem:ref-typesafety-untimed}, the premises of the derivation give us that since
         $(x : \sigma) \in \Sigmastore(\var{A})$, we also have $\Sigma \vdash s(\ell)(x) :_s \sigma$, which concludes the case.

        \case{$\Sigma; I \vdash^{\kw{N}}_{\maybe{A}, \kw{U}} x : \alpha$}
        From the~\rref{T-Calldata} typing rule we see that $(x : \alpha \in I)$.
        From~\ref{typsf:uref:env} we then get that $x \in \dom{\rho}$, so we can
        apply~\rref{E-Calldata} to get $\psem{s}{\rho}{x} (\rho(x), \kw{U})$. We need
         to derive
        $\Sigma \vdash \rho(x) :_s \alpha$, which we can obtain from inverting the
         derivation of~\ref{typsf:uref:env}.

        \case{$\Sigma; I \vdash^{\kw{N}}_{\maybe{A}, \kw{U}} \var{ref} ~\kw{as}~ A : A$}
        From the premise of the derivation rule~\rref{T-Coerce} we obtain that
        $\Sigma; I \vdash^{k}_{\maybe{A}, \kw{U}} \var{ref} : \kw{address}_A$. From the induction hypothesis
        follows that there is a $v$ such that $\psem{s}{\rho}{\var{ref}} (v,\kw{U})$ and $\Sigma \vdash v:_s \kw{address}_A$.
        With \rref{E-Coerce} we get  $\psem{s}{\rho}{\var{ref} ~\kw{as}~ A} (v,\kw{U})$. Finally, 
        by applying~\rref{V-Contract} we conclude  $\Sigma \vdash v:_s A$.

        % From the premises of the derivation rule~\rref{T-Coerce} we obtain that $(\var{ref} \mapsto A) \in P$.
        % By inversion on (H4) (and since $(\var{ref} \mapsto A) \in P$) we then have that
        % $\psem{s^{\kw{U}}}{\rho}{\var{ref}} (\ell', \kw{U})$ and $\Sigma \vdash \ell' :_s A$. Since by~\rref{E-Coerce}
        % $\var{ref}~\kw{as}~A$ evaluates to $(\ell', \kw{U})$, the case is concluded.

        \case{$\Sigma; I \vdash^{k}_{\maybe{A}, \kw{U}} \var{ref} : \kw{address}$}
        The premise of \rref{T-Upcast} gives us $\Sigma; I \vdash^{k}_{\maybe{A}, \kw{U}} \var{ref} : \kw{address}_A$.
        By induction hypothesis we know there is a $v$ such that  $\psem{s}{\rho}{\var{ref}} (v,\kw{U})$ and $\Sigma \vdash v:_s \kw{address}_A$.
        We still have to show $\Sigma \vdash v:_s \kw{address}$: From the premises of \rref{V-AddrIsContract} we obtain
        $v \in \dom{s}$  and thus $v \in \meta{Addr}$. We now apply \rref{V-Addr} and \rref{V-BaseValAlpha} to conclude  $\Sigma \vdash v:_s \kw{address}$.

        \case{$\Sigma; I \vdash^{k}_{\maybe{A}, \kw{U}} \var{ref}.x : \sigma$}
        From the premises of the~\rref{T-Field} rule, we get that $\Sigma ; I  \vdash^k_{\maybe{A},\kw{U}} \var{ref} : B$ and
        $\Sigmastore(B)(x) = \sigma$. By induction on the first premise we get that
        $\psem{s}{\rho}{\var{ref}} (\ell', \kw{U})$ and $\Sigma \vdash \ell' :_s B$. 
        By inversion
        on~\rref{V-Contract} and~\rref{V-AddrIsContract} and  $\Sigmastore(B)(x) = \sigma$ we obtain 
        the following: $\ell' \in \dom{s}$ and thus $\ell' \in \meta{Addr}$, $x \in \dom{s(\ell')}$, and
        $\Sigma \vdash s(\ell')(x) :_s \sigma$. Since we can now apply~\rref{E-Field} to obtain 
        $\psem{s}{\rho}{\var{ref}.x} (s(\ell')(x),\kw{U})$ the case is concluded.

        \case{$\Sigma; I \vdash^{k}_{\maybe{A}, \kw{U}} \var{ref}[e] : \mu$}
        The second premise of the~\rref{T-MapIndex} rule gives us by induction $\psem{s}{\rho}{e} \var{v}_e$ and
        $\Sigma \vdash \var{v}_e :_s \beta$. Induction on the first premise of the~\rref{T-MapIndex} rule gives us
        $\psem{s}{\rho}{\var{ref}} (\var{v}_r, \kw{U})$ and $\Sigma \vdash \var{v}_r :_s \mapping{\beta}{\mu}$,
        which implies by inversion of~\rref{V-Mapping}
         that $\var{v}_r \in \meta{meta}(\beta) \to \meta{Value}$ and
        $\Sigma \vdash \var{v_r} (\var{v_e}) :_s \mu$. The candidate for the value $v$ is therefore
        $\var{v_r} (\var{v_e})$. In order to conclude that
        $\psem{s}{\rho}{\var{ref}[e]} (\var{v_r} (\var{v_e}), \kw{U})$ 
        with~\rref{E-RefMapping},
         we need to have that
        $\var{v_e} \in \kw{meta}(\beta)$, which follows by inversion from $\Sigma \vdash \var{v}_e :_s \beta$.

    \end{proofcases}
    We proceed with with the cases of  \Cref{lem:expr-typesafety-untimed}.
    \begin{proofcases}
        \case{$\Sigma; I; \Phi \vdash_{\maybe{\var{A}}, \kw{U}} n : \iota$}
        By inversion of the typing rule \rref{T-Int} we know that $n \in \mathbb{Z}$ and
        $(\meta{min}(\iota) \leq n \leq \meta{max}(\iota)) ~\vee~ \iota = \intt{}$.
        Hence, applying \rref{E-Int}, yields $\psem{s}{\rho}{n} n$. Further, since we have $n \in \mathbb{Z}$ and
        the bounds condition $(\meta{min}(\iota) \leq n \leq \meta{max}(\iota)) ~\vee~ \iota = \intt{}$,
        by \rref{V-Int} we get $\vdash n : \iota$.

        \case{$\Sigma; I; \Phi \vdash_{\maybe{\var{A}}, \kw{U}} b : \kw{bool}$}
        We invert the typing rule \rref{T-Bool} and get  $b \in \mathbb{B}$.
        Then, we apply \rref{E-Bool}, to reach $\psem{s}{\rho}{b} b$. Further, from \rref{V-Bool}, we get
        $\vdash b : \kw{bool}$.

        \case{$\Sigma; I; \Phi \vdash_{\maybe{\var{A}}, \kw{U}} \var{ref_{\beta}} : \beta$}
        By inversion of \rref{T-Ref} we get that $\Sigma; I \vdash^k_{\maybe{\var{A}}, \kw{U}} \var{ref} : \beta$.
        From the mutual induction hypothesis we know that there is a value $v$ such that $\psem{s}{\rho}{\var{ref}} (v,\kw{U})$
        and $\Sigma \vdash v :_s \beta$. Applying now \rref{E-Ref}, we conclude $\psem{s}{\rho}{\var{ref}} v$ and from inversion
         of \rref{V-BaseValAlpha} follows $\vdash v : \beta$.

        \case{$\Sigma; I; \Phi \vdash_{\maybe{\var{A}}, \kw{U}} \addr{\var{ref_{Id}}} : \kw{address}$}
        We invert \rref{T-Addr} to get $\Sigma; I \vdash^k_{\maybe{\var{A}}, \kw{U}} \var{ref_{Id}} : Id$.
        By the mutual induction hypothesis, there exists $v$ such that $\psem{s}{\rho}{\var{ref_{Id}}} (v,\kw{U})$ and
        $\Sigma \vdash v :_{s} \var{Id}$.
        Applying \rref{E-Addr} to $\psem{s}{\rho}{\var{ref_{Id}}} (v,\kw{U})$, we get $\psem{s}{\rho}{\addr{\var{ref_{Id}}}} v$.
        Further, inverting $\Sigma \vdash v :_{s} \var{Id}$ twice (through 
        \rref{V-Contract} and~\rref{V-AddrIsContract}),
        we know that $v \in \dom{s}$ and hence $v \in \meta{Addr}$. Therefore, we can use rule \rref{V-Addr} to conclude
        $\vdash v : \kw{address}$.

        \case{$\Sigma; I; \Phi \vdash_{\maybe{\var{A}}, \kw{U}} \kw{inrange}(\iota_1, e) : \kw{bool}$}
        We invert \rref{T-Range} to get $\Sigma; I; \Phi \vdash_{\maybe{\var{A}}, \kw{U}} e : \iota_2$.
        We then apply the induction hypothesis to obtain that there exists a $v$ such that $\psem{s}{\rho}{e} v$ and that
        $\vdash v : \iota_2$. Inverting now rule \rref{V-Int}, it follows that $v \in \mathbb{Z}$
        and that $\meta{min}(\iota_2) \leq v < \meta{max}(\iota_2) \lor \iota_2 = \kw{int}$.
        If $v \in [\meta{min}(\iota_1),\meta{max}(\iota_1)]$ or $\iota_1 = \kw{int}$, 
        then \rref{E-RangeTrue} applies and thus $\psem{s}{\rho}{\kw{inrange}(e,\iota_1)} \meta{True}$,
        otherwise \rref{E-RangeFalse} applies and thus $\psem{s}{\rho}{\kw{inrange}(e,\iota_1)} \meta{False}$.
        Either way the resulting value $b$ is an element of $\mathbb{B}$. Therefore, by \rref{V-Bool},   $\vdash b : \kw{bool}$.

        \case{$\Sigma; I; \Phi \vdash_{\maybe{\var{A}}, \kw{U}} e_1~\opi~e_2 : \iota$}
        By inversion of \rref{T-BopI}, both $\Sigma; I; \Phi \vdash_{\maybe{\var{A}}, \kw{U}} e_1 : \iota_1$ and $\Sigma; I; \Phi \vdash_{\maybe{\var{A}}, \kw{U}} e_2 : \iota_2$ hold,
        and moreover $\Sigma; I; \Phi \vDash_{\maybe{A}} \inrange{\iota}{e_1~\opi~e_2}$.
        With our induction hypothesis we get there are values $v_1,v_2$ such that $\psem{s}{\rho}{e_1} v_1$, $\psem{s}{\rho}{e_2} v_2$,
        $\vdash v_1 : \iota_1$, and $\vdash v_2 : \iota_2$. Hence, by inversion of \rref{V-Int} it follows that $v_1,v_2 \in \mathbb{Z}$.

        If $v_2 \neq 0$ or $\opi~\notin\{\kw{div},\kw{mod}\}$, then \rref{E-Div}, \rref{E-Mod}, respectively \rref{E-BopI} applies
        and we conclude  $\psem{s}{\rho}{e_1~\opi~e_2} v_1~\opi~v_2$. Moreover, since $v_1,v_2 \in \mathbb{Z}$ 
        and from the entailment $\Sigma; I; \Phi \vDash_{\maybe{A}} \inrange{\iota}{e_1~\opi~e_2}$
        together with~\ref{typsf:uexp:phi} we get
        $(\meta{min}(\iota) \leq v_1~\opi~v_2 \leq \meta{max}(\iota)) ~\vee~ \iota = \intt{}$.
        Applying \rref{V-Int} yields $\vdash v_1~\opi~v_2 : \iota$.

        Otherwise, if $v_2 = 0$ and $\opi~\in\{\kw{div},\kw{mod}\}$, either \rref{E-DivZero} or \rref{E-ModZero} apply and
        allow us to infer  $\psem{s}{\rho}{e_1~\opi~e_2} 0$. Since $0 \in \mathbb{Z}$ and $0 \in [\meta{min}(\iota), \meta{max}(\iota)]$
        for all bounded integer types, by \rref{V-Int} we get $\vdash 0 : \iota$.

        \case{$\Sigma; I; \Phi \vdash_{\maybe{\var{A}}, \kw{U}} e : \iota_2$ (\rref{T-NumConv})}
        By inversion of \rref{T-NumConv}, we have $\Sigma; I; \Phi \vdash_{\maybe{\var{A}}, \kw{U}} e : \iota_1$
        and $\Sigma; I; \Phi \vDash_{\maybe{A}} \inrange{\iota_2}{e}$.
        By the induction hypothesis there exists $v$ such that $\psem{s}{\rho}{e} v$ and $\vdash v : \iota_1$.
        By inversion of \rref{V-Int} it follows that $v \in \mathbb{Z}$.
        From the semantic entailment together with~\ref{typsf:uexp:phi} we obtain
        $(\meta{min}(\iota_2) \leq v \leq \meta{max}(\iota_2)) ~\vee~ \iota_2 = \intt{}$.
        Applying \rref{V-Int} yields $\vdash v : \iota_2$.

        \case{$\Sigma; I; \Phi \vdash_{\maybe{\var{A}}, \kw{U}} e_1~\opb~e_2 : \kw{bool}$}
        Inverting \rref{T-BopB} yields $\Sigma; I; \Phi \vdash_{\maybe{\var{A}}, \kw{U}} e_1 : \kw{bool}$ and $\Sigma; I; \Phi \vdash_{\maybe{\var{A}}, \kw{U}} e_2 : \kw{bool}$.
        We then apply the induction hypothesis to infer the existence of $v_1, v_2$ such that  $\psem{s}{\rho}{e_1} v_1$, $\quad\psem{s}{\rho}{e_2} v_2$,
        $\quad \vdash v_1 : \kw{bool}$ and  $\vdash v_2 : \kw{bool}$. From \rref{V-Bool} we know that therefore $v_1,v_2 \in \mathbb{B}$.
        Hence, we can use \rref{E-BopB} to conclude  $\psem{s}{\rho}{e_1~\opb~e_2} v_1~\opb~v_2$. Further, since $ v_1~\opb~v_2 \in \mathbb{B}$
        we use \rref{V-Bool} to reach  $\vdash v_1~\opb~v_2 : \kw{bool}$.

        \case{$\Sigma; I; \Phi \vdash_{\maybe{\var{A}}, \kw{U}} \neg e : \kw{bool}$}
        We invert \rref{T-Neg} and infer $\Sigma; I; \Phi \vdash_{\maybe{\var{A}}, \kw{U}} e : \kw{bool}$. Then, we apply the induction
        hypothesis to reason that there exists a value $v$ such that $\psem{s}{\rho}{e} v$ and that $\vdash v : \kw{bool}$.
        Inverting \rref{V-Bool}, shows $v \in \mathbb{B}$. Using \rref{E-Neg}, we know $\psem{s}{\rho}{\neg e} \neg v$.
        Since  $v \in \mathbb{B}$, also $\neg v \in \mathbb{B}$, thus we can finally
        apply \rref{V-Bool} to conclude  $\vdash \neg v : \kw{bool}$.

        \case{$\Sigma; I; \Phi \vdash_{\maybe{\var{A}}, \kw{U}} e_1~\cmp~e_2 : \kw{bool}$}
        Inverting \rref{T-Cmp} yields $\Sigma; I; \Phi \vdash_{\maybe{\var{A}}, \kw{U}} e_1 : \iota$ and $\Sigma; I; \Phi \vdash_{\maybe{\var{A}}, \kw{U}} e_2 : \iota$.
        We then apply the induction hypothesis to infer the existence of $v_1, v_2$ such that  $\psem{s}{\rho}{e_1} v_1$, $\quad\psem{s}{\rho}{e_2} v_2$,
        $\quad \vdash v_1 : \iota$ and  $\vdash v_2 : \iota$. From \rref{V-Int} we know that therefore $v_1,v_2 \in \mathbb{Z}$.
        Hence, we can use \rref{E-Cmp} to conclude  $\psem{s}{\rho}{e_1~\cmp~e_2} v_1~\cmp~v_2$. Further, since $ v_1~\cmp~v_2 \in \mathbb{B}$
        we use \rref{V-Bool} to reach  $\vdash v_1~\cmp~v_2 : \kw{bool}$.

        \case{$\Sigma; I; \Phi \vdash_{\maybe{\var{A}}, \kw{U}} \ite{e_1}{e_2}{e_3} : \beta$}
        Inversion of \rref{T-ITE} yields $\Sigma; I; \Phi \vdash_{\maybe{\var{A}}, \kw{U}} e_1 : \kw{bool}$ and $\Sigma; I; \Phi \vdash_{\maybe{\var{A}}, \kw{U}} e_2 : \beta$
        and $\Sigma; I; \Phi \vdash_{\maybe{\var{A}}, \kw{U}} e_3 : \beta$. We then use the induction hypothesis to infer that there are $v_1, v_2, v_3$
        such that  $\psem{s}{\rho}{e_1} v_1$ and $\;\psem{s}{\rho}{e_2} v_2$ and $\;\psem{s}{\rho}{e_3} v_3$ and
        $\;\vdash v_1 : \kw{bool}$ and $\;\vdash v_2 : \beta$ and $\vdash v_3 : \beta$.

        From inverting \rref{V-Bool} we know $v_1 \in \mathbb{B}$.
        If $v_1 = \meta{True}$, then \rref{E-ITETrue} applies and yields
        $\psem{s}{\rho}{\ite{e_1}{e_2}{e_3}} v_2$.
        We know $\vdash v_2 : \beta$ already from the hypothesis.
        Similar for $v_1=\meta{False}$ and \rref{E-ITEFalse}.

        \case{$\Sigma; I; \Phi \vdash_{\maybe{\var{A}}, \kw{U}} e_1~=~e_2 : \kw{bool}$}
        We invert \rref{T-Eq} and get $\Sigma; I; \Phi \vdash_{\maybe{\var{A}}, \kw{U}} e_1 : \beta$ and
        $\Sigma; I; \Phi \vdash_{\maybe{\var{A}}, \kw{U}} e_2 : \beta$ From the induction hypothesis follows that
        there are $v_1, v_2$ such that $\psem{s}{\rho}{e_1} v_1$ and $\;\psem{s}{\rho}{e_2} v_2$ and $\;\vdash v_1 : \beta$
        and $\vdash v_2 : \beta$. Inverting \rref{V-Int},\rref{V-Bool} or \rref{V-Addr} (depending on $\beta$),
        we know that $v_1$ and $v_2$ are comparable. If $v_1 = v_2$, then the evaluation rule \rref{E-EqTrue} applies and yields
        $\psem{s}{\rho}{e_1=e_2} \meta{True}$. Since $\meta{True} \in \mathbb{B}$, with \rref{V-Bool},
        we conclude  $\vdash \meta{True} : \kw{bool}$. Similar for $v_1 \neq v_2$ and \rref{E-EqFalse}.

    \end{proofcases}

    \end{proof}

\begin{lemma}[References Type Safety (Timed)]
\label{lem:ref-typesafety-timed}
    Assume that
    \begin{enumerate}[start=1,label={(\bfseries H\arabic*)}]
        \item \label{typsf:tref:wt} $\Sigma; I \vdash_{\maybe{A}, \kw{T}}^k \var{ref} : \sigma$
        \item \label{typsf:tref:env} $\Sigma \vdash \rho :_{\spre} I$ and $\Sigma \vdash \rho :_{\spost} I$
        \item \label{typsf:tref:loc} $\Sigma \vdash \ell :_{\spre} \maybe{A}$ and $\Sigma \vdash \ell :_{\spost} \maybe{A}$
    \end{enumerate}
    Then there exists $v$ such that $\psem{(\spre,\spost)^{\kw{T}}}{\rho}{\var{ref}} (v, t_p)$ and
    $t_p = \kw{pre} \Rightarrow \Sigma \vdash v :_{\spre} \sigma$ and
    $t_p = \kw{post} \Rightarrow \Sigma \vdash v :_{\spost} \sigma$.
\end{lemma}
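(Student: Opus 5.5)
We will prove the timed version the same way as \Cref{lem:ref-typesafety-untimed}: by induction on the typing derivation~\ref{typsf:tref:wt}, mutually with a timed analogue of \Cref{lem:expr-typesafety-untimed}. The \rref{T-MapIndex} case needs the index expression $e$, typed at tag $\kw{T}$, to evaluate to a value of type $\beta$ in the timed state. Since base values are typed independently of the state ($\vdash v : \beta$), that companion statement only needs to deliver $\vdash v : \beta$. We will prove it exactly as in the untimed case, replacing the reference case by the present lemma. The invariant we carry is the one in the statement: if the returned tag is $\kw{pre}$ the value is typed in $\spre$, and if it is $\kw{post}$ it is typed in $\spost$.

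The cases of \ref{typsf:tref:wt} are as follows. \rref{T-Environment} only yields untimed judgements, so it does not occur. \rref{T-Storage} is likewise excluded, since timed storage access goes through $\pre{x}$ and $\post{x}$.

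\emph{\rref{T-StoragePre} and \rref{T-StoragePost}.} We invert the $\spre$ (respectively $\spost$) half of~\ref{typsf:tref:loc} through \rref{V-Contract} and \rref{V-AddrIsContract}. This gives $x \in \dom{\spre(\ell)}$ together with $\Sigma \vdash \spre(\ell)(x) :_{\spre} \sigma$. From~\ref{typsf:tref:env} and $x \notin \dom{I}$ we obtain $x \notin \dom{\rho}$. We then apply \rref{E-StoragePre} (respectively \rref{E-StoragePost}).

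\emph{\rref{T-Calldata}.} We apply \rref{E-CalldataTimed}, which yields tag $\kw{pre}$. Typing of $\rho(x)$ in $\spre$ follows from the first half of~\ref{typsf:tref:env}.

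\emph{\rref{T-Coerce} and \rref{T-Upcast}.} These follow directly from the induction hypothesis, using \rref{E-Coerce} for the former. For \rref{T-Upcast} we additionally extract $v \in \meta{Addr}$ from \rref{V-AddrIsContract} in whichever state the tag indicates, and then apply \rref{V-Addr}.

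\emph{\rref{T-Field}.} The induction hypothesis gives $(\ell', t_p)$ with $\ell'$ typed at $B$ in the state selected by $t_p$. We split on $t_p \in \{\kw{pre},\kw{post}\}$. In each case, inverting \rref{V-Contract} and \rref{V-AddrIsContract} in that state gives $x \in \dom{s_{t_p}(\ell')}$ and the typing of the field in the same state. We then apply \rref{E-FieldPre} or \rref{E-FieldPost} accordingly.

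The main obstacle is the tag bookkeeping. We must show that the induction hypothesis never yields $t_p = \kw{U}$ in the timed setting, because \rref{E-Field} has no timed $\kw{U}$ variant. To handle this, we strengthen the statement to include $t_p \in \{\kw{pre},\kw{post}\}$. This holds in every base case: environments are excluded, calldata uses \rref{E-CalldataTimed}, and storage uses pre/post. The remaining rules propagate $t_p$ unchanged. For \rref{T-MapIndex}, the induction hypothesis on the mapping reference gives $v_r$ with $\vdash v_r : \mapping{\beta}{\mu}$ (state-independent) and tag $t_p$. The companion expression lemma gives $\vdash v_e : \beta$, so $v_e \in \meta{meta}(\beta)$. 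Inverting \rref{V-Mapping} gives $\vdash v_r(v_e) : \mu$, and rule \rref{V-MappingVal} lifts this to typing in either state. We finish with \rref{E-RefMapping}.
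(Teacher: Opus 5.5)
Your proposal is correct and follows the paper's proof. Both argue by mutual induction on the typing derivation together with the timed expression lemma, use \rref{E-CalldataTimed} with the $\spre$ half of the environment hypothesis, and split on the returned tag in the \rref{T-Field} and \rref{T-MapIndex} cases. Your explicit strengthening $t_p \in \{\kw{pre},\kw{post}\}$ makes rigorous a step the paper's case split silently assumes: if the induction hypothesis could return $\kw{U}$, the \rref{T-Field} case and the \rref{T-Ref} case of the companion lemma would be stuck.
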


    \begin{lemma}[Expression Type Safety (Timed)]
        \label{lem:expr-typesafety-timed}
    Assume that
    \begin{enumerate}[start=1,label={(\bfseries H\arabic*)}]
        \item \label{typsf:texp:wt} $\Sigma; I; \Phi \vdash_{\maybe{A}, \kw{T}} e : \beta$
        \item \label{typsf:texp:env} $\Sigma \vdash \rho :_{\spre} I$ and $\Sigma \vdash \rho :_{\spost} I$
        \item \label{typsf:texp:loc} $\Sigma \vdash \ell :_{\spre} \maybe{A}$ and $\Sigma \vdash \ell :_{\spost} \maybe{A}$
    \end{enumerate}
    Then there exists $v$ such that $\psem{(\spre,\spost)^{\kw{T}}}{\rho}{e} v$ and $\vdash v : \beta$.
    \end{lemma}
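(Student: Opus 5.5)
We prove \Cref{lem:ref-typesafety-timed} and \Cref{lem:expr-typesafety-timed} together, by mutual induction on the typing derivations \ref{typsf:tref:wt} and \ref{typsf:texp:wt}. The structure follows the proof of \Cref{lem:ref-typesafety-untimed,lem:expr-typesafety-untimed}. The difference is that a timed reference returns a timing tag $t_p \in \{\kw{pre},\kw{post}\}$, and the value must be typed in the matching state. No timed typing rule produces the tag $\kw{U}$: \rref{T-Storage} and \rref{T-Environment} only apply at tag $\kw{U}$. We therefore strengthen the reference statement and show that $t_p \in \{\kw{pre},\kw{post}\}$ always holds. One case needs special care: \rref{T-Calldata} at tag $\kw{T}$ uses \rref{E-CalldataTimed}, which yields $(\rho(x),\kw{pre})$. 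Its value typing in $\spre$ then follows from the first half of \ref{typsf:tref:env}.

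\textbf{Reference cases.} For \rref{T-StoragePre}, we invert the first half of \ref{typsf:tref:loc} through \rref{V-Contract} and \rref{V-AddrIsContract}. This gives $x \in \dom{\spre(\ell)}$ and $\Sigma \vdash \spre(\ell)(x) :_{\spre} \sigma$. From $x \notin \dom{I}$ and \ref{typsf:tref:env} we get $x \notin \dom{\rho}$, so \rref{E-StoragePre} applies with tag $\kw{pre}$. \rref{T-StoragePost} is symmetric, using the $\spost$ halves and \rref{E-StoragePost}.

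\rref{T-Coerce} and \rref{T-Upcast} follow by the induction hypothesis and preserve the tag. \rref{T-Upcast} additionally uses that $\ell \in \dom{s_{t_p}}$ implies $\vdash \ell : \kw{address}$.

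For \rref{T-Field}, the induction hypothesis gives $(\ell',t_p)$ with $\Sigma \vdash \ell' :_{s_{t_p}} B$. We split on $t_p$. Inverting \rref{V-Contract} and \rref{V-AddrIsContract} in $s_{t_p}$ gives $x \in \dom{s_{t_p}(\ell')}$ and the field typing in $s_{t_p}$. Then \rref{E-FieldPre} or \rref{E-FieldPost} applies, and it yields exactly the tag and value we need.

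For \rref{T-MapIndex}, the index $e$ is typed at tag $\kw{T}$, so the expression induction hypothesis gives $\vdash v_e : \beta$, hence $v_e \in \meta{meta}(\beta)$. The reference induction hypothesis gives $v_r$, typed as a mapping in $s_{t_p}$. We invert \rref{V-MappingVal} and \rref{V-Mapping} and conclude with \rref{E-RefMapping}, keeping $t_p$.

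\textbf{Expression cases.} These are mostly identical to the untimed proof. In the \rref{T-Ref} case we obtain $\Sigma \vdash v :_{s_{t_p}} \beta$ for whichever $t_p$ we get. Inverting \rref{V-BaseValAlpha} then gives the state-independent $\vdash v : \beta$, which is all the expression statement asks for. \rref{T-Addr} only types at tag $\kw{U}$, so it does not occur here.

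\textbf{Main obstacle.} The delicate cases are \rref{T-BopI} and \rref{T-NumConv}. Their side conditions are semantic entailments $\Sigma; I; \Phi \vDash \inrange{\iota}{\cdot}$, which quantify over single (untimed) states. There is also no hypothesis $\psem{s}{\rho}{\Phi}\meta{True}$ in the timed lemma. For \rref{T-BopI} the rule forces $\iota = \kw{int}$ when $t = \kw{T}$. Then \rref{V-Int} holds trivially for any integer result, including the $0$ produced by \rref{E-DivZero} and \rref{E-ModZero}.

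For \rref{T-NumConv} with $\iota_2 \neq \kw{int}$, we would try to exploit that a well-typed timed expression never contains bare untimed storage references. The plan is to relate its evaluation in $(\spre,\spost)^{\kw{T}}$ to an untimed evaluation to which the entailment applies. If this fails, the fallback is to note that the statement implicitly requires $\Phi = \kw{true}$, as in \rref{T-Trans}'s uses of timed typing. Under that reading the entailment is an unconditional validity, and we would instantiate it with an untimed state assembled from $\spre$ and $\spost$. We expect this bridging step to be the only genuinely non-routine part of the proof.
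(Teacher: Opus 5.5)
Outside \rref{T-NumConv} your argument is the paper's. It uses mutual induction modelled on the untimed proof, \rref{E-CalldataTimed} with the first half of (H2), \rref{E-StoragePre}/\rref{E-StoragePost}, and a split on the returned tag for \rref{T-Field} and \rref{T-MapIndex}. Making the invariant $t_p \in \{\kw{pre},\kw{post}\}$ explicit is a genuine improvement: the paper uses it silently, and the statement as written says nothing when $t_p = \kw{U}$.

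The gap is \rref{T-NumConv} at tag $\kw{T}$ with $\iota_2 \neq \intt{}$, which you only sketch. The paper disposes of this case by remarking that timed expressions never check bounds semantically. That is enforced only for \rref{T-BopI}, via $t = \kw{T} \implies \iota = \intt{}$; \rref{T-NumConv} has no such side condition. Without a hypothesis on $\Phi$ the case is in fact unprovable. Take $\Phi = \kw{false}$, $\maybe{A} = \bot$, $I$ empty, $\spre = \spost = \emptyset$, and $\rho$ mapping $\kw{caller}$, $\kw{origin}$, $\kw{callvalue}$ to $0$. Then (H2) and (H3) hold by \rref{V-Env} and \rref{V-None}. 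The entailment $\Sigma; I; \kw{false} \vDash_{\bot} \inrange{\uint{8}}{1000}$ holds vacuously. So \rref{T-Int} at $\intt{}$ followed by \rref{T-NumConv} gives $\Sigma; I; \kw{false} \vdash_{\bot,\kw{T}} 1000 : \uint{8}$. Yet \rref{E-Int} is the only applicable rule, giving $v = 1000$, and $\vdash 1000 : \uint{8}$ fails because $\meta{max}(\uint{8}) = 255$.

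You therefore need an extra assumption, and there are two options.
\begin{itemize}
\item Assume $\psem{\spre^{\kw{U}}}{\rho}{\Phi}\meta{True}$. This is harmless, since \rref{T-Trans} types postconditions under $\Phi = \kw{true}$ and returns with no constraint.
\item Restrict \rref{T-NumConv} at tag $\kw{T}$ to $\iota_2 = \intt{}$. This is the reading the paper's remark presupposes, and it makes the case as trivial as \rref{T-BopI}.
\end{itemize}

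Your fallback, instantiating the entailment at an untimed state ``assembled'' from $\spre$ and $\spost$, cannot work. No untimed state evaluates $\pre{x}$ or $\post{x}$ at all, because \rref{E-StoragePre} and \rref{E-StoragePost} fire only on timed pairs. Moreover, $\spre$ and $\spost$ may disagree on the same field.

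Your first route is the right one once $\Phi$ is assumed to hold:
\begin{enumerate}
\item Instantiate \rref{ValidExps} at $\spre$, $\rho$, $\ell$, using the first halves of (H2) and (H3).
\item Invert \rref{E-RangeTrue} to get $\psem{\spre^{\kw{U}}}{\rho}{e} v'$ with $v'$ within the bounds of $\iota_2$.
\item Prove an agreement lemma: for $\kw{T}$-typed expressions and references, an untimed evaluation in $\spre$ gives the same value in $(\spre,\spost)^{\kw{T}}$, with reference tag $\kw{U}$ turning into $\kw{pre}$. The matching rules are \rref{E-Calldata} versus \rref{E-CalldataTimed}, and \rref{E-Field} versus \rref{E-FieldPre}. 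The rules \rref{E-Storage}, \rref{E-Environment} and \rref{E-Addr} never arise: their typing counterparts are untimed-only, and $\kw{T}$-typed variables are calldata, hence in $\dom{\rho}$.
\item Apply \Cref{lem:exp-determinism} to get $v = v'$, and thus $\vdash v : \iota_2$.
\end{enumerate}
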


    \begin{proof}[Proof of~\Cref{lem:ref-typesafety-timed,lem:expr-typesafety-timed}]
    By mutual induction on the typing derivation~\ref{typsf:tref:wt}.
    \begin{proofcases}
        \case{$\Sigma; I \vdash^{\kw{S}}_{\var{A}, \kw{T}} \pre{x} : \sigma$}
        The typing derivation ends with the~\rref{T-StoragePre} rule, which gives us $(x : \sigma) \in \Sigmastore(A)$ and $x \notin \dom{I}$.
        From inversion of derivation of (H3) for $\spre$ in~\Cref{lem:ref-typesafety-timed} via the~\rref{V-Contract} rule we
        then deduce that $x \in \dom{\spre(\ell)}$ and from (H2) we get that $x \notin \dom{\rho}$ so the~\rref{E-StoragePre} rule can be applied to get the candidate
        $v = \spre(\ell)(x)$ with tag $\kw{pre}$. We need to derive $\Sigma \vdash \spre(\ell)(x) :_{\spre} \sigma$. Again using the inversion on the first
        judgement of (H3) of~\Cref{lem:ref-typesafety-timed}, the premises of the derivation give us that since
         $(x : \sigma) \in \Sigmastore(\var{A})$, we also have $\Sigma \vdash \spre(\ell)(x) :_{\spre} \sigma$, which concludes the case.

        \case{$\Sigma; I \vdash^{\kw{S}}_{\var{A}, \kw{T}} \post{x} : \sigma$}
        The typing derivation ends with the~\rref{T-StoragePost} rule, which gives us $(x : \sigma) \in \Sigmastore(A)$ and $x \notin \dom{I}$.
        From inversion of derivation of (H3) for $\spost$ in~\Cref{lem:ref-typesafety-timed} via the~\rref{V-Contract}
        rule we then deduce that $x \in \dom{\spost(\ell)}$, and from (H2) we get that $x \notin \dom{\rho}$, so the~\rref{E-StoragePost} rule can be applied to get the
        candidate $v = \spost(\ell)(x)$ with tag $\kw{post}$. We need to derive $\Sigma \vdash \spost(\ell)(x) :_{\spost} \sigma$. Again using the inversion on
        the second judgement of (H3) of~\Cref{lem:ref-typesafety-timed}, the premises of the derivation give us that since
         $(x : \sigma) \in \Sigmastore(\var{A})$, we also have $\Sigma \vdash \spost(\ell)(x) :_{\spost} \sigma$, which concludes the case.

        \case{$\Sigma; I \vdash^{\kw{N}}_{\maybe{\var{A}}, \kw{T}} x : \alpha$}
        The proof proceeds similarly to the untimed case in the proof of~\Cref{lem:ref-typesafety-untimed}, taking the first judgement in (H2) and applying the~\rref{E-CalldataTimed} rule.

        \case{$\Sigma; I \vdash^{k}_{\maybe{\var{A}}, \kw{T}} \var{ref}.x : \sigma$}
        The proof proceeds similarly to the untimed case in the proof of~\Cref{lem:ref-typesafety-untimed}, but depending on the result of the induction call giving us $\psem{(\spre, \spost)^{\kw{T}}}{\rho}{\var{ref}} (\ell', t_p)$ we proceed with~\rref{E-FieldPre} and the state $\spre$ if $t_p = \kw{pre}$; or with~\rref{E-FieldPost} and the state $\spost$ if $tp=\kw{post}$.

        \case{$\Sigma; I \vdash^{k}_{\maybe{\var{A}}, \kw{T}} \var{ref}[e] : \mu$}
        The proof proceeds similarly to the untimed case in the proof of~\Cref{lem:ref-typesafety-untimed}, but depending on the timing of the induction call on the premise, invoking either the first premise of (H3) for the $\kw{pre}$-case or the second premise of (H3) for the $\kw{post}$-case.

        \case{Else:}
        All other cases proceed in the exact same way as in the proof of~\Cref{lem:ref-typesafety-untimed,lem:expr-typesafety-untimed}.
        Not that timed expressions use only pure mathematical integer and will never semantically check the bounds of operations.
        Hence, the assumption that $\Phi$ holds is not necessary in these cases.
    \end{proofcases}
    \end{proof}

    \begin{lemma}[Mapping Expression Type Safety]
        \label{lem:mapexpr-typesafety}
            Assume that
            \begin{itemize}
                \item[(H1)] $\Sigma; I; \Phi \vdash_{\maybe{A}} m : \mu$
                \item[(H2)] $\Sigma \vdash \rho :_s I$
                \item[(H3)] $\Sigma \vdash \ell :_s \maybe{A}$
                \item[(H4)] $\psem{s}{\rho}{\Phi} \meta{True}$
            \end{itemize}
            Then there exists $v$ such that $\psem{s}{\rho}{m} v$ and $\vdash v : \mu$.
    \end{lemma}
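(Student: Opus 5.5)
The plan is to proceed by structural induction on the typing derivation (H1) of $\Sigma; I; \Phi \vdash_{\maybe{A}} m : \mu$. This amounts to induction on the structure of $m$, since each syntactic form of mapping expression is typed by exactly one of the rules \rref{T-Exp}, \rref{T-Mapping} and \rref{T-MappingUpd}. Hypotheses (H2)--(H4) are fixed throughout, because mapping expressions do not change the state. The base case will be discharged by \Cref{lem:expr-typesafety-untimed}, and the reference in \rref{T-MappingUpd} by \Cref{lem:ref-typesafety-untimed}.

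\textbf{Case \rref{T-Exp}.} Here $m = e$ with $\Sigma; I; \Phi \vdash_{\maybe{A},\kw{U}} e : \beta$. \Cref{lem:expr-typesafety-untimed} with (H2)--(H4) yields $v$ with $\psem{s}{\rho}{e} v$ and $\vdash v : \beta$. I then conclude with \rref{E-Exp} and \rref{V-BaseValMu}.

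\textbf{Case \rref{T-Mapping}.} For each $i$, \Cref{lem:expr-typesafety-untimed} gives $v_i$ with $\psem{s}{\rho}{e_i} v_i$ and $\vdash v_i : \beta$. Inverting the base value typing rules gives $v_i \in \meta{meta}(\beta)$. The induction hypothesis gives $u_i$ with $\psem{s}{\rho}{m_i} u_i$ and $\vdash u_i : \mu$. I take $f$ as in \rref{E-Mapping} and establish \rref{V-Mapping}. First, $f \in \meta{meta}(\beta) \to \meta{Value}$: each branch is a value, by \Cref{lem:value} for the $u_i$ and by \Cref{lem:defaultmu} followed by \Cref{lem:value} for $\meta{default}(\mu)$. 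Second, for every $v$ with $\vdash v : \beta$, the value $f(v)$ is either some $u_i$ or $\meta{default}(\mu)$, and both are typed $\mu$ (the latter by \Cref{lem:defaultmu}). Keys that coincide cause no problem, because whichever $u_i$ the ``if'' selects is well typed.

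\textbf{Case \rref{T-MappingUpd}.} This case is the same except that the default is replaced by $f(x)$ for the old map $f$. I obtain $f$ from the reference: \Cref{lem:ref-typesafety-untimed} applied to the premise typing $\var{ref}$ at $\mapping{\beta}{\mu}$ gives $\psem{s}{\rho}{\var{ref}} (f,\kw{U})$ with $\Sigma \vdash f :_s \mapping{\beta}{\mu}$. Then \rref{E-Ref}, or equivalently the reference evaluation used in \rref{E-MappingUpd}, gives $\psem{s}{\rho}{\var{ref}} f$. Inverting \rref{V-MappingVal} gives $\vdash f : \mapping{\beta}{\mu}$, and inverting \rref{V-Mapping} gives $f \in \meta{meta}(\beta)\to\meta{Value}$ and $\vdash f(x):\mu$ for every well-typed $x$. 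The typing of $g$ then follows exactly as in the previous case.

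The main obstacle I expect is notational rather than mathematical. The premise of \rref{T-MappingUpd} is written as an expression judgment at mapping type with timing $\kw{U}$, whereas \Cref{lem:ref-typesafety-untimed} speaks about the reference judgment $\vdash^k$. I would read that premise as $\Sigma; I \vdash^k_{\maybe{A},\kw{U}} \var{ref} : \mapping{\beta}{\mu}$ for some $k$. A secondary concern is the mismatch between $\vdash v:\beta$ and $v\in\meta{meta}(\beta)$. Inverting \rref{V-Int}, \rref{V-Bool} and \rref{V-Addr} settles it, since $\meta{meta}$ is defined by exactly that case split on $\beta$.
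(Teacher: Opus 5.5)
Your proposal is correct and follows essentially the same route as the paper. Both argue by induction on (H1), use \Cref{lem:expr-typesafety-untimed} for the base case and the keys, use the induction hypothesis for the $m_i$, and discharge \rref{V-Mapping} via \Cref{lem:value} and \Cref{lem:defaultmu}. The one difference is in the \rref{T-MappingUpd} case: you obtain the old map from \Cref{lem:ref-typesafety-untimed} plus inversion of \rref{V-MappingVal}, where the paper loosely cites ``the induction hypothesis''. Your justification is the more accurate one, since the reference premise is not a mapping-expression subderivation.
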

    \begin{proof}
        By induction on the typing derivation (H1).
        \begin{proofcases}
            \case{$\Sigma; I; \Phi \vdash_{\maybe{A}} e : \beta$}
            We invert \rref{T-Exp} to get $\Sigma; I; \Phi \vdash_{\maybe{A},\kw{U}} e : \beta$. From
            \Cref{lem:expr-typesafety-untimed} follows that there is a $v$ such that $\psem{s}{\rho}{e} v$
            and $\vdash v : \beta$.

            \case{$\Sigma; I; \Phi \vdash_{\maybe{A}} [ \{ e_i \Rightarrow m_i\}_{i \in [1,n]}]_{\annot{\mapping{\beta}{\mu}}}  : \mapping{\beta}{\mu}$}
            From the first premise of \rref{T-Mapping} we obtain for all $i \in [1,n]$ that $\Sigma; I; \Phi \vdash_{\maybe{A},\kw{U}} e_i : \beta$.
            We can apply \Cref{lem:expr-typesafety-untimed}, to get $\psem{s}{\rho}{e_i} v_i$.
            From the second premise of \rref{T-Mapping} and the induction hypothesis we know  $\psem{s}{\rho}{m_i} u_i$ and $\vdash u_i : \mu$.
            By defining
            \[f = (x \in \meta{meta}(\beta) \mapsto \text{ if } v_i \in \meta{meta}(\beta) \land v_i = x \text{ then } u_i \text{ else } \meta{default}(\mu))\;,\]
            we can apply \rref{E-Mapping} to conclude $\psem{s}{\rho}{ [ \{ e_i \Rightarrow m_i\}_{i \in [1,n]}]} f$.

            We still have to show that $\vdash f : \mapping{\beta}{\mu}$.
            To do so, we need to show that $f \in \meta{meta}(\beta)\mapsto\meta{Value}$ and
            $\forall v.\; \vdash v : \beta \Rightarrow  \; \vdash f(v) : \mu$.
            The domain of $f$ is $\meta{meta}(\beta)$
            by construction. Let now $v \in \meta{meta}(\beta)$ (equivalent to $\vdash v : \beta$). From \Cref{lem:expr-typesafety-untimed}
            and inversion of \rref{V-BaseValAlpha}, we know that
            $ \vdash v_i : \beta$ and with \rref{V-Addr}, \rref{V-Int} and \rref{V-Bool} follows $v_i \in \meta{meta}(\beta)$.
            Hence, if $v = v_i$ then $f(v) = u_i$. From above, we know $ \vdash u_i : \mu$ and
            with \Cref{lem:value} we obtain $f(v) \in \meta{Value}$. Otherwise, if $v \neq v_i$ for all $i$, then $f(v) = \meta{default}(\mu)$.
            From \Cref{lem:defaultmu}, we get $\vdash f(v) : \mu$ and from \Cref{lem:value} follows $f(v) \in \meta{Value}$.
            Finally, we can apply \rref{V-Mapping} to conclude $\vdash f: \mapping{\beta}{\mu}$.

            \case{$\Sigma; I; \Phi \vdash_{\maybe{A}} \var{ref}\map{\range{e_i \Rightarrow \var{m}_i}{i \in [1,n]}}{\annot{\mapping{\beta}{\mu}}}  : \mapping{\beta}{\mu}$}
            From the first premise of \rref{T-MappingUpd} we obtain $\Sigma; I; \Phi \vdash_{\maybe{A},\kw{U}} \var{ref} : \mapping{\beta}{\mu}$.
            From the induction hypothesis we know $\psem{s}{\rho}{\var{ref}} f$ and $\vdash f : \mapping{\beta}{\mu}$.
            From the second premise of \rref{T-MappingUpd} we obtain for all $i \in [1,n]$ that $\Sigma; I; \Phi \vdash_{\maybe{A},\kw{U}} e_i : \beta$.
            We can apply \Cref{lem:expr-typesafety-untimed}, to get $\psem{s}{\rho}{e_i} v_i$   and $\vdash v_i : \beta$.
            From the third premise of \rref{T-MappingUpd} and the induction hypothesis we know $\psem{s}{\rho}{m_i} u_i$ and $\vdash u_i : \mu$.
            By defining
            \[g = (x \in \meta{meta}(\beta) \mapsto \text{ if } v_i = x \text{ then } u_i \text{ else } f(x))\;,\]
            and (depending on $\beta$) inverting \rref{V-Addr}, \rref{V-Bool} or \rref{V-Int} to obtain $v_i \in \meta{meta}(\beta)$, we can apply \rref{E-MappingUpd} to conclude $\psem{s}{\rho}{\var{ref}\map{\range{e_i \Rightarrow \var{m}_i}{i \in [1,n]}}{\annot{\mapping{\beta}{\mu}}}} g$.

            We still have to show that $\vdash g : \mapping{\beta}{\mu}$.
            To do so, we need to show that $g \in \meta{meta}(\beta)\mapsto\meta{Value}$ and
            $\forall v.\; \vdash v : \beta \Rightarrow  \; \vdash g(v) : \mu$.
            The domain of $g$ is $\meta{meta}(\beta)$ by construction. Let now $v \in \meta{meta}(\beta)$ (equivalent to $\vdash v: \beta $), then $g(v)$ 
            is either one of the $u_i$, for which we know  $ \vdash u_i : \mu$ and
            with \Cref{lem:value} we also obtain $u_i \in \meta{Value}$; or $g(v)=f(v)$ for which we also know by 
            inversion of \rref{V-Mapping} that 
            $ \vdash f(v) : \mu$ and hence  $f(v) \in \meta{Value}$. Finally, we can apply \rref{V-Mapping} to conclude $\vdash g: \mapping{\beta}{\mu}$.

            % From \Cref{lem:expr-typesafety-untimed} and inversion of \rref{V-BaseValAlpha}, we know that
            % $ \vdash v_i : \beta$ and with \rref{V-Addr}, \rref{V-Int} and \rref{V-Bool} follows $v_i \in \meta{meta}(\beta)$.
            % Hence, if $v = v_i$ then $g(v) = u_i$. From above, we know $ \vdash u_i : \mu$ and
            % with \Cref{lem:value} we obtain $g(v) \in \meta{Value}$. Otherwise, if $v \neq v_i$ for all $i$, then $g(v) = f'(v)$.
            % Since $\vdash f' : \mapping{\beta}{\mu}$, by inversion of \rref{V-Mapping} we get $\vdash f'(v) : \mu$ and $f'(v) \in \meta{Value}$.
            % Finally, we can apply \rref{V-Mapping} to conclude $\vdash g: \mapping{\beta}{\mu}$.
        \end{proofcases}
    \end{proof}

    \begin{lemma}[Slot Expression Type Safety]\label{lem:se-typesafety}
Assume that
    \begin{enumerate}[start=1,label={(\bfseries H\arabic*)}]
        \item\label{typsf:se:wt} $\Sigma; I; \Phi \vdash_{\maybe{A}} \var{se} : \sigma$
        \item\label{typsf:se:env} $\Sigma \vdash \rho :_s I$
        \item\label{typsf:se:loc} $\Sigma \vdash \ell :_s \maybe{A}$
        \item\label{typsf:se:pre} $\psem{s}{\rho}{\Phi} \meta{True}$
    \end{enumerate}
Then there exist $v$ and $s'$ such that
    \begin{itemize}
        \item $\psem{s}{\rho}{\var{se}} (v, s')$,
        \item $\Sigma \vdash v :_{s'} \sigma$,
        \item $\Sigma \vdash \rho :_{s'} I$,
        \item $\Sigma \vdash \ell :_{s'} \maybe{A}$,
        \item $s \subseteq s'$,
        \item $\forall \ell' \in \dom{s'} \setminus \dom{s}.~\exists B.~\Sigma \vdash \ell' :_{s'} B$.
        \item $\forall u, \sigma'.~\Sigma \vdash u :_s \sigma' \implies \Sigma \vdash u :_{s'} \sigma'$
    \end{itemize}
    \end{lemma}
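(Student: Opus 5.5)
We argue by induction on the typing derivation~\ref{typsf:se:wt}. The complication is \rref{T-Create}: evaluating a $\kw{new}$ runs a constructor of a contract $B$, which itself evaluates slot expressions, so a plain structural induction on $\var{se}$ is not enough. We therefore strengthen the induction to a lexicographic measure: first the position of the constructed contract in the well-typed $\Sigma$ (by \Cref{def:well-typed-sigma}, $\Sigmacode(B)$ is typed under some well-typed $\Sigma''\subset\Sigma$, so its nested $\kw{new}$'s refer to strictly smaller environments), then the size of the derivation. Alongside the lemma we prove a companion statement for constructors. If $\Sigma''\vdash_B \var{cnstr} : C$, $\Sigma''\subseteq\Sigma$, $\Sigma\vdash\rho' :_s \var{cnstr}_{\kw{Iface}}$ and the iff-conditions hold, then $s;\rho';\var{cnstr}_{\kw{cases}}\bigstep_B(\ell',s')$ with $\Sigma\vdash\ell':_{s'}B$, together with the same frame conditions ($s\subseteq s'$, new locations well-typed, old typings preserved).

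The non-creating cases are routine. For \rref{T-MapExp} we apply \Cref{lem:mapexpr-typesafety} and \rref{E-MapExp}, and take $s'=s$, so every frame condition is trivial. For \rref{T-SlotRef} we use \Cref{lem:ref-typesafety-untimed} and \rref{E-SlotRef}, again with $s'=s$. For \rref{T-SlotAddr} the induction hypothesis gives $\Sigma\vdash v:_{s'}A$; inverting \rref{V-Contract} yields $\Sigma\vdash v:_{s'}\kw{address}_A$, which is then promoted by \rref{V-ABIVal}, and all frame conditions are inherited.

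For \rref{T-Create} we thread the state through the arguments. We apply the induction hypothesis to $\var{se}_1$ in $s_0=s$, obtaining $(v_1,s_1)$ with all seven properties. To continue with $\var{se}_2$ in $s_1$ we need $\psem{s_1}{\rho}{\Phi}\meta{True}$. This follows from $s\subseteq s_1$ and \Cref{lem:sem-storage-weak}, while $\rho$ and $\ell$ stay typed by the returned invariants. The typing of $v_1$ is carried forward along the chain by the last property (old typings persist). After $n$ (or $n+1$) steps we build $\rho'$ as in \rref{E-Create}, with $\kw{caller}\mapsto\ell$ and $\kw{callvalue}\mapsto 0$ or $v_{n+1}$. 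It is well-typed for $\var{ctor}_{\kw{Iface}}$ in $s_n$: $\ell\in\meta{Addr}$, and $\rho(\kw{origin})$ is inherited from $\Sigma\vdash\rho:_s I$. We then use the semantic entailment $\Sigma;I;\Phi\vDash(\many{se},\var{ctor}_{\kw{Iff}})$ (rule \rref{ValidIffs}) to obtain the iff-conditions, and invoke the companion constructor statement. Composing the $\subseteq$ chains and the persistence properties finishes the case.

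The companion statement is where we expect the main obstacle. Well-typedness of $\Sigma$ gives \rref{T-Ctor} for $\Sigmacode(B)$ under $\Sigma''$, and the consistency entailment guarantees that exactly one case $e_j$ is true; these conditions are evaluated using \Cref{lem:expr-typesafety-untimed} with location $\dummyloc$. For the creates, we apply the (smaller-measure) slot-expression lemma to each $\var{se}_i$ under $\Sigma''$, moving between $\Sigma''$ and $\Sigma$ with \Cref{lem:valuetyp-storagetyp-weak} and \Cref{lem:valuetyp-styp-strength}. We then set $\ell'=\fresh{s_n}$ and $s_{n+1}=s_n[\ell'\mapsto_B\many{x_i\mapsto v_i}]$. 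Two facts must be checked. First, $\Sigma\vdash\ell':_{s_{n+1}}B$: field domains match $C=\Sigmastore(B)$, and each $v_i$ keeps its type under the extension by \Cref{lem:valuetyp-storage-weak}. Second, the frame conditions: $s_n\subseteq s_{n+1}$ holds because $\ell'$ is fresh, and the only new location is $\ell'$. The delicate point throughout is that extending the store never invalidates existing typings, which is exactly what \Cref{lem:valuetyp-storage-weak} provides.
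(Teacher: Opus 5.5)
Your proposal is correct and is essentially the paper's proof. The shared steps are: induction on the typing derivation; threading the state through the constructor arguments via \Cref{lem:sem-storage-weak}; obtaining the preconditions from \rref{ValidIffs} and the unique true case from the consistency entailment of \rref{T-Ctor}; recursing into the strictly smaller well-typed $\Sigma''$ supplied by \Cref{def:well-typed-sigma}; and moving typings between $\Sigma''$ and $\Sigma$ with the weakening and strengthening lemmas. The differences are only in packaging. Your companion covers case selection plus creates, whereas the paper's mutual partner \Cref{lem:create-typesafety} covers only the creates block and does case selection inside the \rref{T-Create} case. You also make explicit the lexicographic measure that the paper only gestures at by noting that $\Sigma'$ is strictly smaller. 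One small fix: add $C = \Sigmastore(B)$ to the hypotheses of your companion statement, since you use it to derive $\Sigma \vdash \ell' :_{s'} B$; it holds at the call site by well-typedness of $\Sigma$.
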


    \begin{proof}
        By induction on the derivation of~\ref{typsf:se:wt} with mutual induction on the proof of~\Cref{lem:create-typesafety}.
        \begin{proofcases}
            \case{$\Sigma; I; \Phi \vdash_{\maybe{\var{A}}} m : \mu$ (rule~\rref{T-MapExp})}
            From the premises of the~\rref{T-MapExp} rule we obtain $\Sigma; I; \Phi \vdash_{\maybe{\var{A}}} m : \mu$, on which (together with~\ref{typsf:se:env}, \ref{typsf:se:loc}, and \ref{typsf:se:pre}) we apply~\Cref{lem:mapexpr-typesafety} to obtain $\psem{s}{\rho}{m} v$ and $\vdash v : \mu$. Setting $s'=s$, we derive $\psem{s}{\rho}{m} (v, s)$ through~\rref{E-MapExp}, we get $\Sigma \vdash v :_{s} \mu$ from~\rref{V-MappingVal} and the rest trivially follows from hypothesis and $s'=s$.

            \case{$\Sigma; I; \Phi \vdash_{\maybe{\var{A}}} \var{ref_{\annot{\var{B}}}} : \var{B}$ (rule~\rref{T-SlotRef})}
            From the premises we get that $\Sigma; I \vdash^{k}_{\maybe{\var{A}}, \kw{U}} \var{ref} : \var{B}$, and we can apply~\Cref{lem:ref-typesafety-untimed} to get a $v$ such that $\psem{s^{\kw{U}}}{\rho}{\var{ref}} (v, \kw{U})$ and $\Sigma \vdash v :_s B$. The rule~\rref{E-SlotRef} gives us $s'$ to be $s$ and $\psem{s}{\rho}{\var{ref}} (v, s)$ and we get the rest of the conditions from the premises (as the state $s'$ remains the same $s$).

            \case{$\Sigma; I; \Phi \vdash_{\maybe{\var{A}}} \addr{\var{se}} : \kw{address}_B$ (rule~\rref{T-SlotAddr})}
            The induction hypothesis is applied to the rule~\rref{T-SlotAddr} and we get the $(v, s')$ such that
            \begin{enumerate}[start=1,label={(\bfseries I\arabic*)}]
                \item $\psem{s}{\rho}{\var{se}} (v, s')$
                \item \label{typsf:se:I2} $\Sigma \vdash v :_{s'} B$
                \item $\Sigma \vdash \rho :_{s'} I$
                \item $\Sigma \vdash \ell :_{s'} \maybe{A}$
                \item $s \subseteq s'$
                \item $\forall \ell' \in \dom{s'} \setminus \dom{s}.~\exists C.~\Sigma \vdash \ell' :_{s'} C$
                \item $\forall u, \sigma'.~\Sigma \vdash u :_s \sigma' \implies \Sigma \vdash u :_{s'} \sigma'$
            \end{enumerate}
            We then apply the rule~\rref{E-SlotAddr} to obtain a derivation of
            $\psem{s}{\rho}{\addr{\var{se}}} (v, s')$ and we invert~\ref{typsf:se:I2} via the~\rref{V-Contract} rule, to get $\Sigma \vdash v :_{s'} \kw{address}_B$.

            \case{$\Sigma; I; \Phi \vdash_{\maybe{\var{A}}} \kw{new}~B(\var{se_1}, \dots, \var{se_n}) : B$ (rule~\rref{T-Create})}
            From the premises of the~\rref{T-Create} rule we get
            \begin{enumerate}[start=1,label={(\bfseries P\arabic*)}]
                \item\label{typsf:se:create:P1} $\Sigmacode(B) = \var{ctor} = \kw{constructor}~\var{I'} ~ \kw{iff}~\many{\var{pre}}~
                \caseblock{e_{i}}{\kw{creates}~\many{\var{create}_i}}~\kw{ensures}~\many{post}$
                \item\label{typsf:se:create:P2} $I' = \range{x_i : \alpha_i}{i \in [1,n]}$
                \item\label{typsf:se:create:P3} $\forall \, j \in [1,n]. ~\, \Sigma; I; \Phi \vdash_{\maybe{\var{A}},\kw{U}} \var{se_j} : \alpha_j$
                \item\label{typsf:se:create:P4} $\Sigma; I;  \Phi \vDash_{\maybe{\var{A}}} (\range{se_i}{i \in [1,n]}, \many{\var{pre}})$.
            \end{enumerate}
            By applying induction hypothesis on~\ref{typsf:se:create:P3}, we get for every $i \in [1,n]$
            \begin{enumerate}[start=1,label={(\bfseries I$_i$\arabic*)}]
                \item\label{typsf:se:create:I1} $\psem{s_{i-1}}{\rho}{se_i} (v_i, s_i)$,
                \item\label{typsf:se:create:I2} $\Sigma \vdash v_i :_{s_i} \alpha_i$,
                \item\label{typsf:se:create:I3} $\Sigma \vdash \rho :_{s_i} I$,
                \item\label{typsf:se:create:I4} $\Sigma \vdash \ell :_{s_i} \maybe{A}$,
                \item\label{typsf:se:create:I5} $s_{i-1} \subseteq s_i$,
                \item\label{typsf:se:create:I6} $\forall \ell' \in \dom{s_{i}} \setminus \dom{s_{i-1}}.~\exists D.~\Sigma \vdash \ell' :_{s_{i}} D$,
                \item\label{typsf:se:create:I7} $\forall u, \sigma'.~\Sigma \vdash u :_{s_{i-1}} \sigma' \implies \Sigma \vdash u :_{s_i} \sigma'$
            \end{enumerate}
            where we thread the state $s_i$, with $s_0 = s$, and we obtain $\psem{s_{i-1}}{\rho}{\Phi} \meta{True}$ through~\Cref{lem:sem-storage-weak} from~\ref{typsf:se:wt} and $(I_j5)$ for $j \in [1, i-1]$.
            We define $\rho' = \range{x_i \mapsto \var{v}_i}{i \in [1,n]}  \cup \{\kw{caller} \mapsto \ell, \kw{origin} \mapsto \rho(\kw{origin}), \kw{callvalue} \mapsto 0 \}$ and in order to apply~\rref{E-Create} rule we need to derive $s_n ; \rho' ; \var{ctor}_{\kw{cases}} \bigstep (\ell', s')$. To derive this, we apply~\rref{E-CtorCases}, by deriving
            \begin{enumerate}[start=1,label={(\bfseries D\arabic*)}]
                \item\label{typsf:se:create:D1} $s_n^{\kw{U}} ; \rho' ; e_j \bigstep_{\dummyloc} \meta{True}$
                \item\label{typsf:se:create:D2} $s_n^{\kw{U}} ; \rho' ; e_i \bigstep_{\dummyloc} \meta{False} \text{ for } i \neq j$
                \item\label{typsf:se:create:D3} $s_n^{\kw{U}} ; \rho' ; \many{\var{create}_j} \Downarrow (\ell',s')$
            \end{enumerate}
            as follows.
            Since $\Sigma$ is well-typed, let $C, \Sigma' \subset \Sigma$ be such that $\Sigma' \vdash \var{ctor} : C$. By inversion we know the only possible rule to be applied is~\rref{T-Ctor} and from the premises we obtain that \inlineequation[typsf:se:create:envpwf]{\Sigma' \vdash I'~\meta{wf}} and 
            \begin{equation*}
                \Sigma'; I'; \kw{true} \vDash_{\bot}
                \bigwedge_{\var{pre \in \many{\var{pre}}}}{\var{pre}} \Rightarrow (\bigvee_{i \in [1,n]} e_i) ~~ \wedge ~~(\bigwedge_{i \in [1,n], j \in [i+1, n]} \neg (e_i \wedge e_j))
            \end{equation*}
            which by inversion through the~\rref{ValidExps} means
            \begin{equation}
                \label{eq:typesafety-se-validexps}
                \begin{aligned}
                & \forall s'', \rho'', \ell''.~\rho'' :_{s''} I' \wedge \Sigma' \vdash \ell'' :_{s''} \bot \implies \\
                & s''; \rho''; \bigwedge_{\var{pre \in \many{\var{pre}}}}{\var{pre}} \Rightarrow (\bigvee_{i \in [1,n]} e_i) ~~ \wedge ~~(\bigwedge_{i \in [1,n], j \in [i+1, n]} \neg (e_i \wedge e_j)) \bigstep_{\ell''} \meta{True}
                \end{aligned}
            \end{equation}
            We take $s'' = s_n$ and $\rho'' = \rho'$ and any location $\ell'' = \dummyloc$. We get $\Sigma' \vdash \rho' :_{s_n} I'$ by first deriving $\Sigma \vdash \rho' :_{s_n} I'$ using the~\rref{V-Env} rule getting
            \begin{itemize}
                \item $\vdash \rho'(\kw{origin}) : \kw{address}$ from $\rho'(\kw{origin}) = \rho(\kw{origin})$ and inverting~\ref{typsf:se:env},
                \item $\vdash \rho'(\kw{caller}) : \kw{address}$ by deriving $\vdash \ell : \kw{address}$ via~\rref{V-Addr},
                \item $\vdash 0 : \kw{uint256}$ from~\rref{V-Int},
                \item $\Sigma \vdash v_i :_{s_n} \alpha_i$ by repeatedly applying~\Cref{lem:valuetyp-storage-weak} on $(I_i2)$ and $(I_k5)$ for all $k \in [i,n]$
            \end{itemize}
            We then apply~\Cref{lem:valuetyp-styp-strength} on $\Sigma \vdash \rho' :_{s_n} I'$ and $\Sigma' \subset \Sigma$ and~\eqref{typsf:se:create:envpwf} to get $\Sigma' \vdash \rho' :_{s_n} I'$.
            With this data we get from~\eqref{eq:typesafety-se-validexps} that
            \begin{equation}
                \label{eq:se-typesafety-precond}
                s_n ; \rho'; \bigwedge_{\var{pre \in \many{\var{pre}}}}{\var{pre}} \Rightarrow (\bigvee_{i \in [1,n]} e_i) ~~ \wedge ~~(\bigwedge_{i \in [1,n], j \in [i+1, n]} \neg (e_i \wedge e_j)) \bigstep_{\dummyloc} \meta{True}
            \end{equation}
            From~\ref{typsf:se:create:P4} by inversion (rule~\rref{ValidIffs}) we get from $\Sigma \vdash \rho :_{s_0} I$ and $\Sigma \vdash \ell :_{s_0} \maybe{A}$ and $\psem{s_0}{\rho}{\Phi} \meta{True}$ and  $s_{i-1} ; \rho ; se_i \bigstep_{\ell} (v_i, s_i)$ that
            \begin{equation}
                \many{s_n ; \rho' ; \var{pre} \bigstep_{\dummyloc} \meta{True}} \label{eq:se-typesafety-precs}
            \end{equation}
            From inverting the derivation of~\eqref{eq:se-typesafety-precond} using the~\rref{E-BopB} rule for implication we know that since an implication evaluates to $\meta{True}$ and the the lefthand-side also evaluates to $\meta{True}$ due to~\eqref{eq:se-typesafety-precs}, then the righthand-side evaluates to $\meta{True}$. With similar meta-level reasoning for conjunctions we obtain that
            \begin{align*}
                s_n ; \rho' ; \bigvee_{i \in [1,n]} e_i \bigstep_{\dummyloc} \meta{True} \\
                s_n ; \rho' ;\bigwedge_{i \in [1,n], j \in [i+1, n]} \neg (e_i \wedge e_j) \bigstep_{\dummyloc} \meta{True}
            \end{align*}
            These judgements encode that at least one and at most one $e_i$ evaluates to $\meta{True}$, so let $j \in [1,n]$ be such that
            \begin{align*}
                & s_n; \rho' ; e_j \bigstep_{\dummyloc} \meta{True} \\
                & s_n; \rho' ; e_i \bigstep_{\dummyloc} \meta{False} \quad \text{for $i \neq j$}
            \end{align*}
            which gives us~\ref{typsf:se:create:D1} and~\ref{typsf:se:create:D2}.
            From the constructor $\Sigma' \vdash \var{ctor} : C$ we get
            \begin{equation*}
                \Sigma' ; I' ; \Phi_j \vdash_B \many{\var{create_j}} : C \quad \text{where } \quad \Phi_j = e_j \wedge \bigwedge_{\var{pre} \in \many{\var{pre}}} \var{pre}.
            \end{equation*}
            We get $s_n ; \rho' ; \Phi_j \bigstep_{\dummyloc} \meta{True}$ by applying~\rref{E-BopB} on~\eqref{eq:se-typesafety-precs} and~\ref{typsf:se:create:D1}.
            We apply~\Cref{lem:create-typesafety} mutually inductively with $\Sigma' ; I' ; \Phi_j \vdash_B \many{\var{create_j}} : C$ (note that $\Sigma'$ is strictly smaller than $\Sigma$) and $\Sigma' \vdash \rho' :_{s_n} I'$ (derived above) to obtain $(\ell', s')$ such that
            \begin{enumerate}[start=3,label={(\bfseries D\arabic*)}]
                \item\label{typsf:se:create:D3} $s_n ; \rho' ; \many{\var{create_j}} \bigstep_{B} (\ell', s')$
                \item\label{typsf:se:create:D4} $\Sigma'' \vdash \ell' :_{s'} B$
                \item\label{typsf:se:create:D5} $\Sigma'' \vdash \rho' :_{s'} I'$
                \item\label{typsf:se:create:D6} $s_n \subseteq s'$,
                \item\label{typsf:se:create:D7} $\forall \ell'' \in \dom{s'} \setminus \dom{s_n}.~\exists D.~\Sigma'' \vdash \ell'' :_{s'} D$.
                \item\label{typsf:se:create:D8} $\forall u, \sigma'.~\Sigma' \vdash u :_{s_n} \sigma' \implies \Sigma' \vdash u :_{s'} \sigma'$.
            \end{enumerate}
            where $\Sigma'' = \Sigma'~\meta{with}~ \{ \sstorage = (\Sigmastorep,\, B : C)\}$.

            We then apply~\rref{E-Create} rule on~\ref{typsf:se:create:I1}, $\rho'$, and the result of~\rref{E-CtorCases} applied to~\ref{typsf:se:create:D1}-\ref{typsf:se:create:D3}  to obtain $\psem{s}{\rho}{\kw{new}~B(\range{\var{se_i}}{i \in [1,n]})} (\ell',s')$. Since $\Sigmastore(B) = C$ and $\Sigma' \subseteq \Sigma$, it holds that \inlineequation[eq:se-typesafety-sigmapp]{\Sigma'' \subseteq \Sigma}. We get that $\Sigma \vdash \ell' :_{s'} B$ by~\Cref{lem:valuetyp-storage-weak} on~\eqref{eq:se-typesafety-sigmapp} and~\ref{typsf:se:create:D4}. It holds that \inlineequation[eq:se-typesafety-subs]{s \subseteq s'} from combining~\ref{typsf:se:create:I5} for all $i \in [1,n]$ and~\ref{typsf:se:create:D6}. We get $\Sigma \vdash \rho :_{s'} I$ by~\Cref{lem:valuetyp-storage-weak} applied to~\ref{typsf:se:env} and $s \subseteq s'$. We derive $\Sigma \vdash \ell :_{s'} \maybe{A}$ from~\Cref{lem:valuetyp-storage-weak} applied to~\ref{typsf:se:loc} and $s \subseteq s'$.

            To show $\forall \ell'' \in \dom{s'} \setminus \dom{s}.~\exists D.~\Sigma \vdash \ell'' :_{s'} D$, let $\ell'' \in \dom{s'} \setminus \dom{s}$. Since $s \subseteq s_1 \subseteq s_2 \subseteq \cdots \subseteq s_n \subseteq s' = s_{n+1}$, and $\ell'' \notin \dom{s_0}$ let $k \in [1, n+1]$ be the smallest such that $\ell'' \in \dom{s_k}$. We distinguish the cases
            \begin{proofcases}
                \case{$k \neq n+1$}
                We use~\ref{typsf:se:create:I6} to get \inlineequation[eq:se-typesafety-ellpwt]{\exists D.~\Sigma \vdash \ell'' :_{s_k} D}. We use~\Cref{lem:valuetyp-storage-weak} on~\eqref{eq:se-typesafety-ellpwt} and $s_k \subseteq s'$ to conclude $\exists D.~\Sigma \vdash \ell'' :_{s'} D$.
                \case{$k = n+1$}
                Using\ref{typsf:se:create:D7}, we get \inlineequation[eq:se-typesafety-ellpwt1]{\exists D.~\Sigma'' \vdash \ell'' :_{s'} D}. We use~\Cref{lem:valuetyp-storagetyp-weak} on~\eqref{eq:se-typesafety-ellpwt1} and~\eqref{eq:se-typesafety-sigmapp} to conclude $\exists D.~\Sigma \vdash \ell'' :_{s'} D$.
            \end{proofcases}

            To show $\forall u, \sigma'.~\Sigma \vdash u :_s \sigma' \implies \Sigma \vdash u :_{s'} \sigma'$, let $u, \sigma'$ be such that \inlineequation[eq:se-typesafety-uwt]{\Sigma \vdash u :_s \sigma'}. Using~\Cref{lem:valuetyp-storage-weak} on~\eqref{eq:se-typesafety-uwt} and~\eqref{eq:se-typesafety-subs} to obtain $\Sigma \vdash u :_{s'} \sigma'$

            \case{$\Sigma; I; \Phi \vdash_{\maybe{\var{A}}} \kw{new}~B\{\kw{value}:~\var{se_{n+1}}\}(\var{se_1}, \dots, \var{se_n}) : B$ (rule~\rref{T-CreatePayable})}
            Similarly to the case~\rref{T-Create}, but using induction we also obtain $(I_{n+1}1)$ - $(I_{n+1}7)$ for $\var{se}_{n+1}$, with $(I_{n+1}2)$ restricted to $\Sigma \vdash v_{n+1} :_{s_{n+1}} \kw{uint256}$. By inversion via~\rref{V-BaseValAlpha}, we get \inlineequation[typsf:se:create:uint256]{\vdash v_{n+1} : \kw{uint256}}.
            We then define $\rho' = \range{x_i \mapsto \var{v}_i}{i \in [1,n]}  \cup \{\kw{caller} \mapsto \ell, \kw{origin} \mapsto \rho(\kw{origin}), \kw{callvalue} \mapsto v_{n+1} \}$. The proof then proceeds with $s_{n+1}$ in place of $s_n$ and we derive
            $\Sigma \vdash \rho' :_{s_{n+1}} I'$ by showing $\vdash \rho'(\kw{callvalue}) : \kw{uint256}$ from~\eqref{typsf:se:create:uint256}.

        \end{proofcases}
    \end{proof}

    \begin{lemma}[Create Type Safety]
        \label{lem:create-typesafety}
        Assume that
        \begin{enumerate}[start=1,label={(\bfseries H\arabic*)}]
            \item\label{typsf:create:wt} $\Sigma; I; \Phi \vdash_{A} \many{\var{create}} : C$
            \item\label{typsf:create:env} $\Sigma \vdash \rho :_s I$
            \item\label{typsf:creates:precond} $s ; \rho ; \Phi \bigstep_{\dummyloc} \meta{True}$
        \end{enumerate}
        Then there exist $s'$ and $\ell$ such that
        \begin{itemize}
            \item $s ; \rho ; \many{\var{create}} \bigstep_{A} (\ell, s')$,
            \item $\Sigma' \vdash \ell :_{s'} A$,
            \item $\Sigma' \vdash \rho :_{s'} I$,
            \item $s \subseteq s'$,
            \item $\forall \ell' \in \dom{s'} \setminus \dom{s}.~\exists B.~\Sigma' \vdash \ell' :_{s'} B$.
            \item $\forall u, \sigma'.~\Sigma \vdash u :_s \sigma' \implies \Sigma \vdash u :_{s'} \sigma'$.
        \end{itemize}
        where $\Sigma' = \Sigma ~\meta{with}~ \{ \sstorage = (\Sigmastore,\, A : C) \}$.
    \end{lemma}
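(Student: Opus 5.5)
The proof is by inversion on \ref{typsf:create:wt}, which can only end with \rref{T-Creates}. This yields $\many{\var{create}} = \range{\sigma_i~x_i \asgn \var{se}_i}{i \in [1,n]}$, $C = \range{x_i : \sigma_i}{i \in [1,n]}$, $\Sigma \vdash \sigma_i~\meta{wf}$, and $\Sigma; I; \Phi \vdash_{\bot} \var{se}_i : \sigma_i$ for every $i$. The argument is mutual with \Cref{lem:se-typesafety}: the measure is the pair (size of $\Sigma$, typing derivation), ordered lexicographically. The recursive call from the \rref{T-Create} case of \Cref{lem:se-typesafety} into this lemma happens at a strictly smaller $\Sigma$. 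The calls from this lemma back into \Cref{lem:se-typesafety} are on subderivations.

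First, thread the state through the slot expressions. Set $s_0 = s$ and apply \Cref{lem:se-typesafety} to each $\var{se}_i$ in turn, with the location hypothesis taken as $\Sigma \vdash \cdot :_{s_{i-1}} \bot$, which holds trivially by \rref{V-None}. This gives $\psem{s_{i-1}}{\rho}{\var{se}_i}(v_i,s_i)$, $\Sigma \vdash v_i :_{s_i} \sigma_i$, $\Sigma \vdash \rho :_{s_i} I$, $s_{i-1}\subseteq s_i$, and the fresh-location and value-preservation facts. The premise $s_{i-1};\rho;\Phi \bigstep \meta{True}$ needed at each step comes from \Cref{lem:sem-storage-weak} applied to \ref{typsf:creates:precond} along the chain $s_0 \subseteq \dots \subseteq s_{i-1}$. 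Repeated use of \Cref{lem:valuetyp-storage-weak} then lifts every typing to $s_n$, giving $\Sigma \vdash v_i :_{s_n} \sigma_i$. Next take $\ell = \fresh{s_n}$ and $s' = s_n[\ell \mapsto_A \range{x_i \mapsto v_i}{i}]$. \rref{E-Creates} gives the evaluation, and $s_n \subseteq s'$ holds because $\ell \notin \dom{s_n}$, so $s \subseteq s'$ by transitivity.

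Second, establish $\Sigma' \vdash \ell :_{s'} A$ via \rref{V-Contract} and \rref{V-AddrIsContract}. Membership of $\ell$ in $\dom{s'}$, $A \in \dom{\Sigmastorep}$, and $s'(\ell).\kw{type} = A$ hold by construction. The field-domain condition holds because $\Sigmastorep(A) = C$ and $\dom{s'(\ell)} = \{x_i\}$. For the field typings $\Sigma' \vdash v_i :_{s'} \sigma_i$, first move $\Sigma \vdash v_i :_{s_n}\sigma_i$ to $s'$ with \Cref{lem:valuetyp-storage-weak}, then to $\Sigma'$ with \Cref{lem:valuetyp-storagetyp-weak}. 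This needs $\Sigma \subseteq \Sigma'$, i.e.\ that $A$ is not already in $\dom{\Sigmastore}$; that should follow from $\Sigma'$ being formed by \rref{T-Contract}/\rref{T-Ctor} with a fresh contract name. If that freshness is unavailable, I would instead argue that the added entry coincides with any existing one. No recursion on $\meta{len}$ is needed here, since the field values were typed independently of $\ell$.

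The remaining conclusions follow from what has been built. $\Sigma' \vdash \rho :_{s'} I$ is obtained from $\Sigma \vdash \rho :_{s_n} I$ by the two weakening lemmas. Preservation of $\Sigma \vdash u :_s \sigma'$ is \Cref{lem:valuetyp-storage-weak} with $s \subseteq s'$. For fresh locations, $\dom{s'}\setminus\dom{s}$ consists of the locations created during the $s_{i-1}\to s_i$ steps, which are typed by the IH and then weakened to $s'$ and $\Sigma'$, together with $\ell$ itself, which was just typed at $A$. The main obstacle I expect is the bookkeeping of $\Sigma$ versus $\Sigma'$, namely justifying $\Sigma \subseteq \Sigma'$, together with checking that the mutual-induction measure really decreases in the \rref{T-Create} case.
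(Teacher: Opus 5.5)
Your proposal is correct and follows the paper's proof essentially step for step: invert via \rref{T-Creates}, thread $s_0,\dots,s_n$ through \Cref{lem:se-typesafety} using the $\bot$ location and \Cref{lem:sem-storage-weak} for $\Phi$, take $\ell = \fresh{s_n}$, type $\ell$ at $A$ via \rref{V-AddrIsContract} with the two weakening lemmas, and handle new locations by choosing the smallest $k$ at which each one appears. The paper also takes $\Sigma \subseteq \Sigma'$ (that is, freshness of $A$) for granted, and it justifies the mutual induction only by remarking that the call from the \rref{T-Create} case is made at a strictly smaller $\Sigma$; your lexicographic measure states explicitly what the paper leaves implicit.
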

    \begin{proof}
        By induction on the typing derivation~\ref{typsf:create:wt}, with mutual induction on the proof of~\Cref{lem:se-typesafety}.
        The only possible rule to apply is~\rref{T-Creates}, giving us \inlineequation[eq:creates-typesafety-IH1]{C = \range{x_i : \sigma_i}{i \in [1,n]}}\\ 
         and \inlineequation[eq:creates-typesafety-IH2]{\forall \, i \in [1,n]. ~\,\Sigma; I; \Phi \vdash_{\bot} \var{se}_i : \sigma_i}.

        We set $s_0 = s$. For every $i \in [1,n]$ we use mutual induction on~\Cref{lem:se-typesafety} with~\eqref{eq:creates-typesafety-IH2} and~\ref{typsf:create:env} or $\Sigma \vdash \rho :_{s_{i-1}} I$ (given from~\ref{typsf:create:IHenv} of the previous step), and  $\Sigma \vdash \dummyloc :_{s_{i-1}} \bot$ (derived through~\rref{V-None}), and $s_{i-1} ; \rho ; \Phi \bigstep_{\dummyloc} \meta{True}$ (obtained from the previous step and~\Cref{lem:sem-storage-weak}), and we obtain for all $i \in [1,n]$
        \begin{enumerate}[start=1,label={(\bfseries I\arabic*)}]
            \item\label{typsf:create:IHse} $s_{i-1} ; \rho ; se_i \bigstep_{\dummyloc} (v_i, s_i)$,
            \item\label{typsf:create:IHvi} $\Sigma \vdash v_i :_{s_i} \sigma_i$,
            \item\label{typsf:create:IHenv} $\Sigma \vdash \rho :_{s_i} I$,
            \item\label{typsf:create:IHloc} $\Sigma \vdash \dummyloc :_{s_i} \bot$,
            \item\label{typsf:create:IHsub}  $s_{i-1} \subseteq s_i$.
            \item\label{typsf:create:ellpwt}  $\forall \ell' \in \dom{s_{i}} \setminus \dom{s_{i-1}}.~\exists D.~\Sigma \vdash \ell' :_{s_{i}} D$,
        \end{enumerate}
        We get the first premise of the~\rref{E-Creates} rule from~\ref{typsf:create:IHse}. Let $\ell = \fresh{s_n}$ and 
        $s_{n+1} = s_n[\ell \mapsto_{A} \range{ x_i \mapsto v_i}{i \in [1,n]}]$. Since $\ell \notin \dom{s_n}$ we have 
        \inlineequation[eq:creates-typesafety-subsn]{s_n \subset s_{n+1}}.

        We set $s' = s_{n+1}$, applying the~\rref{E-Creates} rule to get $s ;\rho ;\many{\var{create}} \bigstep_A (\ell, s_{n+1})$.
        To derive \inlineequation[eq:creates-typesafety-locwt]{\Sigma' \vdash \ell :_{s_{n+1}} A} we use the~\rref{V-Contract} rule and~\rref{V-AddrIsContract} with
        \begin{itemize}
            \item $\ell \in \dom{s_{n+1}}$ by definition.
            \item $A \in \Sigmastorep$ by definition of $\Sigma'$.
            \item $s_{n+1}(\ell).{\kw{type}} = A$ by definition of $s_{n+1}(\ell)$.
            \item $\forall \, x. ~\, x \in \dom{s_{n+1}(\ell)}
                \Leftrightarrow (\exists \, \sigma. ~\, \Sigmastorep(A)(x) = \sigma)$
                from the definition of $s_{n+1}$.
            \item $(\forall \, x,\sigma. ~\, \Sigmastorep(A)(x) = \sigma ~\Rightarrow ~ \Sigma \vdash s_{n+1}(\ell) (x) :_{s_{n+1}} \sigma)$ we derive as follows: let $x_i, \sigma_i$ be such that $\Sigmastorep(A)(x_i) = \sigma_i$. By definition of $s_{n+1}$ we have $s_{n+1}(\ell)(x_i) = v_i$. By~\ref{typsf:create:IHvi}, $\Sigma \vdash v_i :_{s_i} \sigma_i$, and we can weaken this using~\Cref{lem:valuetyp-storage-weak} and $s_i \subseteq s_n$ that we get by transitivity on~\ref{typsf:create:IHsub}, to get $\Sigma \vdash v_i :_{s_n} \sigma_i$. Again by~\Cref{lem:valuetyp-storage-weak} and~\eqref{eq:creates-typesafety-subsn} we have $\Sigma \vdash v_i :_{s_{n+1}} \sigma_i$, which we weaken using~\Cref{lem:valuetyp-storagetyp-weak} and $\Sigma \subseteq \Sigma'$ to obtain the desired $\Sigma' \vdash v_i :_{s_{n+1}} \sigma_i$.
        \end{itemize}
        To derive $\Sigma' \vdash \rho :_{s_{n+1}} I$ we use the~\rref{V-Env} rule with
        \begin{itemize}
            \item $\dom{\rho} = \dom{I} \cup \{\kw{caller}, \kw{origin}, \kw{callvalue}\}$ from~\ref{typsf:create:IHenv} (with $i=n$).
            \item $\vdash \rho(\kw{caller}) : \kw{address}$ and $\vdash \rho(\kw{origin}) : \kw{address}$ and $\vdash \rho(\kw{callvalue}) : \kw{uint256}$ from~\ref{typsf:create:IHenv} (with $i=n$).
            \item $\forall \, x \in \dom{I}. ~\, \Sigma' \vdash \rho(x) :_{s_{n+1}} I(x)$ as follows: let $x \in \dom{I}$. From~\ref{typsf:create:IHenv} (with $i=n$) it holds that $\Sigma \vdash \rho(x) :_{s_{n}} I(x)$. We weaken this using~\Cref{lem:valuetyp-storage-weak} with~\eqref{eq:creates-typesafety-subsn} to get $\Sigma \vdash \rho(x) :_{s_{n+1}} I(x)$, and we further weaken using~\Cref{lem:valuetyp-storagetyp-weak} and $\Sigma \subseteq \Sigma'$ to get $\Sigma' \vdash \rho(x) :_{s_{n+1}} I(x)$.
        \end{itemize}
        We get \inlineequation[eq:creates-typesafety-subs]{s \subseteq s'} by transitivity on~\ref{typsf:create:IHsub} and~\eqref{eq:creates-typesafety-subsn}.

        We now show that $\forall \ell' \in \dom{s_{n+1}} \setminus \dom{s}.~\exists B.~\Sigma' \vdash \ell' :_{s_{n+1}} B$.
        Let $\ell' \in \dom{s_{n+1}} \setminus \dom{s}$. We know that $\dom{s_{n+1}} = \dom{s_n} \cup \{\ell\}$ by definition of $s_{n+1}$. If $\ell' = \ell$, then $B=A$ and we conclude with~\eqref{eq:creates-typesafety-locwt}. If $\ell' \neq \ell$, then $\ell' \in \dom{s_n}\setminus \dom{s}$. Since $s \subseteq s_1 \subseteq \cdots \subseteq s_n$, let $k \in [1,n]$ be the smallest such that $\ell \in \dom{s_k}$. From~\ref{typsf:create:ellpwt} for $i = k$ let $B$ be such that $\Sigma \vdash \ell' :_{s_{i}} B$. By weakening via~\Cref{lem:valuetyp-storage-weak} with $s_k \subset s_{n+1}$ (which we get by transitivity of inclusions), and then weakening via~\Cref{lem:valuetyp-storagetyp-weak} with $\Sigma' \subseteq \Sigma$ we conclude $\Sigma' \vdash \ell' :_{s_{n+1}} B$.

        Lastly, we show that $\forall u, \sigma'.~\Sigma \vdash u :_s \sigma' \implies \Sigma \vdash u :_{s'} \sigma'$, let $u, \sigma'$ be such that \inlineequation[eq:creates-typesafety-uwt]{\Sigma \vdash u :_s \sigma'}. Using~\Cref{lem:valuetyp-storage-weak} on~\eqref{eq:creates-typesafety-uwt} and~\eqref{eq:creates-typesafety-subs} to obtain $\Sigma \vdash u :_{s'} \sigma'$.
    \end{proof}

    \begin{lemma}[Update Type Safety]
        \label{lem:update-typesafety}
        Assume that
        \begin{enumerate}[start=1,label={(\bfseries H\arabic*)}]
            \item\label{typsf:upd:wt} $\Sigma; I; \Phi \vdash_{A} \many{\var{update}}$
            \item\label{typsf:upd:env} $\Sigma \vdash  \rho :_s I$
            \item\label{typsf:upd:loc} $\Sigma \vdash \ell :_{s} A$
            \item\label{typsf:upd:precond} $\psem{s}{\rho}{\Phi} \meta{True}$
        \end{enumerate}
        Then there exists $s'$ such that
        \begin{itemize}
            \item $\psem{s}{\rho}{\many{\var{update}}} s'$
            \item $\Sigma \vdash \ell :_{s'} A$
            \item $\Sigma \vdash \rho :_{s'} I$
            \item $\dom{s} \subseteq \dom{s'}$
            \item $\forall \ell' \in \dom{s'} \setminus \dom{s}.~\exists B.~\Sigma \vdash \ell' :_{s'} B$
            \item $\forall u, \sigma'.~\Sigma \vdash u :_s \sigma' \implies \Sigma \vdash u :_{s'} \sigma'$
        \end{itemize}
    \end{lemma}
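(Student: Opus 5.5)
By inversion of \ref{typsf:upd:wt} through \rref{T-Updates}, we obtain for each $i \in [1,n]$ the judgment $\Sigma; I; \Phi \vdash_A \var{ref}_i \asgn \var{se}_i$. Inverting \rref{T-Update} on each of these yields $\Sigma; I \vdash^{\kw{S}}_{A,\kw{U}} \var{ref}_i : \sigma_i$ and $\Sigma; I; \Phi \vdash_A \var{se}_i : \sigma_i$. The semantics \rref{E-Updates} first evaluates all slot expressions, threading the state $s_0, \dots, s_n$. It then performs all $n$ insertions, threading $s_n, \dots, s_{2n}$. We take $s' = s_{2n}$ and build the derivation in these two phases.

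\textbf{Phase 1.} We induct on $i \in [1,n]$ and apply \Cref{lem:se-typesafety} to $\var{se}_i$ in state $s_{i-1}$. The hypotheses are supplied as follows.
\begin{itemize}
\item $\Sigma \vdash \rho :_{s_{i-1}} I$ and $\Sigma \vdash \ell :_{s_{i-1}} A$ come from the previous step (or from \ref{typsf:upd:env} and \ref{typsf:upd:loc} when $i=1$).
\item $\psem{s_{i-1}}{\rho}{\Phi}\meta{True}$ comes from \ref{typsf:upd:precond} together with $s \subseteq s_{i-1}$ and \Cref{lem:sem-storage-weak}.
\end{itemize}
This produces $v_i$ and $s_i$ with $\Sigma \vdash v_i :_{s_i} \sigma_i$, $s_{i-1}\subseteq s_i$, typing of every fresh location, and preservation of all value typings. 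Using \Cref{lem:valuetyp-storage-weak} along the chain $s_i \subseteq s_n$, we get $\Sigma \vdash v_i :_{s_n} \sigma_i$ for every $i$. Every location in $\dom{s_n}\setminus\dom{s}$ is typed in $s_n$ by some contract $B$: take the first $s_k$ in which it appears, then weaken, exactly as in the proof of \Cref{lem:se-typesafety}.

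\textbf{Phase 2.} Here we perform the insertions and must preserve a growing family of typing facts. We therefore strengthen the invariant to: for every $u,\sigma'$, if $\Sigma \vdash u :_{s_n} \sigma'$ then $\Sigma \vdash u :_{s_{n+i}} \sigma'$. With this invariant, $\Sigma\vdash\rho:_{s_{n+i-1}} I$, $\Sigma\vdash\ell :_{s_{n+i-1}} A$ and $\Sigma\vdash v_i :_{s_{n+i-1}}\sigma_i$ all remain available at each step. To show that the insertion $\ins{s_{n+i-1};\rho;\var{ref}_i;v_i}{s_{n+i}}$ exists, we case on the \rref{T-Storage}/\rref{T-Field} derivation of the $\kw{S}$-tagged reference.
\begin{itemize}
\item For $x$, we have $x \in \dom{s_{n+i-1}(\ell)}$ by inverting \rref{V-AddrIsContract} on the typing of $\ell$, so \rref{E-InsStorage} applies.
\item For $\var{ref}.x$, \Cref{lem:ref-typesafety-untimed} gives $\psem{s_{n+i-1}}{\rho}{\var{ref}}(\ell',\kw{U})$ with $\Sigma \vdash \ell' :_{s_{n+i-1}} B$. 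Inverting this typing gives $x\in\dom{s_{n+i-1}(\ell')}$, so \rref{E-InsField} applies.
\end{itemize}
Preservation of every typing judgment across the insertion is exactly \Cref{lem:insertion-preserves-typing}, instantiated with $v:=u$ and $v':=v_i$. Insertion does not change the domain of the state, so $\dom{s_{n+i}}=\dom{s_n}$. Hence the fresh-location property transfers to $s_{2n}$ through the invariant, and $\dom{s}\subseteq\dom{s'}$ holds. The final conclusion about preserved value typings follows by composing the two phases: the chain of $\subseteq$ together with \Cref{lem:valuetyp-storage-weak} handles phase 1, and the invariant handles phase 2.

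\textbf{Main obstacle.} The delicate point is phase 2. Earlier insertions may overwrite fields reachable through a later $\var{ref}_j$, so the state in which $\var{ref}_j$ is evaluated differs from the one in which $\var{se}_j$ was typed. Our argument avoids relying on any aliasing information, including the $\specificeq$ side condition: it re-derives existence of each insertion from typing alone in the current state, using the re-established typings of $\rho$ and $\ell$. One further check is needed. \Cref{lem:insertion-preserves-typing} requires its hypothesis \ref{wt_insert_ptyp:H5} in the pre-insertion state, which is exactly what the invariant supplies.
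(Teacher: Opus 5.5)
Your proposal is correct and follows the paper's proof essentially step for step. Both evaluate the slot expressions with \Cref{lem:se-typesafety} while threading $s_0,\dots,s_n$, then derive each insertion in the current state $s_{n+i-1}$ by casing on \rref{T-Storage}/\rref{T-Field}, and carry typings across it with \Cref{lem:insertion-preserves-typing}, never using the $\specificeq$ side condition. The only difference is packaging: you state the phase-2 invariant as the single claim that everything typed in $s_n$ stays typed in $s_{n+i}$. The paper instead tracks ($D_i1$)--($D_i5$) separately, and all of these follow from your invariant, as does the domain equality the paper uses without stating it.
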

    \begin{proof}
        By induction on the typing derivation~\ref{typsf:upd:wt}. Let $\many{\var{update}} =  \range{\var{ref_i} \asgn \var{se_i}}{i \in [1,n]}$. The only possible rule to derive~\ref{typsf:upd:wt} is~\rref{T-Updates}, and then by~\rref{T-Update} we get for every $i \in [1,n]$ that
        \begin{align}
         \Sigma; I;\vdash^{\kw{S}}_{\var{A}, \kw{U}} \var{ref_i} : \sigma_i \label{eq:update-typesafety-IH1}\\
         \Sigma; I; \Phi  \vdash_{\var{A}} \var{se_i} : \sigma_i \label{eq:update-typesafety-IH2} \\
          \neg (\var{ref}_j \specific \var{ref}_i) \text{ for $j \in [1,i]$} \label{eq:update-typesafety-IH3} 
        \end{align}
        We then repeatedly apply~\Cref{lem:se-typesafety} on~\eqref{eq:update-typesafety-IH2} with also~\ref{typsf:upd:env}-\ref{typsf:upd:precond} and we obtain for $i \in [1,n]$
        \begin{enumerate}[start=1,label={(\bfseries I$_i$\arabic*)}]
            \item\label{typsf:upd:IH:se} $\psem{s_{i-1}}{\rho}{se_i} (v_i, s_i)$,
            \item\label{typsf:upd:IH:vi} $\Sigma \vdash v_i :_{s_i} \sigma_i$,
            \item\label{typsf:upd:IH:env} $\Sigma \vdash \rho :_{s_i} I$,
            \item\label{typsf:upd:IH:loc} $\Sigma \vdash \ell :_{s_i} A$,
            \item\label{typsf:upd:IH:sub} $s_{i-1} \subseteq s_i$,
            \item\label{typsf:upd:IH:ellpwt} $\forall \ell' \in \dom{s_{i}} \setminus \dom{s_{i-1}}.~\exists D.~\Sigma \vdash \ell' :_{s_{i}} D$
            \item\label{typsf:upd:IH:preserve} $\forall u, \sigma'.~\Sigma \vdash u :_{s_{i-1}} \sigma' \implies \Sigma \vdash u :_{s_i} \sigma'$
        \end{enumerate}
        obtaining $s_{i-1} ; \rho ; \Phi \bigstep_{\dummyloc} \meta{True}$ from~\Cref{lem:sem-storage-weak} on~\ref{typsf:upd:precond} and $(I_{i-1}5)$. To apply~\rref{E-Updates} we derive
        \begin{equation}
            \label{eq:update-typesafety-ins1}
            \ins{s_{n+i-1} ; \rho ; \var{ref_i} ; v_i}{s_{n+i}}
        \end{equation}
        for $i \in [1,n]$, together with
        \begin{enumerate}[start=1,label={(\bfseries D$_i$\arabic*)}]
            \item\label{typsf:upd:D:loc} $\Sigma \vdash \ell :_{s_{n+i}} A$
            \item\label{typsf:upd:D:env} $\Sigma \vdash \rho :_{s_{n+i}} I$
            \item\label{typsf:upd:D:vj} $\Sigma \vdash v_j :_{s_{n+i}} \sigma_j$ for all $j \in [1,n]$
            \item\label{typsf:upd:D:ellpwt} $\forall \ell' \in \dom{s_{n+i}} \setminus \dom{s}.~\exists D.~\Sigma \vdash \ell' :_{s_{n+i}} D$
            \item\label{typsf:upd:D:preserve} $\forall u, \sigma'.~\Sigma \vdash u :_{s_{n+i-1}} \sigma' \implies \Sigma \vdash u :_{s_{n+i}} \sigma'$
        \end{enumerate}
        For $i=0$ we get ($D_{0}1$) from $(I_{n}4)$, ($D_{0}2$) from $(I_{n}3)$ and ($D_{0}3$) through~\Cref{lem:valuetyp-storage-weak} on $(I_{n}2)$ and $(I_{n}5)$. We get ($D_{0}4$) from repeatedly applying~\ref{typsf:upd:IH:ellpwt}, and ($D_{0}5$) from~\ref{typsf:upd:IH:preserve}. To derive the insertion judgement we proceed by induction on~\eqref{eq:update-typesafety-IH1}.
        \begin{proofcases}
            \case{$\Sigma; I \vdash^{\kw{S}}_{A, \kw{U}} x : \sigma_i$ (rule~\rref{T-Storage})}
            Inversion through~\rref{T-Storage} gives us $\Sigmastore(A)(x) =  \sigma_i$.
            To apply the~\rref{E-InsStorage} rule, we derive $x \in \dom{s_{n+i-1}(\ell)}$. Since $\Sigma \vdash \ell :_{s_{n+i-1}} A$ by ($D_{i-1}1$) we get by inversion through~\rref{V-Contract} and then~\rref{V-AddrIsContract} that $\ell \in \dom{s_{n+i-1}}$ and $A \in \dom{\Sigma}$. Further by inversion on~\eqref{eq:update-typesafety-IH1}, the only rule possible is~\rref{T-Storage}, which gives us $\Sigmastore(A)(x) =  \sigma_i$. From the fourth premise of the~\rref{V-AddrIsContract} rule we then get $x \in \dom{s_{n+i-1}(\ell)}$ and we apply~\rref{E-InsStorage} to obtain~\eqref{eq:update-typesafety-ins1}.
            \case{$\Sigma; I \vdash^{\kw{S}}_{A, \kw{U}} \var{ref}.x : \sigma_i$ (rule~\rref{T-Field})}
            Inversion through~\rref{T-Field} gives us
            \begin{align}
                 \Sigma; I \vdash^{\kw{S}}_{A, \kw{U}}  \var{ref} : B \label{eq:update-typesafety-IH3}\\
                 \Sigmastore(B)(x) = \sigma_i \label{eq:update-typesafety-IH4}
            \end{align}
            Using~\Cref{lem:ref-typesafety-untimed} on~\eqref{eq:update-typesafety-IH3}, ($D_{i-1}2$), and ($D_{i-1}1$) gives us $\ell'$ such that $\psem{s_{n+i-1}^{\kw{U}}}{\rho}{\var{ref}} (\ell', \kw{U})$ and $\Sigma \vdash \ell' :_{s_{n+i-1}} B$, which by inversion through~\rref{V-Contract} and~\rref{V-AddrIsContract} gives us $\ell' \in \dom{s_{n+i-1}}$ (and thus also $\ell' \in \meta{Addr}$) from the first premise, and together with~\eqref{eq:update-typesafety-IH4} we obtain $x \in \dom{s_{n+i-1}(\ell')}$ from the fourth premise. Using~\rref{E-InsField} we obtain~\eqref{eq:update-typesafety-ins1}.
        \end{proofcases}
        We derive~\ref{typsf:upd:D:loc} by~\Cref{lem:insertion-preserves-typing} on ($D_{i-1}1$), ($D_{i-1}1$),~\ref{typsf:upd:env} and ($D_{i-1}2$),~\eqref{eq:update-typesafety-IH1}, ($D_{i-1}3$) for $j = i$, and~\eqref{eq:update-typesafety-ins1}. Similarly we get~\ref{typsf:upd:D:vj} from~\Cref{lem:insertion-preserves-typing} on ($D_{i-1}3$) as the first hypothesis, and the rest of the arguments the same as above.
        To derive~\ref{typsf:upd:D:env} we proceed as follows. By inversion on ($D_{i-1}2$) through~\rref{V-Env} we get that 
        \begin{itemize}
            \item $\dom{\rho} = \dom{I}\cup \{\kw{caller}, \kw{origin}, \kw{callvalue}\}$
            \item $\forall \, x \in \dom{I}. ~\, \Sigma \vdash \rho(x) :_{s_{n+i-1}} I(x)$
            \item $\vdash \rho(\kw{caller}) : \kw{address}$
            \item $\vdash \rho(\kw{origin}) : \kw{address}$
            \item $\vdash \rho(\kw{callvalue}) : \kw{uint256}$
        \end{itemize}
        Let $x \in \dom{I}$. Applying~\Cref{lem:insertion-preserves-typing} on $\Sigma \vdash \rho(x) :_{s_{n+i-1}} I(x)$ as the first hypothesis, and the rest of the arguments the same as above, we obtain $\Sigma \vdash \rho(x) :_{s_{n+i}} I(x)$. We therefore derived $\forall \, x \in \dom{I}. ~\, \Sigma \vdash \rho(x) :_{s_{n+i}} I(x)$ The rule~\rref{V-Env} then concludes~\ref{typsf:upd:D:env}.

        To derive~\ref{typsf:upd:D:ellpwt}, let $\ell' \in \dom{s_{n+i}} \setminus \dom{s}$. By ($D_{i-1}4$) let $D$ be such that we get \inlineequation[eq:update-typesafety-ellpwt1]{\Sigma \vdash \ell' :_{s_{n+i-1}} D}. Using~\Cref{lem:insertion-preserves-typing} on~\eqref{eq:update-typesafety-ellpwt1} as the first hypothesis, and the rest of the arguments the same as above, we get $\Sigma \vdash \ell' :_{s_{n+i}} D$.

        To derive~\ref{typsf:upd:D:preserve}, let $u, \sigma'$ be such that \inlineequation[eq:update-typesafety-uwt]{\Sigma \vdash u :_{s_{n+i-1}} \sigma'}.
        Using~\Cref{lem:insertion-preserves-typing} on~\eqref{eq:update-typesafety-uwt} as the first hypothesis, and the rest of the arguments the same as above, we get $\Sigma \vdash u :_{s_{n+i}} \sigma'$.

        We conclude the proof through~\rref{E-Updates} on~\ref{typsf:upd:IH:se} and~\eqref{eq:update-typesafety-ins1}, obtaining
        \begin{itemize}
        \item $\Sigma \vdash \ell :_{s_{2n}} A$ from ($D_{n}1$),
        \item $\Sigma \vdash \rho :_{s_{2n}} I$ from ($D_{n}2$),
        \item $\forall \ell' \in \dom{s_{2n}} \setminus \dom{s}.~\exists B.~\Sigma \vdash \ell' :_{s_{2n}} B$ from ($D_{n}4$),
        \item $\forall u, \sigma'.~\Sigma \vdash u :_s \sigma' \implies \Sigma \vdash u :_{s'} \sigma'$ by transitivity on~\ref{typsf:upd:IH:preserve} and~\ref{typsf:upd:D:preserve} for all $i \in [1,n]$.
        \end{itemize}
        \end{proof}

    \begin{lemma}[Constructor Type Safety]
        \label{lem:constructor-typesafety}
        Assume that
        \begin{enumerate}[start=1,label={(\bfseries H\arabic*)}]
            \item\label{typsf:constr:wt} $\Sigma \vdash_{A} \var{cnstr}: C$
            \item\label{typsf:constr:env} $\Sigma \vdash \rho :_s I$
            \item\label{typsf:constr:pre} $s ; \rho ; \var{cnstr}_{\kw{iff}} \bigstep_{\dummyloc} \meta{True}$
        \end{enumerate}
        Then there exist $s'$ and $\ell$ such that
        \begin{itemize}
            \item $s ; \rho ; \var{cnstr} \bigstep_{A} (\ell, s')$,
            \item $\Sigma' \vdash \ell :_{s'} A$,
            \item $\Sigma' \vdash \rho :_{s'} I$,
            \item $s \subseteq s'$,
            \item $\forall \ell' \in \dom{s'} \setminus \dom{s}.~\exists B.~\Sigma' \vdash \ell' :_{s'} B$
            \item $\forall u, \sigma'.~\Sigma \vdash u :_s \sigma' \implies \Sigma \vdash u :_{s'} \sigma'$
        \end{itemize}
        where $\Sigma' = \Sigma ~\meta{with}~ \{ \sstorage = (\Sigmastore,\, A : C) \}$.
    \end{lemma}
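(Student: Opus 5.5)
The plan is to invert \ref{typsf:constr:wt}, where the only applicable rule is \rref{T-Ctor}, and then reuse, almost verbatim, the argument from the \rref{T-Create} case of \Cref{lem:se-typesafety}. The main change is that the preconditions now come directly from \ref{typsf:constr:pre} instead of from a \rref{ValidIffs} entailment.

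Inverting \rref{T-Ctor} gives the interface well-formedness $\Sigma \vdash I~\meta{wf}$ and the typing of the preconditions and case conditions. It also gives, for each $i$, $\Sigma; I; \Phi_i \vdash_A \many{\var{create}_i} : C$ with $\Phi_i = e_i \wedge \bigwedge \var{pre}$. Finally it gives the consistency entailment
\[\Sigma; I; \kw{true} \vDash_{\bot} \textstyle\bigwedge \var{pre} \Rightarrow (\bigvee_i e_i) \wedge (\bigwedge_{i<j} \neg(e_i \wedge e_j)).\]

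The first step is to instantiate \rref{ValidExps} with $s$, $\rho$ and the dummy location. Here \ref{typsf:constr:env} supplies the environment typing, \rref{V-None} supplies $\Sigma \vdash \dummyloc :_s \bot$, and $\kw{true}$ evaluates trivially. This shows that the whole formula evaluates to $\meta{True}$. From \ref{typsf:constr:pre}, each $\var{pre}$ evaluates to $\meta{True}$, so the conjunction does as well. Inverting \rref{E-BopB} for the implication and the conjunctions then yields a unique $j$ with $s;\rho;e_j \bigstep_{\dummyloc} \meta{True}$ and $e_i \bigstep_{\dummyloc} \meta{False}$ for $i\neq j$. This meta-level reasoning about boolean connectives is the same as in \Cref{lem:se-typesafety}. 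It also gives $s;\rho;\Phi_j \bigstep_{\dummyloc}\meta{True}$ via \rref{E-BopB}.

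Next we apply \Cref{lem:create-typesafety} to $\Sigma; I; \Phi_j \vdash_A \many{\var{create}_j} : C$, \ref{typsf:constr:env} and $\Phi_j \bigstep \meta{True}$. This gives $(\ell, s')$ with $s;\rho;\many{\var{create}_j}\bigstep_A(\ell,s')$, $\Sigma' \vdash \ell :_{s'} A$, $\Sigma'\vdash\rho:_{s'} I$, $s\subseteq s'$, well-typedness of the new locations under $\Sigma'$, and preservation of existing typings. The $\Sigma'$ there is exactly $\Sigma~\meta{with}~\{\sstorage = (\Sigmastore, A:C)\}$, as required.

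Finally we assemble the evaluation. \rref{E-CtorCases}, applied to the case evaluations and the create evaluation, gives $s;\rho;\var{cases}\bigstep_A(\ell,s')$. \rref{E-Ctor}, applied to that and to \ref{typsf:constr:pre}, gives $s;\rho;\var{cnstr}\bigstep_A(\ell,s')$. All remaining conclusions are passed through directly from \Cref{lem:create-typesafety}.

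I expect the main subtlety to lie in the consistency step. One issue is that the entailment is stated for states and environments typed against $I$ with an optional-location typing at $\bot$, and we must check that $\Sigma \vdash \dummyloc :_s \bot$ is accepted. The other is extracting a \emph{unique} true case from the boolean formula. Beyond that, care is needed only to make sure the induction/mutual-induction structure is not needed here: \Cref{lem:create-typesafety} is simply invoked as a previously established result.
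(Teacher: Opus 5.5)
Your proposal is correct and follows essentially the same route as the paper: invert \rref{T-Ctor}, instantiate \rref{ValidExps} at $s$, $\rho$ and the dummy location (using \rref{V-None}), and combine the result with the preconditions from (H3) to extract the unique true case $j$ and $\Phi_j$. From there you invoke \Cref{lem:create-typesafety} and close with \rref{E-CtorCases}, exactly as the paper does, and your explicit final application of \rref{E-Ctor} is a small step the paper leaves implicit.
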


    \begin{proof}
        Let $\var{cnstr} = \constrinlinecase{\var{I}}{\many{\var{pre}}}{e}{\many{\var{create}_i}}{\many{post}}$.
        By inversion on~\ref{typsf:constr:wt} through~\rref{T-Ctor} we obtain
        \begin{enumerate}[start=1,label={(\bfseries I\arabic*)}]
            \item\label{typsf:constr:pre} $\many{(\Sigma ; \var{I} \vdash_{\none,\kw{U}} \var{pre} : \kw{bool})}$
            \item\label{typsf:constr:case} $(\forall \, i \in [1,n].~\,\Sigma;  \var{I} \vdash_{\none,\kw{U}} e_i : \kw{bool})$
            \item\label{typsf:constr:phi} $\Phi_i = e_i \wedge \bigwedge_{\var{pre} \in \many{\var{pre}}}{\var{pre}}$
            \item\label{typsf:constr:create} $(\forall \, i \in [1,n]. ~\, \Sigma; \var{I}; \Phi_i \vdash_{A} \many{\var{create}_i} : C)$
            \item\label{typsf:constr:disj} $\dom{C} \cap \dom{I} = \emptyset$
            \item\label{typsf:constr:posts} $\many{( \Sigma'; \var{I} \vdash_{A,\kw{U}} \var{post}: \kw{bool} )}$
            \item\label{typsf:constr:onecase} $\Sigma; I; \kw{true} \vDash_{\bot} \bigwedge_{\var{pre \in \many{\var{pre}}}}{\var{pre}} \Rightarrow (\bigvee_{i \in [1,n]} e_i) ~~ \wedge ~~(\bigwedge_{i \in [1,n], j \in [i+1, n]} \neg (e_i \wedge e_j))$
        \end{enumerate}
        By inversion on~\ref{typsf:constr:onecase} through the~\rref{ValidExps} we get
        \begin{equation}
            \label{eq:typesafety-constr-validexps}
            \begin{aligned}
            & \forall s', \rho', \ell'.~\Sigma \vdash \rho' :_{s'} I \wedge \Sigma \vdash \ell' :_{s'} \bot \implies \\
            & s'; \rho'; \bigwedge_{\var{pre \in \many{\var{pre}}}}{\var{pre}} \Rightarrow (\bigvee_{i \in [1,n]} e_i) ~~ \wedge ~~(\bigwedge_{i \in [1,n], j \in [i+1, n]} \neg (e_i \wedge e_j)) \bigstep_{\ell'} \meta{True}
            \end{aligned}
        \end{equation}
        We take $s' = s$ and $\rho' = \rho$ and any location $\ell' = \dummyloc$. We get the lefthand-side of implication by~\ref{typsf:constr:env} and~\rref{V-None}. With this data we get from~\eqref{eq:typesafety-constr-validexps} that
        \begin{equation}
            \label{eq:typesafety-constr-precond}
            s ; \rho; \bigwedge_{\var{pre \in \many{\var{pre}}}}{\var{pre}} \Rightarrow (\bigvee_{i \in [1,n]} e_i) ~~ \wedge ~~(\bigwedge_{i \in [1,n], j \in [i+1, n]} \neg (e_i \wedge e_j)) \bigstep_{\dummyloc} \meta{True}
        \end{equation}
        From~\ref{typsf:constr:pre} we obtain
        \begin{equation}
            \many{s ; \rho ; \var{pre} \bigstep_{\dummyloc} \meta{True}} \label{eq:typesafety-constr-precs}
        \end{equation}
        From inverting the derivation of~\eqref{eq:typesafety-constr-precond} using the~\rref{E-BopB} rule for implication we know that since an implication evaluates to $\meta{True}$ and the the lefthand-side also evaluates to $\meta{True}$ due to~\eqref{eq:typesafety-constr-precs}, then the righthand-side evaluates to $\meta{True}$. With similar meta-level reasoning for conjunctions we obtain that
        \begin{align*}
            s ; \rho ; \bigvee_{i \in [1,n]} e_i \bigstep_{\dummyloc} \meta{True} \\
            s ; \rho ;\bigwedge_{i \in [1,n], j \in [i+1, n]} \neg (e_i \wedge e_j) \bigstep_{\dummyloc} \meta{True}
        \end{align*}
        These judgements encode that at least one and at most one $e_i$ evaluates to $\meta{True}$, so let $j \in [1,n]$ be such that
        \begin{align}
            & s; \rho ; e_j \bigstep_{\dummyloc} \meta{True} \label{eq:typesafety-constr-ej}\\
            & s; \rho ; e_i \bigstep_{\dummyloc} \meta{False} \quad \text{for $i \neq j$}\label{eq:typesafety-constr-ei}
        \end{align}
        A similar reasoning together with~\ref{typsf:constr:pre} gives us \inlineequation[eq:typesafety-constr-phi-j]{s; \rho ; \Phi_j \bigstep_{\dummyloc} \meta{True}}.
        We then use~\Cref{lem:create-typesafety} with~\ref{typsf:constr:create} for $i=j$,~\ref{typsf:constr:env} and~\eqref{eq:typesafety-constr-phi-j} to obtain such $s'$ and $\ell$ that
        \begin{itemize}
            \item $s ; \rho ; \many{\var{create}_j} \bigstep_A (\ell, s')$,
            \item $\Sigma' \vdash \ell :_{s'} A$,
            \item $\Sigma' \vdash \rho :_{s'} I$,
            \item $s \subseteq s'$,
            \item $\forall \ell' \in \dom{s'} \setminus \dom{s}.~\exists B.~\Sigma' \vdash \ell' :_{s'} B$.
            \item $\forall u, \sigma'.~\Sigma \vdash u :_s \sigma' \implies \Sigma \vdash u :_{s'} \sigma'$
        \end{itemize}
        We conclude the proof with~\rref{E-CtorCases} on~\ref{eq:typesafety-constr-ej}, \ref{eq:typesafety-constr-ei} and $s ; \rho ; \many{\var{create}_j} \bigstep_A (\ell, s')$.
    \end{proof}

    Note that the type safety statement for non-payable constructors does not require the $\kw{callvalue}$ of the environment
    $\rho$ to be set to $0$, since calling a constructor in a well-typed slot expression will do so correctly by
    setting $\rho'(\kw{callvalue}) = 0$ in the rule~\rref{E-Create}.

    \begin{lemma}[Transition Type Safety]
        \label{lem:transition-typesafety}
        Assume that
        \begin{enumerate}[start=1,label={(\bfseries H\arabic*)}]
            \item\label{typsf:trans:wt} $\Sigma \vdash_{A} \var{trans}$
            \item\label{typsf:trans:env} $\Sigma \vdash \rho :_s I$
            \item\label{typsf:trans:loc} $\Sigma \vdash \ell :_{s} A$
            \item\label{typsf:trans:pre} $s ; \rho ; \var{trans}_{\kw{iff}} \bigstep_{\ell} \meta{True}$
        \end{enumerate}
        Then there exist $v$ and $s'$ such that
        \begin{itemize}
            \item $s ; \rho ; \var{trans} \bigstep_\ell (v, s')$,
            \item $\Sigma \vdash \ell :_{s'} A$,
            \item $\Sigma \vdash \rho :_{s'} I$,
            \item $\dom{s} \subseteq \dom{s'}$,
            \item $\forall \ell' \in \dom{s'} \setminus \dom{s}.~\exists B.~\Sigma \vdash \ell' :_{s'} B$
            \item $\forall u, \sigma'.~\Sigma \vdash u :_s \sigma' \implies \Sigma \vdash u :_{s'} \sigma'$
            % \item \todo{Should $v$ be well-typed? The type of $v$ does not appear in the trans typing.}
        \end{itemize}
    \end{lemma}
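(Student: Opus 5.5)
The proof mirrors that of \Cref{lem:constructor-typesafety}, with \Cref{lem:update-typesafety} playing the role of \Cref{lem:create-typesafety} and an additional step for the return expression. Write
\[
\var{trans} = \transinlinecase{\var{I}}{\many{\var{pre}}}{e}{\many{\var{upd}}}{\var{ret}}{\many{post}}.
\]
Inverting~\ref{typsf:trans:wt} through~\rref{T-Trans} yields the following premises:
\begin{itemize}
\item typing of the preconditions and of the case conditions $e_i$;
\item the constraints $\Phi_i = e_i \wedge \bigwedge \var{pre}$;
\item $\Sigma; I; \Phi_i \vdash_A \many{\var{upd}_i}$ for every $i$;
\item $\Sigma; I \vdash_{A,\kw{T}} \var{ret}_i : \alpha$;
\item the case-consistency entailment $\Sigma; I; \kw{true} \vDash_{A} \ldots$.
\end{itemize}

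First, I determine the unique enabled case. I instantiate the entailment via~\rref{ValidExps} with $s$, $\rho$ and $\ell$, using~\ref{typsf:trans:env} and~\ref{typsf:trans:loc}. This gives that the implication ``preconditions imply exactly one case'' evaluates to $\meta{True}$ at $s;\rho$ under $\ell$. By~\ref{typsf:trans:pre} every $\var{pre}$ evaluates to $\meta{True}$. Inverting~\rref{E-BopB} for $\Rightarrow$, $\wedge$ and $\vee$ then produces an index $j$ with $s;\rho;e_j \bigstep_\ell \meta{True}$ and $s;\rho;e_i \bigstep_\ell \meta{False}$ for $i\neq j$. As in the constructor proof, it follows that $s;\rho;\Phi_j \bigstep_\ell \meta{True}$.

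Second, I apply \Cref{lem:update-typesafety} to $\Sigma; I; \Phi_j \vdash_A \many{\var{upd}_j}$ together with~\ref{typsf:trans:env},~\ref{typsf:trans:loc} and the fact just derived. This gives $s'$ with $\psem{s}{\rho}{\many{\var{upd}_j}} s'$, and it gives exactly the remaining conclusions of the lemma: $\Sigma \vdash \ell :_{s'} A$, $\Sigma\vdash\rho:_{s'} I$, $\dom{s}\subseteq\dom{s'}$, typing of the fresh locations, and preservation of value typing from $s$ to $s'$.

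Third, I obtain the return value from \Cref{lem:expr-typesafety-timed}, applied to $\var{ret}_j$ at the timed state $(s,s')^{\kw{T}}$. The hypotheses are the environment typings $\Sigma\vdash\rho:_s I$ and $\Sigma\vdash\rho:_{s'} I$, and the location typings $\Sigma\vdash\ell:_s A$ and $\Sigma\vdash\ell:_{s'}A$, all of which are now available. The lemma yields $v$ with $\psem{(s,s')^{\kw{T}}}{\rho}{\var{ret}_j} v$. There is one mismatch: the return type is an ABI type $\alpha$, possibly $\kw{address}_B$, whereas the lemma is stated for base types $\beta$. Where needed, I would instead invoke \Cref{lem:ref-typesafety-timed} through~\rref{T-Ref}, or simply note that only the existence of $v$ is required here, since the conclusion does not ask $v$ to be well typed. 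If the transition has no \kw{returns} clause, this step is vacuous.

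Finally, \rref{E-TransCases} assembles the evaluation of the cases from the three ingredients: the guard evaluations, the update derivation, and the return evaluation. Then \rref{E-Trans}, using~\ref{typsf:trans:pre} for the preconditions, gives $s;\rho;\var{trans}\bigstep_\ell (v,s')$. I expect the main obstacle to be the first step. There one must carefully extract the single true guard from the Boolean evaluation of the big conjunction and disjunction. This requires using determinism (\Cref{lem:exp-determinism}) so that the evaluations of the $e_i$ inside the compound formula agree with their standalone evaluations, and checking that the guards, which are well typed and hence total by \Cref{lem:expr-typesafety-untimed}, evaluate to Booleans.
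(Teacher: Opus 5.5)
Your proposal is correct and follows essentially the same route as the paper's proof. Both invert \rref{T-Trans}, instantiate the case-consistency entailment via \rref{ValidExps} at $s,\rho,\ell$ to isolate the unique true guard $e_j$ and $\Phi_j$, apply \Cref{lem:update-typesafety} and then \Cref{lem:expr-typesafety-timed} on $\var{ret}_j$ at $(s,s')^{\kw{T}}$, and assemble the result with \rref{E-TransCases} and \rref{E-Trans}. You also point out that determinism is needed when decomposing the Boolean formula and that the return type is $\alpha$ rather than $\beta$; the paper passes over both silently (it simply writes $\beta$ in its inversion), and your handling of both is sound.
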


    \begin{proof}
        By inversion on~\ref{typsf:trans:wt} we obtain
        \begin{enumerate}[start=1,label={(\bfseries I\arabic*)}]
            \item\label{typsf:trans:IH} $\Sigma \vdash I ~\meta{wf}$
            \item $\many{(\Sigma ; \var{I} \vdash_{A,\kw{U}} \var{pre} : \kw{bool})}$
            \item $(\forall \, i \in [1,n]. ~\, \Sigma; \var{I}  \vdash_{A,\kw{U}} e_i : \kw{bool})$
            \item $\Phi_i = e_i \wedge \bigwedge_{\var{pre} \in \many{\var{pre}}}{\var{pre}}$
            \item\label{typsf:trans:update} $(\forall \, i \in [1,n].~\, \many{\Sigma; \var{I}; \Phi_i  \vdash_{A} \many{\var{upd}_i}})$
            \item\label{typsf:trans:ret} $(\forall \, i \in [1,n]. ~\, \Sigma; \var{I} \vdash_{A,\kw{T}} \var{ret}_i : \beta)$
            \item $\many{( \Sigma; \var{I} \vdash_{A,\kw{T}} \var{post}: \kw{bool} )}$
            \item\label{typsf:trans:onecase} $\Sigma; I; \kw{true} \vDash_{A} \bigwedge_{\var{pre \in \many{\var{pre}}}}{\var{pre}} \Rightarrow (\bigvee_{i \in [1,n]} e_i) ~~ \wedge ~~(\bigwedge_{i \in [1,n], j \in [i+1, n]} \neg (e_i \wedge e_j))$
        \end{enumerate}
        By inversion on~\ref{typsf:trans:onecase} through the~\rref{ValidExps} we get
        \begin{equation}
            \label{eq:typesafety-trans-validexps}
            \begin{aligned}
            & \forall s', \rho', \ell'.~\Sigma \vdash \rho' :_{s'} I \wedge \Sigma \vdash \ell' :_{s'} A \implies \\
            & s'; \rho'; \bigwedge_{\var{pre \in \many{\var{pre}}}}{\var{pre}} \Rightarrow (\bigvee_{i \in [1,n]} e_i) ~~ \wedge ~~(\bigwedge_{i \in [1,n], j \in [i+1, n]} \neg (e_i \wedge e_j)) \bigstep_{\ell'} \meta{True}
            \end{aligned}
        \end{equation}
        We take $s' = s$ and $\rho' = \rho$ and location $\ell' = \ell$. We get the lefthand-side of implication by~\ref{typsf:trans:env} and~\ref{typsf:trans:loc}. With this data we get from~\eqref{eq:typesafety-trans-validexps} that
        \begin{equation}
            \label{eq:typesafety-trans-precond}
            s ; \rho; \bigwedge_{\var{pre \in \many{\var{pre}}}}{\var{pre}} \Rightarrow (\bigvee_{i \in [1,n]} e_i) ~~ \wedge ~~(\bigwedge_{i \in [1,n], j \in [i+1, n]} \neg (e_i \wedge e_j)) \bigstep_{\ell} \meta{True}
        \end{equation}
        From~\ref{typsf:trans:pre} we obtain
        \begin{equation}
            \many{s ; \rho ; \var{pre} \bigstep_{\ell} \meta{True}} \label{eq:typesafety-trans-precs}
        \end{equation}
        From inverting the derivation of~\eqref{eq:typesafety-trans-precond} using the~\rref{E-BopB} rule for implication we know that since an implication evaluates to $\meta{True}$ and the the lefthand-side also evaluates to $\meta{True}$ due to~\eqref{eq:typesafety-trans-precs}, then the righthand-side evaluates to $\meta{True}$. With similar meta-level reasoning for conjunctions we obtain that
        \begin{align*}
            s ; \rho ; \bigvee_{i \in [1,n]} e_i \bigstep_{\ell} \meta{True} \\
            s ; \rho ;\bigwedge_{i \in [1,n], j \in [i+1, n]} \neg (e_i \wedge e_j) \bigstep_{\ell} \meta{True}
        \end{align*}
        These judgements encode that at least one and at most one $e_i$ evaluates to $\meta{True}$, so let $j \in [1,n]$ be such that
        \begin{align}
            & s; \rho ; e_j \bigstep_{\ell} \meta{True} \label{eq:typesafety-trans-ej}\\
            & s; \rho ; e_i \bigstep_{\ell} \meta{False} \quad \text{for $i \neq j$}\label{eq:typesafety-trans-ei}
        \end{align}
        A similar reasoning together with~\ref{typsf:trans:pre} gives us \inlineequation[eq:typesafety-trans-phi-j]{s; \rho ; \Phi_j \bigstep_{\ell} \meta{True}}.
        We then use~\Cref{lem:update-typesafety} with~\ref{typsf:trans:update} for $i=j$,~\ref{typsf:trans:env}, \ref{typsf:trans:loc}, and~\eqref{eq:typesafety-trans-phi-j} to obtain such $s'$ that
        \begin{enumerate}[start=1,label={(\bfseries R\arabic*)}]
            \item\label{typsf:trans:newState} $s ; \rho ; \many{\var{upd}} \bigstep s'$,
            \item\label{typsf:trans:newloc} $\Sigma \vdash \ell :_{s'} A$,
            \item\label{typsf:trans:newenv} $\Sigma \vdash \rho :_{s'} I$.
            \item\label{typsf:trans:newlocs} $\forall \ell' \in \dom{s'} \setminus \dom{s}.~\exists B.~\Sigma \vdash \ell' :_{s'} B$.
            \item\label{typsf:trans:uwt} $\forall u, \sigma'.~\Sigma \vdash u :_s \sigma' \implies \Sigma \vdash u :_{s'} \sigma'$
        \end{enumerate}
        By applying~\Cref{lem:expr-typesafety-timed} on~\ref{typsf:trans:ret} for $i = j$, \ref{typsf:trans:env} and \ref{typsf:trans:newenv}, \ref{typsf:trans:loc} and \ref{typsf:trans:newloc} we obtain $v$, such that \inlineequation[typsf:trans:retval]{\psem{(s,s')^{\kw{T}}}{\rho}{\var{ret}_j} v} and $\vdash v : \beta$.
        We conclude the proof with~\rref{E-Trans} on~\ref{typsf:trans:pre} as the first premise and as the second the rule~\rref{E-TransCases} applied on~\ref{eq:typesafety-trans-ej}, \ref{eq:typesafety-constr-ei},~\ref{typsf:trans:newState} and~\eqref{typsf:trans:retval}. We get the desired properties of $s'$ from~\ref{typsf:trans:newloc}, \ref{typsf:trans:newenv},~\ref{typsf:trans:newlocs}, and~\ref{typsf:trans:uwt}.
    \end{proof}

    \begin{lemma}[State Transition Type Safety]
    \label{lem:state-transition-typesafety}
     Let $\Sigma \vdash s \rightsquigarrow^* s'$. Then
     \begin{enumerate}[start=1,label={(\bfseries S\arabic*)}]
        \item\label{typsf:reachState:preservation} $\forall v, \sigma.~\Sigma \vdash v :_s \sigma \implies \Sigma \vdash v :_{s'} \sigma$
        \item\label{typsf:reachState:domain} $\dom{s} \subseteq \dom{s'}$
     \end{enumerate}
    \end{lemma}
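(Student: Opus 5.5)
We argue by induction on the length of the derivation of $\Sigma \vdash s \rightsquigarrow^* s'$, i.e.\ on the number of single steps in the reflexive-transitive closure. In the reflexive case $s' = s$, both \ref{typsf:reachState:preservation} and \ref{typsf:reachState:domain} hold trivially. In the inductive case we have $\Sigma \vdash s \rightsquigarrow s''$ and $\Sigma \vdash s'' \rightsquigarrow^* s'$. Once we show that both properties hold for the single step from $s$ to $s''$, we combine this with the induction hypothesis for the remaining steps. Both properties are transitive: the preservation property composes, and domain inclusion is transitive.

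For a single step we unfold the definition of $\Sigma \vdash s \rightsquigarrow s''$ and split into two cases.

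\emph{Constructor step} $\Sigma \vdash_A s \rightsquigarrow^{\kw{C}}_\ell s''$ (rule \rref{E-Step-Cnstr}). We obtain $\var{cnstr} = \Sigmacode(A)$, an environment $\rho$ with $\Sigma \vdash \rho :_s \var{cnstr}_{\meta{iface}}$, and a derivation $s;\rho;\var{cnstr} \bigstep_A (\ell, s'')$. We do not need \Cref{lem:constructor-typesafety} here, since we do not need existence of an evaluation. Instead we proceed as follows.
\begin{enumerate}
\item Invert \rref{E-Ctor} and \rref{E-CtorCases} to reach a creates evaluation $s;\rho;\many{\var{create}_j} \bigstep_A (\ell, s'')$.
\item Invert \rref{E-Creates}. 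This gives a chain $s = s_0, s_1, \dots, s_n, s_{n+1} = s''$, where each $s_{i-1} \to s_i$ is a slot expression evaluation and $s_{n+1} = s_n[\ell \mapsto \dots]$ with $\ell$ fresh.
\item Show $s \subseteq s''$; \Cref{lem:valuetyp-storage-weak} then gives both properties. For the final step, $s_n \subseteq s_{n+1}$ holds by freshness. For the slot expression steps, we prove a small auxiliary lemma: if $\psem{s}{\rho}{\var{se}}(v, s')$ then $s \subseteq s'$. This lemma is proved by mutual induction on the evaluation derivations of slot expressions, constructor cases and creates, following the structure of \rref{E-MapExp}, \rref{E-SlotRef}, \rref{E-SlotAddr}, \rref{E-Create}, \rref{E-CreatePayable} and \rref{E-Creates}. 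Every rule either leaves the state unchanged or only extends it at a fresh location.
\end{enumerate}
Alternatively, we can invoke \Cref{lem:constructor-typesafety} and use determinism (\Cref{lem:cnstr-determinism}) to identify its resulting state with $s''$. This route needs the precondition hypothesis, which we get by inverting \rref{E-Ctor}, and the well-typedness of $\Sigma$ (property \ref{i:wtsigma:cnstr}) together with \Cref{lem:valuetyp-styp-strength} to move to the sub-environment $\Sigma'$. The lemma's last two bullets are then exactly \ref{typsf:reachState:preservation} and the domain inclusion. We would take this route if the auxiliary inclusion lemma proves awkward.

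\emph{Transition step} $\Sigma \vdash_A s \rightsquigarrow^{\kw{B}}_\ell s''$ (rule \rref{E-Step-Trans}). Here the state is genuinely mutated by insertions, so we use \Cref{lem:transition-typesafety}. We have $\Sigma \vdash \ell :_s A$, $\var{trans} \in \Sigmatranss(A)$, $\Sigma \vdash \rho :_s \var{trans}_{\meta{iface}}$, and $s;\rho;\var{trans} \bigstep_\ell (v,s'')$.
\begin{enumerate}
\item Inverting \rref{E-Trans} supplies the precondition hypothesis $s;\rho;\var{trans}_{\kw{iff}} \bigstep_\ell \meta{True}$.
\item Well-typedness of $\Sigma$ (property \ref{i:wtsigma:trans}) gives $\Sigma^\circ \subset \Sigma$ with $\Sigma^\circ \vdash_A \var{trans}$.
\item Transport the hypotheses on $\rho$ and $\ell$ to $\Sigma^\circ$ using \Cref{lem:valuetyp-styp-strength}. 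This uses the wf-ness of the interface from \rref{T-Trans}, and wf-ness of $A$ in $\Sigma^\circ$, because $A$'s storage is already registered there by \rref{T-Contract}.
\item Apply \Cref{lem:transition-typesafety} in $\Sigma^\circ$ to obtain some $(v', s''')$ with the listed guarantees, and use \Cref{lem:trans-determinism} to get $s''' = s''$.
\end{enumerate}

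The main obstacle is the mismatch between $\Sigma^\circ$ and $\Sigma$. \Cref{lem:transition-typesafety} yields preservation $\Sigma^\circ \vdash u :_s \sigma \implies \Sigma^\circ \vdash u :_{s''} \sigma$, but we need it for arbitrary $\sigma$ typed under the larger $\Sigma$. For $\sigma$ well-formed in $\Sigma^\circ$ we can go down with \Cref{lem:valuetyp-styp-strength} and back up with \Cref{lem:valuetyp-storagetyp-weak}. For contract types outside $\Sigma^\circ$ this fails. There we would instead reprove preservation directly in $\Sigma$ by repeating the argument of \Cref{lem:update-typesafety}, using \Cref{lem:insertion-preserves-typing} at each insertion. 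The typing premises for an insertion (the reference typed $\kw{S}$ for $A$ and the inserted value's type) are transported upward into $\Sigma$ by the weakening lemmas. Since \Cref{lem:insertion-preserves-typing} quantifies over arbitrary $v,\sigma$ under the same $\Sigma$, this yields the full \ref{typsf:reachState:preservation}. The domain inclusion follows from the fourth bullet of \Cref{lem:transition-typesafety}, or directly from the fact that insertions preserve domains.
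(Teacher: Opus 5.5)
Your proposal is correct, but it takes a different route from the paper's proof.

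\textbf{How the paper argues.} It treats both kinds of step the same way. It takes the witness $\Sigma' \subset \Sigma$ from well-typedness and lifts $\Sigma' \vdash_A \var{cnstr} : \Sigmastore(A)$ (resp.\ $\Sigma' \vdash_A \var{trans}$) to the full $\Sigma$. It then applies \Cref{lem:constructor-typesafety} (resp.\ \Cref{lem:transition-typesafety}) directly in $\Sigma$ and uses determinism. Domain inclusion and preservation are read off from the lemma, and preservation is already stated under $\Sigma$. The lifting step cites \Cref{lem:valuetyp-storagetyp-weak}, but that lemma only covers value and environment typing. So the paper tacitly relies on a weakening lemma for entire constructor and transition typing derivations, including their semantic-entailment premises.

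\textbf{What your route buys.} Your constructor step needs no typing at all. It uses only the operational fact that slot-expression and creates evaluation extend the state at fresh locations, plus \Cref{lem:valuetyp-storage-weak}. Your transition step keeps the typing-dependent work (existence and typing of the inserted values) in $\Sigma^\circ$, then reruns the insertion argument under $\Sigma$ with \Cref{lem:insertion-preserves-typing}. That needs only two weakenings:
\begin{itemize}
\item weakening of value typing, and
\item weakening of $\kw{S}$-tagged reference typings, which are immediate because such references are just chains built by \rref{T-Storage} and \rref{T-Field}.
\end{itemize}
The cost is that you redo the bookkeeping of \Cref{lem:update-typesafety} instead of reusing it. Granted its unstated weakening lemma, the paper's proof is shorter: once the transition is typed in $\Sigma$, the $\Sigma^\circ$/$\Sigma$ mismatch you work around never arises.

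\textbf{Points to tighten.}
\begin{itemize}
\item In the transition step, obtain preservation over the slot-expression prefix $s_0,\ldots,s_n$ from $s_{i-1}\subseteq s_i$ and \Cref{lem:valuetyp-storage-weak} in $\Sigma$. Do not use the $\Sigma^\circ$-preservation bullet of \Cref{lem:se-typesafety}, since it says nothing about types outside $\Sigma^\circ$.
\item Your alternative constructor route has the same mismatch you flag for transitions. \Cref{lem:constructor-typesafety} applied in $\Sigma'$ gives preservation only under $\Sigma'$, so its last bullet is not ``exactly'' the required preservation property. Use its $s\subseteq s'$ bullet with \Cref{lem:valuetyp-storage-weak} instead.
\item Condition~\ref{i:wtsigma:trans} of \Cref{def:well-typed-sigma} is existential, so $A \in \dom{\Sigma^\circ_{\sstorage}}$ is not guaranteed by the definition alone. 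If it fails, no $\kw{S}$-reference is typable for $A$ in $\Sigma^\circ$. The update list must then be empty, so $s''=s$ and the case is trivial.
\end{itemize}
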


    \begin{proof}
        By induction on $\Sigma \vdash s \rightsquigarrow^* s'$.
        If $s = s'$, then trivially true.
        Otherwise let \inlineequation[typsf:reachState:midState]{\Sigma \vdash s \rightsquigarrow^* s''} and \inlineequation[typsf:reachState:onestep]{\Sigma \vdash s'' \rightsquigarrow s'}.
        By induction hypothesis on~\eqref{typsf:reachState:midState} we get that
        \inlineequation[typsf:reachState:IH1]{\forall v,\sigma.~\Sigma \vdash v :_s \sigma \implies \Sigma \vdash v :_{s''} \sigma} and
        \inlineequation[typsf:reachState:IH2]{\dom{s} \subseteq \dom{s''}}.
        Let $v,\sigma$ be such that $\Sigma \vdash v :_s \sigma$. By~\eqref{typsf:reachState:IH1} we have that $\Sigma \vdash v :_{s''} \sigma$. We need to show that $\Sigma \vdash v :_{s'} \sigma$ and $\dom{s''} \subseteq \dom{s'}$, from which by transitivity on~\eqref{typsf:reachState:IH2} we get $\dom{s} \subseteq \dom{s'}$.

        Unfolding~\eqref{typsf:reachState:onestep} we get $A$ and $\ell'$ and distinguish the following two cases:
        \begin{proofcases}
            \case{$\Sigma \vdash_A s'' \rightsquigarrow_{\ell'}^{\kw{C}} s'$}
                From the premise of the~\rref{E-Step-Cnstr}, let $\rho$ be such that
                \begin{enumerate}[start=1,label={(\bfseries P\arabic*)}]
                    \item\label{typsf:reachState:cnstr} $\Sigmacode(A) = \var{cnstr}~|~\exists \Sigma'.~\Sigma'\subset \Sigma ~\wedge~ \Sigma' \vdash_{A} \var{cnstr} : \Sigmastore(A)$
                    \item\label{typsf:reachState:env} $\Sigma \vdash \rho :_{s''} \var{cnstr}_{\meta{iface}}$
                    \item\label{typsf:reachState:eval} $s'' ; \rho ; \var{cnstr} \bigstep (\ell', s')$
                \end{enumerate}
                By inversion on~\ref{typsf:reachState:eval} we obtain \inlineequation[typsf:reachState:pre]{\psemdummy{s''}{\rho}{\var{cnstr}_{\kw{iff}}} \meta{True}}. Since $\Sigma$ is well-typed, let \inlineequation[typsf:reachState:sigp]{\Sigma' \subset \Sigma} be such that \inlineequation[typsf:reachState:cnstrWtp]{\Sigma' \vdash_{A} \var{cnstr} : \Sigmastore(A)}. Using~\Cref{lem:valuetyp-storagetyp-weak} on~\eqref{typsf:reachState:cnstrWtp} and~\eqref{typsf:reachState:sigp} we obtain \inlineequation[typsf:reachState:cnstrWt]{\Sigma \vdash_{A} \var{cnstr} : \Sigmastore(A)}. We then apply~\Cref{lem:constructor-typesafety} on~\eqref{typsf:reachState:cnstrWt}, \ref{typsf:reachState:env} and \ref{typsf:reachState:eval}, together with determinism of pointer semantics, to get that $s'' \subseteq s'$, from which we deduce $\dom{s''} \subseteq \dom{s'}$;  and $\forall u, \sigma'.~\Sigma \vdash u :_{s''} \sigma' \implies \Sigma \vdash u :_{s'} \sigma'$, which we instantiate with $u = v$ and $\sigma' = \sigma$, to get the $\Sigma \vdash v :_{s'} \sigma$.
            \case{$\Sigma \vdash_A s'' \rightsquigarrow_{\ell'}^{\kw{B}} s'$}
                From the premise of the~\rref{E-Step-Trans} rule, let $\rho$ be such that
                \begin{enumerate}[start=1,label={(\bfseries P\arabic*)}]
                    \item\label{typsf:reachState:trans:loc} $\Sigma \vdash \ell' :_{s''} A$
                    \item\label{typsf:reachState:trans} $\var{trans} \in \Sigmatranss(A)$
                    \item\label{typsf:reachState:trans:env} $\Sigma \vdash \rho :_{s''} \var{trans}_{\meta{iface}}$
                    \item\label{typsf:reachState:trans:eval} $s'' ; \rho ; \var{trans} \bigstep (v, s')$
                \end{enumerate}
                By inversion on~\ref{typsf:reachState:trans:eval} we obtain \inlineequation[typsf:reachState:trans:pre]{s'' ; \rho ; \var{trans}_{\kw{iff}} \bigstep_{\ell'} \meta{True}}. Since $\Sigma$ is well-typed, let \inlineequation[typsf:reachState:trans:sigp]{\Sigma' \subset \Sigma} be such that \inlineequation[typsf:reachState:transWtp]{\Sigma' \vdash_{A} \var{trans}}. Using~\Cref{lem:valuetyp-storagetyp-weak} on~\eqref{typsf:reachState:transWtp} and~\eqref{typsf:reachState:trans:sigp} we obtain \inlineequation[typsf:reachState:transWt]{\Sigma \vdash_{A} \var{trans}}. We then apply~\Cref{lem:transition-typesafety} on~\eqref{typsf:reachState:transWt}, \ref{typsf:reachState:trans:env}, \ref{typsf:reachState:trans:loc}, and \ref{typsf:reachState:trans:eval}, together with determinism of pointer semantics, to get that $\dom{s''} \subseteq \dom{s'}$ and $\forall u, \sigma'.~\Sigma \vdash u :_{s''} \sigma' \implies \Sigma \vdash u :_{s'} \sigma'$, which we instantiate with $u = v$ and $\sigma' = \sigma$, to get the $\Sigma \vdash v :_{s'} \sigma$.
        \end{proofcases}
    \end{proof}

   \begin{lemma}[Reachable Store is Well-typed]\label{lem:store-wt}
    Let $s$ such $\possible{s}$ and $\ell \in s$.
    Then there exists $A$, such that $\Sigma \vdash \ell :_s A$.
    \end{lemma}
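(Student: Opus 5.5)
The plan is to prove the statement by induction on the derivation of $\Sigma \vdash \emptyset \rightsquigarrow^{*} s$, strengthening the goal to the invariant $\forall \ell \in \dom{s}.~\exists A.~\Sigma \vdash \ell :_s A$. In the base case $s = \emptyset$, the domain is empty and the claim holds vacuously.

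In the step case we have $\Sigma \vdash \emptyset \rightsquigarrow^{*} s''$ and $\Sigma \vdash s'' \rightsquigarrow s$. The induction hypothesis says that every $\ell \in \dom{s''}$ has some $A$ with $\Sigma \vdash \ell :_{s''} A$. Let $\ell \in \dom{s}$, and split on whether $\ell \in \dom{s''}$.

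\emph{Old locations.} If $\ell \in \dom{s''}$, we obtain $\Sigma \vdash \ell :_{s''} A$ from the induction hypothesis. We then apply \ref{typsf:reachState:preservation} of \Cref{lem:state-transition-typesafety} to the single step, viewed as a one-step instance of $\rightsquigarrow^{*}$. This transports the typing to $\Sigma \vdash \ell :_{s} A$.

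\emph{Fresh locations.} If $\ell \in \dom{s} \setminus \dom{s''}$, we unfold the step into its two forms, mirroring the case analysis in the proof of \Cref{lem:state-transition-typesafety}.
\begin{itemize}
\item For a constructor step \rref{E-Step-Cnstr}, we use well-typedness of $\Sigma$ to obtain $\Sigma' \subset \Sigma$ with $\Sigma' \vdash_{A'} \var{cnstr} : \Sigmastore(A')$, and lift it to $\Sigma$ as done there. We then apply \Cref{lem:constructor-typesafety}, using its precondition obtained by inverting \rref{E-Ctor}. Determinism (\Cref{lem:cnstr-determinism}) identifies the resulting state with $s$. This yields $\exists B.~\Sigma'' \vdash \ell :_{s} B$, where $\Sigma'' = \Sigma ~\meta{with}~ \{\sstorage = (\Sigmastore, A' : \Sigmastore(A'))\}$.
\item For a transition step \rref{E-Step-Trans}, we use \Cref{lem:transition-typesafety} together with \Cref{lem:trans-determinism}. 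Its fifth conclusion gives $\exists B.~\Sigma \vdash \ell :_{s} B$ directly.
\end{itemize}

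The main obstacle is the constructor case, because the typing context returned there is $\Sigma''$ rather than $\Sigma$. Since $A' \in \dom{\Sigmacode}$ and $\Sigma$ arises from \rref{T-Contract}, $A'$ already belongs to $\Sigmastore$ with the same storage $C$. Hence $\Sigma''$ coincides with $\Sigma$ on storage, and therefore $\Sigma'' \subseteq \Sigma$. Using this inclusion, \Cref{lem:valuetyp-storagetyp-weak} transfers $\Sigma'' \vdash \ell :_s B$ to $\Sigma \vdash \ell :_s B$. If this identification proves delicate, the fallback is to use the strengthening \Cref{lem:valuetyp-styp-strength} together with the well-formedness of $B$ in $\Sigma$.
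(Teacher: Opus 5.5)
Your proof is correct and uses the same ingredients as the paper's. These are the fresh-location conclusions of \Cref{lem:constructor-typesafety} and \Cref{lem:transition-typesafety} (with determinism) for the step that allocates $\ell$, and part~\ref{typsf:reachState:preservation} of \Cref{lem:state-transition-typesafety} to carry the typing forward to $s$. The differences are organizational, plus one point of rigor. The paper picks the first state in the chain that contains $\ell$ and transports along the remaining multi-step chain, whereas your step-by-step induction with a strengthened invariant needs no such selection. Your observation that $\Sigma''$ coincides with $\Sigma$ in the constructor case also settles a point the paper silently glosses over.
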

    \begin{proof}
        Unfolding $\possible{s}$, we get that $\Sigma \vdash \emptyset \rightsquigarrow^{*} s$. Since by~\Cref{lem:state-transition-typesafety} part~\ref{typsf:reachState:domain} the state transitions potentially only increase domains, let $s'$ be first state in the transitive closure, which contains $\ell$, i.e.
        \begin{enumerate}[start=1,label={(\bfseries H\arabic*)}]
            \item $\Sigma \vdash \emptyset \rightsquigarrow^{*} s''$
            \item\label{typsf:reach:noloc} $\ell \notin \dom{s''}$
            \item\label{typsf:reach:onestep} $\Sigma \vdash s'' \rightsquigarrow s'$
            \item\label{typsf:reach:firstloc} $\ell \in \dom{s'}$
            \item\label{typsf:reach:transS} $\Sigma \vdash s' \rightsquigarrow^{*} s$
        \end{enumerate}
        Unfolding~\ref{typsf:reach:onestep} we get $A$ and $\ell'$ and distinguish the following two cases:
        \begin{proofcases}
            \case{$\Sigma \vdash_A s'' \rightsquigarrow_{\ell'}^{\kw{C}} s'$}
                From the premise of the~\rref{E-Step-Cnstr}, let $\rho$ be such that
                \begin{enumerate}[start=1,label={(\bfseries P\arabic*)}]
                    \item\label{typsf:reach:cnstr} $\Sigmacode(A) = \var{cnstr}$
                    \item\label{typsf:reach:env} $\Sigma \vdash \rho :_{s''} \var{cnstr}_{\meta{iface}}$
                    \item\label{typsf:reach:eval} $s'' ; \rho ; \var{cnstr} \bigstep (\ell', s')$
                \end{enumerate}
                By inversion on~\ref{typsf:reach:eval} we obtain \inlineequation[typsf:reach:pre]{\psemdummy{s''}{\rho}{\var{cnstr}_{\kw{iff}}} \meta{True}}. Since $\Sigma$ is well-typed, let \inlineequation[typsf:reach:sigp]{\Sigma' \subset \Sigma} be such that \inlineequation[typsf:reach:cnstrWtp]{\Sigma' \vdash_{A} \var{cnstr} : \Sigmastore(A)}. Using~\Cref{lem:valuetyp-storagetyp-weak} on~\eqref{typsf:reach:cnstrWtp} and~\eqref{typsf:reach:sigp} we obtain \inlineequation[typsf:reach:cnstrWt]{\Sigma \vdash_{A} \var{cnstr} : \Sigmastore(A)}. We then apply~\Cref{lem:constructor-typesafety} on~\eqref{typsf:reach:cnstrWt}, \ref{typsf:reach:env} and \ref{typsf:reach:eval}, together with determinism of pointer semantics, to get that $\forall \ell'' \in \dom{s'} \setminus \dom{s''}.~\exists B.~\Sigma \vdash \ell'' :_{s'} B$, which we instantiate with $\ell'' = \ell$, to get the $\exists B.~\Sigma \vdash \ell :_{s'} B$.
            \case{$\Sigma \vdash_A s'' \rightsquigarrow_{\ell'}^{\kw{B}} s'$}
                From the premise of the~\rref{E-Step-Trans} rule, let $\rho$ be such that
                \begin{enumerate}[start=1,label={(\bfseries P\arabic*)}]
                    \item\label{typsf:reach:trans:loc} $\Sigma \vdash \ell' :_{s''} A$
                    \item\label{typsf:reach:trans} $\var{trans} \in \Sigmatranss(A)$
                    \item\label{typsf:reach:trans:env} $\Sigma \vdash \rho :_{s''} \var{trans}_{\meta{iface}}$
                    \item\label{typsf:reach:trans:eval} $s'' ; \rho ; \var{trans} \bigstep (v, s')$
                \end{enumerate}
                By inversion on~\ref{typsf:reach:trans:eval} we obtain \inlineequation[typsf:reach:trans:pre]{s'' ; \rho ; \var{trans}_{\kw{iff}} \bigstep_{\ell'} \meta{True}}. Since $\Sigma$ is well-typed, let \inlineequation[typsf:reach:trans:sigp]{\Sigma' \subset \Sigma} be such that \inlineequation[typsf:reach:transWtp]{\Sigma' \vdash_{A} \var{trans}}. Using~\Cref{lem:valuetyp-storagetyp-weak} on~\eqref{typsf:reach:transWtp} and~\eqref{typsf:reach:trans:sigp} we obtain \inlineequation[typsf:reach:transWt]{\Sigma \vdash_{A} \var{trans}}. We then apply~\Cref{lem:transition-typesafety} on~\eqref{typsf:reach:transWt}, \ref{typsf:reach:trans:env}, \ref{typsf:reach:trans:loc}, and \ref{typsf:reach:trans:eval}, together with determinism of pointer semantics, to get that $\forall \ell'' \in \dom{s'} \setminus \dom{s''}.~\exists B.~\Sigma' \vdash \ell'' :_{s'} B$, which we instantiate with $\ell'' = \ell$, to get $\exists B.~\Sigma \vdash \ell :_{s'} B$.
        \end{proofcases}
        We then use~\Cref{lem:state-transition-typesafety} part~\ref{typsf:reachState:preservation} with~\ref{typsf:reach:transS} and $\exists B.~\Sigma \vdash \ell :_{s'} B$, to get the desired result.

    \end{proof}

% \section{Value Semantics}
% \input{valsem.tex}

% \section{Heap Reachability and Ownership}
% \input{reach.tex}

% \section{Semantic Soundness}
% \input{equiv-help.tex}
% \input{equiv.tex} % expression and references soundness
% \input{equiv-1.tex} % mapping expressions soundness
% \input{equiv-2.tex}
% \input{equiv-3.tex}

\bibliographystyle{ACM-Reference-Format}
\bibliography{bib}

\pagebreak
% Appendix with list of theorems
\listoftheorems[numwidth=2.5em,ignoreall,show={theorem,lemma,corollary,proposition,definition}]
\end{document}